\def\final{1}
\def\ijqi{1}
\newtheorem{theorem}{Theorem}[section]
\newtheorem*{theorem*}{Theorem}
\newtheorem{lemma}[theorem]{Lemma}
\newtheorem{definition}[theorem]{Definition}
\theoremstyle{definition}
\newtheorem{remark}{Remark}
\newtheorem{claim}[theorem]{Claim}
\newtheorem{obs}[theorem]{Observation}
\newtheorem{cor}[theorem]{Corollary}
\newtheorem{prop}[theorem]{Proposition}
\newtheorem*{prop*}{Proposition}
\newtheorem{assumption}{Assumption}
\newtheorem{example}{Example}
\newcommand{\mypar}[1]{\smallskip \noindent {\bf {#1}.}}
\newenvironment{pffs}[3][]{
\begin{figure}[ht!]
\begin{center}
   \begin{tabular}{|ll|}\hline
     \hspace{.1ex}
\begin{minipage}{.9\linewidth}\vspace{1ex}
       {\begin{center}{\bf #2}
           #3 \end{center}}\vspace{-1ex}
	}{%
       \vspace{-2ex}
       \smallskip
     \end{minipage} & \\
     \hline
   \end{tabular}
   \end{center}
\end{figure}
}
\newcommand{\mynote}[2]{\marginpar{\tiny {\bf\sc #1}: {#2}}}
\newcommand{\mynote}[2]{}
\newcommand{\anote}[1]{\mynote{ads}{#1}}
\newcommand{\fnote}[1]{\mynote{fs}{#1}}
\newcommand{\asnref}[1]{Assumption~\ref{asn:#1}}
\newcommand{\ip}[2]{\left\langle #1, #2\right\rangle}
\newcommand{\setft}[1]{\mathrm{#1}}
\newcommand{\lin}[1]{\setft{L}\left(#1\right)}
\newcommand{\density}[1]{\setft{D}\left(#1\right)}
\newcommand{\class}[1]{\textup{#1}}
\newcommand{\kb}[1]{|#1\rangle\langle #1|}
\def\auth{\bar A}
\def\dauth{\bar B}
\def\negl{\mathrm{negl}}
\newcommand{\etal}{\emph{et al.}}
\newcommand{\model}[3]{\mathcal{M}_{#1,#2,#3}}
\newcommand{\reg}[1]{\mathsf{#1}}
\newcommand{\rspace}[1]{\mathsf{#1}}
\newcommand{\rspacen}[1]{\mathsf{#1}_{n}}
\newcommand{\linop}[2]{\setft{L}\left( #1, #2\right)}
\newcommand{\cprot}[3]{ {#1}^{{#2}/{#3}}} 
\newcommand{\cqsa}{\text{\tt C-QSA}}
\newcommand{\sqsa}{\text{\tt S-QSA}}
\newcommand{\qsa}{\text{\tt QSA}}
\newcommand{\cquc}{\text{\tt C-QUC}}
\newcommand{\squc}{\text{\tt S-QUC}}
\newcommand{\ccuc}{\text{\tt C-CUC}}
\newcommand{\cqucwi}{\text{\tt C-QUC-WI}}
\def\wiq{\text{\tt QC-WI}}
\def\td{\text{\sc TD}}
\def\bbN{\mathbb{N}}
\def\calA{\mathcal{A}}
\def\calE{\mathcal{E}}
\def\calF{\mathcal{F}}
\def\calG{\mathcal{G}}
\def\calM{\mathcal{M}}
\def\calO{\mathcal{O}}
\def\calS{\mathcal{S}}
\def\calW{\mathcal{W}}
\def\calZ{\mathcal{Z}}
\def\pgen{\text{\bf PG}}
\def\gen{\text{\bf Gen}}
\def\enc{\text{\bf Enc}}
\def\dec{\text{\bf Dec}}
\def\bfA{\mathbf{A}}
\def\bfB{\mathbf{B}}
\def\bfP{\mathbf{P}}
\def\bfV{\mathbf{V}}
\def\bfS{\mathbf{S}}
\def\bfcomm{\mathbf{comm}}
\def\zka{\mathrm{ZK}_1}
\def\I{\mathbb{1}}
\def\({\left(}
\def\){\right)}
\def\a{{aux}_1}
\def\abar{\overline\a}
\def\b{{aux}_2}
\def\bbar{\overline\b}
\def\c{state}
\def\cbar{\overline\c}
\def\veps{\varepsilon}
\def\acc{\text{\sf acc}}
\def\rej{\text{\sf rej}}
\def\mac{M}
\def\fzk{\calF_{\tt ZK}}
\def\fcp{\calF_{\tt CP}}
\def\fcf{\calF_{\tt CF}}
\def\fot{\calF_{\tt OT}}
\def\fcom{\calF_{\tt COM}}
\def\wqc{\approx_{wqc}}
\def\qc{\approx_{qc}}
\def\qcii{\approx_{qci}}
\def\qsii{\approx_{qsi}}
\def\dmm{\approx_{\diamond}}
\def\trm{\approx_{tr}}
\def\exec{\mathbf{EXEC}}
\def\ideal{\mathbf{IDEAL}}
\begin{document}

\title{Classical Cryptographic Protocols in a Quantum World\footnote{A preliminary version of this work appeared in Advances in Cryptology - CRYPTO 2011.}}

\author{Sean Hallgren \thanks{Department of
    Computer Science and Engineering, Pennsylvania State University,
    University Park, PA, U.S.A.}\  \ \thanks{Partially supported by National Science
    Foundation award CCF-0747274 and by the National Security
    Agency (NSA) under Army Research Office (ARO) contract number
    W911NF-08-1-0298.}
    \and Adam Smith\ \footnotemark[2]\ \ \thanks{Partially supported by
    National Science Foundation award CCF-0747294.} \and Fang Song \thanks{Department of Combinatorics \& Optimization and Institute for Quantum Computing, University of Waterloo, Canada. Partially supported by Cryptoworks21, NSERC, ORF and US ARO. Most work was conducted while at the Pennsylvania State University.}}
\date{}
\maketitle


\begin{abstract}
Cryptographic protocols, such as protocols for secure function
evaluation (SFE), have played a crucial role in the development of
modern cryptography. The extensive theory of these protocols,
however, deals almost exclusively with classical attackers. If we
accept that quantum information processing is the most realistic
model of physically feasible computation, then we must ask: what
classical protocols remain secure against quantum attackers?

Our main contribution is showing the existence of classical
two-party protocols for the secure evaluation of any polynomial-time
function under reasonable computational assumptions (for example, it
suffices that the learning with errors problem be hard for quantum
polynomial time). Our result shows that the basic two-party
feasibility picture from classical cryptography remains unchanged in
a quantum world.

\end{abstract}

\newpage
\tableofcontents
\newpage

\ifnum\final=0












\mypar{Questions}
\begin{enumerate}
 





\item p14. semi-honest quantum adversary: ``we focus on first (should be second?) model''

 


%

\end{enumerate}

\
\newpage
\fi

\section{Introduction}
\label{sec:intro}

Cryptographic protocols, such as protocols for secure function
evaluation (SFE), have played a crucial role in the development of
modern
cryptography.  
Goldreich, Micali and Wigderson~\cite{GMW87}, building on the
development of zero-knowledge (ZK) proof systems \cite{GMR85,GMW91},
showed that SFE protocols exist for any polynomial-time function
under mild assumptions (roughly, the existence of secure public-key
cryptosystems).
Research into the design and analysis of such protocols is now a
large subfield of cryptography; it has also driven important
advances in more traditional areas of cryptography such as the
design of encryption, authentication and signature schemes.

The extensive theory of these protocols, however, deals almost
exclusively with classical attackers. However, given our current
understanding of physics, quantum information processing is the most
realistic model of physically feasible computation. It is natural to
ask: \emph{what classical protocols remain secure
  against quantum attackers?} In many cases, even adversaries with modest
quantum computing capabilities, such as the ability to share and
store entangled photon pairs, are not covered by existing proofs of
security.

Clearly not all protocols are secure: we can rule out anything based
on the computational hardness of factoring, the discrete
log~\cite{Shor97}, or the principal ideal problem~\cite{Hallgren07}.
More subtly, the basic techniques used to reason about security may
 not apply in a quantum setting. For example, some
 information-theoretically secure two-prover
 ZK and commitment protocols are analyzed by viewing the
  provers as long tables that are fixed before queries are chosen by
  the verifier;
  quantum entanglement breaks that analysis and some protocols are
  insecure against colluding quantum provers (Cr{\'e}peau \etal,~\cite{CSST11}).

  In the computational realm, \emph{rewinding} is a key technique for
  basing the security of a protocol on the hardness of some
  underlying problem. Rewinding proofs consist of a mental experiment
  in which the adversary is run multiple times using careful
  variations of a given input. At first glance, rewinding seems impossible with
  a quantum adversary since running it multiple times
  might modify the entanglement between its internal storage and an outside reference
  system, thus changing the overall system's behavior.

  In a breakthrough paper, Watrous~\cite{Wat09} showed that a specific
  type of zero-knowledge proof (3-round, GMW-style protocols) can be
  proven secure using a rewinding argument tailored to quantum
  adversaries.  Damg{\aa}rd and Lunemann~\cite{DL09} showed that a
  similar analysis can be applied to a variant of Blum's coin flipping
  protocol.  Hallgren \etal~\cite{HKSZ08} showed
  certain classical transformations from
  honest-verifier to malicious-verifier ZK can be modified to provide
  security against malicious quantum verifiers.
  Some information-theoretically secure classical protocols are also
  known to resist quantum
  attacks~\cite{CGS02,BCGHS06,FS09,Unr10}. Finally, there is a longer
  line of work on protocols that involve quantum communication, dating
  back to \citet*{bennett1984quantum}. Overall, however, little is known
  about how much of the classical theory can be carried over to
  quantum settings. See ``Related Work'', below, for more detail.

\subsection{Our Contributions} \label{ssec:contr}

Our main contribution is showing the existence of classical
two-party protocols for the secure evaluation of any polynomial-time
function under reasonable computational assumptions (for example, it
suffices that the learning with errors problem~\cite{Reg09} be hard
for quantum polynomial time). Our result shows that \emph{the basic
two-party feasibility picture from classical cryptography remains
unchanged in a quantum world}. The only two-party general SFE
protocols which had previously been analyzed in the presence of
quantum attackers required quantum computation and communication on
the part of the honest participants~(e.g.~\cite{CDMS04,DFLSS09}).


In what follows, we distinguish two basic settings: in the
\emph{stand-alone} setting, protocols are designed to be run in
isolation, without other protocols running simultaneously; in
\emph{network} settings, the protocols must remain secure even when
the honest participants are running  other protocols (or copies of
the same protocol) concurrently. Protocols proven secure in the
\emph{universal composability} (UC) model~\cite{Can01} are secure in
arbitrary network settings, but UC-security is impossible to achieve
in many scenarios.

Our contributions can be broken down as follows:

\mypar{General Modeling of Stand-Alone Security with Quantum Adversaries}
We describe a security model for two-party protocols in the presence
of a quantum attackers. Proving security in this model amounts to
showing that a protocol for computing a function $f$ behaves
indistinguishably from an ``ideal'' protocol in which $f$ is computed
by a trusted third party, which we call the ideal functionality
$\calF$. Our model is a quantum analogue of the model of stand-alone
security developed by Canetti~\cite{Can00} in the classical
setting. It slightly generalizes the existing model of \citet{DFLSS09} in two ways. First, our model allows for protocols in which the ideal functionalities process quantum information (rather than only
classical functionalities). Second, it allows for adversaries that take
arbitrary quantum advice, and for arbitrary entanglement between
honest and malicious players' inputs. Our model may be
viewed as a restriction of the quantum UC model of \citet{Unr10} to
noninteractive distinguishers, and we use that connection in our
protocol design (see below). We also discuss possible variants of quantum stand-alone models and initiate a study on their relationships, which connects to interesting questions in a broad scope. 

We show a sequential modular composition theorem for protocols
analyzed in our model. Roughly, it states that one can design
protocols modularly, treating sub-protocols as equivalent to their
ideal versions when analyzing security of a high-level protocol. While
the composition result of \citet{DFLSS09} allows only for classical
high-level protocols, our result holds for arbitrary quantum
protocols.

\ifnum\final=0
Moreover, we initiate a study on definitional issues of quantum
stand-alone security models in a unified framework. We show the
equivalence of several variants of our model (the variants have to do
with how auxiliary information is provided and whether it can be
entangled with an outside system). The equivalences demonstrate the
robustness of our model, and also simplify the literature. 

The equivalences are not trivial; for example, they use
quantum authentication schemes (\citet{BCGST02}) to prevent certain interactions between the environment and the auxiliary information.
Along the way, we generalize the soundness property of authentication
schemes to a ``network'' setting with computationally bounded
attackers. Roughly speaking, if a tampering operation on an
authenticated message causes no difference to a poly-time observer,
then the operation is essentially identity, again from a poly-time
observer's view. The requirement must hold even when the message being
authenticated is part of an entangled state. This in a way helps our
understanding of the power of entanglement. Our study also relates to other basic questions in complexity theory and quantum information. 
\fi
\mypar{Classical Zero-knowledge Arguments of Knowledge Secure Against
  Quantum Adversaries}
We construct a classical zero-knowledge argument of
knowledge (ZKAoK) protocol that can be proven secure in our
stand-alone model.
Our construction is
``witness-extendable''~(\citet{Lin03}), meaning that one can
simulate an interaction with a malicious prover and simultaneously
extract a witness of the statement whenever the prover succeeds.
Our security proof overcomes a limitation of the previous construction of (two-party) quantum proofs of knowledge
(\citet{Unr-qpok}), which did not have a simulator for malicious
provers. 
Such a simulator is important since it allows one to analyze
security when using a proof of knowledge as a subprotocol. As in
the classical case, our ZKAoK protocol is an important building
block in designing general SFE protocols.

The main idea behind our construction is to have the prover and
verifier first execute a weak coin-flipping protocol to generate a
public key for a special type of encryption scheme.  The prover
encrypts his witness with respect to this public key and proves
consistency of his ciphertext with the statement $x$ using the ZK
protocols analyzed by Watrous~\cite{Wat09}. A simulator playing the
role of the verifier can manipulate the coin-flipping phase to
generate a public key for which she knows the secret key, thus
allowing her to extract the witness without needing to rewind the
prover. A simulator playing the role of the prover, on the other hand,
cannot control the coin flip (to our knowledge) but can ensure that
the public key is nearly random. If the encryption scheme satisfies
additional properties (that can be realized under widely
used lattice-type assumptions), we show that the verifier's view can
nonetheless be faithfully simulated. 

\mypar{Classical UC Protocols in a Quantum Context: Towards
  Unruh's Conjecture}
We show that a large class of protocols which are UC-secure against
computationally bounded classical adversaries are also UC-secure
against quantum adversaries. Unruh~\cite{Unr10}
showed that any classical protocol which is proven UC-secure against
unbounded classical adversaries is also UC-secure against unbounded
quantum adversaries. He conjectured (roughly, see~\cite{Unr10} for
the exact statement) that classical arguments of
\emph{computational} UC security should also go through as long as
the underlying computational primitives are not easily breakable by
quantum computers.

We provide support for this conjecture by describing a family of
classical security arguments that go through verbatim with quantum
adversaries. We call these arguments ``simple hybrid arguments''.
They use rewinding neither in the simulation nor in any of the steps
that show the correctness of simulation.\footnote{In general, it is
hard to clearly define what it means for a security proof to ``not
use rewinding''. It is not enough for the protocol to have a
straight-line simulator, since the proof of the simulator's
correctness might still employ rewinding. Simple hybrid arguments
provide a clean, safe subclass of arguments that go through with
quantum adversaries.}

Our observation allows us to port a general result of \citet{CLOS02}
to the quantum setting.  We obtain the following: in the
$\calF_{ZK}$-hybrid model, where a trusted party implementing ZKAoK
is available, there exist classical protocols for the evaluation of
any polynomial-time function $f$ that are UC-secure against quantum
adversaries under reasonable computational assumptions. As an
immediate corollary, we get a classical protocol that quantum-UC
emulates the ideal functionality $\calF_{CF}$ for coin-flipping,
assuming UC-secure ZK. 

\mypar{New Classical UC Protocols Secure Against Quantum Attacks}
\anote{This paragraph is a bit unclear. How does it relate to Fang's
 work on unconditional reductions in the quantum UC model?}
\fnote{what would you suggest to mention here? I have a few words
  about our TCC13 work in Section 5}
 We construct new two-party protocols that are UC-secure against quantum adversaries. Adapting ideas from~\citet{Lin03}, we show a \emph{constant-round} classical coin-flipping protocol
from ZK (i.e. in $\fzk$-hybrid model). Note that the general feasibility result from above  already implies the existence of a quantum-UC secure coin-flipping protocol, but it needs polynomially many rounds.  
Conversely, we can also construct a \emph{constant-round} classical protocol for ZKAoK that is UC-secure against quantum adversaries, assuming a trusted party implementing coin-flipping, i.e. in the $\calF_{CF}$-hybrid model (essentially equivalent to the \emph{common reference string} model, where all participants have
access to a common, uniformly distributed bit string).
\fnote{I removed the highlights on analyzing WI protocol.}
This establishes the equivalence between $\calF_{ZK}$ and $\calF_{CF}$ in the quantum UC model, which may be of independent interest, e.g., in simplifying protocol
designs. It has also motivated a subsequent work by Fehr et el.~\cite{FKSZZ13} where they showed interesting connections between ideal functionalities in the quantum-UC model in a systematic way. 

\mypar{Implications}  The modular composition theorem in our
stand-alone model allows us to get the general feasibility result
below by combining our stand-alone ZKAoK protocol and the UC-secure
protocols in $\fzk$-hybrid model:

Under standard assumptions, {\em there exist \emph{classical} SFE
protocols in the plain model (without a shared random string) which
are stand-alone-secure against static \emph{quantum} adversaries}.
This parallels the classic result of Goldreich, Micali and
Wigderson~\cite{GMW87}.

The equivalence of zero-knowledge and coin-flipping functionalities
in the UC model also has interesting implications. First,
the availability of a common reference string (CRS) suffices for
implementing quantum-UC secure protocols. Secondly, given our
stand-alone ZKAoK protocol, we get a quantum stand-alone
coin-flipping protocol.

Independently of our work, \citet{LN11}, via a different route, also
showed the existence of  classical two-party SFE protocols secure
against quantum attacks. See the discussion at the end of ``Related Work''.

\subsection{Related work}
\label{ssec:rwork}

In addition to the previous work mentioned above, we expand here on
three categories of related efforts.

\mypar{Composition Frameworks for Quantum Protocols} Systematic
investigations of the composition properties of quantum protocols
are relatively recent.
Canetti's UC framework and Pfitzmann and Waidner's closely related
\emph{reactive functionality} framework were extended to the world
of quantum protocols and adversaries by Ben-Or and
Mayers~\cite{BOM04} and Unruh~\cite{Unr04,Unr10}. These
frameworks (which share similar semantics) provide extremely strong
guarantees---security in arbitrary network environments. They were
used to analyze a number of unconditionally secure quantum protocols
(key exchange~\cite{BHLMO05} and multi-party computation with honest
majorities~\cite{BCGHS06}). However, many protocols are not
universally composable, and Canetti~\cite{Can01} showed that
classical protocols cannot UC-securely realize even basic tasks such
as commitment and zero-knowledge proofs without some additional
setup assumptions such as a CRS or public-key infrastructure.

Damg{\aa}rd \etal \cite{DFLSS09}, building on work by Fehr and
Schaffner~\cite{FS09}, proposed a general composition framework
which applies only to secure quantum protocols of a particular form
(where quantum communication occurs only at the lowest levels of the
modular composition). As noted earlier, our model is more general
and captures both classical and quantum protocols. Recently, Maurer and Renner proposed a new composable framework called Abstract Cryptography~\cite{MR11}, and it has been adapted to analyzing quantum protocols as well~\cite{DFPR14}. 

\mypar{Analyses of Quantum Protocols} The first careful proofs of
security of quantum protocols were for key exchange
(Mayers~\cite{Mayers01}, Lo and Chau~\cite{LC99}, Shor and
Preskill~\cite{SP00}, Beaver~\cite{Beaver02}). Research on quantum
protocols for two-party tasks such as coin-flipping, bit commitment
and oblivious transfer dates back farther~\cite{BC90,BBCS91}, though some initially proposed protocols were insecure~\cite{Mayers01}. The
first proofs of security of such protocols were based on
computational assumptions~\cite{DMS00,CDMS04}. They were highly
protocol-specific and it was not known how well the protocols
composed. The first proofs of security using the simulation paradigm
were for information-theoretically-secure protocols for multi-party
computations assuming a strict majority of honest
participants~\cite{CGS02,CGS05,BCGHS06}. More recently, Dupuis et al.~\cite{DNS10,DNS12} constructed  two-party quantum protocols for evaluating arbitrary unitary operations, which they proved secure under reasonable computational assumptions in a simulation-based definition similar to what we propose in this work. There was also a line of
work on the \emph{bounded quantum
  storage} model~\cite{DFSS08,DFSS07,FS09,Unr-bqs} developed tools for
reasoning about specific types of composition of two-party
protocols, under assumptions on the size of the adversary's quantum
storage. Many tools have been developed in recent years on modeling and analyzing composable security for protocols of device-independent quantum key-exchange and randomness expansion~\cite{FGS13,VV14,MS14,CSW14}.  

\mypar{Straight-Line Simulators and Code-Based Games} As mentioned
above, we introduce ``simple hybrid arguments'' to capture a class
of straightforward security analyses that go through against quantum
adversaries. Several formalisms have been introduced in the past to
capture classes of ``simple'' security arguments. To our knowledge,
none of them is automatically compatible with quantum adversaries.
For example, \emph{straight-line black-box simulators}~\cite{KLR10}
do not rewind the adversary nor use an explicit description of its
random coins; however, it may be the case that rewinding is
necessary to prove that the straight-line simulator is actually
correct. In a different vein, the \emph{code-based games} of Bellare
and Rogaway~\cite{BR06} capture a class of
hybrid arguments that can be encoded in a clean formal language;
again, however, the arguments concerning each step of the hybrid may
still require rewinding.

\mypar{Independent Work} Lunemann
and Nielsen~\cite{LN11} independently obtained  similar results to
the ones described here,  via a slightly
different route.  Specifically, they start by constructing a
stand-alone coin-flipping protocol that is fully simulatable against
quantum poly-time adversaries. Then they use the coin-flipping
protocol to construct a stand-alone ZKAoK protocol, and finally by
plugging into the GMW construction, they get quantum
stand-alone-secure two-party SFE protocols as well. The computational
assumptions in the two works are similar and the round complexities of the
stand-alone SFE protocols are both polynomial in the security
parameter. Our approach to composition is more general, however,
leading to results that also apply (in part) to the UC model.

\subsection{Future Directions}
\label{ssec:open}
Our work suggests a number of straightforward
conjectures. For example, it is likely that our techniques in fact
apply to all the results in CLOS (multi-party, adaptive adversaries)
and to corresponding results in the ``generalized'' UC
model~\cite{CDPW07}. Essentially all protocols in the semi-honest
model seem to fit the simple hybrids framework, in particular
protocols based on Yao's garbled-circuits framework
(e.g.~\cite{BMR90}). It is also likely that existing proofs in
security models which allow super-polynomial simulation
(e.g.,~\cite{Pas03,PS04,BS05}) will carry through using a similar
line of argument to the one here.

However, our work leaves open some basic questions: for example, can
we construct constant-round ZK with negligible completeness and
soundness errors against quantum verifiers? Watrous's technique does
not immediately answer it since sequential repetition seems necessary
in his construction to reduce the soundness error.  A quick look at
classical constant-round ZK (e.g.,~\cite{FS89}) suggests that
witness-indistinguishable proofs of knowledge are helpful. Is it
possible to construct constant-round witness-extendable WI proofs of
knowledge? Do our analyses apply to extensions of the UC framework,
such the \emph{generalized UC} framework of \citet{CDPW07}?
Finally, more generally, which other uses of rewinding can be adapted to
quantum adversaries? Aside from the original work by
Watrous~\cite{Wat09}, Damg{\aa}rd and Lunemann~\cite{DL09} and
Unruh~\cite{Unr-qpok} have shown examples of such adaption.

\mypar{Organization} The rest of the paper is organized as follows:
Section~\ref{sec:pre} reviews basic notations and definitions. In
Section~\ref{sec:qssc}, we propose our quantum
stand-alone security model. We show our main result in Section~\ref{sec:q2pc}.
Specifically, Section~\ref{subsec:qclos} establishes quantum-UC secure protocols in $\fzk$-hybrid model. A quantum stand-alone-secure ZKAoK protocol is developed in
Section~\ref{subsec:zkaok}. 
Finally in Section~\ref{sec:zk=cf}, we discuss equivalence of $\fzk$ and
$\fcf$.

\section{Preliminaries}
\label{sec:pre}
For $m \in  \bbN$, $[m]$ denotes the set $\{1, \ldots, m\}$. We
use $n \in \mathbb{N}$ to denote a {\em security parameter}. The
security parameter, represented in unary, is an implicit input to
all cryptographic algorithms; we omit it when it is clear from the
context. Quantities derived from protocols or algorithms
(probabilities, running times, etc.)  should be thought of as
functions of $n$, unless otherwise specified. A function $f(n)$ is
said to be negligible if $f = o(n^{-c})$ for any constant $c$, and
$\text{negl}(n)$ is used to denote an unspecified function that is
negligible in $n$. We also use $poly(n)$ to denote an unspecified
function $f(n) = O(n^c)$ for some constant $c$. When $D$ is a
probability distribution, the notation $x \gets D$ indicates that
$x$ is a sample drawn according to $D$. When $D$ is a finite set, we
implicitly associate with it the uniform distribution over the set.
If $D(\cdot)$ is a probabilistic algorithm, $D(y)$ denotes the
distribution over the output of $D$ corresponding to input $y$. We
will sometimes use the same symbol for a random variable and for its
probability distribution when the meaning is clear from the context.
Let $\mathbf{X} = \{X_n\}_{n\in \bbN}$ and $\mathbf{Y} =
\{Y_n\}_{n \in \bbN}$ be two ensembles of binary random
variables. We call $\mathbf{X,Y}$ {\em indistinguishable}, denoted
$\mathbf{X} \approx \mathbf{Y}$, if $ \left|\Pr(X_n = 1) - \Pr(Y_n =
1)\right| \leq \text{negl}(n)$.

We assume the reader is familiar with the basic concepts of
 quantum information theory (see, e.g., \cite{NC00}).
We use a {sans serif} letter (e.g. $\reg{X}$) to denote both a quantum
register and the corresponding Hilbert space. We use $\rspacen{X}$ if
we want to be specific about the security parameter. Let $\mathcal{H}_n$
denote the space for $n$ qubits.
Let $\density{\rspace{X}}$ be the set of density operators acting on space
$\rspace{X}$ and $\linop{\rspace{X}}{\rspace{Y}}$ be the set of linear operators from space $\rspace{X}$ to $\rspace{Y}$.


\mypar{Quantum Machine Model}
We adapt Unruh's machine model in~\cite{Unr10} with minor changes. A
{\em quantum interactive machine} (QIM) $M$ is an ensemble of
interactive circuits $\{M_x\}_{x \in I}$. The index set $I$ is typically the natural numbers $\bbN$ or a set of strings $I\subseteq \{0,1\}^*$ (or both).  We give our description here with respect to $\{M_n\}_{n\in \bbN}$. For each value $n$
of the security parameter, $M_n$ consists of a sequence of circuits
$\{M_n^{(i)}\}_{i=1,...,\ell(n)}$, where $M_n^{(i)}$ defines the
operation of $M$ in one round $i$ and $\ell(n)$ is the number of
rounds for which $M_n$ operates (we assume for simplicity that
$\ell(n)$ depends only on $n$). We omit the scripts when they are
clear from the context or are not essential for the discussion. Machine $M$
(or rather each of the circuits that it comprises) operates on three
registers: a state register $\reg{S}$ used for input and workspace; an
output register $\reg{O}$; and a network register $\reg{N}$ for communicating
with other machines. 
The size (or running time) $t(n)$ of $M_n$ is the sum of the sizes
of the circuits $M_n^{(i)}$. We say a machine is polynomial time if
$t(n)=poly(n)$ and
there is a deterministic classical Turing machine that computes the
description of $M_n^{(i)}$ in polynomial time on input $(1^n,1^i)$.


When two QIMs $M$ and $M'$ interact, they share network register $\reg{N}$. The circuits $M_n^{(i)}$ and ${M'}_n^{(i)}$ are executed
alternately for $i=1,2,...,\ell(n)$. When three or more machines
interact, the machines may share different parts of their network
registers (for example, a private channel consists of a register
shared between only two machines; a broadcast channel is a register
shared by all machines). The order in which machines are activated
may be either specified in advance (as in a synchronous network) or
adversarially controlled.

A non-interactive quantum machine (referred to as QTM hereafter) is a
QIM $M$ with network register empty and it runs for only one round (for
all $n$). This is equivalent to the {\em quantum Turing machine}
model (see~\cite{Yao93}).


A classical interactive Turing machine is a special case of a QIM, where
the registers only store classical strings and all circuits are
classical. This is also called an interactive Turing machine
(ITM) with advice (\citet{Can00,Can01}).

\mypar{Indistinguishability of Quantum States}
Let $\rho = \{\rho_n\}_{n\in \bbN}$ and $\eta = \{\eta_n\}_{n\in \bbN}$ be ensembles of
mixed states indexed by $n\in\bbN$, where $\rho_n$ and $\eta_n$ are
both $r(n)$-qubit states for some polynomially bounded function
$r$. We first define a somewhat weak notion of indistinguishability of
quantum state ensembles.

\begin{definition}[$(t, \veps)$-weakly  indistinguishable states]
\label{def:wqc} 
We say two quantum state ensembles $\rho =
\{\rho_n\}_{n\in \bbN}$ and $\eta = \{\eta_n\}_{n\in \bbN}$
are \emph{$(t,\veps)$-weakly indistinguishable}, denoted $\rho
\wqc^{t, \veps} \eta$, if for every $t(n)$-time QTM $\calZ$, 
$$\left|\Pr[\calZ(\rho_n) =1] - \Pr[ \calZ(\eta_n) =
1]\right|  \leq \veps(n)\, .$$
\end{definition}

The states $\rho$ and $\eta$ are called {\em weakly computationally indistinguishable}, denoted $\rho \wqc \eta$, if for every polynomial $t(n)$, there exists a negligible $\veps(n)$ such that $\rho_n$ and $\eta_n$ are $(t,\veps)$-weakly computationally indistinguishable.

A stronger notion of indistinguishability of quantum states was proposed by Watrous~\cite[Definition 2]{Wat09}. The crucial distinction is that a distinguisher is allowed to take quantum advice.

\begin{definition}[$(t, \veps)$-indistinguishable states]
\label{def:qc}
We say two quantum state ensembles $\rho =
\{\rho_n\}_{n\in \bbN}$ and $\eta = \{\eta_n\}_{n\in \bbN}$
are $(t,\veps)$-indistinguishable, denoted $\rho \qc^{t, \veps} \eta$, if for every $t(n)$-time QTM $\calZ$ and
any mixed state $\sigma_n$,
$$\left|\Pr[\calZ(\rho_n\otimes\sigma_n) =1] - \Pr[ \calZ(\eta_n\otimes\sigma_n) =
1]\right|  \leq \veps(n)\, .$$
\end{definition}

The states $\rho$ and $\eta$ are called {\em quantum computationally indistinguishable}, denoted $\rho \qc \eta$, if for every polynomial $t(n)$, there exists a negligible $\veps(n)$ such that $\rho_n$ and $\eta_n$ are $(t,\veps)$-indistinguishable.

The two definitions above subsume classical distributions as a special case, since classical
distributions can be represented by density matrices
that are diagonal with respect to the standard basis.

\mypar{Indistinguishability of Quantum Machines} Now we introduce
the notion of distinguishing two QTMs.

\begin{definition}[$(t, \veps)$-indistinguishable QTMs]
\label{def:qcm} 
We say two QTMs $M_1$ and $M_2$ are
{\em$(t,\veps)$-indistinguishable}, denoted $M_1 \approx_{qc}^{t,
  \veps} M_2$, if for any $t(n)$-time QTM $\calZ$ and any mixed state
$\sigma_n \in \density{\rspacen{S} \otimes \rspacen{R}}$, where
$\rspacen{R}$ is an arbitrary reference system,

$$ \left| \Pr[ \calZ((M_1\otimes \mathbb{1}_{\lin{\rspace{R}}})\sigma_n) = 1 ] - \Pr [ \calZ((M_2\otimes \mathbb{1}_{\lin{\rspace{R}}})\sigma_n) = 1] \right| \leq \veps(n) \, .$$
Machines $M_1$ and  $M_2$ are called {\em quantum computationally
indistinguishable}, denoted $M_1 \qc M_2$, if for every polynomial $t(n)$, there exists a negligible $\veps(n)$ such that $M_1$
and $M_2$ are $(t,\veps)$-computationally indistinguishable. 
\end{definition}
This definition is equivalent to quantum computationally indistinguishable super-operators proposed by Watrous~\cite[Definition 6]{Wat09}.
If we do not restrict the running time of the distinguisher, we obtain a statistical notion of indistinguishability. Let $\td(\cdot,\cdot)$ be the trace distance between density operators. 

\begin{definition}[$ \veps$-indistinguishable QTMs in diamond norm]
\label{def:diamm} 
We say two QTMs $M_1$ and $M_2$ are {\em $\veps$-indistinguishable in diamond norm}, denoted $M_1 \approx_{\diamond}^{\veps} M_2$, if for any $\sigma_n
\in \density{ \rspacen{S} \otimes \rspacen{R}}$, 
 $\rspace{R}$ being an arbitrary reference system
,
$$ \td[ (M_1\otimes \mathbb{1}_{\lin{\rspace{R}}})\sigma_n, (M_2\otimes \mathbb{1}_{\lin{\rspace{R}}})\sigma_n] \leq \veps(n)\, .$$
QIMs $M_1$ and  $M_2$ are said to be {\em indistinguishable in diamond norm}, denoted $M_1 \approx_{\diamond} M_2$, if there exists a negligible $\veps(n)$ such that $M_1$ and $M_2$ are $\veps$-indistinguishable in diamond norm.
\end{definition}

\ifnum\final=0

The two definitions above allow the distinguisher to keep a reference system. Namely part of a possibly entangled state is given as input to the machines and the other part is passed onto $\calZ$ to assist his/her decision later. If this is not permitted, we may obtain weaker definitions (strictly weaker in the statistical case. See a discussion in~\cite{Wat1120}.). We introduce them below because they are useful sometimes such as simplifying the analysis. However we stress that Definitions~\ref{def:qcm} and~\ref{def:diamm} should be the considered as the standard notions of indistinguishability in the quantum setting. 

\begin{definition}[$(t, \veps)$-weakly indistinguishable QTMs]
\label{def:wqcm} 
We say two QTMs $M_1$ and $M_2$ are {\em
$(t,\veps)$-weakly computationally indistinguishable}, denoted $M_1 \wqc^{t, \veps}
M_2$, if for any $t(n)$-time QTM $\calZ$ and any $\sigma_n
\in \density{ \rspacen{S}}$, 
$$ \left| \Pr[ \calZ(M_1(\sigma_n)) = 1 ] - \Pr [ \calZ(M_2(\sigma_n)) = 1] \right| \leq \veps(n).$$
Machines $M_1$ and  $M_2$ are called {\em weakly quantum computationally
indistinguishable}, denoted $M_1 \wqc M_2$, if for every polynomial $t(n)$, there exists a negligible $\veps(n)$ such that $M_1$
and $M_2$ are $(t,\veps)$-weakly indistinguishable.
\end{definition}

\begin{definition}[$\veps$-indistinguishable QTMs in trace norm]
\label{def:trm} 
We say two QTMs $M_1$ and $M_2$ are {\em $\veps$-indistinguishable in trace norm}, denoted $M_1 \approx_{tr}^{\veps} M_2$, if for any $\sigma_n \in\density{ \rspacen{S}}$,
$$ \td[M_1(\sigma_n), M_2(\sigma_n)] \leq \veps(n) \, $$
$M_1$ and  $M_2$ are said to be {\em indistinguishable in trace norm}, denoted $M_1 \approx_{tr} M_2$, if there is a negligible $\veps(n)$ such that $M_1$ and $M_2$ are $\veps$-indistinguishable in trace norm. 
\end{definition}
\fi
\mypar{Indistinguishability of QIMs} Next, we generalize the
definitions of 
indistinguishability above to interactive quantum machines. Let $\calZ$ and $M$ be
two QIMs, we denote $\ip{\calZ(\sigma)}{M}$ as the following process:
machine $\calZ$ is initialized with $\sigma$, 
it then provides input to $M$ and interacts with $M$. In the end, the output register of $M$ is given to $\calZ$ and $\calZ$ generates one classical bit on its own output register.
\begin{definition}[$(t,\veps)$-indistinguishable QIMs]
\label{def:qci} We say two QIMs $M_1$ and $M_2$ are {\em
$(t,\veps)$-interactively indistinguishable}, denoted $M_1
\approx_i^{t, \veps} M_2$, if for any quantum $t(n)$-time interactive
machine $\calZ$ and any mixed state $\sigma_n$ on $t(n)$ qubits,
$$ \left| \Pr[ \langle \calZ(\sigma_n), M_1 \rangle  = 1 ] - \Pr [ \langle \calZ(\sigma_n), M_2 \rangle= 1] \right| \leq \veps(n).$$
QIMs $M_1$ and  $M_2$ are called {\em quantum computationally interactively
indistinguishable}, denoted $M_1 \qcii M_2$, if for every $t(n)
\leq poly(n)$, there exists a negligible $\veps(n)$ such that $M_1$
and $M_2$ are $(t,\veps)$-interactively indistinguishable.
\end{definition}

\noindent We may call such $\calZ$ an interactive distinguisher. We can likewise define \emph{statistically interactively indistinguishable} QIMs, denoted $M_1\qsii M_2$, if we allow unbounded interactive distinguisher $\calZ$. 

\begin{remark} Quantum interactive machines, as we defined earlier,
  actually can be seen as a subset of quantum strategies, formulated
  in~\cite{GW07}. Namely, a QIM is a strategy in which each channel
  can be implemented by a uniformly generated circuit. Therefore we
  can as well define statistically interactively indistinguishability
  using the $\|\cdot\|_{\diamond r}$ norm for quantum
  strategies. See~\citet{Gut12} and \citet{CGD+08} for details about characterizing distinguishability of quantum strategies using the $\|\cdot\|_{\diamond r}$ norm.
\end{remark}
\mypar{Ideal functionalities} We sketch ideal functionalities, i.e., the programs of a trusted party in an ideal protocol, for a few basic cryptographic tasks. 

\noindent$\bullet$~Commitment $\fcom$: At ``Commit'' stage, Alice (the committer) inputs a bit $b$ and Bob (the receiver) receives from $\fcom$ a notification that a bit was committed. At ``Open'' stage, Alice can input the command {\sf open} to $\fcom$ who then sends Bob $b$. 

\noindent$\bullet$~Oblivious Transfer $\fot$: Alice (the sender) inputs $2$ bits
  $(s_0, s_1)$ and Bob (the receiver) inputs a selection bit
  $c\in\{0,1\}$. Bob receives $s_c$ from $\fot$. 
 
 \noindent$\bullet$~Zero-knowledge $\fzk$:  Let $R_L$ be an \class{NP} relation. Upon receiving $(x,w)$ from Alice, $\fzk$ verifies $(x,w) \stackrel{?}{\in}
R_L$. If yes, it sends $x$ to Bob; otherwise it instructs Bob to reject.

\noindent$\bullet$~Coin Flipping $\fcf$: Alice and Bob input the
request $1^n$ to $\fcf$, and $\fcf$ randomly chooses $r\gets\{0,1\}^n$
and sends it to Alice. Alice responds $\fcf$ with ``$\acc$'' or ``$\rej$'' indicating continuing or aborting respectively. In the case of ``$\acc$'', $\fcf$ sends $r$ to Bob and otherwise sends Bob $\perp$. Note the functionality is asymmetric in the sense that Alice gets the coins first. This avoids the complicated issue about fairness, which has been an active line of research in classical cryptography (see for example~\cite{Cleve86,Katz07,MNS09,GHKL11}) and is beyond the scope of this paper.

  
\section{Modeling Security in the Presence of Quantum Attacks}
\label{sec:qssc}

In this section, we propose a stand-alone security model for
two-party protocols in the presence of quantum attacks and show a
modular composition theorem in this model, which allows us to use
secure protocols as ideal building blocks to construct larger
protocols. 
\ifnum\final=0
In Section~\ref{subsec:variants}, we discuss variants of
our stand-alone model, and the equivalences and separations
between them.  
\else
We also discuss variants of
our stand-alone model in a unified framework. 
\fi
To be self-contained, we review in
Section~\ref{subsec:qucmodel} the quantum universal-composable (UC)
security model, which is a generalization of classical UC model to the quantum setting. 

\subsection{A General Quantum Stand-Alone Security Model}
\label{subsec:qsamodel}

\def\old{0}
\ifnum\old=1
\subsection{Security Definition}
\label{ssec:sdef}
A two-party protocol $\Pi$ consists of two quantum interactive
machines $\bfA$ and $\bfB$. Two players Alice and Bob that execute $\Pi$
are called honest if they run machines $\bfA$ and $\bfB$ respectively.
An adversary in $\Pi$ is one entity that corrupts some player and
controls its behavior.
We consider both {\em semi-honest} (a.k.a. {\em honest-but-curious})
and {\em malicious} adversaries. In the quantum setting, a
semi-honest adversary runs the honest protocol \emph{coherently},
that is, replacing measurements and classical operations with
unitary equivalents.
\anote{Give a reference for this
  notion (Watrous?) and a citation for ``running circuits coherently''
  (Nielsen-Chuang?).\label{anote:coherent}}

In this work, we consider only {\em static} adversaries, which corrupt a set of
players before the protocol execution starts, but do not perform
further corruptions during the protocol execution.
%

Our definition of security follows the {\em simulation paradigm}
in which two modes of execution called {\em real-world} and
{\em ideal-world} are considered. A {\em real-world} execution is an
interaction between an honest party and a real-world adversary, who corrupts the other party. In the {\em ideal world}, there is a trusted party
that communicates with $\bfA$ and $\bfB$
through private channels and completes
the desired task. We model the trusted party as a (classical) interactive
machine, and call it an {\em ideal functionality} $\calF$.  For example,
in the secure evaluation of a function $f$, an ideal functionality $\calF$
takes inputs $(x,y)$ from $\bfA$ and $\bfB$ respectively,
computes $(f_A, f_B) := f(x,y) $
 and hands $f_A$ (resp. $f_B$) to $\bfA$ (resp. $\bfB$). 
We then say a protocol $\Pi$ securely realizes a given task,
formulated by an ideal functionality $\calF$, if
for any adversary $\calA$ attacking a real-world execution, there
exists an ideal-world adversary (a.k.a {\em simulator}) $\calS$
emulating ``equivalent''
attacks in the ideal-world. Equivalent means on any input state the
output states in the real world and ideal world are indistinguishable.

To be more specific, in the real world $\calA$ corrupts one party, say $\bfA$. Registers $S_\calA$ and $S_\bfB$ are initialized with an $n$-qubit state $\sigma_n \in \text{D}(\mathcal{H}_n)$. Then $\calA$  interacts with honest $\bfB$ and they output on $O_\calA$ and $O_\bfB$ respectively. Let the joint output state be $\sigma_n'$. Finally a QTM $\calZ$, which we call an {\em environment}, takes
$\sigma_n'$ as input and outputs one classical bit. Abstractly, we treat $\calA$ and ${\bfB}$ collectively as a noninteractive machine
$M_{\calA}$ with state space ${S}_\calA \otimes
{S}_{\bfB}$ and output space ${O}_\calA \otimes {O}_\bfB$. Namely, $M_{\calA}$ takes an input state $\sigma_n$ and internally simulates the interaction between $\calA$ and $\bfB$ and outputs $\sigma_n'$.
In the ideal world, trusted party $\calF$ interacts with an ideal-world adversary $\calS$ and an honest $\bfB$. Note that there is no interaction between $\calS$ and $\bfB$ in the ideal world. Analogously, we can model $\calS$, $\bfB$ and $\calF$ as a single QTM $M_{\calS}$ with state space ${S}_{\calA}
\otimes {S}_{\bfB}$ and output space ${O}_{\calA}
\otimes {O}_{\bfB}$. Note that, although $\calS$ typically need more work space than $\calS_{\calA}$ in order to simulate $\calA$ in the ideal world, the input and output spaces should still be in accordance with that of $\calA$. We then define $\mathbf{EXEC}_{\Pi,
\calA, \calZ} :=\{\calZ(M_{\calA}(\sigma_n))\}_{n\in\bbN,\sigma_n \in \text{D}(\mathcal{H}_n)}$ and
$\mathbf{IDEAL}_{\calF, \calS, \calZ} := \{ \calZ(M_{\calS} (\sigma_n))\}_{n\in\bbN, \sigma_n \in \text{D}(\mathcal{H}_n)}$ to be the binary
distribution ensembles of $\calZ$'s output in the real-world and ideal-world executions respectively. See Fig.~\ref{fig:ri}
for an illustration of real-world and ideal-world executions.
\begin{figure}[h!] \centering \caption{Real-world and
Ideal-world Executions}\mbox{ \subfigure[Real-world execution
$\mathbf{EXEC}_{\Pi, \hat\bfB, \calZ}$]
{\includegraphics[width=0.45\columnwidth]{real}}\label{fig:real}
\qquad \subfigure[Ideal-world execution $\mathbf{IDEAL}_{\calF,
\hat{\bfB}_I, \calZ}$]{\includegraphics[width=0.45\columnwidth]{ideal}
}\label{fig:ideal}}\label{fig:ri}
\end{figure}

\begin{definition}
\label{def:qsa}(Quantum Stand-alone Secure Emulation). Let $\calF$ be a poly-time
two-party functionality and let $\Pi$ be a two-party protocol. We
say $\Pi$ {\em quantum stand-alone-emulates} $\calF$, if for any
poly-time QIM $\calA$, there is a poly-time QIM $\calS$, such
that for any poly-time QTM $\calZ$,  $\mathbf{EXEC}_{\Pi, \calA, \calZ} \approx
\mathbf{IDEAL}_{\calF, \calS, \calZ}$. \fnote{somewhere say poly-time
protocol: honest machines are poly in n}
\end{definition} 


\mypar{Remark}
\begin{inparaenum}[\upshape(I\upshape)]
\item Equivalently, the definition can be formulated as: for any $\calA$, there exists
$\calS$, such that QTMs $M_{\calA}$ and $M_{\calS}$ are
indistinguishable, i.e., $M_{\calA} \approx_{wqc} \calM_{\calS}$ as per Definition~\ref{def:dm}.
\item We focus on computational security in this work, and the model
extends to information-theoretical setting straightforwardly.
\item Despite the simplicity of the model we present here, it is actually equivalent to a few seemingly stronger variants. For example, the environment $\calZ$ may take a quantum advice which could be even entangled with the input state. See Appendix~\ref{app:def} for a thorough discussion. \item Though this work mainly focuses on feasibility of functionalities that evaluates a \emph{classical, deterministic} function, our model captures functionalities $\calF$ that can be randomized and reactive (i.e., consisting of multiple stages). We can even model a functionality as a quantum interactive machine, and study secure emulation of quantum functionalities (e.g., jointly evaluating a quantum circuit) in our model. 
\item For technical reasons, we require functionalities to be {\em
well-formed} and protocols to be {\em nontrivial}, which are
satisfied by all functionalities and protocols in our paper.
(See~\cite[Section 3]{CLOS02} for details.) 
\end{inparaenum} \anote{Remark needs updating.}\fnote{how about now?}

\subsection{Modular Composition}
\label{ssec:mcomp}

It is common practice in the design of large protocols that a given task is divided into several subtasks. We first realize each subtask, and then use these modules as building blocks (subroutines) to realize the
initial task. We formalize this paradigm by {\em hybrid
models}\footnote{In contrast, we call it a {\em plain model} if
there are no trusted parties and no trusted setup assumptions like
common reference string or public-key infrastructure, etc.}. A
protocol in the {\em $\calG$-hybrid model}, denoted $\Pi^\calG$, has
access to a trusted party that implements ideal functionality $\calG$.
As before, for each adversary $\calA$
in the $\calG$-hybrid model, we can define
$M_{\calA}$ and $\mathbf{EXEC}_{\Pi^{\calG}, \calA, \calZ}$ likewise. Then we say $\Pi^{\calG}$ quantum stand-alone-emulates ideal
functionality $\calF$ {in the $\calG$-hybrid model} if for any poly-time QIM $\calA$, there is a poly-time QIM $\calS$
$\mathbf{EXEC}_{\Pi^{\calG}, \calA, \calZ} \approx \mathbf{IDEAL}_{\calF,
\calS, \calZ}$.

Now suppose we have $\Pi_1^\calG$
in the $\calG$-hybrid model and a protocol $\Pi_2$ realizing $\calG$. The
operation of replacing an invocation of $\calG$ with an invocation of
$\Pi_2$ is done in the natural way: machines in $\Pi_1$ initialize
machines in $\Pi_2$ and pause; machines in $\Pi_2$ execute $\Pi_2$
and generate outputs; then $\Pi_1$ resumes with these outputs. We
denote the composed protocol $\Pi_1^{\Pi_2}$. We can show that our model allows for modular compositions.

\begin{theorem}(Modular Composition Theorem) Let $\Pi_1^\calG$ be a two-party
  protocol that quantum stand-alone-emulates $\calF$ in the
  $\calG$-hybrid model and let $\Pi_2$ be a two-party protocol that
  quantum stand-alone-emulates $\calG$.  Then the composed
  protocol $\Pi_1^{\Pi_2}$ quantum stand-alone-emulates $\calF$.
\label{thm:mcomp}
\end{theorem}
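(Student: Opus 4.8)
The plan is to prove the Modular Composition Theorem by the standard simulation argument adapted to the quantum setting, proceeding in two stages: first building the simulator for the composed protocol out of the two given simulators, then arguing indistinguishability via a hybrid over the (polynomially many) invocations of $\calG$ inside $\Pi_1$. Fix a poly-time quantum adversary $\calA$ attacking $\Pi_1^{\Pi_2}$; without loss of generality suppose it corrupts Alice. Because $\Pi_1^{\Pi_2}$ simply runs $\Pi_1$ with each call to $\calG$ textually replaced by a subroutine call to $\Pi_2$, I can decompose the behaviour of $\calA$ across the protocol: during the ``$\Pi_1$-parts'' it behaves as some adversary $\calA_1$ against $\Pi_1^{\calG}$, and during the $j$-th invocation of the subroutine it behaves as some adversary $\calA_2^{(j)}$ against $\Pi_2$ (its internal state at the start of the $j$-th subroutine plays the role of $\calA_2^{(j)}$'s quantum advice, and the environment/reference register is carried along untouched). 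Applying the security of $\Pi_2$ gives, for each $j$, a simulator $\calS_2^{(j)}$ for $\calA_2^{(j)}$; applying the security of $\Pi_1^{\calG}$ to the composite adversary $\calA_1$ (which internally runs $\calA$ together with the ideal $\calG$-calls in place of the real $\Pi_2$ subroutines) gives a simulator $\calS_1$. The simulator $\calS$ for the composed protocol is then $\calS_1$ with its ideal $\calG$-calls replaced by invocations of the appropriate $\calS_2^{(j)}$.

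The correctness argument is a hybrid over the invocations of $\calG$. Let $\ell = poly(n)$ be the number of times $\Pi_1$ invokes $\calG$ (this is bounded because $\Pi_1$ is poly-time). Define hybrid $H_k$ to be the experiment that runs $\Pi_1$ against $\calA$ but where the first $k$ subroutine calls are executed as the real protocol $\Pi_2$ and the remaining $\ell-k$ are executed as the ideal functionality $\calG$ (with $\calA$'s local subroutine-behaviour on those calls replaced by the corresponding ideal-world adversary). Then $H_\ell$ is (the real execution $\exec_{\Pi_1^{\Pi_2},\calA,\calZ}$) and $H_0$ is the execution of $\Pi_1^{\calG}$ against the composite adversary $\calA_1$. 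Consecutive hybrids $H_{k-1}$ and $H_k$ differ only in whether the $k$-th subroutine call is $\Pi_2$ or $\calG$ with simulator $\calS_2^{(k)}$; the security guarantee of $\Pi_2$ (Definition~\ref{def:qcm}, i.e. $\approx_{qc}$ with an arbitrary reference system) shows these are $(t,\veps_2)$-indistinguishable for any poly-time environment, where the ``reference system'' absorbs everything outside the $k$-th subroutine --- namely the state of $\calA$, the honest party $\bfB$, the remaining steps of $\Pi_1$, and $\calZ$ --- all collapsed into one non-interactive distinguisher of polynomial size. Summing over $k$, the real execution is within $\ell\cdot\veps_2$ of $\exec_{\Pi_1^{\calG},\calA_1,\calZ}$. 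Finally the security of $\Pi_1^{\calG}$ gives $\exec_{\Pi_1^{\calG},\calA_1,\calZ}\approx\ideal_{\calF,\calS_1,\calZ}$, and by construction $\ideal_{\calF,\calS_1,\calZ}=\ideal_{\calF,\calS,\calZ}$. Combining, the total advantage of any poly-time $\calZ$ is at most $\ell\cdot\veps_2 + \veps_1 = \negl(n)$, which is what we need.

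The main obstacle --- and the place where the quantum setting genuinely matters --- is justifying the hybrid step rigorously: one must check that, when we isolate the $k$-th subroutine call, everything else really can be packaged as a \emph{legitimate} distinguisher of the kind quantified over in the security definition of $\Pi_2$. This requires that (i) the entanglement between $\calA$'s internal state and the external reference register $\rspace{R}$ carried by $\calZ$ is preserved across the splice --- which is exactly why Definitions~\ref{def:qcm} and~\ref{def:qci} are phrased with an arbitrary reference system $\rspace{R}$, so no rewinding or cloning is needed; (ii) the ``rest of $\Pi_1$'' that runs \emph{after} the $k$-th call, together with honest $\bfB$ and $\calZ$, is poly-time and can be implemented as a single QTM acting on the output of the subroutine tensored with the reference system; and (iii) the simulators $\calS_2^{(j)}$ are themselves poly-time and uniformly generated, so that the composite simulator $\calS$ remains a poly-time QIM as required by Definition~\ref{def:qsa}. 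Care is also needed with the order of quantifiers: the poly-time bound $t$ on the environment in each sub-invocation must be chosen \emph{after} knowing the sizes of $\Pi_1$, $\calA$, $\bfB$ and the later simulators but \emph{before} invoking the negligibility of $\veps_2$ --- this is routine but must be stated, and it is exactly analogous to the classical sequential composition proof of Canetti~\cite{Can00}, with the added quantum observation that no step appeals to rewinding and the reference register simply flows through untouched.
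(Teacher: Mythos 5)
Your proposal is correct and follows essentially the same route as the paper's proof: decompose the adversary into its subroutine segment(s), invoke the security of $\Pi_2$ to replace each segment by its simulator, and justify the splice by packaging the joint state just before the subroutine call as the (arbitrary, possibly entangled) input state and the continuation of $\Pi_1$, the honest party, and the final distinguisher as a single poly-time non-interactive distinguisher --- exactly the role the reference system plays in the definitions. The only cosmetic difference is that the paper reduces WLOG to a single invocation of $\calG$ and remarks that the multi-call case follows, whereas you carry out the hybrid over the $\ell$ invocations explicitly.
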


\mypar{Remark} The proof comes in Appendix~\ref{app:pmct}. It is easy
to extend our analysis to a more general case where $\Pi$ can invoke
$\calG$ multiple times and also access polynomially many ideal
functionalities ($\calG_1, \calG_2, \ldots$). However, we stress that at
each round, only one functionality is invoked for at most once.
\else


Our model follows the \emph{real-world/ideal-world} simulation paradigm. It proceeds in three high-level steps: 
\begin{inparaenum}[(i\upshape)]
\item Formalizing the process of executing a protocol in the presence of adversarial activities. 
\item Formalizing an \emph{ideal-world} protocol for realizing the desired task. This is an (imaginary) idealized protocol which captures the security guarantees we want to achieve. 
\item Finally we say a ({real-world}) protocol realizes a task securely if it ``behaves similarly'' to the {ideal-world} protocol for that task (Definition~\ref{def:qsae}). ``Behaving similarly'' is formalized by the notion of stand-alone \emph{emulation} between protocols (Definition~\ref{def:qcsa},~\ref{def:qssa}). 
\end{inparaenum}  

Our definition can be viewed in two ways: either as a quantum analogue
of Canetti's classical stand-alone model~\cite{Can00} or as a relaxed
notion of (a variant of) Unruh's quantum UC
security~\cite{Unr10}. Prior to our work, stand-alone security
definitions for quantum attacks were largely developed \emph{ad
  hoc}\footnote{E.g., \citet{FS09} write: "It is still common practice
  in quantum cryptography that every paper proposes its own security
  definition of a certain task and proves security with respect to the
  proposed definition. However, it usually remains unclear whether
  these definitions are strong enough to guarantee any kind of
  composability, and thus whether protocols that meet the definition
  really behave as expected."}; the first systematical treatments
appear in~\cite{FS09,DFLSS09}. Our model generalizes the existing
model of Damg{\aa}rd \etal~\cite{DFLSS09} in two ways. First, our
model allows protocols in which the functionalities can process
quantum information (rather than only classical
functionalities). Second, it allows adversaries that take arbitrary
quantum advice, and for arbitrary entanglement between honest and
malicious players' inputs.  This distinction is reflected in the
composability that the model provides (see details in
Section~\ref{sssec:sacomposition}). While the composition results of
Damg{\aa}rd \etal~allow only for classical high-level protocols, our
result holds for arbitrary quantum protocols.

\subsubsection{The Model}
\label{sssec:model}

We describe our model for the two-party setting; it is straightforward
to extend to multi-party setting. We first introduce a few important
objects in our model. We formalize a cryptographic task by an
interactive machine called a \emph{functionality}. It contains the
instructions to realize the task, and we usually denote it by $\calF$ or
$\calG$. While our model applies to both classical and quantum functionalities, our focus in this work will be efficient classical
functionalities. Namely $\calF$ is a classical probabilistic
polynomial-time machine. A two-party protocol for a task $\calF$
consists of a pair of interactive machines $(A,B)$. We call a protocol
poly-time if $(A,B)$ are both poly-time machines. We typically use
Greek letters (e.g., $\Pi$) to denote protocols. If we want to
emphasize that a protocol is classical, i.e., computation and all
messages exchanged are classical, we then use lower-case letters
(e.g., $\pi$). Finally, an adversary, usually denoted $\calA$ or
$\calS$, is another interactive machine that intends to attack a
protocol. Very often we abuse notation and do not distinguish a
machine and the player that runs the machine. This should not cause
any confusion.

\mypar{Protocol Execution}  We consider executing a protocol $\Pi=(A,B)$ in the presence of an adversary $\calA$. Their state registers are initialized by a secure parameter $1^n$ and a joint quantum state $\sigma_n$. 
Adversary $\calA$ gets activated first and coordinates the execution. Specifically, the operations of each party are: 
\begin{itemize}
	\item Adversary $\calA$: it may either \textbf{deliver} a message to some party or \textbf{corrupt} a party. Delivering a message is simply instructing the designated party (i.e., the receiver) to read the proper segment of his network register. We assume all registers are authenticated so that $\calA$ cannot modify them and in particular if the register is private to the party, $\calA$ may not read the content. Other than that, $\calA$ can for example schedule the messages to be delivered in any arbitrary way. If $\calA$ corrupts a party, the party passes all of its internal state to $\calA$ and follows the instructions of $\calA$. In the two-party setting, corrupting a party can be simply thought of as substituting the machine of $\calA$ for the machine of the corrupted party.

	\item Parties in $\Pi$: once a party receives a message from $\calA$, it gets activated and runs its machine.  At the end of one round, some message is generated on the network register. Adversary $\calA$ is activated again and controls message delivery. At some round, the party generates some output and terminates. 
\end{itemize}

Clearly, we can view $\Pi$ and $\calA$ as a whole and model the
composed system as another QIM, call it $\mac_{\Pi, \calA}$. Then
executing $\Pi$ in the presence of $\calA$ is just running $\mac_{\Pi, \calA}$ on some input state, which may be entangled with a reference system that will be handed to the distighuisher. 

\mypar{Protocol Emulation} As indicated earlier, a secure protocol is supposed to ``emulate'' an idealized protocol. Here we formally define emulation between protocols. Let $\Pi$ and $\Gamma$ be two protocols. Let $\mac_{\Pi,\calA}$ be the composed machine of $\Pi$ and an adversary $\calA$, and $\mac_{\Gamma,\calS}$ be that of $\Gamma$ and another adversary $\calS$. Informally, $\Pi$ emulates $\Gamma$ if the two machines $\mac_{\Pi,\calA}$ and $\mac_{\Gamma,\calS}$ are indistinguishable. 


\begin{definition}[Computationally Quantum-Stand-Alone Emulation] Let $\Pi$ and $\Gamma$ be two poly-time protocols. We say $\Pi$ \emph{computationally quantum-stand-alone} (\cqsa) emulates $\Gamma$, if for any poly-time QIM $\calA$ there exists a poly-time QIM $\calS$ such that $\mac_{\Pi,\calA} \qc \mac_{\Gamma,\calS} \ .$
\label{def:qcsa}
\end{definition}

 
\begin{definition}[Statistically Quantum-Stand-Alone Emulation] Let $\Pi$ and $\Gamma$ be two poly-time protocols. We say $\Pi$ \emph{statistically quantum-stand-alone} (\sqsa) emulates $\Gamma$, if for any QIM $\calA$ there exists an QIM $\calS$ that runs in poly-time of that of $\calA$, such that $\mac_{\Pi, \calA} \dmm \mac_{\Gamma, \calS}$.  
\label{def:qssa}
\end{definition} 

\begin{figure}[h!]
\centering
\ifnum\ijqi=1
{\includegraphics[width=3.5in]{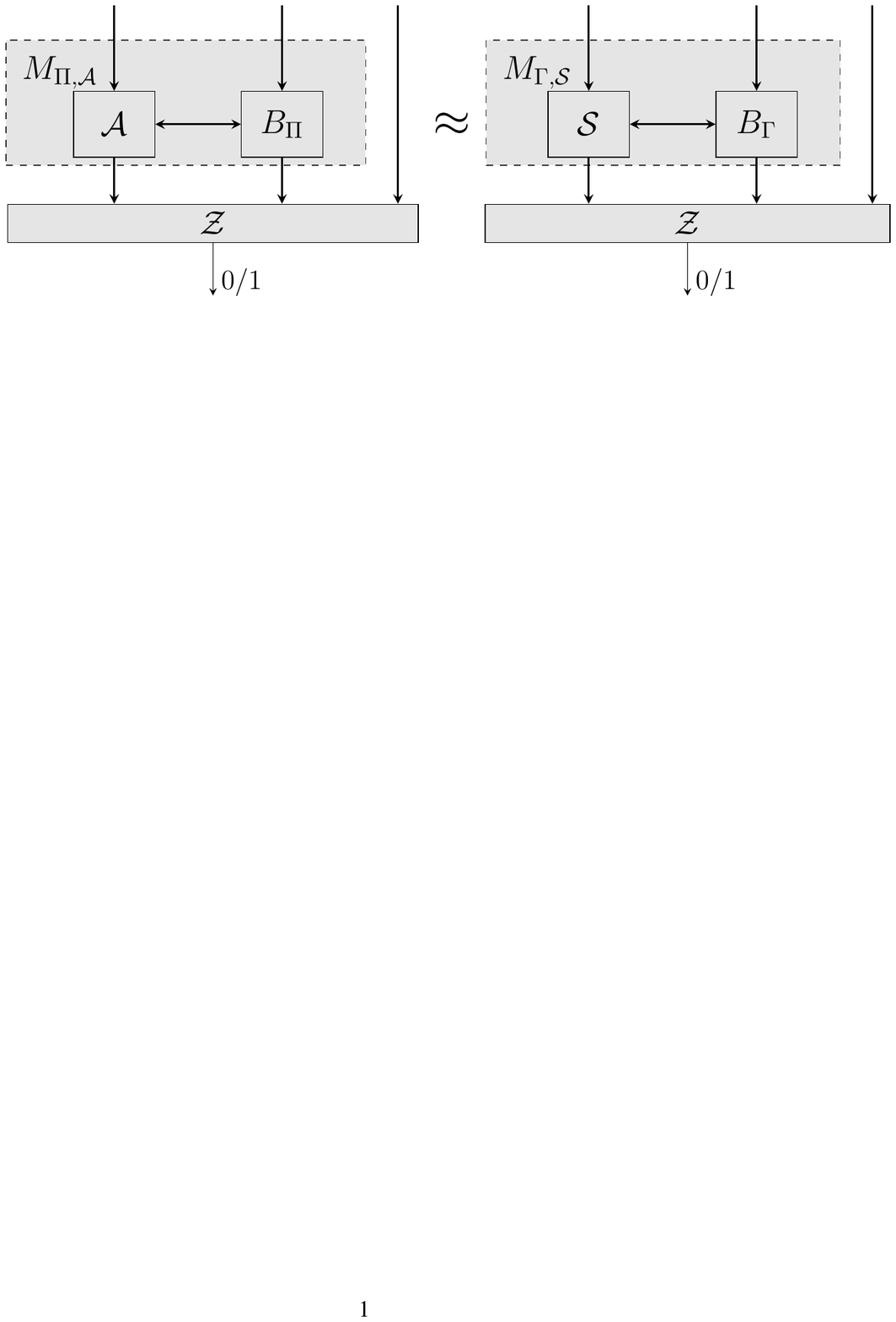}} 
\else
{\includegraphics[width=3.5in]{Figures/emuprotref}} 
\fi
\caption{Quantum stand-alone emulation between protocols.} \label{fig:emuprot}
\end{figure}

\begin{remark}  \begin{inparaenum}[(i\upshape)] 
\item The adversary $\calS$ is usually called a simulator because typical constructions of $\calS$ simulate the given $\calA$ internally. 
\item In the statistical setting, we require the complexity of $\calS$
  and $\calA$ to be polynomially related. This ensures that the statistical
  notion actually implies the computational one. See \citet{Can00} for discussion
  of this issue in the classical context.
\end{inparaenum}
\end{remark}

\mypar{Ideal-world Protocol and Secure Realization of a Functionality}
We formalized protocol emulation in a general form which applies to
any two protocols. But it is of particular interest to emulate a
special type of protocol which captures the security guarantees we
want to achieve.  We formalize the so-called \emph{ideal-world}
protocol $\tilde \Pi_\calF$ for a functionality $\calF$.  In this
protocol, two (dummy) parties $\tilde A$ and $\tilde B$ have access to
an additional ``trusted'' party that implements $\calF$. We may abuse
notation and call the trusted party $\calF$ too. Basically $\tilde A$
and $\tilde B$ invoke $\calF$ with their inputs, and then $\calF$ runs
on the inputs and sends the respective outputs back to $\tilde A$ and $\tilde B$. An execution of $\tilde \Pi$ with an adversary $\calS$ is similar to our prior description for executing a (real-world) protocol, except that $\calF$ cannot be corrupted. Likewise, we denote the composed machine of $\calF$ and $\tilde \Pi_\calF$ as $\mac_{\calF, \calS}$.  We state the definition in the computational setting; statistical emulation is defined analogously. 

\begin{definition}[\cqsa~Realization of a Functionality] Let $\calF$ be a poly-time
two-party functionality and $\Pi$ be a poly-time two-party protocol. We
say $\Pi$ {\em computationally quantum-stand-alone} realizes $\calF$, if $\Pi$ \cqsa~emulates $\tilde \Pi_\calF$. Namely, for any poly-time $\calA$, there is a poly-time $\calS$ such
that $\mac_{\Pi, \calA} \qc \mac_{\calF, \calS}$. 
\label{def:qsae}
\end{definition}

\begin{figure}[h!]
\centering
\ifnum\ijqi=1
{\includegraphics[width=3.5in]{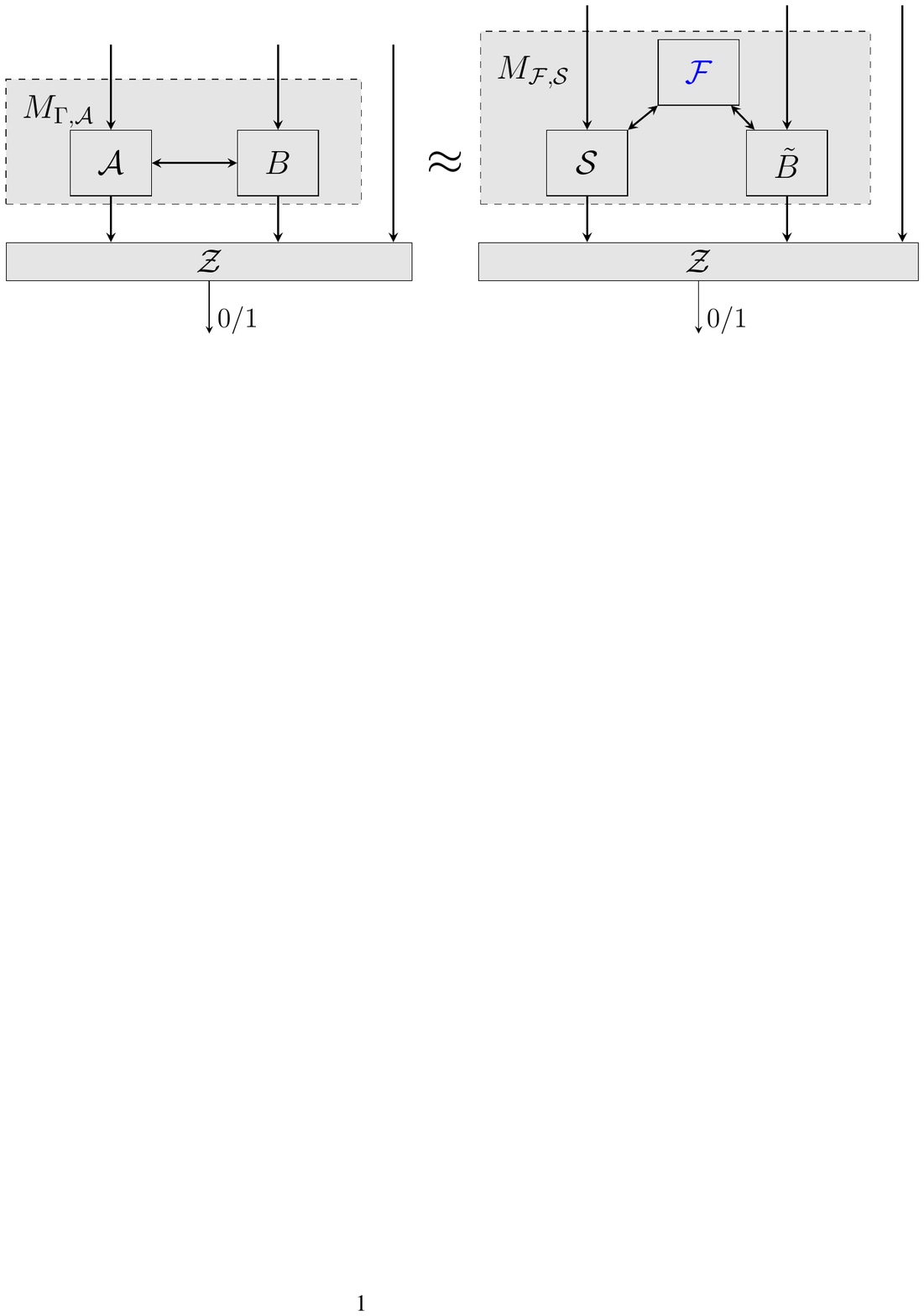}} 
\else
{\includegraphics[width=3.5in]{Figures/emufuncref}} 
\fi
\caption{Quantum Stand-alone Realization of a functionality.} \label{fig:emufunc}
\end{figure}

It is conventional to use 
$\exec_{\Pi,\calA,\calZ} := \{ \calZ((M_{\Pi,\calA}\otimes\mathbb{1}_{\lin{\rspace{R}}} ) (\sigma_n)) \}_{n \in \bbN}$ to denote the binary output distribution ensemble of $\calZ$ that runs on the output state of an execution of $\Pi$ and $\calA$ with input $(1^n, \sigma_n)$.  Likewise, 
$\ideal_{\calF,\calS,\calZ}:= \{\calZ((M_{\calF,\calS}\otimes \mathbb{1}_{\lin{\rspace{R}}})  (\sigma_n)) \}_{n \in \bbN}$ denotes  the
binary output distribution ensemble of $\calZ$ in an execution of the
ideal-world protocol $\tilde \Pi_{\calF}$.  Definition \ref{def:qsae}
can be restated as requiring that for any poly-time $\calA$ there
exists a poly-time $\calS$ such that, for any poly-time $\calZ$ and
state $\sigma_n$, we have $\exec_{\Pi,\calA,\calZ} \approx \ideal_{\calF, \calS,\calZ} \ .$ 

%
\ifnum\final=0
The security definitions above use the strong indistinguishable notions for quantum machines.  It is natural to ask what happens if we apply weaker indistinguishable notions instead. Namely, we could require that $\mac_{\Pi, \calA} \wqc \mac_{\Gamma, \calS}$ and $\mac_{\Pi, \calA} \trm \mac_{\Gamma, \calS}$ respectively.  Somewhat surprisingly, in both computational and statistical settings, the definitions of protocol emulation (hence secure realization of ideal functionality too) are equivalent under two distinguishability notions. 
\begin{theorem}[Informal]
 Definitions \ref{def:qcsa} and \ref{def:qssa} remain the same
  whether or not the distinguisher can run the protocol on a state that is entangled with an outside reference system.
\end{theorem}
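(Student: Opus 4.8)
The plan is to prove the two asserted equivalences separately --- computational ($\wqc$-emulation versus $\qc$-emulation) and statistical ($\trm$-emulation versus $\dmm$-emulation) --- and in each to observe that one direction is immediate while the reverse is the content. First I would dispose of the easy direction: a distinguisher with a trivial reference system is a special case of one with a reference system, so for a fixed pair of machines $\mac_{\Pi,\calA}\qc\mac_{\Gamma,\calS}$ implies $\mac_{\Pi,\calA}\wqc\mac_{\Gamma,\calS}$, and $\dmm$ implies $\trm$; hence any $(\calA,\calS)$ pair witnessing the reference-aware emulation also witnesses the reference-free one. This does \emph{not} reverse at the level of a fixed pair of channels --- the induced trace norm is strictly weaker than the diamond norm --- so the converse must genuinely use the ``for every $\calA$ there is an $\calS$'' quantifier, producing a \emph{different} simulator for the stronger notion.

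For the converse I would argue as follows (computational case; the statistical case is the same with $\wqc/\qc$ replaced by $\trm/\dmm$ and an information-theoretic soundness statement below). Assume $\Pi$ reference-freely emulates $\Gamma$ and fix $\calA$; the goal is a simulator $\calS$ with $\mac_{\Pi,\calA}\qc\mac_{\Gamma,\calS}$. Let $r(n)$ bound the number of qubits in the joint state space of $\mac_{\Pi,\calA}$, so $r(n)$ ancilla qubits already realize the worst-case reference system of Definition~\ref{def:qcm}. Introduce an augmented adversary $\calA^*$ that behaves exactly like $\calA$ on the corrupted party's registers but additionally holds an $r(n)$-qubit register $\reg{R}$; $\calA^*$ keeps $\reg{R}$ in \emph{authenticated} form throughout (using a quantum authentication scheme, with the key retained in, and the authenticated register placed in, dedicated output registers), carrying it untouched through every round and appending it, together with what is needed to verify it, to its output. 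Then $\calA^*$ is poly-time, and $\mac_{\Pi,\calA^*}$ equals $\mac_{\Pi,\calA}\otimes\I_{\reg{R}}$ up to the poly-time encode/decode maps. Applying the hypothesis to $\calA^*$ gives a poly-time $\calS^*$ with $\mac_{\Pi,\calA^*}\wqc\mac_{\Gamma,\calS^*}$.

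The crux is then to argue that $\calS^*$ may be replaced by one that leaves the authenticated register alone. Since $\calS^*$ is a valid simulator and the honest execution delivers a correctly authenticated, untampered $\reg{R}$, any poly-time distinguisher --- which can itself run the verification procedure on the output --- must accept the ideal-world output's authenticated register with all but negligible probability and find its decoded contents indistinguishable from $\reg{R}$; by a soundness property of the authentication scheme, strengthened to hold (i) against computationally bounded attackers, (ii) when the authenticated message is entangled with an outside system, and (iii) when the ``attack'' is spread over the rounds of the protocol, the net effect of $\calS^*$ on the authenticated register must be statistically close to the identity on the contents of $\reg{R}$, uniformly over that entanglement. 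Peeling the authentication off $\calS^*$ (feed it an independently re-authenticated dummy wherever it expects the authenticated $\reg{R}$, and at the end output the real $\reg{R}$, which $\calS$ kept aside untouched) produces a simulator $\calS$ for $\calA$ that acts as the literal identity on the reference register; chaining the two approximations yields $\mac_{\Pi,\calA}\otimes\I_{\reg{R}}\wqc\mac_{\Gamma,\calS}\otimes\I_{\reg{R}}$, which, $\reg{R}$ being an arbitrary reference of the worst-case size, is exactly $\mac_{\Pi,\calA}\qc\mac_{\Gamma,\calS}$. In the statistical case the construction inflates the simulator's running time only by a polynomial, so the required polynomial relation between $\calS$ and $\calA$ is preserved.

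The hard part will be the strengthened authentication-soundness lemma. The soundness of quantum authentication (e.g.\ for the scheme of Barnum \etal) already gives the ``undetectable tampering $\Rightarrow$ identity-or-reject'' dichotomy for a single operation on an encoded message that is half of an entangled state; I would lift it to the ``network'' setting by folding the intermediate protocol messages into the reference system and invoking the single-shot statement, and to the computational setting by a reduction showing that any residual non-identity behavior of $\calS^*$ would let a poly-time distinguisher break $\mac_{\Pi,\calA^*}\wqc\mac_{\Gamma,\calS^*}$. The genuinely delicate point is the key handling: because $\calS^*$ itself produces the output register holding the key, one must set things up so that $\calS^*$ cannot simply re-encode a tampered message under a consistent key without being caught --- e.g.\ by having verification performed by the distinguisher against the correct key, or against a second, untouched maximally-entangled ``twin'' of $\reg{R}$ that $\calA^*$ also carries, so that a successful re-encoding attack still yields a poly-time distinguisher. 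Finally one must check throughout that $\dim\reg{R}$ can be bounded by the purification size, so that $\calA^*$ stays poly-time and the accumulated soundness error stays negligible.
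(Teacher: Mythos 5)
Your architecture coincides with the paper's: the trivial direction by restricting to an empty reference, and for the converse an augmented adversary that carries the reference register $\reg{R}$ untouched, the reference-free emulation hypothesis applied to it, and a quantum-authentication soundness argument (extended to entangled messages and chained with a computational step) showing the resulting simulator can be replaced by one that feeds an authenticated dummy into the slot where $\reg{R}$ would go. The statistical claim is handled the same way with trace/diamond norms, and your ``fold the rounds into the reference and invoke the single-shot statement'' is exactly how the paper sidesteps the network issue: the entire composed machine $\mac_{\Gamma,\calS'}$ is treated as one superoperator in the soundness statement.

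The genuine gap is the key handling, which you flag as delicate but leave unresolved, and neither of your patches works as stated. You have $\calA^*$ generate the authentication key and write both the key and the authenticated blob into its own output registers. The quantifier order is $\forall\calA^*\,\exists\calS^*\,\forall\calZ$: the simulator is chosen with full knowledge of $\calA^*$, and it is the party producing the very registers the distinguisher would verify. So $\calS^*$ may apply an arbitrary channel to $\reg{R}$, re-encode the result under a fresh key of its own choosing, and output that key; verification passes, no reference-free distinguisher objects, and soundness yields nothing. ``Verification against the correct key'' presupposes a key the simulator does not control, which cannot be sourced from inside $\calA^*$; and the entangled ``twin'' of $\reg{R}$ is also carried by $\calA^*$, hence also delivered to $\calS^*$ in the ideal world and tamperable consistently with $\reg{R}$. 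The paper's resolution is to make the augmented adversary literally $\calA\otimes\I_{\lin{\reg{W}}}$ --- no authentication inside the adversary at all --- and to put the authentication on the environment's side of the quantifier: the weak hypothesis is invoked on inputs of the form $(\I\otimes \auth)(\kb{\psi})$, with the classical key hardwired into the later-quantified distinguisher, which applies the verification $\dauth$ itself. Then $\calS'$ must produce an output accepted under a key it cannot choose; soundness for entangled messages (with the reject branch handled by a conditional erasure) forces its restriction to the authenticated register to be equivalent to an attack on an authenticated $\kb{0}$, and the final simulator is just $\calS'$ fed an authenticated dummy, never touching $\reg{R}$ at all. With the authentication moved to that side of the quantifier your proof goes through; where you have placed it, the central soundness step has no force.
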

We formalize this claim and discuss other variants of the security definitions in Section \ref{subsec:variants}. We remark that because of the equivalence, we usually use the definitions under weak indistinguishability in the proofs and analysis (to ease notation for example). Nonetheless, conceptually we should keep in mind the security notions under the stronger indistinguishability. 
\fi



\mypar{Types of Attack} Typically, we need to speak of security
against a specific class of adversaries. We have distinguished two
classes of adversaries according to their computational complexity,
i.e., poly-time vs. unbounded time. We also categorize adversaries
according to how they corrupt the parties and how they deviate from the
honest behavior defined by the protocol. The standard two types of
corruptions considered in the literature are \emph{static}
vs. \emph{adaptive} corruptions. Under {static} corruption, the
identities of corrupted parties are determined before protocol starts. In contrast, {adaptive}
corruption allows an adversary to change the party to corrupt
adaptively during the execution. This work only concerns static
corruption. 

In terms of what dishonest behaviors are permitted for an adversary,
again two classes are considered standard in the literature:
\emph{semi-honest} (a.k.a. \emph{passive or honest-but-curious}) and
\emph{malicious} (a.k.a. \emph{active}). A semi-honest adversary,
after corrupting a party, still follows the party's circuit, except
that in the end it processes the output and the state of the party. A
malicious adversary, however, can substitute any circuit for the
corrupted party. In the definitions of the protocol emulation, unless
otherwise specified, the two adversaries in the real-world and
ideal-world must belong to the same class. For example, if $\calA$ is
semi-honest, $\calS$ must also be
semi-honest. 

These notions of different classes of adversaries naturally extend to
quantum adversaries, except for one subtlety in defining semi-honest
quantum adversaries. There are two possible definitions. One
definition, which may be referred to as the Lo-Chau-Mayers semi-honest
model~\cite{LC97,Mayers97}, allows $\calA$ to run the circuit of the
corrupted party, which is specified by the protocol,
\emph{coherently}. Namely $\calA$ purifies the circuit of corrupted
party so that all operations are unitary. For example, instead of
measuring a quantum state, the register is ``copied'' by a CNOT
operation to an ancillary register. Another definition forces the
adversary to exactly faithfully follow the corrupted party's circuit during
the protocol execution, so that any quantum measurement occurs
instantaneously and possibly destructively. In other words, in the second model, a semi-honest
quantum adversary $\calA$ only corrupts a party at the end of the
protocol execution, and then processes the internal state and
transcript that the corrupted party holds.  This second model is generally weaker than
the first, in the sense that the adversary is more restricted.
In
this paper, we focus on the second of these two notions.
\anote{I don't see why the two models are equivalent in general. I
  think we only need the second model, though, for our proofs of
  security. Is that correct?}
\fnote{Yes, I think we are using the 2nd notion in proving semi-honest UC protocols. Where do we use the first notion in the paper?}

\subsubsection{Modular Composition Theorem}
\label{sssec:sacomposition}

It is common practice in the design of large protocols to divide a
task into several subtasks. We first realize each subtask, and then
use these modules as building blocks (subroutines) to realize the
initial task. In this section, we show that our definition allows such
modular design.

\mypar{Composition Operation} Let $\Pi$ be a protocol that uses
another protocol $\Gamma$ as a subroutine, and let $\Gamma'$ be a
protocol that QSA emulates $\Gamma$. We define the \emph{composed}
protocol, denoted $\cprot{\Pi}{\Gamma}{\Gamma'}$, to be the protocol
in which each invocation of $\Gamma$ is replaced by an invocation of
$\Gamma'$. We allow multiple calls to a subroutine and also using
multiple subroutines in a protocol $\Pi$. However, we require that at
any point, only \textbf{one} subroutine call be in progress; that is,
we handle \emph{sequential} composition. This is weaker than the
``network'' setting, where many instances and subroutines may be
executed \emph{concurrently}.

We can show that our quantum stand-alone model admits a modular composition theorem. 

\begin{theorem}[Modular Composition: General Statement]
Let $\Pi$, $\Gamma$ and $\Gamma'$ be two-party protocols such that $\Gamma'$ \cqsa~(resp. \sqsa) emulates $\Gamma$, then $\cprot{\Pi}{\Gamma}{\Gamma'}$ \cqsa~(resp. \sqsa) emulates $\Pi$. 
\label{thm:sacomposition}
\end{theorem}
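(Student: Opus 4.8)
The plan is to mimic the classical sequential modular composition argument (Canetti~\cite{Can00}), the one genuinely quantum point being that the indistinguishability notions underlying Definitions~\ref{def:qcsa} and~\ref{def:qssa} (namely $\qc$ and $\dmm$, Definitions~\ref{def:qcm} and~\ref{def:diamm}) already quantify over an arbitrary reference register $\reg{R}$; this is exactly what lets us isolate a single subroutine call even though its registers are entangled with the rest of the execution. First I would reduce to the case of one subroutine invocation and then chain over the calls. Since $\Pi$ is poly-time it makes $m = m(n) \le poly(n)$ subroutine calls, at most one in progress at any time. Define hybrid protocols $\Pi_0,\Pi_1,\dots,\Pi_m$, where in $\Pi_\ell$ the first $\ell$ invocations are executed as $\Gamma$ and the remaining $m-\ell$ as $\Gamma'$, so $\Pi_0=\cprot{\Pi}{\Gamma}{\Gamma'}$ and $\Pi_m=\Pi$. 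It suffices to show: for every poly-time $\calA_\ell$ attacking $\Pi_\ell$ there is a poly-time $\calA_{\ell+1}$ attacking $\Pi_{\ell+1}$ with $\mac_{\Pi_\ell,\calA_\ell}\qc \mac_{\Pi_{\ell+1},\calA_{\ell+1}}$; applying this $m$ times starting from the given $\calA$ (and setting $\calS:=\calA_m$) and using that a polynomial sum of negligibles is negligible then gives $\mac_{\cprot{\Pi}{\Gamma}{\Gamma'},\calA}\qc \mac_{\Pi,\calS}$. (One may also phrase $\calS$ directly: it runs $\calA$ and, at each subroutine call, invokes the appropriate handler; the hybrid is just the cleanest way to analyze it.) Throughout one keeps the usual regularity assumptions on protocols that make the composition operation well defined, and one checks that $\calS$ is poly-time since it is $\calA$ plus at most $m$ poly-time handlers run sequentially.

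For a single swap at position $\ell{+}1$, I would extract from $\calA_\ell$ the \emph{sub-adversary} $\calA_{\Gamma'}$ against a stand-alone execution of $\Gamma'$: it runs $\calA_\ell$ but activates it only during the time window of the $(\ell{+}1)$-st subroutine call, routing precisely the $\Gamma'$-messages and treating $\calA_\ell$'s high-level $\Pi$-state as part of its own state register. Since at any point only one subroutine call is in progress, this window is unambiguous and $\calA_\ell$ cannot ``straddle'' the boundary (the honest party will not respond to $\Pi$-level messages before the call completes). Everything else present when the call begins—the honest party's $\Pi$-state and the environment's input/reference—is gathered into the reference register of Definition~\ref{def:qcm}. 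By the hypothesis that $\Gamma'$ \cqsa~emulates $\Gamma$, there is a poly-time $\calA_\Gamma$, of the same corruption type (and semi-honest if $\calA_\ell$ is), with $\mac_{\Gamma',\calA_{\Gamma'}}\qc \mac_{\Gamma,\calA_\Gamma}$. Then $\calA_{\ell+1}$ is defined to behave exactly like $\calA_\ell$, corrupting the same party, except that during the $(\ell{+}1)$-st call—which in $\Pi_{\ell+1}$ is a run of $\Gamma$—it hands control to $\calA_\Gamma$.

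The indistinguishability $\mac_{\Pi_\ell,\calA_\ell}\qc \mac_{\Pi_{\ell+1},\calA_{\ell+1}}$ then follows by a reduction: suppose a poly-time QTM $\calZ$ and an input state $\sigma_n$ on $\rspacen{S}\otimes\rspacen{R}$ distinguish $(\mac_{\Pi_\ell,\calA_\ell}\otimes\I_{\reg{R}})\sigma_n$ from $(\mac_{\Pi_{\ell+1},\calA_{\ell+1}}\otimes\I_{\reg{R}})\sigma_n$. I would exhibit an input/distinguisher pair for $\mac_{\Gamma',\calA_{\Gamma'}}$ vs.\ $\mac_{\Gamma,\calA_\Gamma}$ as follows: the input state is the joint state of $\calA_{\Gamma'}$'s registers produced by running $\Pi$ together with $\calA_\ell$ on $\sigma_n$ up to the start of the $(\ell{+}1)$-st call; the reference register holds the honest party's $\Pi$-state at that moment together with $\reg{R}$; and the distinguishing QTM resumes the execution of $\Pi$ and $\calA_\ell$ on the subroutine's output registers and then runs $\calZ$. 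This QTM has size $t+poly(n)$ whenever $\calZ$ has size $t$, so by the \cqsa~emulation hypothesis its advantage is negligible, contradicting the assumption on $\calZ$. Summing the $m$ bounds proves the theorem. The statistical case is identical with $\qc$ replaced by $\dmm$ (Definition~\ref{def:diamm}), trace distance in place of distinguishing advantage, and the extra point—built into Definition~\ref{def:qssa}—that each $\calA_\Gamma$'s running time is polynomially bounded in $\calA_{\Gamma'}$'s, so the simulator's complexity stays polynomially bounded in $\calA$'s even after $m$ iterations.

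The step I expect to be the main obstacle is precisely the bookkeeping in the extraction of $\calA_{\Gamma'}$ and in the reduction: verifying that, at the instant a subroutine call begins, the global register genuinely decomposes into $(\text{sub-adversary state})\otimes(\text{reference})$ in a form that matches Definition~\ref{def:qcm}; that the scheduling freedom granted to $\calA_\ell$ in the execution model does not break this decomposition (this is exactly where the one-call-at-a-time restriction is used, and why the theorem is stated for sequential composition); and that all the machines—$\calA_{\Gamma'}$, $\calA_\Gamma$, $\calA_{\ell+1}$, the reduction's distinguisher, and finally $\calS$—remain polynomial time, with the running-time relations tracked carefully enough in the statistical variant to preserve $\dmm$ through all $m$ hybrids.
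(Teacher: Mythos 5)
Your proposal is correct and follows essentially the same route as the paper's proof: extract the sub-adversary $\calA_{\Gamma'}$ as the segment of $\calA$ active during the subroutine call, invoke the emulation hypothesis to obtain $\calA_{\Gamma}$, splice it back in to form $\calS$, and prove indistinguishability by a reduction in which the pre-call joint state becomes the input (with the rest of the execution in the reference register) and the post-call continuation plus $\calZ$ becomes the distinguisher. The only difference is presentational: the paper assumes a single subroutine call WLOG and remarks that the general case follows, whereas you spell out the hybrid over the $m$ calls explicitly.
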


The proof can be found in Appendix~\ref{sec:pmct}. Here we discuss an
important type of protocol where the composition theorem is especially
useful.

\mypar{Protocols in a Hybrid Model} We next define a \emph{hybrid}
model, in which the parties can make calls to an ideal-world protocol
$\tilde\Pi_{\calG}$ of some functionality $\calG$\footnote{In
  contrast, we call it the \emph{plain model} if no such trusted
  set-ups are available.}. We call such a protocol a
\emph{$\calG$-hybrid} protocol, and denote it $\Pi^{\calG}$. The
execution of a hybrid-protocol in the presence of an adversary $\calA$ proceeds in the usual way. 

Now assume that we have a protocol $\Gamma$ that realizes $\calG$ and
we have designed a $\calG$-hybrid protocol $\Pi^{\calG}$ realizing
another functionality $\calF$.  Then the composition theorem allows us
to treat sub-protocols as equivalent to their ideal versions when
analyzing security of a high-level protocol. 
\begin{cor}[Modular Composition: Realizing Functionalities]
Let $\calF$ and $\calG$ be poly-time functionalities. Let $\Pi^{\calG}$ be a $\calG$-hybrid protocol that \cqsa~(resp. \sqsa) realizes $\calF$, and $\Gamma$ be a protocol that \cqsa~(resp. \sqsa) realizes $\calG$, then $\cprot{\Pi}{\calG}{\Gamma}$ \cqsa~(resp. \sqsa) realizes $\calF$. 
\label{cor:sacomposition}
\end{cor}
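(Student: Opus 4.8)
The plan is to obtain the corollary as a direct consequence of the general composition theorem (Theorem~\ref{thm:sacomposition}) together with transitivity of \cqsa/\sqsa~emulation. The first step is the key observation that a $\calG$-hybrid protocol $\Pi^{\calG}$ is, by definition, nothing but an ordinary two-party protocol whose parties make (sequential) subroutine calls to the ideal-world protocol $\tilde\Pi_{\calG}$. Hence $\Pi^{\calG}$ can play the role of the outer protocol ``$\Pi$'' in Theorem~\ref{thm:sacomposition}, with the subroutine ``$\Gamma$'' instantiated as $\tilde\Pi_{\calG}$. Since ``$\Gamma$ \cqsa~(resp.\ \sqsa) realizes $\calG$'' means precisely that $\Gamma$ \cqsa~(resp.\ \sqsa) emulates $\tilde\Pi_{\calG}$, we may take the replacing protocol ``$\Gamma'$'' to be $\Gamma$ itself. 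Unwinding the definition of the composition operation, $\cprot{(\Pi^{\calG})}{\tilde\Pi_{\calG}}{\Gamma}$ is exactly the protocol obtained from $\Pi^{\calG}$ by replacing each ideal call to $\calG$ with an invocation of $\Gamma$, i.e.\ it is $\cprot{\Pi}{\calG}{\Gamma}$. Theorem~\ref{thm:sacomposition} then yields that $\cprot{\Pi}{\calG}{\Gamma}$ \cqsa~(resp.\ \sqsa) emulates $\Pi^{\calG}$ (in the statistical case with the complexities of the two adversaries polynomially related).

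The second step combines this with the hypothesis that $\Pi^{\calG}$ realizes $\calF$, i.e.\ $\Pi^{\calG}$ \cqsa~(resp.\ \sqsa) emulates $\tilde\Pi_{\calF}$. For this I need a transitivity lemma: if $\Pi_1$ emulates $\Pi_2$ and $\Pi_2$ emulates $\Pi_3$ then $\Pi_1$ emulates $\Pi_3$. The argument is routine: given a poly-time $\calA_1$ against $\Pi_1$, the first emulation supplies a poly-time $\calA_2$ with $\mac_{\Pi_1,\calA_1}\qc\mac_{\Pi_2,\calA_2}$; since $\calA_2$ is itself a legitimate poly-time adversary against $\Pi_2$, the second emulation supplies a poly-time $\calA_3$ with $\mac_{\Pi_2,\calA_2}\qc\mac_{\Pi_3,\calA_3}$; and $\qc$ is transitive on QTMs, because for any polynomial time bound the two distinguishing advantages add up to something that is still negligible. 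The same reasoning works verbatim for $\dmm$ in the \sqsa~case, using the triangle inequality for diamond-norm distance and the fact that polynomial relations among running times compose. Applying transitivity along the chain ``$\cprot{\Pi}{\calG}{\Gamma}$ emulates $\Pi^{\calG}$ emulates $\tilde\Pi_{\calF}$'' gives that $\cprot{\Pi}{\calG}{\Gamma}$ \cqsa~(resp.\ \sqsa) emulates $\tilde\Pi_{\calF}$, which by Definition~\ref{def:qsae} (and its statistical analogue) is exactly the assertion that $\cprot{\Pi}{\calG}{\Gamma}$ realizes $\calF$.

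The remaining points are bookkeeping. I would check that the composition operation ``inline $\Gamma$ in place of $\tilde\Pi_{\calG}$ inside $\Pi^{\calG}$'' genuinely coincides with $\cprot{\Pi}{\calG}{\Gamma}$ as defined for hybrid protocols, in particular that the ``at most one subroutine call in progress at any time'' restriction carried by $\Pi^{\calG}$ is inherited by the composed protocol, so that Theorem~\ref{thm:sacomposition} really applies. I would also verify that every machine in sight stays poly-time: $\tilde\Pi_{\calG}$ is poly-time because $\calG$ is a poly-time functionality, $\cprot{\Pi}{\calG}{\Gamma}$ is poly-time because $\Pi^{\calG}$ and $\Gamma$ are, and the simulators are poly-time by the two emulation statements themselves. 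I expect no real difficulty here: the only mild obstacle is stating the transitivity lemma and the ``hybrid-protocol-as-protocol'' identification cleanly, since all of the substantive work lives inside Theorem~\ref{thm:sacomposition}.
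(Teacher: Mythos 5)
Your derivation is correct and matches the route the paper intends: the corollary is obtained by instantiating Theorem~\ref{thm:sacomposition} with the outer protocol $\Pi^{\calG}$, the subroutine $\tilde\Pi_{\calG}$, and the replacement $\Gamma$, and then chaining the resulting emulation of $\Pi^{\calG}$ with the hypothesis that $\Pi^{\calG}$ emulates $\tilde\Pi_{\calF}$. The paper leaves this derivation implicit, and the one ingredient you make explicit---transitivity of \cqsa/\sqsa~emulation via the triangle inequality together with closure of polynomial running-time bounds under composition---is exactly the right way to fill it in.
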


\begin{figure}[h!]
\centering
\ifnum\ijqi=1
\mbox{ \subfigure{\includegraphics[width=2in]{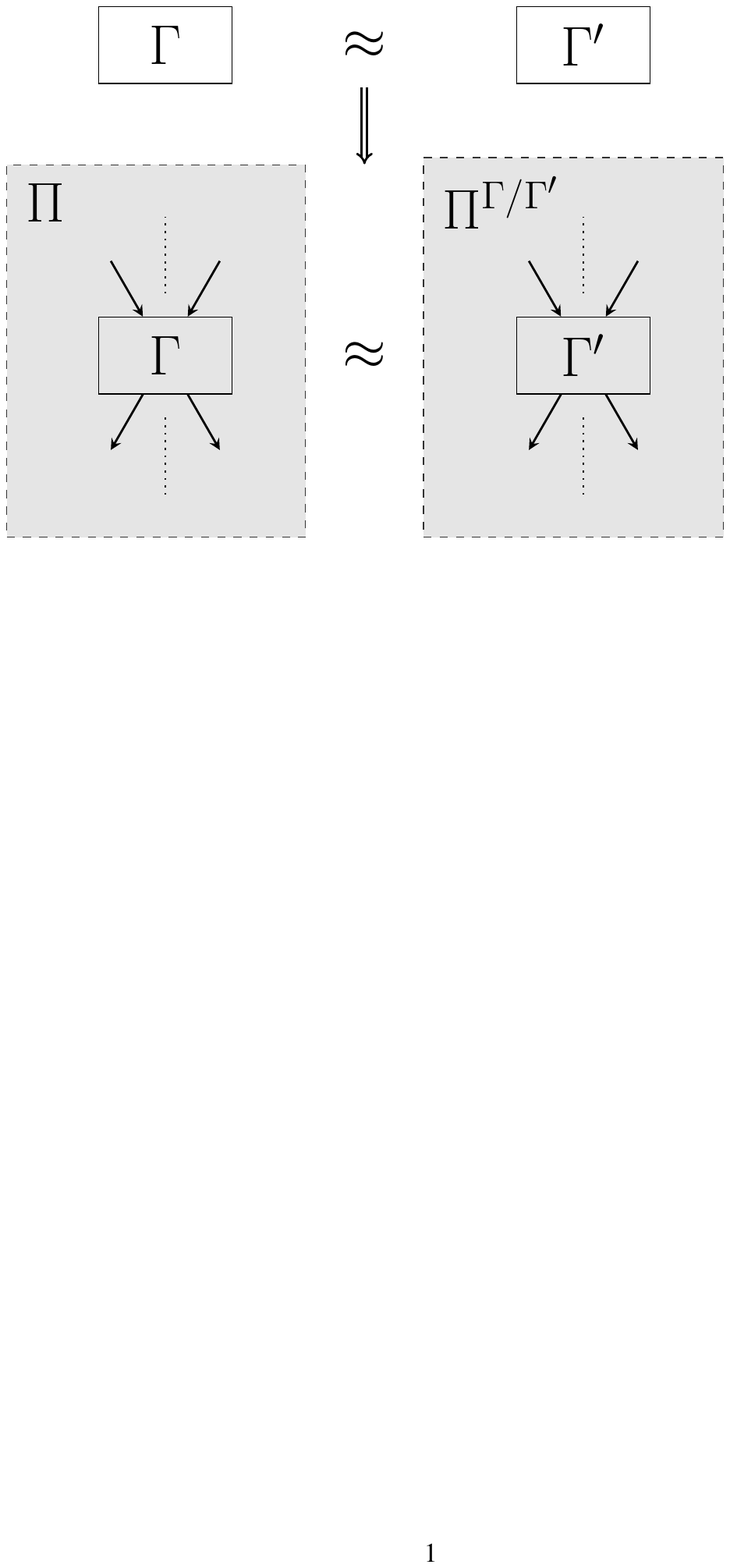}} \qquad
\subfigure{\includegraphics[width=2in]{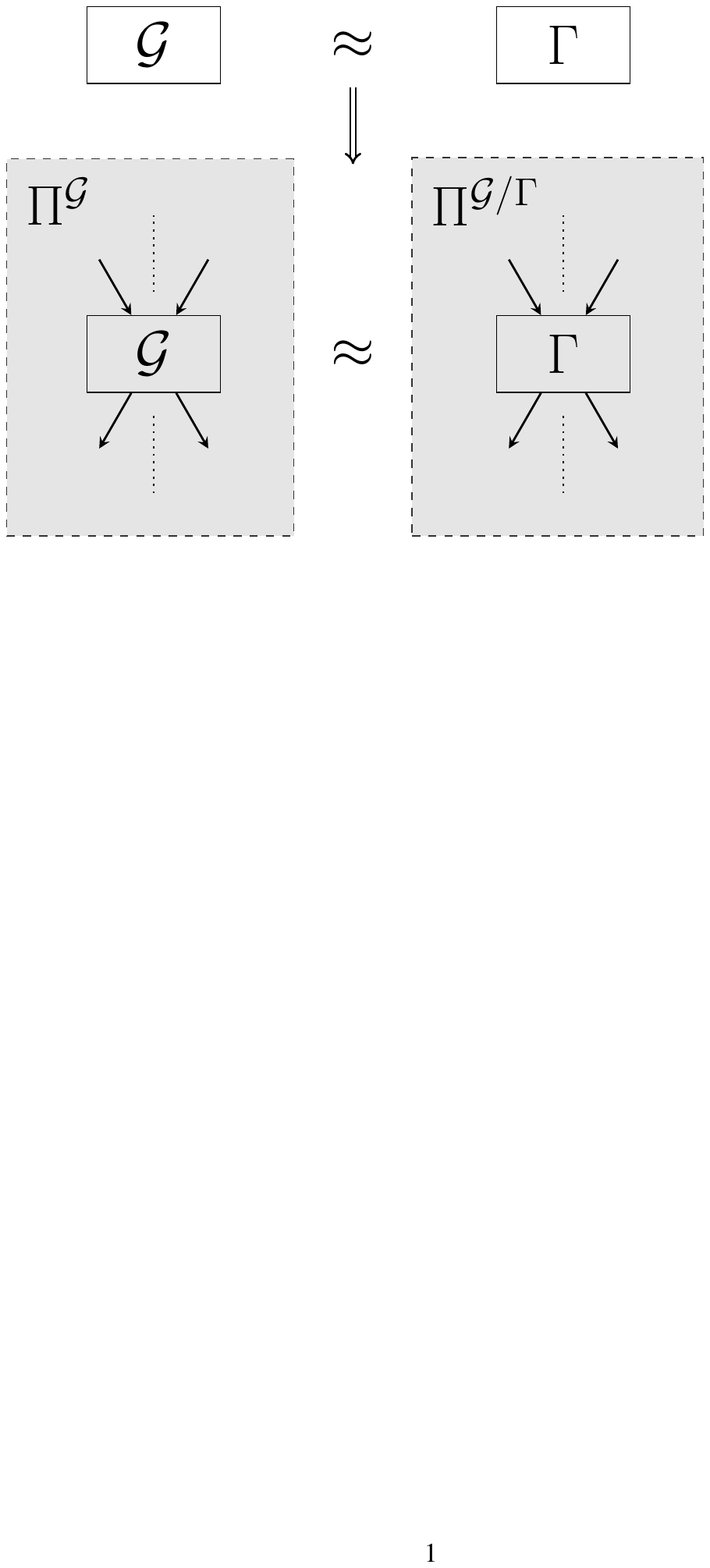}}}
\else
\mbox{ \subfigure{\includegraphics[width=2in]{Figures/compgeneral}}\qquad
\subfigure{\includegraphics[width=2in]{Figures/composeideal}
}}
\fi
\label{fig:modcomp}

 \caption{Illustration of modular composition theorem: the general case (left) and in hybrid model (right).} 
\end{figure}
\ifnum\final=0
\subsection{Variants of Quantum Stand-Alone Models: A Unified Framework}
\label{subsec:variants}
\else
\subsubsection{Variants of Quantum Stand-Alone Models: A Unified Framework}
\label{sssec:variants}
\fi
When defining a security model, there are lots of choices qualifying
and quantifying the power of the adversaries to account for various
security requirements. Here we provide an abstract stand-alone model
for both classical and quantum cryptographic protocols, illustrated in
Figure~\ref{fig:def}, which contains three natural choices for the
adversaries which we think are essential. This abstract model captures
all existing stand-alone security models (including ours) and this
allows for a unified study of, and comparison among,
these models. 
The relationship between these models may be interesting beyond the
study of SFE. 
\anote{In fact, it may simplify some of the literature on
  device-independent QKD.}
\fnote{That's interesting! any reference?}


\begin{figure}[h!]
\centering
\ifnum\ijqi=1
{\includegraphics[width=2in]{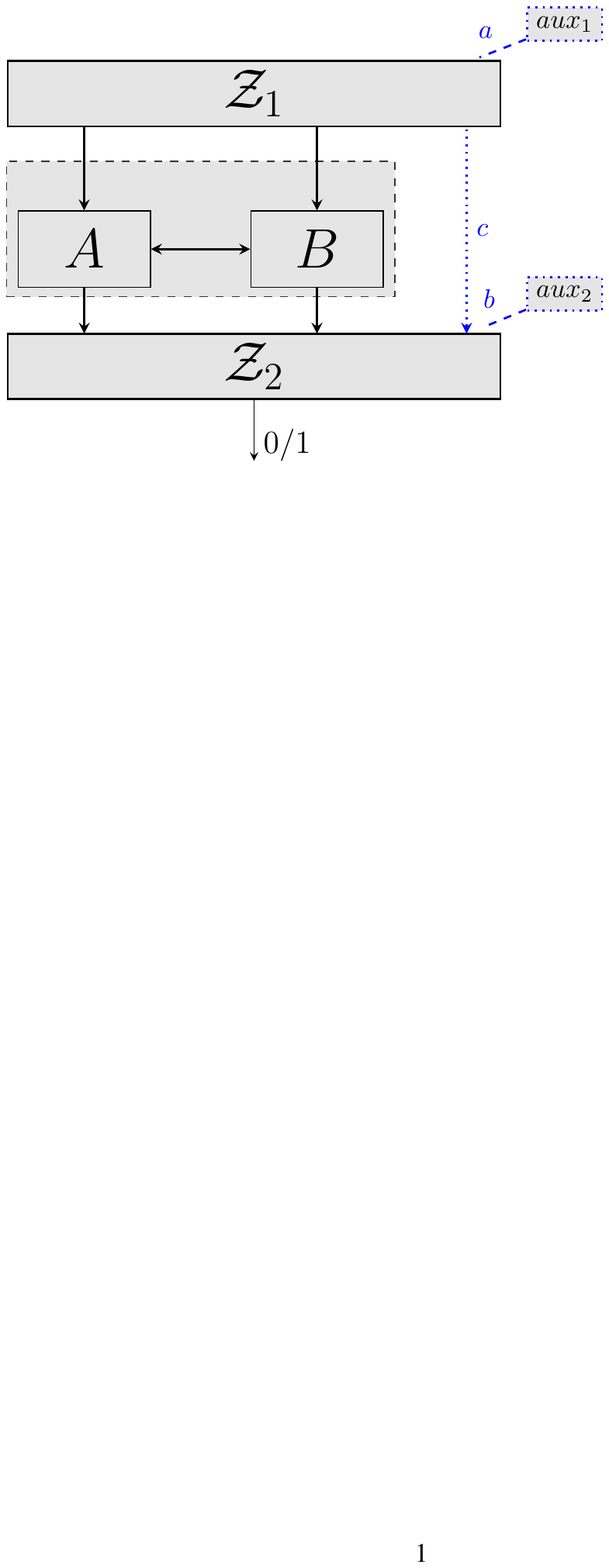}} 
\else
{\includegraphics[width=2in]{Figures/generalmodel}} 
\fi
\caption{Possible choices in defining a security model} \label{fig:def}
\end{figure}

The model contains an environment $\calZ$ and a protocol.
Depending on whether the protocol is in real or ideal world,
we have the honest party, the (real-or ideal-world) adversary and possibly the trusted party. Here we think of the environment as two separate machines: $\calZ_1$, which we may call an input sampler,  prepares inputs to the players; and $\calZ_2$ that receives outputs and makes the decision. Now we consider the following choices:

\renewcommand{\labelenumi}{(\alph{enumi})}
\begin{enumerate}
\item Does $\calZ_1$, the input sampler, have a quantum advice $aux_1$? In other words, do we allow arbitrary input states or only states that can be generated efficiently?
\item Does $\calZ_2$, which is essentially a distinguisher, take quantum advice $aux_2$?
\item Does $\calZ_1$ pass a state to $\calZ_2$? Namely, does the environment keep state during the execution?
\end{enumerate}

Notice that positive answers potentially give more power to the adversaries and thus provides stronger security guarantee. Also note that all machines are always allowed to take classical advice. We may denote a security model as $\model{\cdot}{\cdot}{\cdot}$ where the subscripts are from $\{\a,\abar,\b,\bbar,\c,\cbar\}$ indicating each of the choices made for the model. For example $\model{\a}{\bbar}{\c}$ corresponds to the model that $\calZ_1$ gets quantum advice; $\calZ_2$ takes no quantum advice and $\calZ_1$ passes state to $\calZ_2$ --this exactly leads to  our model in Def.~\ref{def:qcsa}. 
Similarly, $\model{\a}{\b}{\c}$ is the model where $\calZ_1$ and $\calZ_2$ both
take quantum advice, and there is state passing from $\calZ_1$ to
$\calZ_2$. 

We say two models $\calM$ and $\calM'$ are equivalent if 
for any two protocols $\Pi$ and $\Gamma$, it holds that $\Pi$ emulates
$\Gamma$ in $\calM$ \emph{if and only if} $\Pi$ emulates $\Gamma$ in $\calM'$. It is conceivable that some of the $2^3=8$ combinations collapse to the
same model. For example, if all players are classical circuits, then
all eight models $\model{\cdot}{\cdot}{\cdot}$
collapse. This is because classical (non-uniform) machines can only measure a
quantum state in computational basis to obtain a classical string
from a certain distribution. But a classical circuit can be hardwired
with any classical string, and so (quantum)
advice gives no extra power to a classical
circuit. Passing state likewise becomes vacuous. 

When we consider an adversary and environment consisting of quantum circuits, the
situation becomes generally more complicated. We can observe that
choice (b) becomes irrelevant once we permit arbitrary input state and
state passing (i.e., $\model{\a}{\b}{\c} \equiv
\model{\a}{\bbar}{\c}$). We conjecture that state passing makes no
difference either.  If this is indeed true, then all the variants
collapse when $\calZ_1$ takes quantum advice. On the other hand, if $\calZ_1$
takes no advice (i.e. only efficiently generated input states are
allowed), we are left with two variants $\model{\abar}{\b}{\cdot}$ and
$\model{\abar}{\bbar}{\cdot}$. The relationship between these two
models is closely related to the fundamental question in quantum
complexity theory regarding
$\class{BQP}/{\mathrm{poly}}\stackrel{?}{=}\class{BQP}/{\mathrm{\mathbf{q}poly}}$. We
leave further investigations as future work. In Appendix~\ref{sec:mc},
we discuss another variant that appears in the
literature~\cite{DFLSS09,FS09}, in which $\calZ_1$ may only generate input
states of a special form. We show that this does not change the model in the case that $\calZ_1$ takes quantum advice. 


\ifnum\final=0
\subsubsection{Models with Quantum Advice to $\calZ_1$} 
\label{ssec:wadv}

Perhaps surprisingly, we show that
once $\calZ_1$ takes quantum advice, the choices for (b) and (c) no
longer matter, and we end up with a single model. 

\begin{theorem} The models where $\calZ_1$ passes state to $\calZ_2$
  are equivalent to the models without state passing. Namely, 
$$\model{\a}{\bbar}{\cbar}\equiv \model{\a}{\bbar}{\c} \equiv \model{\a}{\b}{\c} \equiv\model{\a}{\b}{\cbar} \ . $$
\vspace{-5mm}
\begin{figure}[h!]
\centering
\ifnum\ijqi=1
{\includegraphics[width=5in]{equiv1}} 
\else
{\includegraphics[width=5in]{Figures/equiv1}} 
\fi
\caption{Quantum advice to $\calZ_1$ collapses all models}
\label{fig:equiv1}
\end{figure}
\label{thm:eq_sp_body}
\end{theorem}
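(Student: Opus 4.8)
The plan is to observe that the four models are linearly ordered by the power they grant the environment: equipping $\calZ_2$ with quantum advice and letting $\calZ_1$ pass state (the choices $\b$ and $\c$) can only help a distinguisher, so $\model{\a}{\b}{\c}$ is the strongest of the four and $\model{\a}{\bbar}{\cbar}$ the weakest, with $\model{\a}{\bbar}{\c}$ and $\model{\a}{\b}{\cbar}$ in between. For any two of these, the implication ``emulation in the stronger model $\Rightarrow$ emulation in the weaker one'' is immediate: fixing an adversary $\calA$ and the simulator $\calS$ the stronger model provides, any distinguisher of the weaker model is \emph{verbatim} a distinguisher of the stronger model attacking the very same pair $(\mac_{\Pi,\calA},\mac_{\Gamma,\calS})$, with $\calZ_2$'s quantum advice and/or the passed register simply left empty; hence indistinguishability against all distinguishers of the stronger model gives it against all of the weaker one's. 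So the whole theorem reduces to the single nontrivial implication ``$\model{\a}{\bbar}{\cbar}$-emulation $\Rightarrow$ $\model{\a}{\b}{\c}$-emulation'': combined with the trivial converse and the sandwich this forces all four emulation relations to coincide.

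To prove that implication, I would route the extra ingredients of the strong model --- $\calZ_2$'s advice $aux_2$ and the register $\rspace{R}$ that $\calZ_1$ passes to $\calZ_2$ --- \emph{through the protocol}, using that the adversary is universally quantified. So suppose $\Pi$ $\model{\a}{\bbar}{\cbar}$-emulates $\Gamma$ and fix an adversary $\calA$; I must produce a $\model{\a}{\b}{\c}$-simulator for it. Define the \emph{buffered} adversary $\calA^+$: it runs $\calA$ on $\calA$'s own registers but additionally carries an extra ``buffer'' register $\rspace{T}$, received as input and output unchanged, which it never touches; then $\mac_{\Pi,\calA^+}=\mac_{\Pi,\calA}\otimes\mathbb{1}_{\rspace{T}}$ and $\calA^+$ is still polynomial time. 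Applying the hypothesis to $\calA^+$ gives a poly-time $\calS^+$ with $\mac_{\Gamma,\calS^+}$ and $\mac_{\Pi,\calA^+}$ $\model{\a}{\bbar}{\cbar}$-indistinguishable; take as candidate $\model{\a}{\b}{\c}$-simulator for $\calA$ the machine $\calS$ obtained from $\calS^+$ by hard-wiring the $\rspace{T}$-input to $|0\rangle$ and discarding the $\rspace{T}$-output.

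The verification that $\calS$ works goes by transforming distinguishers. Given any $\model{\a}{\b}{\c}$-distinguisher $(\calZ_1,\calZ_2)$ with advice $aux_1,aux_2$ and passed register, build the $\model{\a}{\bbar}{\cbar}$-distinguisher $(\calZ_1',\calZ_2')$ where $\calZ_1'$ carries quantum advice $aux_1\otimes aux_2$, runs $\calZ_1$ internally to obtain both the parties' inputs and the register $\rspace{R}$ that $\calZ_1$ would forward, hands the parties' inputs to the protocol, and deposits $\rspace{R}\otimes aux_2$ into the adversary's buffer $\rspace{T}$; and $\calZ_2'$ is $\calZ_2$ applied to the protocol output (which now contains $\rspace{R}$ and $aux_2$, emerging from the buffer). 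It is still polynomial time. On the real side $\calA^+$ relays $\rspace{T}$ verbatim, so $(\calZ_1',\calZ_2')$ reproduces \emph{exactly} $(\calZ_1,\calZ_2)$'s interaction with $(\Pi,\calA)$; since $(\calZ_1',\calZ_2')$ is a legitimate $\model{\a}{\bbar}{\cbar}$-distinguisher, its advantage against $\mac_{\Pi,\calA^+}$ versus $\mac_{\Gamma,\calS^+}$ is negligible, and (modulo the point below) transporting this back shows $(\calZ_1,\calZ_2)$ has negligible advantage against $(\Pi,\calA)$ versus $(\Gamma,\calS)$.

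\noindent\textbf{Main obstacle.} The delicate step is the ideal side: when $\calZ_1'$ deposits the possibly entangled state $\rspace{R}\otimes aux_2$ into the buffer, the simulator $\calS^+$ is a priori free to act on it, whereas $\calS$ was defined from $\calS^+$'s behaviour on a \emph{pristine} buffer. One must therefore show that a valid $\model{\a}{\bbar}{\cbar}$-simulator for $\calA^+$ may be assumed, without loss of generality, to treat the buffer as a pure passthrough up to negligible error --- i.e.\ that a tampering of the buffer that is undetectable to any polynomial-time distinguisher is essentially the identity, \emph{even when the buffer content is entangled with the rest of the execution}. This is exactly the ``network soundness'' of quantum authentication mentioned in the introduction, and the natural way to enforce it is to have $\calZ_1$ wrap the buffer content in a quantum authentication code keyed by a string shared as classical advice with $\calZ_2$, so that $\calS^+$'s only options are to relay the authenticated register faithfully or to be rejected with overwhelming probability. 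The statistical variant ($\calZ_2$ unbounded, hence $aux_2$ vacuous) is analogous and in fact simpler, working directly with trace distance.
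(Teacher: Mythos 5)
Your proposal is correct and follows essentially the same route as the paper: reduce everything to the single nontrivial ``weak $\Rightarrow$ strong'' implication, absorb $aux_2$ and the passed register into a pass-through (buffered) adversary, obtain a simulator for that adversary from the hypothesis, and then invoke the soundness of a quantum authentication scheme (the paper's Proposition~\ref{prop:cqas}) to force that simulator to act as the identity on the buffer even when its contents are entangled with the rest of the execution. The only detail to adjust is that your simulator $\calS$ should feed an \emph{authenticated} $|0\rangle$ (under a freshly sampled key) into the buffer slot rather than a bare $|0\rangle$, exactly as in the paper's construction, since it is the QAS soundness applied to that authenticated dummy state that makes the ideal-side comparison go through.
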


The main proof idea is that we can simulate passing of state from $\calZ_1$ to $\calZ_2$ by having $\calZ_1$ authenticate the state (using a quantum authentication scheme) and
pass it to $\calZ_2$ via the adversary. The classical key for the
authentication scheme can be hardwired into the circuits of $\calZ_1$
and $\calZ_2$.  The authentication scheme ensures that a simulator
playing the role of the adversary cannot modify the state en route
(without being easily distinguishable from a real adversary).

Our proof requires a careful definitional treatment of quantum
authentication schemes (we provide an apparently stronger definition
than the one in \citet{BCGST02},
which is nonetheless satisfied by their constructions). 

\begin{proof}[Proof of Theorem~\ref{thm:eq_sp_body}]
We will show that $\model{\a}{\bbar}{{\cbar}} \equiv \model{\a}{\bbar}{{\c}}$ in Proposition~\ref{prop:aux1_sp}. This is the most interesting as well as technically heavy part. $\model{\a}{\b}{{\cbar}} \equiv \model{\a}{\b}{{\c}}$ can be shown by almost the same argument, and hence we omit the details. $\model{\a}{\b}{{\c}} \equiv \model{\a}{\bbar}{{\c}}$ follows by a simple observation. When $\calZ_1$ takes quantum advice and passes state to
    $\calZ_2$, $\calZ_2$ does not need to explicitly take an extra
    advice, because this state can just be part of $\calZ_1$'s
    quantum advice, and is later passed to $\calZ_2$ by $\calZ_1$. Combining these proves the theorem. 
\end{proof}

\begin{prop}  
$\model{\a}{\bbar}{{\cbar}} \equiv \model{\a}{\bbar}{{\c}}  \ .$ 
\label{prop:aux1_sp}
\end{prop}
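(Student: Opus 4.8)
The plan is to prove the two inclusions separately. The direction $\model{\a}{\bbar}{\c}$-emulation $\Rightarrow \model{\a}{\bbar}{\cbar}$-emulation is immediate, since an environment that passes no state is the special case of a state-passing one that passes the trivial state; hence the same simulator works. All the content is in the converse: assuming $\Pi$ $\model{\a}{\bbar}{\cbar}$-emulates $\Gamma$, show it $\model{\a}{\bbar}{\c}$-emulates $\Gamma$. Fix an adversary $\calA$ in the $\c$-model and consider the \emph{padded} adversary $\hat\calA$: it runs $\calA$ verbatim, but additionally holds an auxiliary register $\reg{W}$ of polynomial size that $\calZ_1$ supplies as part of its input, that $\hat\calA$ never touches, and that $\hat\calA$ simply copies to its output. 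Then $\mac_{\Pi,\hat\calA}$ is poly-time and acts as the identity on $\reg{W}$, so by hypothesis there is a poly-time $\hat\calS$ with $\mac_{\Pi,\hat\calA}\qc\mac_{\Gamma,\hat\calS}$ against non-state-passing distinguishers.

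The idea now is that a state-passing environment should be able to route its side register $\reg{R}$ through $\reg{W}$ after \emph{authenticating} it under a classical key $k$ that is hardwired, as shared classical advice, into both $\calZ_1$ and $\calZ_2$ (averaging over keys in the soundness guarantee lets us fix a single good $k$ per environment). First one argues that $\hat\calS$ cannot disturb $\reg{W}$: applying $\mac_{\Pi,\hat\calA}\qc\mac_{\Gamma,\hat\calS}$ to the non-state-passing distinguisher that inserts an authenticated state into $\reg{W}$ and outputs only whether de-authentication accepts shows that de-authentication accepts in the ideal run except with negligible probability; the soundness of the authentication scheme then guarantees that, conditioned on acceptance, the recovered register is negligibly close to the one inserted, \emph{even when that register is entangled with the rest of the system}. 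It follows that $\hat\calS$ behaves, up to negligible error and from the standpoint of any poly-time observer, like $\calS\otimes\I_{\reg{W}}$ for a poly-time channel $\calS$ acting only on $\calA$'s own registers (concretely, $\calS$ runs $\hat\calS$ on an internally authenticated dummy in $\reg{W}$ and discards that register). This $\calS$ is a legitimate adversary against $\Gamma$'s honest party and is our candidate simulator.

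It remains to verify that $\calS$ defeats an arbitrary state-passing environment $(\calZ_1,\calZ_2)$ with side register $\reg{R}$. From it we build a non-state-passing $(\calZ_1',\calZ_2')$: $\calZ_1'$ runs $\calZ_1$, authenticates $\reg{R}$ under $k$ into $\reg{W}$, feeds the players' inputs as usual, and hands $\reg{W}$ to $\hat\calA$'s padding register; $\calZ_2'$ de-authenticates $\reg{W}$, outputs $0$ on a reject, and otherwise runs $\calZ_2$ on the outputs together with the recovered $\reg{R}$. In the real run $\hat\calA$ forwards $\reg{W}$ intact, so $\calZ_2'$ always accepts, recovers $\reg{R}$ exactly, and its output is distributed exactly as $\calZ_2$'s against $\mac_{\Pi,\calA}$ with $\reg{R}$ passed. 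In the ideal run, the previous paragraph shows $\calZ_2'$ rejects only with negligible probability and, conditioned on accepting, receives a state negligibly close to what $\calZ_2$ receives against $\mac_{\Gamma,\calS}$ with $\reg{R}$ passed. Chaining these with $\mac_{\Pi,\hat\calA}\qc\mac_{\Gamma,\hat\calS}$ gives $\mac_{\Pi,\calA}\qc\mac_{\Gamma,\calS}$ in the $\c$-model; as $(\calZ_1,\calZ_2)$ was arbitrary, this proves $\Pi$ $\model{\a}{\bbar}{\c}$-emulates $\Gamma$. The statistical case is identical using the statistical soundness of the scheme, and $\model{\a}{\b}{\cbar}\equiv\model{\a}{\b}{\c}$ follows by the same argument (the extra advice of $\calZ_2$ is never used).

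I expect the main obstacle to be the claim, used in the second paragraph, that a $\qc$-undetectable disturbance of the authenticated register is essentially the identity \emph{as seen by a poly-time observer of the whole, possibly entangled, state}. Standard soundness of quantum authentication concerns a single unentangled message against an information-theoretic adversary, whereas here the authenticated register can be entangled with $\calZ_1$'s quantum advice and with the players' inputs, and $\hat\calS$ --- which through its ordinary input already holds state correlated with $\reg{R}$ and interacts with the honest party throughout $\Gamma$ --- is a full interactive, computationally bounded adversary. The plan is to isolate and prove a strengthened ``network'' soundness lemma for a suitable authentication scheme (the construction of~\cite{BCGST02} works, under a definition stronger than theirs but which it satisfies) and apply it with the channel that $\hat\calS$ induces on $\reg{W}$ in the role of the tampering operation.
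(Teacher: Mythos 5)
Your proposal is correct and follows essentially the same route as the paper: the trivial inclusion, the padded adversary that carries $\reg{W}$ untouched, the simulator that internally authenticates a dummy $|0\rangle$ state, runs the padded simulator, and discards the register, and the key technical ingredient — a strengthened soundness notion for the authentication scheme of~\cite{BCGST02} that tolerates entangled inputs and computationally bounded tampering (the paper's Proposition~\ref{prop:cqas}). The "network soundness lemma" you flag as the main obstacle is exactly the lemma the paper isolates and proves, so no gap remains beyond what you already identified.
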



Denote $ \calM':= \model{\a}{\bbar}{\cbar}$ and $\calM := \model{\a}{\bbar}{\c}$. The proof goes in two directions.

\begin{lemma}
Let $\Pi$ and $\Gamma$ be two protocols. If $\Pi$ emulates $\Gamma$ in model $\calM$, then $\Pi$ also emulates $\Gamma$ in model $\calM'$.
\label{lemma:eqsp1}
\end{lemma}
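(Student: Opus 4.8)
The plan is to exploit the fact that $\calM' = \model{\a}{\bbar}{\cbar}$ is simply a \emph{weaker} model than $\calM = \model{\a}{\bbar}{\c}$: in both models the input sampler $\calZ_1$ receives a quantum advice $aux_1$ and the decider $\calZ_2$ receives only classical advice, and the single difference is that in $\calM$ the sampler $\calZ_1$ may forward a (possibly entangled) side register to $\calZ_2$ while in $\calM'$ it may not. Discarding that side register — equivalently, having $\calZ_1$ forward an empty register — turns an arbitrary $\calM'$-distinguisher into one that an $\calM$-distinguisher can reproduce \emph{with no increase in resources}. So emulation against the richer class of $\calM$-distinguishers automatically yields emulation against the $\calM'$-distinguishers, with the very same simulator.

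First I would fix an arbitrary poly-time adversary $\calA$ attacking $\Pi$. Applying the hypothesis that $\Pi$ emulates $\Gamma$ in $\calM$, I obtain a poly-time simulator $\calS$ for $\Gamma$ such that the composed machines $\mac_{\Pi,\calA}$ and $\mac_{\Gamma,\calS}$ are indistinguishable to every $\calM$-distinguisher (and, in the statistical variant, with the complexity of $\calS$ polynomially related to that of $\calA$). I claim this same $\calS$ witnesses emulation of $\Gamma$ by $\Pi$ in $\calM'$.

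To verify the claim I would take any $\calM'$-distinguisher, i.e.\ an input sampler $\calZ_1$ holding quantum advice $aux_1$ that prepares the joint input $\sigma_n$ on the players' registers while keeping no forwarded register, together with a decider $\calZ_2$ holding classical advice that reads the output and produces a bit. I build an $\calM$-distinguisher by running exactly this $\calZ_1$, letting it forward an empty register to $\calZ_2$, and keeping $\calZ_2$ unchanged (this is legal in $\calM$ since $\calZ_2$ there also takes only classical advice). By construction, the execution of $\Pi$ with $\calA$ under this $\calM$-distinguisher yields precisely the same binary output distribution as the execution under the original $\calM'$-distinguisher, and likewise for $\Gamma$ with $\calS$; hence $\exec_{\Pi,\calA,\calZ}\approx\ideal_{\Gamma,\calS,\calZ}$ follows from the $\calM$-indistinguishability of $\mac_{\Pi,\calA}$ and $\mac_{\Gamma,\calS}$. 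Since the $\calM'$-distinguisher was arbitrary and the complexity bookkeeping is untouched, $\Pi$ emulates $\Gamma$ in $\calM'$.

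I do not expect a real obstacle in this direction: the whole argument is the observation that the class of $\calM'$-distinguishers embeds into the class of $\calM$-distinguishers with no resource blow-up, so this is the ``trivial'' inclusion and the substance of Proposition~\ref{prop:aux1_sp} lies in the converse (handled by the companion lemma, which needs a quantum authentication scheme so that a simulator can transport $\calZ_1$'s state to $\calZ_2$ through the adversary). The only points worth a careful word here are (i) that the ``no quantum advice to $\calZ_2$'' restriction is common to both $\calM$ and $\calM'$, so nothing is smuggled in through the reduction, and (ii) in the statistical variant, that the polynomial relation between the running times of $\calS$ and $\calA$ is inherited verbatim from the hypothesis.
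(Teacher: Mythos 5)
Your argument is correct and is exactly the paper's reasoning: the paper dispatches this direction in one line by noting that $\calZ_1$'s that do not pass state form a subset of those that do, which is precisely your embedding of $\calM'$-distinguishers into $\calM$-distinguishers with the same simulator. Your elaboration of the bookkeeping (empty forwarded register, unchanged $\calZ_2$, inherited complexity bounds) is a faithful expansion of that observation.
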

This is obvious, because $\calZ_1$'s that do not pass states form a subset of those who do. The nontrivial part is the other direction.

\begin{lemma}
Let $\Pi$ and $\Gamma$ be two protocols. If $\Pi$ emulates $\Gamma$ in model $\calM'$, then $\Pi$ also emulates $\Gamma$ in model $\calM$.\label{lemma:eqsp2}
\end{lemma}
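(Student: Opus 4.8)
The plan is to show that if $\Pi$ emulates $\Gamma$ in the model $\calM' = \model{\a}{\bbar}{\cbar}$ (no state passing from $\calZ_1$ to $\calZ_2$), then it also emulates $\Gamma$ in $\calM = \model{\a}{\bbar}{\c}$ (state passing allowed). So fix an adversary $\calA$ attacking $\Pi$ in the model $\calM$; we must produce a simulator $\calS$ for $\Gamma$ that fools every environment pair $(\calZ_1,\calZ_2)$ that keeps state. The key idea is to \emph{fold the state-passing channel into the adversarial channel}, protected by a quantum authentication scheme. Concretely, given an $\calM$-environment $(\calZ_1,\calZ_2)$ with state register passed between them, I would build an $\calM'$-environment $(\calZ_1',\calZ_2')$ as follows: $\calZ_1'$ runs $\calZ_1$, generates a fresh classical authentication key $k$ (hardwired — or drawn as part of its \emph{classical} advice, which both halves are always allowed), authenticates the to-be-passed register under $k$, and instructs the adversary to carry this authenticated packet alongside the normal protocol traffic; $\calZ_2'$ has the same key $k$ hardwired, receives the packet back from the adversary at the end, verifies and decodes it, and then runs $\calZ_2$. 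Since $\calM'$-security gives us a simulator $\calS_0$ that works against $(\calZ_1', \calZ_2')$ — in fact against \emph{all} $\calM'$-environments — we then define the $\calM$-simulator $\calS$ to be essentially $\calS_0$ with the authenticated side-packet relayed honestly (never touched). One then argues the execution with $\Pi$ and $\calA$ under $(\calZ_1,\calZ_2)$ is indistinguishable from the execution with $\Gamma$ and $\calS$ under $(\calZ_1,\calZ_2)$.

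The main steps, in order, are: (1) set up the transformation $(\calZ_1,\calZ_2) \mapsto (\calZ_1',\calZ_2')$ precisely, being careful that the authenticated packet travels over a register the adversary controls but cannot silently tamper with (using the "authenticated registers" convention already in the model, combined with the QAS for the \emph{content}); (2) invoke $\calM'$-emulation to get $\calS_0$ with $\mac_{\Pi,\calA'} \qc \mac_{\Gamma,\calS_0}$ for the induced real-world adversary $\calA'$ (which is just $\calA$ with the extra relayed packet); (3) peel back the authentication: define $\calS$ from $\calS_0$ and show that, because $\calS_0$ must fool the \emph{particular} environment $(\calZ_1', \calZ_2')$ that checks the authentication tag, any action $\calS_0$ takes on the side-packet that is \emph{not} (close to) the identity would be detected — hence $\calS$ obtained by "stripping" the authentication layer is well-defined and poly-time; (4) conclude $\mac_{\Pi,\calA} \qc \mac_{\Gamma,\calS}$ against every state-passing environment, i.e.\ emulation in $\calM$.

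The hard part will be Step (3): showing that the simulator $\calS_0$ supplied by $\calM'$-security can be converted into a genuine $\calM$-simulator. The subtlety is that $\calM'$-security only says $\calS_0$ fools environments that \emph{do not} keep state, so a priori $\calS_0$ is free to mangle the authenticated packet in ways that happen to be invisible to $\calZ_2'$ precisely \emph{because} $\calZ_2'$ got no independent state from $\calZ_1'$ — yet that mangling could be visible to a state-keeping $\calZ_2$. This is exactly where the promised strengthened soundness property of quantum authentication schemes is needed: if a tampering operation on an authenticated message is undetectable to a (computationally bounded) checker \emph{even when the message may be entangled with an external reference system}, then the operation is essentially the identity from any poly-time observer's view. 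Applying this to the reference system held (morally) by $\calZ_1$'s advice and the honest parties, one argues $\calS_0$'s effective action on the packet is negligibly far from identity, so relaying it honestly — which is what $\calS$ does — changes nothing detectable. Assembling this "entangled, computational" QAS soundness statement and checking that the reduction respects poly-time bounds (the simulator's running time must stay polynomially related to $\calA$'s) is the technical core; the rest is bookkeeping with the composition of indistinguishability relations ($\qc$) already set up in the preliminaries.
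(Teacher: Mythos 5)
Your plan matches the paper's proof essentially step for step: the to-be-passed register $\reg{W}$ is handed to the adversary, $\calM'$-emulation supplies a simulator $\calS'$ acting jointly on $\reg{S}_{\calA}\otimes\reg{W}$, and a strengthened (computational, entangled-input) soundness property of quantum authentication --- the paper's Proposition~\ref{prop:cqas} --- forces $\calS'$ to act essentially as the identity on $\reg{W}$, so that one can extract an $\calM$-simulator acting on $\reg{S}_{\calA}$ alone. The one detail your step (3) leaves implicit is how "stripping the authentication layer" is realized concretely: the paper's $\calS$ authenticates a dummy $\kb{0}$ state with a fresh key, runs $\calS'$ on it in place of $\reg{W}$, and traces that register out, with Proposition~\ref{prop:cqas} certifying that this is indistinguishable from the real execution.
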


Let $\reg{W}$ be the system that $\calZ_1$ passes to
$\calZ_2$. Intuitively, we could imagine that passing register
$\reg{W}$ from $\calZ_1$ to $\calZ_2$ as in model $\calM'$ can be done
in model $\calM'$  by handing the register to the adversary, and
having the adversary hand it ti $\calZ_2$. Specifically, for any
adversary $\calA$ acting in model $\calM$, we construct an 
adversary $\calA'$ acting in model $\calM'$. Adversary $\calA'$ receives both $\reg{S}_{\calA}$ and $\reg{W}$, applies $\calA$ on $\reg{S}_{\calA}$ and leaves $\reg{W}$ untouched. Since $\Pi$ emulates $\Gamma$ in model $\calM'$ by assumption, there exists an adversary $\calS'$ attacking $\Gamma$ that simulates $\calA'$. Finally we construct an adversary $\calS$ out of $\calS'$, in hope that it simulates $\calA$ according to model $\calM$. However, the difficulty is that $\calS'$ may depend on $W$ in some nontrivial way, and in model $\calM$, $\reg{W}$ is kept by $\calZ_1$ which is not available to $\calS$. The strategy here is to consider a special class of $\calZ_1$'s that always {\em authenticate} $\reg{W}$ before passing to someone else. Note that by hypothesis the construction of $\calS'$ is independent of the environment, and it must work against all possible environments. This would therefore force that $\calS'$ must be essentially the identity operator on $\reg{W}$, because authentication ensures that any tampering on $\reg{W}$ will be otherwise detected by $\calZ_2$. This tells us that the operation of $\calS'$ is independent of $\reg{W}$, and thus can be simulated by $\calS$ who does not have $\reg{W}$.

To prove formally Lemma~\ref{lemma:eqsp2}, we first review what {quantum authentication} is as proposed in~\cite{BCGST02}.

\begin{definition}(Quantum Authentication Scheme)
A quantum authentication scheme (QAS) is a pair of polynomial time
algorithms $A$ and $B$ together with a set of classical keys
$\mathcal{K}$ such that
\begin{itemize}
    \item Let $M$ be a message system on $m$ qubits. $A$ takes $M$ and a key $k \in \mathcal{K}$ as
    input and outputs a transmitted system $T$ of $m+t$ qubits.
    \item $B$ takes as input the (possibly altered) transmitted
    system $T'$ and a classical key $k\in \mathcal{K}$ and outputs
    two systems: a $m$-qubit message state $M$, and a single qubit
    $V$ which indicates acceptance or rejection.
\end{itemize}
For any fixed key $k$, we denote the corresponding superoperators of $A$ and $B$ by $A_k$ and $B_k$.
\label{def:qas}
\end{definition}

Roughly speaking, a QAS  $({A,B})$ is secure if $B$ with high probability can recover faithfully Alice's message or detect tampering if there is any during transmission. The proposition below states an interesting property about a {secure} QAS which is crucial in proving Lemma~\ref{lemma:eqsp2}. Its proof and the formal definition of secure QAS will appear later in Appendix~\ref{ssec:pcqas}. In what follows, ${\bar A}$ denotes the operation that samples a random key $k\gets \mathcal{K}$ and applies $A_k$, and let $\bar B$ denote the corresponding decoding procedure $B_k$. We claim that, from the perspective of a computationally bounded distinguisher, if an operator $\calO$ appears not to touch part of an input, then $\calO$ essentially decomposed to tensor product of identity and another operator $\calO'$, which we can construct efficiently from $\calO$ and the authentication scheme.

\begin{prop}
Let $({A,B})$ be a secure QAS. Given two superoperators $\calE: \text{L}(X) \to \text{L}(X')$ and $\calO: \text{L}(W\otimes X) \to \text{L}(W \otimes X')$, define $\calO': \text{L}(X) \to \text{L}(X')$ by $\rho \mapsto Tr_{W'}\left[\calO \left(\I_{\lin{X}} \otimes {\bar A} (\rho \otimes |0\rangle \langle 0|_{W'})\right)\right]$.
If $\calE\otimes\I_{\lin{W}} \approx_{wqc} \calO$, then $\calE \approx_{qc} \calO'$.

\begin{figure}[h!]

\centering 
\ifnum\ijqi=1
{\includegraphics[width=3in]{oprime}} 
\else
{\includegraphics[width=3in]{Figures/oprime}} 
\fi
\caption{Illustration of $\calO$ and $\calO'$} \label{fig:oprime}
\end{figure}
\label{prop:cqas}
\end{prop}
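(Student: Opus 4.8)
The plan is to route the distinguisher's reference system \emph{through} the transmitted register $W$, exploiting the authentication so that $\calO$ cannot tamper with it, and then to read it back off on the other side and hand it to the distinguisher as the reference it wanted. The proof rests on a ``networked, computational'' strengthening of the soundness of $(A,B)$ which I would extract from the strengthened definition of a secure QAS in Appendix~\ref{ssec:pcqas}: for \emph{any} attack $\calO\colon\lin{W\otimes X}\to\lin{W\otimes X'}$ there is a channel $\calC\colon\lin{X}\to\lin{X'}$, depending only on $\calO$ and on the scheme, such that encoding an arbitrary message into $W$ with $\bar A$, applying $\calO$ (which may act on $W$ jointly with the side register $X$), and decoding with $\bar B$ yields, up to negligible trace distance and for \emph{every} input state (including one entangled with an outside reference), the accept branch $\kb{\mathrm{acc}}_V\otimes(\I_{\reg{M}}\otimes\calC)(\cdot)$ plus a reject branch of negligible weight --- the crucial point being that $\calC$ does not depend on the input. (That the reject branch is negligible follows from the hypothesis: on $\calE\otimes\I_W$ the honest decode always accepts, so the ``output the acceptance flag'' distinguisher, together with $\calE\otimes\I_W\wqc\calO$, forces $\calO$ to accept with probability $1-\negl$ on every authenticated input; hence $\calC$ may be taken trace-preserving.)

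Given this, I would prove the proposition in two legs. The \emph{unconditional leg} shows $\calO'\dmm\calC$: since $\bar B$ is trace-preserving, tracing out $W$ equals decoding with the matching key and discarding the recovered message and flag, so $\calO'(\rho)=\tr_{\reg{MV}}\bigl[\bar B\bigl(\calO(\bar A(\kb{0})_W\otimes\rho)\bigr)\bigr]$, where each term of the $\bar A$-average is decoded with its own key. Applying the soundness statement with message $\kb{0}$ and using the acceptance bound gives $\calO'(\rho)\approx\calC(\rho)$ in trace distance up to $\negl$; carrying an external reference through the same computation verbatim upgrades this to $\calO'\dmm\calC$, and hence $\calO'\qc\calC$.

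The \emph{conditional leg} shows $\calE\qc\calC$ and is where the hypothesis is used. Let $\calZ$ be any poly-time distinguisher and $\xi_{XR}$ any input for the $\qc$-test of $\calE$ versus $\calC$; WLOG $\abs{R}\le\abs{X}$, so $R$ fits into the message register. For each key $k$, consider the state $\sigma^{(k)}:=(\I_X\otimes A_k)(\xi_{XR})$ on $W\otimes X$ --- the reference $R$ authenticated into $W$ --- and the poly-time distinguisher $\calZ^{*}_k$ that runs the machine on $\sigma^{(k)}$, decodes $W$ with $B_k$, and, on acceptance, runs $\calZ$ on the recovered register $\reg{M}$ together with $X'$, outputting $\calZ$'s bit. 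On the machine $\calE\otimes\I_W$, correctness of the QAS ($B_kA_k=\mathrm{id}\otimes\kb{\mathrm{acc}}$) returns $\xi_R$ intact in $\reg{M}$ and leaves $\calE(\xi_X)$ in $X'$, so $\calZ^{*}_k$ sees exactly $(\calE\otimes\I_R)(\xi)$, independently of $k$; on the machine $\calO$, averaging over $k$ reconstitutes the matched-key encode--attack--decode applied to $\calO$, so by the soundness statement $\calZ$ sees $(\calC\otimes\I_R)(\xi)$ up to $\negl$. Each $\calZ^{*}_k$ has advantage $\le\negl$ by $\calE\otimes\I_W\wqc\calO$ (uniformly in $k$, as the $\calZ^{*}_k$ form a uniform poly-time family), and averaging over $k$ yields $\bigl|\Pr[\calZ((\calE\otimes\I_R)\xi)=1]-\Pr[\calZ((\calC\otimes\I_R)\xi)=1]\bigr|\le\negl$. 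As $\calZ$ and $\xi$ were arbitrary, $\calE\qc\calC$. Combining the two legs by the triangle inequality for distinguishing advantage gives $\calE\qc\calO'$.

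The main obstacle is the networked/computational soundness statement for secure QASs itself --- showing that a single input-independent channel $\calC$ captures the accept branch even when the tampering operation touches an entangled side register, and that this holds against computationally bounded distinguishers --- which is the real content of the strengthened QAS definition and the analysis in Appendix~\ref{ssec:pcqas}. Secondary points requiring care are the authentication-key bookkeeping (arguing per key and averaging, since $\bar A$ does not record $k$), the dimension match between the reference and the QAS message register, and collecting the several $\negl$ error terms (the QAS error, the rejection probability, and the $\wqc$ advantages).
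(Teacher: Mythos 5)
Your proposal is correct and follows essentially the same route as the paper's proof in Appendix~\ref{ssec:pcqas}: both authenticate the reference system, pass it through $\calE\otimes\I$ versus $\calO$, use completeness of the QAS on the $\calE$ side, the entangled-input soundness (with the dummy-$\kb{0}$ simulator, which is exactly your input-independent channel $\calC$, alias $\calO'$ up to conditional erasure) on the $\calO$ side, and key-hardwired poly-time distinguishers to invoke the $\wqc$ hypothesis, including the same argument that the acceptance probability must be $1-\negl$. Your explicit two-leg decomposition through $\calC$ is only a cosmetic reorganization of the paper's three claims comparing $\rho=\tr_V(\tau_1)$, $\tr_V(\tau_2)$ and $\sigma$.
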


\anote{I think we need an intuitive explanation here of what the
  proposition ``means''. I had trouble understanding the relevance of
  the result, since it's been a few years!}
\fnote{intuition added. not sure about necessity of authentication...}

\begin{proof}[Proof of Lemma~\ref{lemma:eqsp2}]
Assume $\Pi$ \cqsa~emulates $\Gamma$ in model $\calM'$, our goal is to show that $\Pi$ also \cqsa~emulates $\Gamma$ in $\calM$. Specifically, we need to argue that for any $\calA$ there is an $\calS$ such that $\mac_{\Pi, \calA} \approx_{qc} M_{\Gamma, \calS}$. The complete construction of $\calS$ is shown below (see also
Fig.~\ref{fig:ps}). 
\begin{figure}[h!]
\ifnum\ijqi=1
\centering {\includegraphics[width=4in]{passstate}} 

\else
\centering {\includegraphics[width=4in]{Figures/passstate}} 
\fi
\caption{Proof
of Lemma.~\ref{lemma:eqsp2}.} 

\label{fig:ps}
\end{figure}

\anote{The subscripts on the figure do not correspond to the subcripts
  in the proof. Should $\Pi'$ be $\Gamma$?}
\fnote{yep. fixed! thx}
  
Given an arbitrary real-world adversary $\calA$ acting in model
$\calM= \model{\a}{\bbar}{\c}$,
\begin{itemize}
\item Construct $\calA'$ acting in model $\calM' = \model{\a}{\bbar}{\cbar}$. Adversary $\calA'$
receives some state on registers $\reg{S}_{\calA}$ and $\reg{W}$. It hands
$\reg{S}_{\calA}$ to $\calA$ and outputs what $\calA$ outputs. Leave $\rspace{W}$
untouched. Then $\mac_{\Pi, \calA'} \equiv \mac_{\Pi, \calA} \otimes \I_{\lin{\rspace{W}}}$ by construction.

\item Since we assume $\Pi$ \qsa~emulates $\Gamma$ in $\calM'$, by definition, there is an $\calS'$ such that $M_{\Pi, \calA'} \wqc M_{\Gamma, \calS'}$.

\item Finally construct $\calS$ for
$\calA$. Simulator $\calS$ applies operator $\bar{A}$ that authenticates $|0\rangle_{\reg{W}'}$ with a randomly chosen key. 
Then it runs $\calS'$ on
$\reg{S}_{\calA}$ and $\reg{W}'$, outputs $\reg{O}_{\calA}$ and discards $\reg{W}'$.  From above we know that $\mac_{\Pi,\calA} \otimes \I_{\lin{\rspace{W}}} \wqc \mac_{\Gamma,\calS'}$, we can apply Proposition~\ref{prop:cqas} with $\calE : = \mac_{\Pi,\calA}$ and $\calO := \mac_{\Gamma, \calS'}$.  This gives us 
$\mac_{\Pi, \calA} \approx_{qc} \mac_{\Gamma,\calS}$. 
\end{itemize}
Thus if $\Pi$ emulates $\Gamma$ w.r.t. the definition in $\calM'$, it also emulates $\Gamma$ w.r.t. $\calM$.
\end{proof} 
Lemma~\ref{lemma:eqsp1} and Lemma~\ref{lemma:eqsp2} together show Prop.~\ref{prop:aux1_sp}: passing state is irrelevant if $\calZ_1$ takes quantum advice.

\mypar{New proof of Lemma~\ref{lemma:eqsp2}}
Given an arbitrary real-world adversary $\calA$ acting in model
$\calM= \model{\a}{\bbar}{\c}$, we first construct $\calA'$ acting in model $\calM' = \model{\a}{\bbar}{\cbar}$. Let $\reg{W'}$ be the output space of the authentication scheme on $\reg{W}$. Define $\calA': = \I_{\lin{\reg{W}'}} \otimes \calA$. Then there exists an $\calS'$ such that $\mac_{\Pi, \calA'} \wqc \mac_{\Gamma, \calS'}$ by the hypothesis that $\Gamma$ \cqsa-emulates $\Pi$ in model $\calM'$. Finally we construct $\calS$ for
$\calA$ as follows: $\calS$ applies operator $\bar{A}$ that authenticates $|0\rangle_{\reg{R}}$ with a randomly chosen key, where $\reg{R}$ is an auxiliary system with the same dimension as $\reg{W}$. 
Then it runs $\calS'$ on
$\reg{S}_{\calA}$ and $\reg{R}'$, outputs $\reg{O}_{\calA}$ and discards $\reg{R}'$.  

Let $\kb{\psi} \in \density{\reg{W}\otimes\reg{S_\calA}\otimes\reg{S_B}}$ be an arbitrary input state. Let $\rho:= (\I_{\lin{\reg{W}}}\otimes \mac_{\Pi,\calA})(\kb{\psi})$ and  $\sigma:= (\I_{\lin{\reg{W}}}\otimes \mac_{\Gamma,\calS})(\kb{\psi})$. We show below that $\rho \wqc \sigma$, which will prove that [\textbf{well, not really...}] $\mac_{\Pi, \calA} \approx_{qc} \mac_{\Gamma,\calS}$. 

Define the following states: 

\begin{align*}
\hat \rho &: = (\dauth\otimes\I_{\lin{\reg{O_\calA}\otimes\reg{O_B}}})\mac_{\Pi,\calA'}(\auth\otimes\I_{\lin{\reg{S_\calA}\otimes\reg{S_B}}})(\kb{\psi})\\
\eta &:= (\dauth\otimes\I_{\lin{\reg{O_\calA}\otimes\reg{O_B}}}){\mac_{\Gamma,\calS'}}(\auth\otimes\I_{\lin{\reg{S_\calA}\otimes\reg{S_B}}})(\kb{\psi})\\
\gamma &:= Tr_\reg{R}\left( (\I \otimes\dauth)(\I_{\lin{\reg{W}}}\otimes \calS'\otimes \Gamma_B)(\I\otimes \auth)(\kb{\psi}\otimes\kb{0}_{\reg{R}}) \right)
\end{align*}
Note that $\gamma$ coincides with the state that in $\calS$, instead of discarding (i.e.tracing out) $\reg{R}'$, we apply $\dauth$ on $\reg{R}'$ and trace out $\reg{R}$ only. Clearly $Tr_\reg{V}(\gamma)=\sigma$.  A few simple observations are also in order: $\hat \rho = \kb{1}_\reg{V}\otimes \rho$ by completeness of QAS; $\eta \wqc \hat \rho$ by hypothesis $\mac_{\Pi, \calA'} \wqc \mac_{\Gamma, \calS'}$; and $\td(\eta,\gamma) \leq \negl(n)$ by the soundness of QAS [\textbf{This needs a few lines to argue}]. 
Therefore $\rho\wqc \sigma$ because for any $\calZ_2$: 
\begin{align*}
& \left| \Pr(Z_2(\rho) = 1) - \Pr(Z_2(\sigma) =1) \right| \\
 \leq & \left| \Pr(Z_2(Tr_\reg{V}(\hat \rho)) = 1) - \Pr(Z_2(Tr_\reg{V}(\eta)) = 1)\right| \\
+  & \left| \Pr(Z_2(Tr_\reg{V}(\eta)) = 1) - \Pr(Z_2(Tr_\reg{V}(\gamma)) = 1) \right|\\
+ & \left| \Pr(Z_2(Tr_\reg{V}(\gamma)) = 1) - \Pr(Z_2(\sigma) = 1) \right|\\
\leq & \negl(n)+\negl(n) + 0  = \negl(n)
\end{align*}

\mypar{Statistical Setting} Similar techniques allow us to get better understandings of statistical stand-alone security models as well. Recall Definition~\ref{def:qssa} (Quantum Statistically Stand-Alone Emulation): \emph{Let $\Pi$ and $\Gamma$ be two poly-time protocols. We say $\Pi$ \emph{quantum statistical stand-alone} ($\sqsa$) emulates $\Gamma$ if for any $\calA$ there exists an $\calS$ that runs in poly-time of that of $\calA$, such that $M_{\Pi, \calA} \approx_{\diamond} M_{\Gamma, \calS}$}. We denote this security model $\calM$. Another model, denote it $\calM'$, which requires that $M_{\Pi,\calA} \approx_{tr} M_{\Gamma, \calS}$, seems to enforce weaker security guarantee.  Nonetheless, these two models are actually equivalent. 

\begin{theorem} $\calM \equiv \calM'$. Namely, for any adversary
  $\calA$, there exists an $\calS$ such that $M_{\Pi,\calA}
  \approx_{tr} M_{\Gamma, \calS}$ if and only if there exists an $\calS'$ such that $M_{\Pi,\calA} \approx_\diamond M_{\Gamma, \calS'}$. 
\label{thm:tr=diam}
\end{theorem}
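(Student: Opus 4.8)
The plan is to prove the two inclusions $\calM\subseteq\calM'$ and $\calM'\subseteq\calM$ separately. One of them is immediate: $\dmm$-closeness implies $\trm$-closeness (restrict the distinguisher's reference to be one-dimensional), so any simulator witnessing that $\Pi$ emulates $\Gamma$ in $\calM$ witnesses the same in $\calM'$.

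For the nontrivial inclusion I would mirror the proof of Lemma~\ref{lemma:eqsp2}: the reference system that an $\calM$-distinguisher is allowed to keep plays the role of the register $\reg{W}$ there, and I would route it through the adversary under a quantum authentication scheme. Fix an adversary $\calA$ attacking $\Pi$ in $\calM$. Since the machines' input space has $\mathrm{poly}(n)$ qubits, it is enough, by the standard bound on the optimal reference in a diamond-norm distinguisher, to handle references on $r(n)=\mathrm{poly}(n)$ qubits; fix a polynomial-size secure QAS $(\auth,\dauth)$ (e.g.\ the scheme of~\cite{BCGST02}) authenticating $r(n)$-qubit messages with negligible error, where $\auth$ samples a fresh key, encodes the message into a register $\reg{W}$ and records the key, and $\dauth$ decodes and produces a verification flag $\reg{V}$. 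Let $\calA^\sharp$ be the adversary that runs $\calA$ on $\reg{S}_\calA$ while carrying an extra register $\reg{W}$ along untouched, so $\mac_{\Pi,\calA^\sharp}=\mac_{\Pi,\calA}\otimes\I_{\lin{\reg{W}}}$. By hypothesis (applied to $\calA^\sharp$) there is a poly-related $\calS^\sharp$ with $\mac_{\Pi,\calA^\sharp}\trm\mac_{\Gamma,\calS^\sharp}$. I would then let $\calS'$ --- the sought $\calM$-simulator for $\calA$ --- prepare $\auth(\kb{0}_{\reg{R}})$ on an internal copy of $\reg{W}$, run $\calS^\sharp$ on $\reg{S}_\calA$ together with that register, discard the register, and output whatever $\calS^\sharp$ outputs on $\reg{O}_\calA$. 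Since $\auth$ is efficient and $\calS^\sharp$ is poly-related to $\calA^\sharp$ (hence to $\calA$), $\calS'$ is poly-related to $\calA$ and uses only $\calA$'s interface.

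To finish I would establish, and apply, the statistical analogue of Proposition~\ref{prop:cqas}: with $\calE:=\mac_{\Pi,\calA}$, $\calO:=\mac_{\Gamma,\calS^\sharp}$, $\reg{W}$ the authenticated register, and $\calO':=\mac_{\Gamma,\calS'}=\tr_\reg{W}\!\left[\calO(\,\cdot\,\otimes\auth(\kb{0}))\right]$, \emph{if $\calE\otimes\I_{\lin{\reg{W}}}\trm\calO$ then $\calE\dmm\calO'$}. Its proof is the proof of Proposition~\ref{prop:cqas} with the indistinguishability notions $\wqc$ and $\qc$ replaced by $\trm$ and $\dmm$, and it unwinds into three observations about an arbitrary input $\psi$ on $\reg{S}_\calA\otimes\reg{S}_B\otimes\reg{R}$: first, completeness of the QAS (and the fact that $\mac_{\Pi,\calA}$ acts disjointly from the encode/decode) gives $(\mac_{\Pi,\calA}\otimes\I_{\lin{\reg{R}}})(\psi)=\tr_\reg{V}\!\left[(\I\otimes\dauth)\bigl(\mac_{\Pi,\calA^\sharp}((\I\otimes\auth)(\psi))\bigr)\right]$; second, the hypothesis applied to the input $(\I\otimes\auth)(\psi)$, together with monotonicity of trace distance under $\tr_\reg{V}\circ(\I\otimes\dauth)$, shows this is within $\negl(n)$ in trace distance of $\tr_\reg{V}\!\left[(\I\otimes\dauth)\bigl(\mac_{\Gamma,\calS^\sharp}((\I\otimes\auth)(\psi))\bigr)\right]$; third, soundness of the QAS forces $\calS^\sharp$'s action on the authenticated register to accept with overwhelming probability and to coincide (up to $\negl(n)$) with the identity on the encoded message tensored with an arbitrary channel on the other registers --- for otherwise a statistical distinguisher, seeing the whole output, could tell $\calS^\sharp$ from $\calA^\sharp$, which accepts with certainty and leaves $\reg{W}$ intact --- so it makes no difference, up to $\negl(n)$, whether that register carries the authenticated external reference or an authenticated $\kb{0}$ with the reference kept aside, i.e.\ the previous state is $\negl(n)$-close to $(\mac_{\Gamma,\calS'}\otimes\I_{\lin{\reg{R}}})(\psi)$. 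Chaining the last two steps gives $\mac_{\Pi,\calA}\dmm\mac_{\Gamma,\calS'}$, completing the inclusion $\calM'\subseteq\calM$ and hence $\calM\equiv\calM'$.

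The hard part will be the third observation: converting ``$\calS^\sharp$ is $\trm$-indistinguishable from an adversary that does literally nothing to $\reg{W}$'' into ``$\calS^\sharp$ acts as the identity on the encoded message tensored with an arbitrary channel elsewhere'', in a regime where the encoded message may be entangled with the honest party's input and with $\calA$'s state, and where $\calS^\sharp$ interleaves its handling of $\reg{W}$ with an entire run of $\Gamma$. This is exactly the network-style strengthening of quantum-authentication soundness already needed for Proposition~\ref{prop:cqas} (Appendix~\ref{ssec:pcqas}), re-instantiated with the statistical rather than the computational distinguisher; one must also keep the bound on the optimal diamond-norm reference in hand so that the authentication scheme stays polynomial-size.
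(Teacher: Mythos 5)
Your proposal is correct and follows essentially the same route as the paper: the easy direction is the trivial restriction of the reference, and the nontrivial direction re-runs the authentication-routing construction of Lemma~\ref{lemma:eqsp2} (pad $\calA$ with an untouched register $\reg{W}$, invoke the hypothesis to get $\calS^\sharp$, build $\calS'$ by authenticating $\kb{0}$) and then applies exactly the statistical analogue of Proposition~\ref{prop:cqas} that the paper states as Proposition~\ref{prop:sqas}, whose proof is the computational one with $\wqc/\qc$ replaced by $\trm/\dmm$. Your added observation that the diamond-norm reference can be taken to be $\mathrm{poly}(n)$ qubits, so the QAS stays polynomial-size, is a correct detail the paper leaves implicit.
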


This can be shown by a similar proof of Proposition~\ref{prop:aux1_sp}. We will need a statistical analogue of Proposition~\ref{prop:cqas}, which we state below. We omit its proof because one can follow almost line by line the proof of Proposition~\ref{prop:cqas}, replacing computational indistinguishability with trace distance  in a few places (e.g. in Claim~\ref{claim:cqas3}). 

\begin{prop}
Let $({A,B})$ be a secure QAS. Let $\calE,\calO$ and $\calO'$ be defined as in Proposition~\ref{prop:cqas}. 
If $\calE\otimes\I_{\lin{W}} \approx_{tr} \calO$, then $\calE \approx_{\diamond} \calO'$.
\label{prop:sqas}
\end{prop}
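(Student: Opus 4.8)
The plan is to replay the proof of Proposition~\ref{prop:cqas} almost verbatim, with a single substitution of statistical notions for computational ones: $\approx_{wqc}$ becomes $\approx_{tr}$, $\approx_{qc}$ becomes $\approx_{\diamond}$, ``computationally bounded distinguisher'' becomes ``arbitrary trace-distance test,'' and the soundness of the QAS is invoked in its statistical form (as set up in Appendix~\ref{ssec:pcqas}). Recall that $\calO'$ feeds into the $W$-slot of $\calO$ a fresh ancilla authenticated under a uniformly random key $k\gets\mathcal{K}$ via $\bar A$, runs $\calO$, and discards that slot. Fixing an arbitrary reference system $R$ and an arbitrary input state $\rho$ on $X\otimes R$, the goal is to bound $\td\big[(\calE\otimes\I_{\lin{R}})(\rho),(\calO'\otimes\I_{\lin{R}})(\rho)\big]$ by $\negl(n)$.

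The core is the same structural claim as in Proposition~\ref{prop:cqas} (the statistical analogue of Claim~\ref{claim:cqas3}): the channel that authenticates a message into the $W$-slot, applies $\calO$, and then decodes that slot with $\bar B$ (outputting the accept/reject flag together with the recovered message and the $X'$-output) is, up to $\negl(n)$ in diamond norm, equal to ``output $\kb{1}$ on the flag, return the message unchanged, and apply $\calE$ to $X$''. To prove it one uses the hypothesis $\calE\otimes\I_{\lin{W}}\approx_{tr}\calO$ --- which quantifies over \emph{all} input states on $W\otimes X$, in particular those whose $W$-part is an authenticated codeword, and involves \emph{no} reference system --- together with completeness of the QAS, which makes the analogous channel built from $\calE\otimes\I_{\lin{W}}$ accept deterministically and return the message intact. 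The two decoded states are then $\veps(n)$-close in trace distance, so $\calO$'s run rejects only with probability $O(\veps(n))$; and on the accept branch statistical soundness of the QAS forces the tampering on the authenticated register to be the identity, which, combined with the hypothesis, forces $\calO$'s effective action on the $X$-part to agree with $\calE$ up to $\negl(n)$, uniformly over all inputs. This is precisely where one swaps the computational-indistinguishability estimates of Proposition~\ref{prop:cqas} for trace-distance estimates.

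Granting this claim, the Proposition follows from the same three-term triangle inequality as in the proof of Proposition~\ref{prop:cqas} (displayed in the proof of Lemma~\ref{lemma:eqsp2}): let $\hat\rho$ be obtained by authenticating the ancilla in the $W$-slot, running $\calE\otimes\I_{\lin{W}}$, and decoding, so that $\hat\rho=\kb{1}\otimes(\calE\otimes\I_{\lin{R}})(\rho)$ by completeness; let $\eta$ be the same with $\calO$ in place of $\calE\otimes\I_{\lin{W}}$, so that $\td(\hat\rho,\eta)\le\veps(n)$ by the hypothesis with $\bar B$ folded into the output; and let $\gamma$ be the state in which the authenticated ancilla carries a fresh $|0\rangle$ while the genuine data is carried alongside, so that $\td(\eta,\gamma)\le\negl(n)$ by soundness and $\tr_{V}\gamma=(\calO'\otimes\I_{\lin{R}})(\rho)$ by construction. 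Tracing out the flag register and summing, $\td\big[(\calE\otimes\I_{\lin{R}})(\rho),(\calO'\otimes\I_{\lin{R}})(\rho)\big]\le\veps(n)+\negl(n)+0=\negl(n)$. The step that genuinely needs care --- and where the QAS is indispensable --- is that a \emph{reference-free} hypothesis ($\approx_{tr}$, which is strictly weaker than $\approx_{\diamond}$) must yield a \emph{reference-allowing} conclusion ($\approx_{\diamond}$); the resolution is that the authenticated register plays the role of the reference, and soundness of the QAS is a Kraus/projector-level statement (``conditioned on accept the attack is the identity'') that is automatically stable under tensoring with $\I_{\lin{R}}$, so the structural decomposition above carries the reference system through for free. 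This is why only the strengthened, network-style soundness definition of the QAS is needed, whose statistical form is recorded in Appendix~\ref{ssec:pcqas}.
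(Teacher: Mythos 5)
Your proposal is correct and follows essentially the same route as the paper, which itself defers to the proof of Proposition~\ref{prop:cqas} ``line by line, replacing computational indistinguishability with trace distance in a few places''; your substitutions ($\approx_{wqc}\to\approx_{tr}$, $\approx_{qc}\to\approx_{\diamond}$, the statistical form of soundness, and monotonicity of trace distance in place of the hardwired-key distinguisher of Claim~\ref{claim:cqas3}) are exactly the intended changes, and your three-term triangle inequality matches the paper's decomposition. Your closing observation---that the authenticated register is what lets a reference-free hypothesis deliver a reference-allowing conclusion---is also the correct reading of why the QAS is needed here.
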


Note that Proposition~\ref{prop:cqas} and Proposition~\ref{prop:sqas}
have a similar flavor to the equivalence between various notions of quantum channel fidelities~(\citet{BKN00}).

\subsubsection{Models without Quantum Advice to $\calZ_1$ and Other Variants} 
\label{sssec:woadv}
When $\calZ_1$
does not receive quantum advice, the relationship between the
different models is less clear. State passing still does not make a
difference (see Figure~\ref{fig:equiv2_body}), but we do not know whether the possibility of quantum advice to $\calZ_2$ 
changes the model. This question appears closely related to the
question of whether BQP/qpoly equals BQP/poly. 
Appendix~\ref{subsec:woadv} contains precise statements and further discussion.

\begin{figure}[h!] \centering 
\ifnum\ijqi=1
\mbox{ \subfigure[$ \model{\abar}{\bbar}{\cbar} \equiv \model{\abar}{\bbar}{\c} $ ]
{\includegraphics[width=0.45\columnwidth]{equiv2a}}\label{fig:naux2_body}
\qquad \subfigure[$\model{\abar}{\b}{\c} \equiv \model{\abar}{\b}{\cbar}$]{\includegraphics[width=0.45\columnwidth]{equiv2b}
}\label{fig:aux2_body}}
\else
\mbox{ \subfigure[$ \model{\abar}{\bbar}{\cbar} \equiv \model{\abar}{\bbar}{\c} $ ]
{\includegraphics[width=0.45\columnwidth]{Figures/equiv2a}}\label{fig:naux2_body}
\qquad \subfigure[$\model{\abar}{\b}{\c} \equiv \model{\abar}{\b}{\cbar}$]{\includegraphics[width=0.45\columnwidth]{Figures/equiv2b}
}\label{fig:aux2_body}}
\fi
\caption{Known equivalences between models where $\calZ_1$ takes no quantum advice}\label{fig:equiv2_body}
\end{figure}

Finally, in Appendix~\ref{subsec:mc}, we
discuss another constraint previously studied in the literature, a
``Markov condition'' on the inputs to the protocol and the adversary's
advice.

\fi

\subsection{Quantum UC Model: An Overview}
\label{subsec:qucmodel}

So far, our security model falls into the \emph{stand-alone} setting, where protocols are assumed to be executed in isolation. However, in practice we often encounter a \emph{network} setting, where many protocols are running concurrently. A protocol proven secure according to a stand-alone security definition ensures nothing if we run it in a network environment. In view of this issue, Canetti~\cite{Can01} proposed the (classical) Universally Composable (UC) security model. It differs from the stand-alone definition of security in that the environment is allowed to be
\emph{interactive}: during the execution of the protocol, the
environment may provide inputs and receive the outputs of the honest
players, and exchange arbitrary messages with the adversary. In
contrast, the environment in the stand-alone model runs only at the
end of the protocol execution (and, implicitly, before the protocol
starts, to prepare the inputs to all parties). UC-secure protocols
enjoy a property called {\em general} (or \emph{universal})
  \emph{composition}\footnote{There is a distinction between UC security (a
  definition that may be satisfied by a specific protocol and ideal functionality) and
  universal composition (a property of the class of protocols that satisfy a security definition). Not all definitions that admit universal
  composition theorems are  equivalent to UC security. See
  \cite{HU06,Lindell09} for discussion.}: loosely
speaking, the protocol remains secure even if it is run concurrently
with an unbounded number of other arbitrary protocols (whereas
proofs of security in the stand-alone model only guarantee security
when only a single protocol at a time is running).

Earlier work on defining UC security and proving universal
composition in the quantum setting appears
in~\cite{BOM04,Unr04}. We will adapt the somewhat simpler
formalism of Unruh~\cite{Unr10}.

Modulo a small change in Unruh's model (quantum advice, discussed below), our stand-alone model is exactly the restriction of Unruh's model to a \emph{non-interactive} environment, that is one which is
idle from the start to the finish of the protocol. 
The only apparent difference is that in the UC model, the environment
runs for some time before the protocol starts to prepare inputs, while
in Section~\ref{subsec:qsamodel} we simply quantify over all joint
states $\sigma$ of the honest players' and adversary's inputs. This
difference is only cosmetic, though: the state $\sigma$ can be taken
to be the joint state of the outputs and internal memory of the
environment at the time the protocol begins. 

We make one change to Unruh's model in order to be consistent with our
earlier definitions and the work of Watrous on zero-knowledge
\cite{Wat09}: we allow the environment to take quantum advice, rather
than only classical advice.  In the language of \cite[p.\ 11]{Unr10},
we change the initialization phase of a network execution to create a
state $\rho \in \mathcal{P}(\mathcal{H}_{\mathrm{ \mathbf{ N}}})$
which equals the classical string $|(\varepsilon, \text{\tt
  environment}, \varepsilon)\rangle$ in $\mathcal{H}^{class}$ (instead
of $|(\varepsilon, \text{\tt environment}, z)\rangle$), and an
arbitrary state $\sigma$ in $\mathcal{H}^{quant}$ (instead of
$|\varepsilon\rangle$). Here $\varepsilon$ denotes the empty
string. Moreover, we change the definition of \emph{indistinguishable
  networks} \cite[p.\ 12]{Unr10} to quantify over all states $\sigma$
instead of all classical strings $z$. {This change is not significant
  for statistical security, since an unbounded adversary may
  reconstruct a quantum advice state from a (exponentially long)
  classical description. However, it may be significant for
  polynomial-time adversaries: it is not known how much quantum advice
  affects the power of, say BQP, relative to classical advice.} For
completeness, we state this modified definition of quantum UC security
below.

\begin{definition}[Computationally Quantum-UC Emulation]
Let $\Pi$ and $\Gamma$ be two-party protocols. We say $\Pi$ {\em computationally quantum-UC (\cquc) emulates} $\Gamma$, if for any poly-time QIM $\calA$, there is a poly-time QIM $\calS$ such that $M_{\Pi, \calA} \qcii M_{\Gamma, \calS}$ (as per Def.~\ref{def:qci}). 
\label{def:quc} 
\end{definition}
Here $\mac_{\Pi,\calA}$ (and $\mac_{\Gamma,\calA}$ likewise) denotes the composed system of $\Pi$ and $\calA$, which can be viewed as a QIM. Its network register consists of part of the adversary's network register, and is used for external communication with another party (e.g., an environment). Alternatively, define
$\mathbf{EXEC}_{\Pi,
\calA, \calZ} :=\{\langle \calZ (\sigma_n), M_{\Pi,\calA}\rangle\}_{n\in\bbN,\sigma_n \in \density{\mathcal{H}_n}}$ and 
 $\mathbf{EXEC}_{\calF, \calS, \calZ} := \{ \langle \calZ (\sigma_n), M_{\Gamma,\calS}\rangle\}_{n\in\bbN, \sigma_n \in \density{\mathcal{H}_n}} $.  We can rephrase the condition as ``\emph{for any poly-time QIM $\calA$, there is a poly-time QIM $\calS$, such that for any poly-time QIM $\calZ$,  $\mathbf{EXEC}_{\Pi, \calA, \calZ} \approx
\mathbf{EXEC}_{\Gamma, \calS, \calZ}$.}'' 

If we allow $\calA$ and $\calZ$ to be unbounded machines, i.e., we require that $\mac_{\Pi,\calA}\approx_{qsi} \mac_{\Gamma,\calS}$, then we get the notion of statistically quantum-UC (\squc) emulation. As suggested in~\cite{Gut13}, we can also use the $\|\cdot\|_{\diamond r}$ norm on strategies to define it. Namely, we require that for any $\calA$ there exists $\calS$ such that $\|\mac_{\Pi,\calA} - \mac_{\Gamma,\calS}\|_{\diamond r} \leq \text{negl}(n)$.  

 
\mypar{General (Concurrent) Composition} The most striking feature of
UC model is that it admits a very general form of composition,
concurrent composition\footnote{People often refer to this type of
  composition as UC composition, presumably because security in the UC
  model implies generally concurrent composition. This should not
  cause any further confusion.}. Specifically, consider a protocol
$\Pi$ that makes subroutine calls to a protocol $\Gamma$. In contrast
to the stand-alone setting, we now allow multiple instances of $\Gamma$ running concurrently.
(For a
formal description of general composition operation, see \citet{Can01}.)
As before, we write
$\Pi^{\Gamma/{\Gamma'}}$ to denote the protocol obtained by $\Pi$ by
substituting $\Gamma'$ for subroutine calls to $\Gamma$. 


Our modifications of Unruh's definition do not affect the validity of the \emph{universal composition} theorem:

\begin{theorem}[Quantum UC Composition Theorem (\citet{Unr10})] 
Let $\Pi, \Gamma$ and $\Gamma'$ be poly-time
protocols. Assume that $\Gamma$ quantum-UC emulates $\Gamma'$. Then
$\Pi^{\Gamma/{\Gamma'}}$ quantum-UC emulates $\Pi$. 
\label{thm:quccomp}
\end{theorem}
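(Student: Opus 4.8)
The statement to be proven is exactly Theorem~\ref{thm:quccomp} as stated by Unruh~\cite{Unr10}, adapted to the modified model in which the environment may hold quantum advice. The plan is therefore to re-run Unruh's original composition proof and check that the only change we made --- replacing the classical advice string $z$ with an arbitrary quantum state $\sigma$ handed to the environment, and quantifying over all such $\sigma$ in the definition of indistinguishable networks --- is transparent to every step of the argument.

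First I would recall the standard structure of a UC composition proof. Assume $\Gamma$ quantum-UC emulates $\Gamma'$, so for the dummy adversary $\calA_\Gamma$ attacking $\Gamma$ there is a simulator $\calS_\Gamma$ attacking $\Gamma'$ with $M_{\Gamma,\calA_\Gamma}\qcii M_{\Gamma',\calS_\Gamma}$ (using the dummy-adversary formulation, which Unruh establishes is without loss of generality; I would either invoke his dummy-adversary lemma or work directly with an arbitrary $\calA$). Given an adversary $\calB$ attacking $\Pi^{\Gamma/\Gamma'}$ --- wait, one works in the other direction: given $\calB$ attacking the composed protocol $\Pi^{\Gamma/\Gamma'}$ (note: here $\Gamma$ is the ``more ideal'' object being substituted in, matching the theorem's phrasing where $\Gamma$ UC-emulates $\Gamma'$ and we substitute $\Gamma'$ for $\Gamma$ in $\Pi$), I construct a simulator $\calB'$ attacking $\Pi$ by composing $\calB$ with copies of $\calS_\Gamma$, one per concurrent instance of $\Gamma'$. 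The core step is a hybrid argument: define hybrid executions $H_0,\dots,H_m$ where in $H_j$ the first $j$ instances of the subroutine have been replaced by $\Gamma$ with $\calS_\Gamma$ spliced in, and the rest still run $\Gamma'$ with $\calB$ controlling them. Consecutive hybrids $H_{j}$ and $H_{j+1}$ are shown indistinguishable by building, from the environment $\calZ$ and the parts of $\calB$ and the already-replaced/not-yet-replaced instances, a single new environment $\calZ_j$ that attacks one instance of $\Gamma$ versus $\Gamma'$; indistinguishability of that single instance (the UC-emulation hypothesis) gives $H_j\qcii H_{j+1}$. Summing the $m=\mathrm{poly}(n)$ negligible gaps via the triangle inequality closes the argument.

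The point where I must be careful --- and the only place the quantum-advice modification could bite --- is the construction of the intermediate environment $\calZ_j$: it must internally simulate $\calZ$, the residual adversary, and all the other $m-1$ subroutine instances, while exposing exactly one instance to the external challenge. In Unruh's classical-advice setting $\calZ_j$ receives $z$ as (classical) advice and hardwires/passes along whatever it needs. In our setting $\calZ_j$ instead receives the quantum advice state $\sigma$ meant for $\calZ$ and simply forwards it to its internal copy of $\calZ$; since $\calZ_j$ is itself a QIM and the definition of $\approx_{qci}$ (Definition~\ref{def:qci}) already quantifies over all quantum-advice states, nothing breaks. I would also need to check that $\calZ_j$ runs in polynomial time: it simulates $\mathrm{poly}(n)$ instances plus $\calZ$ plus $\calB$, all polynomial, so the composed machine is polynomial, and the simulator $\calB'$ (which runs $m$ copies of $\calS_\Gamma$) is likewise polynomial because $\calS_\Gamma$ is polynomial in the dummy-adversary's complexity and there are polynomially many instances.

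The main obstacle --- such as it is --- is bookkeeping rather than a genuine mathematical difficulty: making precise that the ``sliced'' network registers route correctly (Unruh's machine model shares the network register $\reg{N}$, and one must verify that the single externally-exposed instance sees a faithful copy of the interface it would see in a real execution, with all scheduling controlled adversarially and preserved by the reduction). Since we changed only the contents of the environment's quantum workspace and not the communication structure, Unruh's original verification of these routing invariants goes through verbatim. Hence the cleanest exposition is: state that the proof is identical to Unruh's, point to the single modified definition (quantum advice to $\calZ$), observe that $\approx_{qci}$ already accommodates arbitrary quantum advice and that all constructed machines remain poly-time, and refer the reader to~\cite{Unr10} for the detailed hybrid bookkeeping.
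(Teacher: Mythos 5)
The paper gives no proof of this theorem: it is imported directly from Unruh~\cite{Unr10}, with only the one-line remark that the modification allowing the environment quantum advice does not affect the validity of his composition argument. Your proposal reaches the same conclusion by the same route --- re-run Unruh's hybrid argument and observe that the quantum-advice change is transparent to the construction of the intermediate environments and to the poly-time bookkeeping --- so it matches (and indeed elaborates on) the paper's treatment.
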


\def\dumA{\calA_{dummy}} There is another useful property that
simplifies the proof of UC emulation. In both classical and quantum UC
models, it suffices to consider a special adversary, which is called
the \emph{dummy} adversary $\dumA$. The dummy adversary $\dumA$ just
forwards messages between a protocol and an environment and leaves any
further processing to the environment. 
Here we only restate the completeness of dummy
adversary in the quantum setting:

\begin{theorem}[Completeness of the dummy adversary ({\citet[Lemma 12]{Unr10}})]
Assume that $\Pi$ quantum-UC emulates $\Gamma$ with respect to the dummy adversary (i.e., instead of quantifying over all adversaries, we fix $\calA:=\dumA$). Then $\Pi$ quantum-UC emulates $\Gamma$. This holds both for computational and statistical settings. 
\label{thm:qucdum}
\end{theorem}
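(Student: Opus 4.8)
The plan is to import Unruh's classical-to-quantum-UC argument from~\cite[Lemma 12]{Unr10} essentially verbatim, checking that the only change we made to the model — allowing the environment (and hence the adversary, via the dummy reduction) to hold quantum advice — does not break any step. The ``only if'' direction is trivial: the dummy adversary $\dumA$ is a legal poly-time QIM, so if $\Pi$ quantum-UC emulates $\Gamma$ against all adversaries it does so against $\dumA$ in particular. The content is the converse. Suppose $\Pi$ quantum-UC emulates $\Gamma$ with respect to $\dumA$, witnessed by a poly-time simulator $\calS_{dummy}$ with $\mac_{\Pi,\dumA}\qcii \mac_{\Gamma,\calS_{dummy}}$. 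Given an arbitrary poly-time adversary $\calA$ attacking $\Pi$, I would build the simulator $\calS$ for $\Gamma$ by composing $\calA$ with $\calS_{dummy}$: $\calS$ runs an internal copy of $\calA$, and whenever $\calA$ wishes to send/receive protocol messages, $\calS$ routes them through $\calS_{dummy}$ exactly as the dummy adversary would have routed them between the environment and the protocol. Concretely, $\calS := \calS_{dummy}\circ\calA$ in the sense that $\calA$ plays the role of the ``environment-facing'' side of $\dumA$ and $\calS_{dummy}$ plays the ``protocol-facing'' side.

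The key step is the indistinguishability argument. Fix any poly-time interactive distinguisher $\calZ$ and any advice state $\sigma_n$; I must show $\ip{\calZ(\sigma_n)}{\mac_{\Pi,\calA}} \approx \ip{\calZ(\sigma_n)}{\mac_{\Gamma,\calS}}$. The trick, as in the classical proof, is to define a new distinguisher $\calZ' := \calZ \diamond \calA$ that absorbs $\calA$ into itself: $\calZ'$ runs $\calZ$ and internally simulates $\calA$'s interaction with $\calZ$, and to the outside world $\calZ'$ interacts exactly as the environment paired with the dummy adversary $\dumA$ would. Since $\calA$ and $\calZ$ are poly-time, so is $\calZ'$, and the advice state $\sigma_n$ (possibly entangled with a reference kept by $\calZ$) is passed through unchanged. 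By construction $\ip{\calZ(\sigma_n)}{\mac_{\Pi,\calA}} = \ip{\calZ'(\sigma_n)}{\mac_{\Pi,\dumA}}$ and $\ip{\calZ(\sigma_n)}{\mac_{\Gamma,\calS}} = \ip{\calZ'(\sigma_n)}{\mac_{\Gamma,\calS_{dummy}}}$ — these are exact identities, merely reorganizing which wires belong to which machine. Then the hypothesis $\mac_{\Pi,\dumA}\qcii\mac_{\Gamma,\calS_{dummy}}$ applied to the distinguisher $\calZ'$ gives the required $\negl(n)$ bound, and quantifying back over all $\calZ$ yields $\mac_{\Pi,\calA}\qcii\mac_{\Gamma,\calS}$.

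The main obstacle is not conceptual but bookkeeping: one must verify that reorganizing the boundary between ``adversary'' and ``environment'' is legitimate in our model — i.e., that the composed machine $\calZ'=\calZ\diamond\calA$ is still a valid poly-time QIM with the right network register layout, and that $\calS=\calS_{dummy}\circ\calA$ is poly-time (this uses that $\calS_{dummy}$ is poly-time and $\calA$ is poly-time, so the composition blows up only polynomially). For the statistical version one replaces $\qcii$ by $\qsii$ (equivalently the $\|\cdot\|_{\diamond r}$ norm) throughout and drops the poly-time requirement on $\calZ'$ and $\calA$, keeping only that $\calS$'s complexity is polynomially related to $\calA$'s; every identity above is exact and independent of computational power, so the same argument goes through. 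Finally I would note that our modification to Unruh's model (quantum advice $\sigma_n$ in place of classical advice $z$) is harmless here precisely because $\sigma_n$ is threaded through the reductions untouched — $\calZ'$ neither inspects nor modifies it beyond what $\calZ$ already does — so Unruh's proof of~\cite[Lemma 12]{Unr10} applies without change.
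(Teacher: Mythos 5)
Your proposal is correct and is essentially the standard absorb-the-adversary-into-the-environment argument of Unruh's \cite[Lemma 12]{Unr10}, which the paper does not reprove but simply cites and restates in its modified model. Your additional check that the quantum-advice state $\sigma_n$ is threaded through the reduction untouched is exactly the observation needed to justify that citation, so there is nothing to fault here.
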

\fi

\section{Classical Protocols with Quantum Security}
\label{sec:q2pc}

This section studies {what classical protocols remain secure against \emph{quantum} attacks} in the computational setting. Let $\calF$ be a classical two-party poly-time functionality. For technical reasons, $\calF$ needs to be \emph{well-formed}. See~\cite{Can01,CLOS02} for a formal definition and discussions. Throughout this paper, we only consider well-formed functionalities as well. Classically, there are two important families of secure protocols: 

\begin{itemize}
	\item {Stand-alone secure computation}~\cite{GMW87}: Assuming the existence of  enhanced trapdoor permutations, there exists poly-time protocols that computationally stand-alone emulates $\calF$.
	\item Universal-composable secure computation~\cite{CLOS02}: Assuming the existence of enhanced trapdoor permutations, there exists protocols in the $\fzk$-hybrid model that computationally UC emulates $\calF$.
\end{itemize}
 
Our main result shows that these general feasibility results largely remain unchanged against quantum attacks:

\begin{theorem*}[Informal]  For any classical two-party
functionality $\calF$, there exists a classical protocol $\pi$ that
quantum computationally stand-alone emulates $\calF$, under suitable quantum-resistant computational assumptions.
\label{thm:qgmwinformal}
\end{theorem*}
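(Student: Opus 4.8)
The overall strategy is to mimic the classical feasibility path of~\cite{GMW87,CLOS02} while replacing each classical security argument by one that survives quantum attacks, and then glue the pieces together with the modular composition theorem (Theorem~\ref{thm:sacomposition}, Corollary~\ref{cor:sacomposition}). Concretely, I would prove the following chain of realizations:
\begin{enumerate}
\item[(1)] \emph{A classical ZKAoK protocol that \cqsa-realizes $\fzk$ against static quantum adversaries, in the plain model.} This is the protocol sketched in Section~\ref{ssec:contr}: the prover and verifier run a weak coin-flipping phase to fix a public key for a dual-mode/lossy-style encryption scheme (instantiable from LWE), the prover encrypts its witness under that key, and proves ciphertext-statement consistency using Watrous's $3$-round GMW-style ZK~\cite{Wat09}. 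Security of the simulator against a malicious verifier follows by having the simulator bias the coin flip so it knows the secret key (hence straight-line witness extraction, no rewinding of the prover); security against a malicious prover follows because the honest verifier forces the key to be near-uniform and the additional (lattice-type) properties of the encryption scheme let the verifier's view be faithfully simulated. Watrous's quantum rewinding lemma is the only place rewinding appears, and it is quantum-safe by construction.
\item[(2)] \emph{Classical protocols in the $\fzk$-hybrid model that \cquc-emulate $\calF$ against quantum adversaries}, obtained by porting the CLOS compiler~\cite{CLOS02} using ``simple hybrid arguments'': the semi-honest protocols (e.g., Yao/GMW-style OT-based evaluation) and the GMW-style compiler from semi-honest to malicious use only straight-line simulators whose \emph{correctness proofs} are themselves straight-line hybrid arguments, so every step goes through verbatim with a quantum environment/distinguisher. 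By the quantum UC composition theorem (Theorem~\ref{thm:quccomp}), \cquc-emulation in the $\fzk$-hybrid model implies \cqsa-emulation there as well (a non-interactive environment is a special case of an interactive one), so these protocols in particular \cqsa-realize $\calF$ in the $\fzk$-hybrid model.
\item[(3)] \emph{Compose.} Plug the stand-alone ZKAoK protocol of (1) in for the $\fzk$ calls of (2). By Corollary~\ref{cor:sacomposition}, since the $\fzk$-hybrid protocol \cqsa-realizes $\calF$ and the ZKAoK protocol \cqsa-realizes $\fzk$, the composed protocol $\cprot{\Pi}{\fzk}{\pi_{\mathrm{ZKAoK}}}$ \cqsa-realizes $\calF$ in the plain model. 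This composed protocol is entirely classical, giving the theorem.
\end{enumerate}

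A few points need care in executing this plan. First, one must verify that the composition operation is legitimate here: the $\fzk$-hybrid protocol invokes $\fzk$ sequentially (at most one subroutine call in progress at a time), which is exactly the regime covered by Theorem~\ref{thm:sacomposition}; if CLOS's compiler used $\fzk$ concurrently one would instead invoke the UC composition on the hybrid side before passing to stand-alone, which is why step (2) is phrased in the UC model. Second, the quantum advice / entanglement generality of our stand-alone model must be threaded through: the simulators built in (1) and (2) must work for adversaries holding arbitrary quantum advice entangled with the honest party's input, which is automatic for straight-line simulators and, for (1), is exactly what Watrous's analysis (and our witness-extendable variant) provides. Third, the computational assumptions: enhanced trapdoor permutations secure against quantum poly-time suffice for the CLOS side, and the ZKAoK side additionally needs the special encryption scheme, which is where LWE-hardness for quantum poly-time enters; one collects these into the ``suitable quantum-resistant computational assumptions'' of the statement.

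The main obstacle is step (1), specifically the construction and analysis of the stand-alone ZKAoK protocol with a simulator for \emph{malicious provers}. The earlier quantum proof-of-knowledge construction~\cite{Unr-qpok} lacked such a simulator, and naively one would want to rewind the prover to extract, which is problematic quantumly. The resolution---having the simulator instead control the coin-flipping phase to learn the decryption key, so extraction is straight-line---requires (a) a weak coin-flipping subprotocol whose bias can be exploited \emph{only in simulation} (a malicious verifier can bias it, the honest verifier cannot), proven secure via Watrous-style quantum rewinding following~\cite{DL09}, and (b) an encryption scheme with a ``dual-key'' / lossy-key property so that when the simulator-as-prover cannot control the flip, the resulting near-uniform public key still yields a simulatable verifier view. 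Pinning down the exact property needed from the encryption scheme and checking it is realizable from LWE is the technically delicate heart of the argument; the rest is careful but routine bookkeeping with the composition theorems.
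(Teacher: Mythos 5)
Your proposal follows essentially the same route as the paper: (i) port CLOS to the quantum-UC setting in the $\fzk$-hybrid model via simple hybrid arguments, (ii) build a stand-alone ZKAoK for $\fzk$ whose Phase-1 coin flip fixes a dense public key enabling straight-line extraction and whose verifier-side simulation encrypts a dummy plaintext under the (near-uniform, hence dense) key, and (iii) glue the two with the sequential modular composition theorem, observing that quantum-UC emulation restricts to stand-alone emulation. The only cosmetic difference is that you invoke a dual-mode/lossy-style encryption where the paper gets by with a dense IND-CPA scheme; otherwise the decomposition, key lemmas, and composition steps coincide with Theorems~\ref{thm:qclos}, \ref{thm:zkaok}, and \ref{thm:qgmw}.
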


The proof of the theorem can be broken into two parts. First we show
a quantum analogue of~\cite{CLOS02} in Section~\ref{subsec:qclos}. Namely, there exist functionalities, such as $\fzk$, that are as powerful as to realizing any other functionalities based on them, even with respect to computationally quantum-UC security. To achieve this, we develop a framework called \emph{simple hybrid arguments} in Sect~\ref{sssec:sha} to capture a large family of classical security analyses that go through against quantum adversaries. As a result, it amounts to design a (stand-alone) secure protocol for $\fzk$, which is the content of Section~\ref{subsec:zkaok}. We stress that security of existing protocols for $\fzk$ relies on a sophisticated rewinding argument, and it is not clear if the arguments are still valid against quantum adversaries. Hence we need new ideas to get around this difficulty.


\subsection{Basing Quantum UC Secure Computation on $\fzk$}
\label{subsec:qclos}

We show here that $\fzk$ is sufficient for UC secure computation of any two-party functionality against any computational bounded quantum adversaries. That is, for any well-formed functionalities $\calF$,  there exists an $\fzk$-hybrid protocol that \cquc~emulates $\calF$. We stress that these protocols are all \emph{classical}, which can be implemented efficiently with classical communication and computation devices. 

\begin{theorem} 
Let $\calF$ be a two-party functionality. Under Assumptions~\ref{asn:prg} and \ref{asn:pkc}, there exists a  classical $\fzk$-hybrid protocol that \cquc~emulates $\calF$ in the presence of polynomial-time \emph{malicious quantum} adversaries with \emph{static} corruption.
\label{thm:qclos}
\end{theorem}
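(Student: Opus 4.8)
The plan is to follow the blueprint of Canetti, Lindell, Ostrovsky and Rabin~\cite{CLOS02}, but to re-examine every step so as to confirm that it survives the passage from classical to quantum adversaries. The starting point is the observation that the CLOS construction proceeds in stages of increasing power: (i) in the $\fzk$-hybrid model one first obtains a UC-secure protocol for the "commit-and-prove" functionality $\fcp$; (ii) from $\fcp$ one builds a UC-secure protocol against \emph{malicious} adversaries for the commitment functionality $\fcom$ and, more importantly, a compiler that turns any semi-honest protocol into a malicious-secure one (the "GMW compiler" in the UC setting); (iii) independently, one constructs semi-honest-secure protocols for general $\calF$ from semi-honest oblivious transfer $\fot$, which in turn follows from the PKE assumption; and (iv) composing via the quantum UC composition theorem (Theorem~\ref{thm:quccomp}) one assembles a malicious-secure $\fzk$-hybrid protocol for arbitrary well-formed $\calF$. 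So the first step is to lay out this chain explicitly, invoking Assumption~\ref{asn:prg} (a quantum-secure PRG / statistically-binding commitment) and Assumption~\ref{asn:pkc} (a quantum-secure dense PKE or enhanced trapdoor permutation) exactly where CLOS uses their classical analogues.

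The heart of the argument — and the main obstacle — is showing that each individual simulator and each individual reduction in the CLOS proof goes through verbatim against a quantum environment and quantum adversary. This is precisely what the "simple hybrid arguments" framework of Section~\ref{sssec:sha} is designed to deliver: one must check that none of the CLOS simulators rewinds the adversary (they are straight-line, because in the $\fzk$-hybrid model the simulator extracts witnesses directly from the ideal $\fzk$ calls rather than by rewinding, and equivocation is handled via the trapdoor of the commitment rather than by rewinding), and that each hybrid step's indistinguishability is proved by a reduction that runs the distinguisher/adversary once as a black box. The plan is therefore: first invoke the completeness of the dummy adversary (Theorem~\ref{thm:qucdum}) to restrict attention to $\calA=\dumA$, so that the "adversary" is folded into the environment; then present the CLOS simulator for the $\fzk$-hybrid protocol and verify it is straight-line; then organize the proof of correctness as a sequence of hybrids $H_0,\dots,H_k$ where $H_0=\exec$ and $H_k=\ideal$, each consecutive pair being either perfectly equal, statistically close (by statistical binding/hiding of the commitments — these survive unconditionally), or computationally close by a reduction to Assumption~\ref{asn:prg} or~\ref{asn:pkc} against a quantum distinguisher. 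The point to stress is that a quantum distinguisher against the underlying primitive is obtained simply by having it internally simulate the (interactive, but $\fzk$-ideal) experiment and feed the result to the quantum environment $\calZ$; since $\calZ$ may hold quantum advice and a quantum reference register, we must make sure the reduction passes these along unchanged, which it does because it treats them as an opaque part of the state.

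Concretely I would structure the write-up as: (1) state the sub-results — $\fzk$-hybrid UC-secure $\fcom$ and $\fcp$ (following CLOS §6–7), the UC GMW-style compiler (CLOS §6), and semi-honest general computation from $\fot$ (CLOS §5, using Assumption~\ref{asn:pkc}) — each accompanied by a remark that the CLOS proof is a simple hybrid argument and hence quantum-sound by the results of Section~\ref{sssec:sha}; (2) check the semi-honest base case carefully, using the second (faithful-execution) notion of semi-honest quantum adversary from Section~\ref{sssec:model}, since that is the notion under which Yao/GMW-style semi-honest simulators are transparent — a semi-honest quantum adversary only gets to post-process the transcript and final state, so the classical simulator's output distribution is simply tensored against whatever quantum side-information the adversary holds; (3) apply Theorem~\ref{thm:quccomp} to compose the compiler with the semi-honest protocol and with the $\fcp$/$\fcom$ sub-protocols, all living in the $\fzk$-hybrid model, yielding the claimed malicious-secure $\fzk$-hybrid protocol for $\calF$; and (4) conclude. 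The one genuinely delicate point to flag is that the CLOS compiler internally uses a coin-tossing/commitment sub-protocol whose security the classical proof sometimes argues via rewinding in the \emph{plain} model — but in the UC $\fzk$-hybrid model this is replaced by an ideal $\fzk$ (or $\fcom$) call, so no rewinding appears; I would make this substitution explicit and double-check it is the only place where the classical analysis could have smuggled in a rewinding step. If any residual rewinding survives, it must be isolated and replaced, which is where most of the proof-engineering effort will go.
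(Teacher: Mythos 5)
Your proposal follows essentially the same route as the paper: decompose along the CLOS stages (semi-honest general computation from $\fot$ under Assumption~\ref{asn:pkc}, the $\fcp$ protocol in the $\fzk$-hybrid model under Assumption~\ref{asn:prg}, and the compiler), observe that the only nontrivial indistinguishability steps are rewinding-free simple hybrids reducible to the two assumptions, invoke the completeness of the dummy adversary together with the fact that the compiled protocol's malicious execution coincides with a semi-honest execution of the base protocol, and finish with the quantum UC composition theorem. This matches the paper's proof of Theorem~\ref{thm:qclos} in both structure and the points it flags as delicate.
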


\begin{assumption}
 There exists a classical pseudorandom generator secure against any poly-time quantum
    distinguisher.
\label{asn:prg}
\end{assumption}

Based on this assumption and the construction of~\cite{Nao91}, we
can obtain a statistically binding and quantum computationally
hiding commitment scheme $\pi_{\tt com}$. All commitment
schemes we use afterwards refer to this one. This assumption also
suffices for Watrous's ZK proof system for any NP-language against
quantum attacks.
\begin{assumption}
      There exists a {\em dense} classical public-key  crypto-system that is
      IND-CPA (chosen-plaintext attack) secure against quantum
      distinguishers. 
\label{asn:pkc}
\end{assumption}

A public-key crypto-system is dense if a valid public key is
indistinguishable in quantum poly-time from a uniformly random
string of the same length. Although it is likely that standard reductions would show that
\asnref{pkc} implies \asnref{prg}, we chose to keep the assumptions
separate because the instantiation one would normally use of the
pseudorandom generator would not be related to the public-key system
(instead, it would typically be based on a symmetric-key block or
stream cipher). Both assumptions hold, for instance, assuming the
hardness of {\em leaning with errors} (LWE) problem~\cite{Reg09}.

\subsubsection{Simple Hybrid Argument.} 
\label{sssec:sha}
%
Our analysis is based on a new abstraction called a \emph{simple hybrid argument} (SHA). It captures a family of classical security arguments in the UC model which remains valid in the
quantum setting (as long as the underlying primitives are secure against quantum adversaries).

\begin{definition}[Simply related machines]
We say two QIMs $M_a$ and $M_b$ are \emph{$(t,\veps)$-simply related} if there is a time-$t$ QTM $M$ and a pair of classical
distributions $(D_a, D_b)$ such that
  \begin{enumerate}
  \item $M(D_a) \equiv M_a$ (for two QIMs $N_1$ and $N_2$, we say $N_1
    \equiv N_2$ if the two machines behave identically on all inputs,
   that is, if they can be described by the same circuits),
  \item $M(D_b) \equiv M_b$, and
  \item $D_a \approx_{qc}^{2t,\veps} D_b$.
  \end{enumerate}
\end{definition}

\begin{example} 
Figure~\ref{fig:srm} illustrates a pair of simply
related machines. 
\label{ex:srm}
\end{example}

\vspace{-4mm}
\begin{figure}[h!]
\centering 

\ifnum\ijqi=1
\mbox{ \subfigure
{\includegraphics[width=2in]{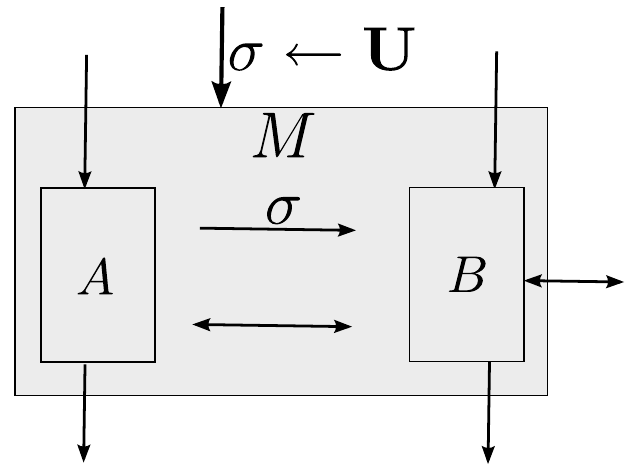}}\label{fig:srm1} \qquad
\subfigure
{\includegraphics[width=2in]{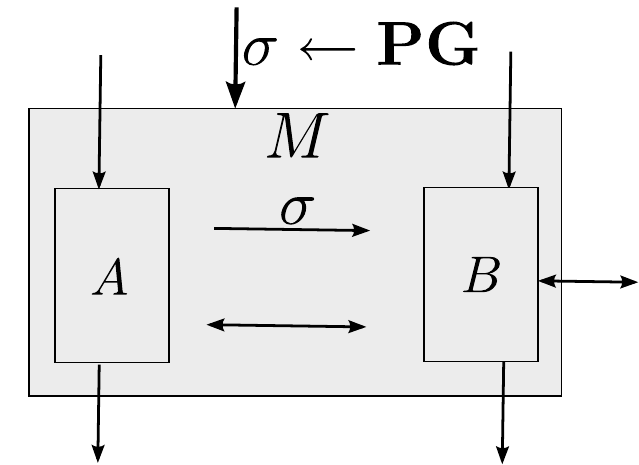}
}}\label{fig:srm2}
\else
\mbox{ \subfigure
{\includegraphics[width=2in]{Figures/sim_exp1}}\label{fig:srm1} \qquad
\subfigure
{\includegraphics[width=2in]{Figures/sim_exp2}
}}\label{fig:srm2}
\fi
\caption{Two simply related machines: $M_a$ is machine $M$ on input $\sigma$
chosen uniformly at random; $M_b$ is machine $M$ on input a pseudorandom
string $\pgen(r)$.}
\label{fig:srm} 
\end{figure}

\begin{lemma}
If two machines $M_a$ and $M_b$ are $(t, \veps)$-simply related, then
$M_a \qcii^{t,\veps} M_b$, i.e., they are $(t,
\veps)$-interactively indistinguishable (as per
Definition~\ref{def:qci}).
\end{lemma}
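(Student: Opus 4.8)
The statement is essentially a reduction: any interactive distinguisher $\calZ$ that tells $M_a$ from $M_b$ too well yields a non-interactive QTM that tells $D_a$ from $D_b$ too well, contradicting condition (3). So the plan is to fix an arbitrary $t(n)$-time interactive distinguisher $\calZ$ together with an input state $\sigma_n$ on $t(n)$ qubits, and build from the triple $(\calZ,\sigma_n,M)$ — where $M$ is the shared circuit family from the definition of ``simply related'' — a single non-interactive QTM $\calZ'$ such that $\calZ'$, given a sample from $D_a$ (resp.\ $D_b$) as auxiliary input, internally simulates $\langle \calZ(\sigma_n), M(D_a)\rangle$ (resp.\ $\langle \calZ(\sigma_n), M(D_b)\rangle$) and outputs whatever $\calZ$ outputs.

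The key steps, in order. First, unfold the definition: since $M_a \equiv M(D_a)$ and $M_b \equiv M(D_b)$, the process $\langle \calZ(\sigma_n), M_a\rangle$ is, circuit-for-circuit, the same as first sampling $d \gets D_a$, hard-wiring $d$ into $M$ to obtain an interactive machine, and then running $\langle \calZ(\sigma_n), M(d)\rangle$; likewise for $b$ with $D_b$. Second, define $\calZ'$: it takes as quantum (here classical) input a string $d$ in the support of $D_a \cup D_b$ and $\sigma_n$ as hard-wired advice, then it simulates the entire interaction between $\calZ$ and the machine $M$ running on hard-wired randomness $d$, finally reading out $\calZ$'s output bit. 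Third, bound the running time of $\calZ'$: it runs one copy of $\calZ$ (cost $t(n)$) and one copy of $M$ (cost at most $t(n)$ by hypothesis that $M$ is a time-$t$ QTM), plus the bookkeeping of shuttling the shared network register back and forth, which is linear in the transcript length and hence $O(t(n))$; so $\calZ'$ runs in time $2t(n)$ up to constants — this is exactly why condition (3) is stated with the bound $2t$. Fourth, observe that by construction
\[
\Pr[\calZ'(D_a \text{ sample}) = 1] = \Pr[\langle \calZ(\sigma_n), M_a\rangle = 1],
\]
and similarly for $b$, so
\[
\bigl|\Pr[\langle\calZ(\sigma_n), M_a\rangle = 1] - \Pr[\langle\calZ(\sigma_n), M_b\rangle = 1]\bigr|
= \bigl|\Pr[\calZ'(D_a)=1] - \Pr[\calZ'(D_b)=1]\bigr| \le \veps(n),
\]
where the last inequality is $D_a \qc^{2t,\veps} D_b$ applied to the $2t(n)$-time distinguisher $\calZ'$. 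Since $\calZ$ and $\sigma_n$ were arbitrary, this is precisely $M_a \qcii^{t,\veps} M_b$.

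\textbf{Main obstacle.} There is no deep obstacle; the one point requiring care is the running-time accounting that justifies the factor $2$ in condition (3). One must check that $\calZ'$ does not pay more than an additive $t(n)$ over $\calZ$'s own budget — in particular that simulating the round-by-round scheduling, copying messages onto and off of the shared network register, and interleaving the executions of $\calZ$ and $M$ all fit within $O(t(n))$ gates, so that constants can be absorbed (or the definition's ``$2t$'' slack used) and $\calZ'$ genuinely qualifies as a legal distinguisher against $D_a \qc^{2t,\veps} D_b$. A secondary subtlety worth a sentence is that $D_a, D_b$ are \emph{classical} distributions, so the sample is a classical string and no issue of cloning or measurement-disturbance arises when $\calZ'$ hard-wires it into $M$; the reduction is entirely straight-line and uses no rewinding, which is of course the whole point of introducing simply related machines.
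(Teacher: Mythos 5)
Your proposal is correct and follows exactly the paper's argument: build a $2t$-time non-interactive distinguisher for $D_a$ vs.\ $D_b$ that samples $d$, simulates $\langle\calZ(\sigma_n), M(d)\rangle$, and outputs $\calZ$'s bit, then invoke condition (3). The only difference is that you spell out the running-time bookkeeping that the paper dismisses with ``obviously.''
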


\begin{proof}
By definition, $M_a \equiv M(D_a)$ and $M_b \equiv M(D_b)$. If there
is a $\calZ$ with quantum advice $\sigma$ that distinguishes $M_a$ and
$M_b$ with advantage $\veps'
> \veps$ in time $t$, we can construct a time-$2t$ distinguisher
$\mathcal{D}$ for $D_a$ and $D_b$ with advantage $\veps'$ as well.
This contradicts $D_a\approx_{qc}^{2t,\veps}D_b$.  Distinguisher $\mathcal{D}$ works by
taking an input sample $d$ from either $D_a$ or $D_b$, 
simulates $\langle\calZ(\sigma), M(d)\rangle$, and outputs whatever $\calZ$
outputs. Obviously, $\mathcal{D}$ runs in time at most $2t$ and
distinguishes $D_a$ and $D_b$ with the same advantage that $\calZ$
distinguishes $M_a$ and $M_b$. Thus we conclude $|\Pr(\langle \calZ(\sigma), M_a \rangle =1) - \Pr(\langle \calZ(\sigma), M_b\rangle= 1)|
\leq \veps$ for any time-$t$ environment $\calZ$.
\end{proof}

\begin{definition} [Simple hybrid argument]
  Two machines $M_0$ and $M_\ell$ are related by a
  \emph{$(t,\veps)$-simple hybrid argument of length $\ell$} if there is a
  sequence of intermediate machines $M_1,M_2,...,M_{\ell-1}$ such
  that each adjacent pair $M_{i-1},M_{i}$ of machines,
  $i=1,\dots,\ell$, is $(t,\frac{\veps}{\ell})$-simply related.
  \label{def:sha}
\end{definition}

\begin{lemma}
  For any $t,\veps$ and $\ell$, if two machines are related by a $(t,\veps)$-simple hybrid
  argument of length $\ell$, then the machines are $(t,\veps)$-interactively indistinguishable.
  \label{lemma:sha}
\end{lemma}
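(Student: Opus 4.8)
The plan is to combine the single-step lemma just proved (simply related machines are interactively indistinguishable) with a routine telescoping argument over the $\ell$ intermediate hybrids; no new ideas are needed beyond what is already in the excerpt.

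First, by Definition~\ref{def:sha} there are machines $M_1,\dots,M_{\ell-1}$ such that each adjacent pair $(M_{i-1},M_i)$, for $i=1,\dots,\ell$, is $(t,\veps/\ell)$-simply related. Applying the preceding lemma to each pair, each $(M_{i-1},M_i)$ is $(t,\veps/\ell)$-interactively indistinguishable; that is, for every $t(n)$-time interactive QIM $\calZ$ and every quantum advice state $\sigma_n$ on $t(n)$ qubits,
\[
\bigl| \Pr[\langle \calZ(\sigma_n), M_{i-1}\rangle = 1] - \Pr[\langle \calZ(\sigma_n), M_i\rangle = 1] \bigr| \le \veps(n)/\ell .
\]

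Next I would fix an arbitrary $t(n)$-time interactive distinguisher $\calZ$ together with advice $\sigma_n$. The key point is that the \emph{same} $\calZ$ (with the same $\sigma_n$) may be used against every hybrid $M_i$: the per-step bound above quantifies over \emph{all} $t$-time distinguishers, and all the $M_i$ share a common interface by construction (each arises as $M(d)$ for the machine $M$ and distributions appearing in the successive ``simply related'' relations), so every process $\langle \calZ(\sigma_n), M_i\rangle$ is well defined. Then the triangle inequality applied to the telescoping sum gives
\begin{align*}
\bigl| \Pr[\langle \calZ(\sigma_n), M_0\rangle = 1] - \Pr[\langle \calZ(\sigma_n), M_\ell\rangle = 1] \bigr|
&\le \sum_{i=1}^{\ell} \bigl| \Pr[\langle \calZ(\sigma_n), M_{i-1}\rangle = 1] - \Pr[\langle \calZ(\sigma_n), M_i\rangle = 1] \bigr| \\
&\le \ell \cdot \frac{\veps(n)}{\ell} = \veps(n).
\end{align*}
Since $\calZ$ and $\sigma_n$ were arbitrary, this is precisely $M_0 \approx_i^{t,\veps} M_\ell$, i.e.\ $M_0$ and $M_\ell$ are $(t,\veps)$-interactively indistinguishable.

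There is essentially no hard step: a hybrid argument of this shape is ``free'' exactly because the intermediate bounds hold uniformly over all distinguishers of the relevant complexity, so one never pays to adapt the distinguisher to each hybrid — the length $\ell$ appears only as the multiplicative factor that was already absorbed into the per-step error $\veps/\ell$. The only thing worth stating carefully is that compatibility of the interfaces of the $M_i$ is guaranteed by the definition of simply related machines; the conceptual weight of the lemma lies in that definition and in the earlier single-step lemma, which together isolate the feature (no rewinding, in either the simulation or the correctness proof) that lets such classical arguments port to quantum adversaries.
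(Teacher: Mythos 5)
Your proof is correct and follows essentially the same route as the paper: both are the standard telescoping hybrid argument, applying the single-step lemma to each adjacent pair and summing the per-step bounds of $\veps/\ell$ via the triangle inequality with a single fixed distinguisher. The only cosmetic difference is that the paper phrases it as a proof by contradiction (locating one pair with advantage exceeding $\veps/\ell$), whereas you argue the contrapositive directly.
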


\begin{proof}
  This is by a standard hybrid argument. Suppose, for contradiction, there
  exists a time-$t$ machine $\calZ$ with advice $\sigma$ such that
$$|\Pr(\langle\calZ(\sigma), M_0 \rangle =1) - \Pr(\langle \calZ(\sigma), M_\ell\rangle = 1)|
\geq \veps \ . $$ Then by triangle inequality we can infer that
there must exist some $i$ such that $$|\Pr(\langle \calZ(\sigma), M_i\rangle
=1) - \Pr(\langle\calZ(\sigma), M_{i+1} \rangle =
    1)| > {\veps}/{\ell} \ . $$ However, by assumption
$M_i$ and $M_{i+1}$ are ($t,\frac{\veps}{\ell})$-simply related and
in particular no time-$t$ machines can distinguish them with
advantage greater than ${\veps}/{\ell}$.
\end{proof}

\subsubsection{Lifting CLOS to Quantum UC Security.} 
\label{sssec:liftclos}

Now we apply our \emph{simple hybrid argument} framework to analyze the protocol in CLOS. We first review the structure of the construction of CLOS in the static setting:

\begin{enumerate}
	\item Let $\calF$ be a two-party functionality. Design a protocol $\pi$ that computationally (classical) UC (\ccuc) emulates $\calF$ against semi-honest adversaries. The protocol uses a semi-honest oblivious transfer (ShOT) protocol, which can be constructed assuming existence of enhanced trapdoor permutations. 
	\item Let $\fcp$ be the ``commit-and-prove'' functionality
          of~\cite[Figure 8]{CLOS02}. A protocol is constructed in
          $\fzk$-hybrid model that \ccuc~emulates $\fcp$, assuming
          existence of a statistically binding and computationally
          hiding commitment scheme. Such a commitment scheme in turn can be constructed from a pseudorandom generator~\cite{Nao91}. 
	\item In $\fcp$-hybrid model, a generic complier ${\tt COMP}$ is designed. Let $\pi' = {\tt COMP}(\pi)$ be the $\fcp$-hybrid protocol after compilation. It is shown in~\cite[Proposition 8.1]{CLOS02} that: \emph{for every classical adversary $\calA'$, there exists a classical adversary $\calA$ with running time polynomial in that of $\calA'$  such that $\mathbf{EXEC}_{\pi', \calA', \calZ} \equiv \mathbf{EXEC}_{\pi, \calA, \calZ}$}. That is, the interaction of $\calA'$ with honest players running $ \pi'$ is \emph{identical} to the interaction of $\calA$ with $\pi$ in the semi-honest model, i.e., $\mac_{\pi',\calA'} \equiv \mac_{\pi,\calA}$.
\end{enumerate}

It then follows that, by the UC composition theorem, $\pi'$	\ccuc~emulates $\calF$ in the $\fzk$-hybrid model. 

We then show how to make the construction secure against quantum adversaries using proper quantum-resistant assumptions. The key observation is that the security proofs of the semi-honest protocol and of the $\fcp$ protocol in the $\fzk$-hybrid model fall into our simple hybrid argument framework. Thus once we augment the computational assumptions to be quantum-resistant, they immediately become secure against quantum adversaries. This is stated more precisely below. 
 
\begin{obs}[CLOS proof structure]
In CLOS, the security proofs for the semi-honest protocol and the protocol for $\fcp$ in $\fzk$-hybrid model against static adversaries consist of 
simple hybrid arguments with $t=poly(n)$ and $\veps=\text{negl}(n)$. 

Moreover, the underlying indistinguishable distributions in the CLOS arguments consist of either 
\begin{inparaenum}[\upshape(\itshape i\upshape)]
\item switching between a real public key and a uniformly random string,
\item changing the plaintext of an encryption, or
\item changing the message in the commit phase of a commitment protocol. 
\end{inparaenum}

\label{obs:clos}
\end{obs}

From this observation, we get the corollary below.


\begin{cor}[CLOS---simple hybrids]
\label{cor:CLOS-simple}

  \begin{enumerate}
  \item In the $\fzk$-hybrid model and under Assumption~\ref{asn:prg}, there is a
    non-trivial protocol that
    UC-emulates
    $\fcp$ in the presence of polynomial-time {\em malicious} static \emph{quantum} adversaries.

  \item Let $\calF$ be a well-formed two-party functionality. In the plain
    model, there is a protocol that UC-emulates $\calF$ in the presence of
     polynomial-time {\em semi-honest} static  \emph{quantum} adversaries under Assumption~\ref{asn:pkc}.
  \end{enumerate}
\end{cor}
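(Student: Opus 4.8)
The plan is to reuse the CLOS protocols and their simulators essentially verbatim, replacing the classical primitives by quantum-resistant ones and invoking the simple-hybrid machinery of Section~\ref{sssec:sha} in place of the classical hybrid arguments of~\cite{CLOS02}. For either item I would proceed as follows: fix a polynomial-time adversary $\calA$, let $\calS$ be the CLOS simulator for it, let $\pi$ denote the relevant protocol ($\fzk$-hybrid protocol for $\fcp$ in~(1), semi-honest protocol for $\calF$ in~(2)) and $\calG$ the target ($\fcp$ or $\calF$). What must be shown is $\mac_{\pi,\calA}\qcii\mac_{\calG,\calS}$. By Observation~\ref{obs:clos}, the CLOS analysis already supplies a chain of intermediate machines of polynomial length connecting $\mac_{\pi,\calA}$ to $\mac_{\calG,\calS}$ in which every adjacent pair is $(poly(n),\negl(n))$-simply related, with the witnessing classical distribution pairs $(D_a,D_b)$ always of one of three kinds: (i) a real public key versus a uniform string, (ii) two encryptions of distinct plaintexts, or (iii) two commitments to distinct messages. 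Hence everything reduces to the single fact that each of these three distribution pairs is $(poly(n),\negl(n))$-indistinguishable \emph{against quantum distinguishers taking quantum advice} — which is exactly what Assumptions~\ref{asn:prg} and~\ref{asn:pkc} provide (type~(iii) via the quantum-computationally-hiding commitment scheme $\pi_{\tt com}$ that \asnref{prg} yields through Naor's construction; types~(i) and~(ii) directly from the dense IND-CPA scheme of \asnref{pkc}, which also gives \asnref{prg} by standard reductions). Each adjacent pair therefore stays simply related in the quantum sense, so Lemma~\ref{lemma:sha} upgrades the whole chain to $\mac_{\pi,\calA}\qcii\mac_{\calG,\calS}$, i.e., \cquc-emulation.

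A few protocol-level points need attention. For~(1) I would instantiate the $\fcp$ protocol of~\cite[Figure 8]{CLOS02} with $\pi_{\tt com}$; here only distribution pairs of type~(iii) arise and are handled by the quantum hiding of $\pi_{\tt com}$ (so only \asnref{prg} is needed), the CLOS simulator is straight-line (it extracts committed values and witnesses through the $\fzk$ trusted party and never rewinds $\calA$) so the simulation itself does not break against a quantum $\calA$, and by Observation~\ref{obs:clos} its correctness proof is a simple hybrid argument of the form above. For~(2) I would keep the CLOS semi-honest protocol for $\calF$ but replace its semi-honest oblivious-transfer subprotocol by a two-message semi-honest OT built from a dense IND-CPA public-key scheme (\asnref{pkc}) rather than from enhanced trapdoor permutations; a standard such OT has a security proof consisting only of hybrids of types~(i) and~(ii), so the replacement stays inside the simple-hybrid template. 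Working with the weaker notion of semi-honest quantum adversary (which runs the prescribed circuit faithfully and only post-processes its view at the end), $\mac_{\pi,\calA}$ is just the honest execution followed by $\calA$'s local post-processing, the CLOS semi-honest simulator still applies, and Lemma~\ref{lemma:sha} again yields \cquc-emulation of $\calF$ against semi-honest quantum adversaries. (If one instead insists on the coherent Lo-Chau-Mayers notion, one additionally checks that purifying the prescribed circuit does not disturb the view beyond the hybrids already in play; the weaker notion suffices for the use made of this corollary.)

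The main obstacle is not a quantum argument but the audit behind Observation~\ref{obs:clos}: I would have to go through the CLOS proofs step by step and verify (a) that the simulators are genuinely straight-line, never rewinding the adversary; (b) that each intermediate hybrid differs from its neighbour \emph{only} by swapping one of the three allowed classical distribution pairs — so that, in particular, no step covertly relies on a rewinding-based (knowledge) extractor; and (c) that every hybrid chain has polynomial length, so the per-step $\veps/\ell$ loss remains negligible. The OT substitution in~(2) requires an analogous routine check that its analysis stays inside the template. Once these purely syntactic conditions hold, quantum security is automatic, since Definition~\ref{def:sha} and Lemma~\ref{lemma:sha} are entirely agnostic as to whether the distinguisher, or its advice, is classical or quantum; and if one later wants to pass from $\fcp$ to a general well-formed $\calF$ in the $\fzk$-hybrid model, that step is handled by the quantum UC composition theorem (Theorem~\ref{thm:quccomp}).
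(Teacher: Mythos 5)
Your proposal is correct and follows essentially the same route as the paper: invoke Observation~\ref{obs:clos} to see that the CLOS simulators are straight-line and their correctness proofs decompose into the three listed types of simply related hybrids, replace the primitives by the quantum-resistant ones from Assumptions~\ref{asn:prg} and~\ref{asn:pkc}, and apply Lemma~\ref{lemma:sha}. The only small divergence is in part~(2), where the paper first proves the dense-cryptosystem OT quantum-UC secure via simple hybrids and then handles the \emph{unconditionally} secure $\fot$-hybrid protocol for general $\calF$ via Unruh's statistical lifting theorem plus quantum UC composition, whereas you fold the entire semi-honest protocol into one simple-hybrid chain; both are consistent with Observation~\ref{obs:clos}.
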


\begin{proof}
Observation~\ref{obs:clos} tells us there are two types of proofs in
CLOS, so we only have to show both can be augmented to hold against
quantum adversaries. On the one hand, simple hybrid arguments in
CLOS still hold if we make assumptions~\ref{asn:prg}
and~\ref{asn:pkc}, because the underlying distributions in these
hybrid experiments will remain indistinguishable against quantum
distinguishers. On the other hand, we know quantum UC composition
also holds by Theorem~\ref{thm:quccomp}.

More specifically,  for the $\fcp$ protocol in $\fzk$-hybrid model,
the simply hybrid machines in its proof are related by switching the
messages being committed. Hence $\fcp$ protocol remains secure against
malicious static quantum adversaries under
Assumption~\ref{asn:prg}. In the semi-honest setting, an OT protocol
can be constructed from a dense crypto-system
(Assumption~\ref{asn:pkc}), see \citet{Goldreich-vol2}. 
Its proof consists of simply related machines that are related by either switching between a valid public key and a random string (when sender is corrupted) or switching the plaintext of an encryption (when receiver is corrupted). Therefore, this protocol \cquc~emulates $\fot$ against semi-honest quantum adversaries. Next in $\fot$-hybrid model, the construction for an arbitrary $\calF$ is unconditionally secure, which, by Unruh's lifting theorem, remains quantum-UC secure. Hence quantum UC composition theorem gives that there is a classical protocol that \cquc~ emulates $\calF$ in the presence of  {semi-honest} static {quantum} adversaries.
\end{proof}

Combining the previous arguments we can prove
Theorem~\ref{thm:qclos}.

\begin{proof}[Proof of Theorem~\ref{thm:qclos}]
  Fix a well-formed functionality
  $\calF$ and let $\pi$ be the protocol for $\calF$ in the semi-honest model guaranteed by the second part of Corollary~\ref{cor:CLOS-simple}. Now consider $\pi'={\tt COMP}(\pi)$. We want to show that it \cquc~emulates $\calF$. Theorem~\ref{thm:qucdum} tells us that it suffices to consider the classical dummy adversary $\dumA$. By \cite[Proposition 8.1]{CLOS02}, the interaction of the dummy adversary $\dumA$ with $\pi'$ (in the $\fcp$ hybrid model) is identical to the interaction of the adversary $\calA$ with $\pi$ (in the semi-honest model). By the security of $\pi$ in the semi-honest model, there exists an ideal-world adversary $\calS$ such that $\mac_{\calF,\calS} \qcii\mac_{\pi,\calA} \equiv \mac_{\pi',\dumA}$. Thus, $\pi'$ securely emulates $\calF$ in the $\fcp$-hybrid model against malicious adversaries. By the quantum UC composition theorem, we can compose $\pi'$ with the protocol for $\fcp$ to get a protocol secure against malicious quantum adversaries in the $\fzk$-hybrid model. 
\end{proof}

\subsection{Realizing $\fzk$ with Stand-alone Security}
\label{subsec:zkaok}

In this section, we construct a protocol $\Pi_{ZK}$ that quantum stand-alone emulates $\fzk$. In the stand-alone model, $\fzk$ is more commonly referred to as \emph{zero-knowledge argument of knowledge}. 





We will use a dense encryption scheme $\mathcal{E} = ${ (\gen, \enc,
\dec)} as in Assumption~\ref{asn:pkc}
. Note that $\enc$ is a randomized algorithm and we denote by $\enc_{pk}(m,r)$ the encryption of a message $m$ under a public key $pk$ using randomness $r$, But unless when needed, we usually omit the randomness in the notation and write $\enc_{pk}(m)$. We will also need a result of Watrous's~\cite{Wat09}, where he showed that there exist classical zero-knowledge proofs for any \class{NP} language that are secure against any poly-time quantum verifiers. For completeness we give his definition (adapted to our terminology) of quantum computational zero-knowledge proof~\cite[Definition 7]{Wat09}.   

\begin{definition} 
An interactive proof system $(P,V)$ for a language $L$ is \emph{quantum computational zero-knowledge} if, for every poly-time QIM $V'$, there exists a poly-time QIM $\bfS_{V'}$ that satisfies the following requirements.

\begin{enumerate}
	\item The verifier $V'$ and simulator $\bfS_{V'}$ agree on the polynomially bounded functions $q$ and $r$ that specify the number of auxiliary input qubits and output qubits, respectively.
 	\item Let $M_{\langle P,V' \rangle (x)}$ be the machine describing that interaction between $V'$ and $P$ on input $x$, and let $M_{\bfS_{V'}(x)}$ be the simulator's machine on input $x$.  Then the ensembles $\{M_{\langle P, V'\rangle(x)}: x \in L\}$ and $\{M_{\bfS^{V'}(x)}: x \in L\}$ are quantum computationally indistinguishable as per Definition~\ref{def:qcm}.
\end{enumerate}
\label{def:qczk}
\end{definition}

Now that we have all building blocks ready, our construction of a classical ZKAoK protocol is as follows.

\begin{pffs}{ZKAoK Protocol
          $\Pi_{\tt ZK}$}{}{}

\noindent{\bf Phase 1}
        \begin{enumerate}
        \item $\bfV$ chooses $a \gets \{0,1\}^{n}$ at random, and sends
          $\bfP$ a commitment of $a$: $c = \bfcomm(a)$.

        \item $\bfP$ sends $b \gets \{0,1\}^{n}$ to $\bfV$.

        \item $\bfV$ sends $\bfP$ string $a$.

        \item $\bfV$ proves to $\bfP$ that $c$ is indeed a commitment of
          $a$ using Watrous's ZK protocol.

        \item $\bfP$ and $\bfV$ set $pk = a\oplus b$ and interpret it as a
          public key.
        \end{enumerate}\vspace{-2mm}
\noindent{\bf Phase 2}\vspace{-2mm}
        \begin{enumerate}
        \item $\bfP$, holding an instance $x$ and a witness $w$,
          encrypts $w$ under $pk$. Let $e = \text{\bf
            Enc}_{pk}(w)$. $\bfP$ sends $(x,e)$ to $\bfV$.

        \item $\bfP$ proves to $\bfV$ that $e$ encodes a witness of $x$
          using Watrous's ZK protocol. $\bfV$ accepts if it accepts
          in this ZK protocol.  Otherwise it rejects and halts.

\end{enumerate}

\end{pffs}

\begin{theorem}
\label{thm:zkaok} {Protocol $\Pi_{ZK}$} quantum stand-alone-emulates
$\fzk$.
\end{theorem}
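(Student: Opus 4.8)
\mypar{Proof plan for Theorem~\ref{thm:zkaok}} We must produce, for every static poly-time quantum adversary $\calA$ corrupting one party, a poly-time simulator $\calS$ with $\mac_{\Pi_{ZK},\calA}\qc\mac_{\fzk,\calS}$. There are three cases: $\calA$ corrupts no party, corrupts the prover $\bfP$, or corrupts the verifier $\bfV$. I will use that the weak and strong stand-alone notions coincide (Section~\ref{sssec:variants}), so it suffices to fool non-reference-keeping distinguishers. The no-corruption case is routine: completeness of Watrous's proof system (Definition~\ref{def:qczk}) and correctness of $\bfcomm$ and of $\mathcal{E}$ imply that on an input $(x,w)\in R_L$ the honest $\bfV$ accepts, matching the ideal execution in which $\fzk$ forwards $x$; on an input with $(x,w)\notin R_L$ the honest $\bfP$ cannot complete the Phase~2 proof, $\bfV$ rejects, and $\fzk$ likewise instructs Bob to reject.

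\mypar{Corrupted verifier (zero knowledge)} Here $\calS$ is given $x$ by $\fzk$ and must reproduce $\bfV^*$'s view of an interaction with the honest $\bfP$. $\calS$ runs Phase~1 exactly as the honest $\bfP$ (sending a uniform $b$ and acting as verifier of the step-4 sub-proof); if $\bfV^*$ aborts or that sub-proof rejects, $\calS$ outputs $\bfV^*$'s current state and stops. Otherwise $\calS$ sends $(x,e)$ with $e=\enc_{pk}(0)$ and, for the Phase~2 sub-protocol, runs Watrous's zero-knowledge simulator $\bfS_{V'}$, where $V'$ denotes $\bfV^*$ carrying its Phase-1 state and having received $(x,e)$; finally $\calS$ outputs $\bfV^*$'s resulting state. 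Correctness follows by a short hybrid chain: $H_0$ is the real execution; $H_1$ replaces the honest Phase-2 proof by $\bfS_{V'}$ --- since $e=\enc_{pk}(w)$ with $w$ a genuine witness the proved statement is true, so $H_0\qc H_1$ by quantum computational zero knowledge of Watrous's system (the Phase-1 computation and the choice of $b$ are poly-time and can be absorbed into $V'$ and its auxiliary input); $H_2$ replaces the plaintext $w$ by $0$, and $H_1\qc H_2$ by IND-CPA security of $\mathcal{E}$ against quantum distinguishers together with its density (Assumption~\ref{asn:pkc}). The point making this last step legitimate is that $\bfV^*$ commits to $a$ in step~1 before seeing the uniform $b$ and is, by statistical binding of $\bfcomm$ and statistical soundness of the step-4 proof, forced to reveal that very $a$; hence $pk=a\oplus b$ is statistically close to uniform and independent of $\bfV^*$'s commitment, so no party knows a matching secret key and the ciphertext carries no information about its plaintext to a quantum-poly-time observer. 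On inputs with $(x,w)\notin R_L$, $\calS$ mimics the honest $\bfP$'s failing behaviour, matching $\fzk$'s rejection.

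\mypar{Corrupted prover (argument of knowledge)} Now $\calS$ must reproduce $\bfP^*$'s view and feed $\fzk$ a valid $(x,w)$ exactly when the real $\bfV$ would accept. Following the design of $\Pi_{ZK}$, $\calS$ plays $\bfV$ but \emph{steers the weak coin flip of Phase~1 so that the resulting public key $pk$ is one for which $\calS$ knows a secret key $sk$}, while keeping $\bfP^*$'s view indistinguishable from its view of an interaction with the honest $\bfV$; undetectability rests on the hiding of $\bfcomm$ and the density of $\mathcal{E}$. Then $\calS$ runs Phase~2 as the honest $\bfV$. If the Phase-2 proof rejects, $\calS$ sends an invalid pair to $\fzk$ (so Bob rejects) and outputs $\bfP^*$'s state; if it accepts, $\calS$ decrypts $w:=\dec_{sk}(e)$, sends $(x,w)$ to $\fzk$, and outputs $\bfP^*$'s state. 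For correctness we need two things. First, $\bfP^*$'s simulated view is quantum-computationally indistinguishable from its real view --- this is precisely the statement that steering the coin flip is undetectable, the main obstacle below. Second, whenever the honest $\bfV$ accepts, the extracted $w$ is a genuine witness: this is immediate from statistical soundness of the Phase-2 zero-knowledge \emph{proof} (so ``$e$ encodes a witness of $x$'' holds up to negligible error) together with correctness of $\dec$. Hence the honest verifier's accept/reject bit agrees across the two worlds up to $\negl(n)$, and since in an $\fzk$-execution the honest Bob holds no other state, $\mac_{\Pi_{ZK},\calA}\qc\mac_{\fzk,\calS}$ follows.

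\mypar{Main obstacle} The delicate part is the corrupted-prover case: showing that $\calS$ can manipulate Phase~1 to land on a key whose secret key it knows \emph{and} that a malicious quantum $\bfP^*$ cannot detect this, which is where structure of $\mathcal{E}$ beyond plain IND-CPA (density, and the lattice-type properties alluded to in Section~\ref{ssec:contr}) is exploited; this is the quantum-setting analogue of the rewinding step a classical argument of knowledge would use. In the corrupted-verifier case the dual danger --- a malicious $\bfV^*$ biasing the coin toward a key whose secret it holds, which would break hiding of $\enc_{pk}(w)$ --- is ruled out structurally by having $\bfV$ commit \emph{first}. Crucially, neither simulator rewinds the interactive adversary; the only rewinding is inside Watrous's zero-knowledge simulator, which is already known sound against quantum verifiers, and this is what lets the argument survive against quantum attackers. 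Combined with the modular composition theorem (Corollary~\ref{cor:sacomposition}) and the $\fzk$-hybrid UC protocols of Section~\ref{subsec:qclos}, this theorem yields the main feasibility result.
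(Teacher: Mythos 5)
Your overall route is the paper's: simulate a corrupted verifier by running Phase~1 honestly, encrypting $0$ and invoking Watrous's simulator for the Phase-2 sub-proof; simulate a corrupted prover by rigging the coin flip to a key pair $(pk,sk)$ you generated and decrypting $e$. However, in the corrupted-prover case you leave the central mechanism unresolved --- you even label it ``the main obstacle'' --- and the missing ingredient is concrete: after $\calS$ commits to a dummy value (say $0^n$) and later announces $a = pk \oplus b$, step~4 of Phase~1 requires the verifier to \emph{prove that $c$ is a commitment of $a$}, which is now a false statement. The simulator cannot run that proof honestly; the paper has $\calS$ invoke Watrous's ZK \emph{simulator} $\mathbf{S}_1$ for the language $L_1=\{(c,a):\exists r,\ \bfcomm(a,r)=c\}$ to fake it, and the indistinguishability of this substitution is one of the hybrid steps (alongside the two you do identify: hiding of $\bfcomm$ for switching the committed value, and density of $\mathcal{E}$ for switching $pk\oplus b$ to uniform). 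Without this step the simulator cannot complete Phase~1 at all, so ``steering the coin flip undetectably'' is not established. No lattice structure beyond density is needed; the full hybrid chain is: soundness of ZK$_2$ (extraction correctness), density of $\mathcal{E}$, hiding of $\bfcomm$, and zero-knowledge of ZK$_1$.

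A secondary, smaller issue is in your step $H_1\qc H_2$ for the corrupted verifier. You correctly observe that $pk=a\oplus b$ is forced to be (close to) uniform because $\bfV^*$ commits first, but the reduction to IND-CPA is not immediate: the adversary may abort Phase~1 as a function of $b$, so whether a challenge ciphertext is ever produced, and under which $pk$, is determined only during the interaction. The paper resolves this by passing to super-polynomial-time hybrid machines $\hat M_1,\hat M_2$ that brute-force-extract $a$ from $c$ (well defined by statistical binding, and equal to the announced $a$ by soundness of ZK$_1$) and send a ciphertext even on abort, so that the two hybrids are simply related by a plaintext switch; indistinguishability of the original $M_1,M_2$ then follows by discarding the extra ciphertext. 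Your proof should either include this detour or an equivalent argument; as written the IND-CPA invocation does not quite type-check.
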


The full proof appears in Sect.~\ref{sssec:zkaokproof}. We provide a brief and intuitive justification here. Roughly speaking, Phase 1 constitutes what may be called a ``semi-simulatable'' coin-flipping protocol. Specifically we can simulate a corrupted Prover. This implies that a simulator $\calS$ can  ``cheat'' in Phase 1 and force the outcome to be a public key $pk$ of which he knows a corresponding secret key $sk$, so that $\calS$ can decrypt $e$ to recover $w$ in the end. This allows us to show argument of knowledge (in our stand-alone model). On the other side, although generally we cannot simulate a corrupted verifier in Phase 1, we can guarantee that the outcome $pk$ is uniformly random if the verifier behaves honestly. This is good enough to show zero-knowledge, because we can later encrypt an all-zero string and use the simulator for the ZK protocol in Phase 2 to produce a fake proof. In reality, a corrupted verifier may bias the coin-flipping outcome by aborting dependent on Prover's message $b$ for example. This technical subtlety, nonetheless, is not hard to deal with. Intuitively the verifier only sees ``less'' information about the witness if he/she decides to abort in Phase 1.

\subsubsection{Proof of Theorem~\ref{thm:zkaok}: Quantum Stand-alone Secure ZKAoK}
\label{sssec:zkaokproof}

For the sake of clarity, we propose a non-interactive notion of \emph{simple hybrid argument}, analogous to Def.~\ref{def:sha}, which formalizes a common structure in stand-alone security proofs.

\begin{definition}[Simply related non-interactive machines]
We say two QTMs $M_a$ and $M_b$ are \emph{$(t,\veps)$-simply related} if there is a time-$t$ QTM $M$ and a pair of QTMs $(N_a, N_b)$ such that
  \begin{enumerate}
  \item $M^{N_a} \equiv M_a$ (for two QTMs $N_1$ and $N_2$, we say $N_1  \equiv N_2$ if they can be described by the same circuits),
  \item $M^{N_b} \equiv M_b$, and
  \item $N_a \qc^{2t,\veps} N_b$.
  \end{enumerate}
\end{definition}

\begin{remark}\begin{inparaenum}[\upshape(i\upshape)]
\item $M^{N}$ is the machine that gives $M$ oracle access to $N$.
\item As a typical example of a pair of indistinguishable QTMs, consider $N_a$ being a QTM describing a ZK
protocol with a (dishonest) verifier, and $N_b$ being a simulator's
machine. Then by definition of a valid simulator, we have $N_a
\qc N_b$.
\item Machines $(N_a, N_b)$ in the definition also capture pair of
indistinguishable classical distributions that are efficiently samplable. Namely, we can let $N_a$ and $N_b$ be algorithms that sample from distributions $D_a$ and  $D_b$
respectively. 
\end{inparaenum}
\end{remark}
\begin{definition} [Simple hybrid argument (non-interactive version)]
  Two machines $M_0$ and $M_\ell$ are related by a
  \emph{$(t,\veps)$-simple hybrid argument of length $\ell$} if there is a
  sequence of intermediate machines $M_1,M_2,...,M_{\ell-1}$ such
  that each adjacent pair $M_{i-1},M_{i}$ of machines,
  $i=1,\dots,\ell$, is $(t,\frac{\veps}{\ell})$-simply related.
\end{definition}

\begin{lemma}
  For any $t,\veps$ and $\ell$, if two machines are related by a $(t,\veps)$-simple hybrid argument of length $\ell$, then the machines are $(t,\veps)$-indistinguishable.
  \label{lemma:nsha}
\end{lemma}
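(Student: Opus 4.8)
The plan is to run exactly the same argument as in the interactive case (Lemma~\ref{lemma:sha}), but with the non-interactive notion of simply related machines and Definition~\ref{def:qc}/\ref{def:qcm} in place of interactive indistinguishability. Concretely, I would proceed in two stages. First I would establish the ``length-one'' statement: if $M_a$ and $M_b$ are $(t,\veps)$-simply related (non-interactive version), then $M_a \qc^{t,\veps} M_b$. Second, I would chain this across the $\ell$ links of the hybrid by a triangle-inequality argument.

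For the first stage, suppose toward a contradiction that some $t$-time QTM $\calZ$, together with a quantum advice state $\sigma$, distinguishes $M_a$ from $M_b$ with advantage strictly greater than $\veps$. By definition of simply related machines there is a $t$-time QTM $M$ and QTMs $N_a, N_b$ with $M^{N_a}\equiv M_a$, $M^{N_b}\equiv M_b$, and $N_a \qc^{2t,\veps} N_b$. I would build a distinguisher $\calD$ for $N_a$ versus $N_b$: given oracle access to $N_x$ (for $x\in\{a,b\}$), $\calD$ internally simulates $M^{N_x}$ on its input, then feeds the result to $\calZ(\sigma)$ and echoes $\calZ$'s output bit. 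Since $M$ runs in time $t$ and $\calZ$ runs in time $t$, $\calD$ runs in time at most $2t$ (here one uses that the oracle calls are simulated by running $M$'s circuit, exactly as in the proof of the interactive lemma), and $\calD$ distinguishes $N_a$ from $N_b$ with the same advantage $>\veps$ as $\calZ$ has against $M_a, M_b$. Because $\calZ$ may carry quantum advice, $\calD$ passes that advice along; this is precisely why the relevant hypothesis in the definition of simply related machines is stated with $\qc$ (the quantum-advice-tolerant notion) rather than $\wqc$. This contradicts $N_a \qc^{2t,\veps} N_b$.

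For the second stage, suppose $M_0$ and $M_\ell$ are related by a $(t,\veps)$-simple hybrid argument of length $\ell$ via intermediate machines $M_1,\dots,M_{\ell-1}$, and assume for contradiction that a $t$-time $\calZ$ with advice $\sigma$ satisfies $|\Pr[\calZ(M_0(\cdot)\otimes\sigma)=1]-\Pr[\calZ(M_\ell(\cdot)\otimes\sigma)=1]|\ge\veps$. By the triangle inequality over the telescoping sum $\sum_{i=1}^{\ell}\big(\Pr[\calZ(M_{i-1})=1]-\Pr[\calZ(M_i)=1]\big)$, some adjacent pair $M_{i-1},M_i$ is distinguished by $\calZ$ with advantage strictly greater than $\veps/\ell$. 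But each such pair is $(t,\veps/\ell)$-simply related by hypothesis, so by the length-one statement just proved, no $t$-time distinguisher (even with quantum advice) can achieve advantage exceeding $\veps/\ell$ against it — a contradiction. Hence $M_0 \qc^{t,\veps} M_\ell$, i.e.\ the two machines are $(t,\veps)$-indistinguishable as claimed.

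I do not expect a genuine obstacle here; the lemma is the non-interactive analogue of Lemma~\ref{lemma:sha} and the proof is essentially a transcription. The one point that merits care is bookkeeping the running time and the quantum advice in the reduction of the first stage: one must check that simulating $M^{N_x}$ inside $\calD$ does not blow up the time bound beyond $2t$ (it does not, since the oracle to $N_x$ is simply the circuit computing the relevant distribution/sampler, and $M$ itself costs $t$), and that the quantum advice $\sigma$ of $\calZ$ is faithfully handed through so that the contradiction lands against the $\qc$ (not $\wqc$) hypothesis in clause (3) of the definition. Everything else is the standard hybrid telescoping.
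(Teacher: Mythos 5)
Your proposal is correct and follows essentially the same route as the paper: the telescoping triangle-inequality over the $\ell$ hybrids, reduced to the one-step claim that $(t,\veps/\ell)$-simply related machines are $(t,\veps/\ell)$-indistinguishable. The paper leaves that one-step reduction implicit (it is spelled out only for the interactive analogue in Lemma~\ref{lemma:sha}), whereas you make the construction of the $2t$-time distinguisher for $N_a$ versus $N_b$ and the hand-off of the quantum advice explicit; this is a faithful transcription, not a different argument.
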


\begin{proof}
Suppose for contradiction, there exists a time-$t$ QTM $\calZ$ with
advice $\sigma$ such that $| \Pr[ \calZ((M_0\otimes
\I_{\lin{\reg{R}}})\sigma_n) = 1 ] - \Pr [ \calZ((M_\ell\otimes
\I_{\lin{\reg{R}}})\sigma_n) = 1]| > \veps$. Then by triangle
inequality we can infer that there must exist some $i$ s.t. $| \Pr[
\calZ((M_i\otimes \mathbb{1}_{\lin{\reg{R}}})\sigma_n) = 1 ] - \Pr [
\calZ((M_{i+1}\otimes \mathbb{1}_{\lin{\reg{R}}})\sigma_n) = 1]|  >
{\veps}/{\ell}$. However, by assumption $M_i$ and $M_{i+1}$ are
($t,{\veps}/{\ell})$-simply related and in particular no time-$t$
QTMs can distinguish them with advantage greater than
${\veps}/{\ell}$.
\end{proof}
\begin{remark} Actually, the proof of our modular composition can be seen as a simple hybrid argument. Specifically in step 3, $\mac_{\Pi',\calA}$ and  $\mac_{\Pi,\calS}$ are simply related by  $\mac_{\Gamma',\calA_{\Gamma'}}$ and $\mac_{\Gamma,\calA_{\Gamma}}$. 
\label{rmk:modsha}
\end{remark}

We now prove Theorem~\ref{thm:zkaok}
following the (non-interactive) simply hybrid argument framework.

\begin{proof}[Proof of Theorem~\ref{thm:zkaok}] We denote the two ZK proof systems in Phase 1 \& 2 by ZK$_1$ and ZK$_2$ respectively. The two NP languages, formalized below, are denoted by $L_1$ and  $L_2$ respectively.
{\small
\begin{eqnarray*}
    L_1 &=& \{(c, a): \exists r \in \{0,1\}^* \textrm{ s.t. } \bfcomm(a, r)=c\}\\
    L_2 &=& \{(pk, x, e): \exists w,r \in \{0,1\}^*, \textrm{ s.t. } \text{\bf Enc}_{pk}(w, r) = e
    \wedge   (x,w) \in R_L\}
\end{eqnarray*}
}
The simulators of ZK$_1$ and ZK$_2$ are denoted by $\mathbf{S}_1$ and $\mathbf{S}_2$ respectively. We stress that Watrous's ZK protocol has negligible completeness and soundness errors, and in addition the simulator succeeds for arbitrary quantum poly-time verifiers on true instances, except with negligible probability.

\mypar{Prover is Corrupted}  For any real-world adversary $\calA$, we
construct an ideal-world adversary $\calS$.

\begin{pffs}{Simulator $\calS$:}{Prover is corrupted}{} 
{\sf Input}: $\calA$ as a black box; security parameter $1^n$.
    \begin{enumerate}[1.]
        \item $\calS$ initializes $\calA$ with whatever input
        state it receives.
        \item In {\bf Phase 1}, $\calS$ does the following:
            \begin{enumerate}
                \item Compute $c = \bfcomm (0^{n})$ and send it to
                $\calA$.
                \item Obtain $b \in \{0,1\}^{n}$ from $\calA$.
                \item Run \gen$(1^n)$ to obtain $(pk,sk)$.
                Send $a = pk\oplus b$ to $\calA$.
                \item Run the simulator $\mathbf{S}_1$ for ZK$_1$ with input $(c,a)$.
            \end{enumerate}
        \item In {\bf Phase 2}, $\calS$ obtains $(x,e)$ and executes ZK$_2$ with $\calA$. If ZK$_2$ succeeds, $\calS$ decrypts $e$ to get $w = \dec_{sk}(e)$ and sends
        $(x,w)$ to $\fzk$.

        \item $\calS$ outputs whatever $\calA$ outputs.
    \end{enumerate}
\end{pffs}

Let $M_{\fzk,\calS}$ be the QTM of ideal-world interaction between
$\calS, \fzk$ and $\bfV$; and let $M_{\Pi_{\tt ZK}, \calA}(P)$ describing
real-world interaction between $\calA$ and $\bfV$.

    \begin{lemma}
$M_{\Pi_{\tt ZK}, \calA(P)} \qc M_{\fzk, \calS}$.    \label{lemma:sazkp}
    \end{lemma}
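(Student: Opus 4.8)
The plan is to prove Lemma~\ref{lemma:sazkp} by a non-interactive simple hybrid argument (Lemma~\ref{lemma:nsha}): I will exhibit a short chain of machines running from the real-world machine $M_{\Pi_{\tt ZK},\calA(P)}$ to the ideal-world machine $M_{\fzk,\calS}$ in which each adjacent pair is $(t,\negl(n))$-simply related for every polynomial $t$. The outer machine $M$ held fixed while the simply-related components are swapped consists of $\calA$ together with the Phase~2 bookkeeping (receiving $(x,e)$ and running the honest $\mathrm{ZK}_2$ verifier) and the final accept/reject rule. Since $\calA$'s internal register may be entangled with the distinguisher's quantum advice, it is essential that every swap be justified with the \emph{reference-system} notion of indistinguishability of machines (Definition~\ref{def:qcm}), i.e.\ with $\qc$ rather than merely $\wqc$; this is exactly the form in which Watrous~\cite{Wat09} proves quantum zero-knowledge.

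The chain of hybrids, read from the real world toward the ideal world, is the following.
\begin{enumerate}
\item[$H_0$:] the real interaction $M_{\Pi_{\tt ZK},\calA(P)}$: honest $\bfV$ samples $a\gets\bit{n}$, sends $c=\bfcomm(a)$, receives $b$ from $\calA$, reveals $a$, proves $(c,a)\in L_1$ via $\mathrm{ZK}_1$, and in Phase~2 accepts iff $\mathrm{ZK}_2$ accepts.
\item[$H_1$:] replace the honest $\mathrm{ZK}_1$ prover by Watrous's simulator $\mathbf{S}_1$ run against $\calA$ (the cheating $\mathrm{ZK}_1$ verifier). As $(c,a)\in L_1$ in $H_0$, the quantum zero-knowledge property (Definition~\ref{def:qczk}, instantiated by~\cite{Wat09} under Assumption~\ref{asn:prg}) gives that $H_0$ and $H_1$ are $(t,\negl)$-simply related, the two simply-related components being the $\mathrm{ZK}_1$ interaction machine and the simulator machine.
\item[$H_2$:] set $c=\bfcomm(0^n)$ while choosing a fresh random $a$ only at step~3 (possible since $c$ no longer depends on $a$), still revealing it and feeding $(c,a)$ to $\mathbf{S}_1$. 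Since $\mathbf{S}_1$ is now a fixed poly-time circuit applied to $c$, the computational hiding of $\pi_{\tt com}$ against quantum distinguishers (Assumption~\ref{asn:prg}), applied to the efficiently samplable pair $(a,\bfcomm(a))$ vs.\ $(a,\bfcomm(0^n))$, makes $H_1$ and $H_2$ simply related. (That $\mathbf{S}_1$ is now invoked on a false statement is immaterial: we only need the two machines to be indistinguishable, and $\mathbf{S}_1$ is a fixed circuit.)
\item[$H_3$:] instead of picking $a$ at random at step~3, run $(pk,sk)\gets\gen(1^n)$ and set $a=pk\oplus b$; and in Phase~2 accept iff $\mathrm{ZK}_2$ accepts \emph{and} $(x,\dec_{sk}(e))\in R_L$. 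This is split into two sub-steps. First, switching the distribution of $a$: since $b$ is already fixed when step~3 runs, $pk=a\oplus b$ is uniform in $H_2$ and a valid public key in $H_3$, and these are $\qc$-indistinguishable by density of $\mathcal{E}$ (Assumption~\ref{asn:pkc}) against quantum distinguishers with quantum advice — the reduction runs the entire experiment, $\calA$, $\calZ$ and their advice included, using its challenge string in place of $pk$ (note $sk$ is not yet used here). Second, changing the accept predicate is a statistical change: by the negligible soundness error of $\mathrm{ZK}_2$ against quantum provers~\cite{Wat09}, $\mathrm{ZK}_2$ accepting implies $(pk,x,e)\in L_2$ except with negligible probability, and then perfect correctness of $\mathcal{E}$ on the valid key $pk$ gives $\dec_{sk}(e)=w$ for the witness $w$ certified by $(pk,x,e)\in L_2$, so $(x,\dec_{sk}(e))\in R_L$; hence the two predicates agree except with negligible probability.
\end{enumerate}

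Finally, $H_3$ is \emph{literally} the machine $M_{\fzk,\calS}$: the simulator $\calS$ sends $\bfcomm(0^n)$, runs $\gen$, sets $a=pk\oplus b$, simulates $\mathrm{ZK}_1$ with $\mathbf{S}_1$, plays the honest $\mathrm{ZK}_2$ verifier, and — when $\mathrm{ZK}_2$ accepts — hands $(x,\dec_{sk}(e))$ to $\fzk$, which makes the honest dummy $\tilde{\bfV}$ accept iff $(x,\dec_{sk}(e))\in R_L$. Chaining the four bounds through Lemma~\ref{lemma:nsha} yields $M_{\Pi_{\tt ZK},\calA(P)}\qc M_{\fzk,\calS}$; note that $\calS$ is poly-time, making a single black-box call to $\calA$ and otherwise running $\gen$, $\mathbf{S}_1$, the $\mathrm{ZK}_2$ verifier, and one decryption.

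I do not expect a deep obstacle in this direction; the effort is organizational. The two points requiring care are: (i) ordering the hybrids so that $\mathbf{S}_1$ is substituted \emph{before} the committed value is switched, so that zero-knowledge is invoked only on a true statement; and (ii) keeping every indistinguishability step in the reference-system form, so that possible entanglement between $\calA$'s workspace and the distinguisher's quantum advice is respected — this is why we rely on Watrous's $\qc$-formulation of zero-knowledge and on hiding/density against quantum distinguishers that may themselves take quantum advice. (The genuinely delicate half of Theorem~\ref{thm:zkaok} is the corrupted-verifier case, where $\calS$ cannot force the coin-flip and must instead argue that a verifier who aborts only learns less about $w$.)
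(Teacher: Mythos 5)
Your proof is correct and follows essentially the same route as the paper's: the same four swaps (real vs.\ simulated $\mathrm{ZK}_1$, committed value, uniform string vs.\ public key, and the accept predicate via soundness of $\mathrm{ZK}_2$), in the same relative order, merely traversed from the real world to the ideal world instead of the reverse. The extra care you take about invoking zero-knowledge only on a true statement, about $sk$ not being used when the density assumption is applied, and about decryption correctness is implicit in the paper's argument but welcome.
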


\begin{proof} We define a sequence of machines to form a hybrid argument:

\begin{pffs}{Hybrid Machines:}{relating $M_{\Pi_{\tt ZK}, \calA(P)}$ and $M_{\fzk,\calS}$}{}
\begin{itemize}
\item {\bf ${M}_0 := M_{\fzk, \calS}$}. Specifically, on any input state, the output has two parts: one part corresponds to the adversary $\calA$'s output state, and the other corresponds to the dummy verifier's output, which is accepting if $w$ obtained by $\calS$ in step 3 is a valid witness, i.e., $(x,w)\in R_L$. 

\item {$M_1$}: differ from $M_0$ only in that $M_1$ always let the dummy verifier accept as long as ZK$_2$
succeeds.

\item {$M_2$}: differs from $M_1$ in the message $a$ in Phase 1:
instead of sending $pk\oplus b$, in $M_2$, $a\gets\{0,1\}^{n}$ is
set to be a uniformly random string.

\item {$M_3$}: in the first step of Phase 1, $\calM_3$ commits
to $a$ instead of committing to $0^n$.

\item {$M_4$}: instead of running simulator $\mathbf{S}_1$, $M_4$
executes the actual ZK$_1$ protocol. Observe that $M_4 \equiv
M_{\Pi_{\tt ZK}, \calA(P)}$. 
\end{itemize}
\end{pffs}

Now it is easy to see:
\begin{itemize}
   \item $M_0 \qc M_1$. These two QTMs would behave differently only if ZK$_2$ succeeds but $w$ is not a valid witness. 
Namely $\calA$ (corrupted prover) has managed to prove a false statement that $e$ encodes a true witness. By soundness property of ZK$_2$, however, this only occurs with negligible probability.
    \item Machines $M_1,\ldots, M_4$ form a simple hybrid argument. More
    specifically, each adjacent pair of machines constitutes simply related
    machines:
    \begin{itemize}
        \item $M_1$ and $M_2$ are simply related by switching valid public keys to uniformly random strings.
        \item $M_2$ and $M_3$ are simply related by changing the
        messages being committed to.
        \item $M_3$ and $M_4$ are simply related via a pair of indistinguishable QTMs $N_a$
        and $N_b$, where $N_a$ is the simulator $\mathbf{S}_1$, and
        $N_b$ is the machine describing ZK$_1$.
    \end{itemize}
\end{itemize}
Thus $ M_{\Pi_{\tt ZK}, \calA(P)}\qc M_{\fzk, \calS}$. \qedhere

\end{proof}

\mypar{Verifier is Corrupted} We construct ideal world $\calS$
for any adversary $\calA$ that corrupts the verifier as follows:

\begin{pffs}{Simulator $\calS$:}{Verifier is corrupted}{} 

{\sf Input}: $\calA$ as a black box; security parameter $1^n$. 
        \begin{enumerate}[1.]
                \item $\calS$ initializes $\calA$ with whatever input
        state it receives.
            \item Wait till get $x$ from $\fzk$. Then do the following.
            \item In {\bf Phase 1}, $\calS$ behave honestly and aborts if $\calA$ aborts. Let the outcome be $pk$.
            \item In {\bf Phase 2}:
            \begin{enumerate}
                \item $\calS$ picks an arbitrary string, say $0^{w(n)}$, and
                send $e = \text{\bf Enc}_{pk}(0^{w(n)})$ to $\calA$.
                \item $\calS$ runs the simulator $\mathbf{S}_2$ for ZK$_2$ with input $(pk, e, x)$.
            \end{enumerate}
            \item $\calS$ outputs whatever $\calA$ outputs.
        \end{enumerate}
\end{pffs}

Let $M_{\fzk, \calS}$ be the QTM of ideal-world interaction between
$\bfP, \fzk$ and $\calS$; and let $M_{\Pi_{\tt ZK}, \calA(V)}$ describing the
real-world interaction between $\bfP$ and $\calA$.

    \begin{lemma}
$M_{\Pi_{\tt ZK}, \calA(V)}\qc M_{\fzk, \calS}$.    \label{lemma:sazkv}
    \end{lemma}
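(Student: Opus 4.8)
The plan is to connect $M_{\Pi_{\tt ZK},\calA(V)}$ to $M_{\fzk,\calS}$ by a short hybrid chain, following the non-interactive simple-hybrid template of Lemma~\ref{lemma:nsha} and mirroring the structure of the proof of Lemma~\ref{lemma:sazkp}. I will assume the honest prover holds a valid pair $(x,w)\in R_L$: if not, $\fzk$ makes the ideal-world verifier reject, while in the real world the honest prover cannot carry out ZK$_2$ on the false statement $(pk,x,e)$ and the verifier rejects as well, so the two worlds coincide. I also record at the outset that whenever $\calA$ aborts during Phase~1, no ciphertext is ever sent and ZK$_2$ is never invoked; since $\calS$ reproduces the honest Phase-1 behaviour of $\bfP$ verbatim, the real and ideal experiments are \emph{identically} distributed on that event. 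Hence it suffices to analyse each hybrid on the event $E$ that Phase~1 completes.

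The hybrids are: $M_0:=M_{\Pi_{\tt ZK},\calA(V)}$; $M_1$, which differs from $M_0$ only in that the honest ZK$_2$ prover is replaced by Watrous's simulator $\mathbf{S}_2$ on statement $(pk,x,e)$ (still with $e=\enc_{pk}(w)$); and $M_2:=M_{\fzk,\calS}$, which additionally replaces $e$ by $\enc_{pk}(0^{w(n)})$. First I would show $M_0\qc M_1$: on $E$ the statement $(pk,x,\enc_{pk}(w))$ lies in $L_2$, so $M_0$ and $M_1$ are simply related (with negligible error) via the pair of machines $(N_a,N_b)$, where $N_b$ describes the honest ZK$_2$ prover and $N_a=\mathbf{S}_2$, and $N_a\qc N_b$ by the quantum computational zero-knowledge property of Watrous's protocol (Definition~\ref{def:qczk}); Lemma~\ref{lemma:nsha} then delivers the bound. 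Note that from $M_1$ onward the only remaining difference is the plaintext inside $e$, and $\mathbf{S}_2$ is a fixed efficient procedure applied to whatever statement it is handed, so I will \emph{not} need any guarantee about $\mathbf{S}_2$ on false statements.

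The step $M_1\qc M_2$ is the heart of the matter: inside the experiment, $\enc_{pk}(w)$ must be indistinguishable from $\enc_{pk}(0^{w(n)})$. If $\calA$ played Phase~1 honestly this would be immediate, since $\calA$ commits to $a$ before the honest prover draws $b\gets\{0,1\}^{n}$ and the commitment is statistically binding, so $pk=a\oplus b$ would be uniform; then denseness of $\calE$ gives $(pk_{\mathrm{unif}},\enc_{pk_{\mathrm{unif}}}(m))\qc(pk_{\gen},\enc_{pk_{\gen}}(m))$, IND-CPA gives $(pk_{\gen},\enc_{pk_{\gen}}(w))\qc(pk_{\gen},\enc_{pk_{\gen}}(0^{w(n)}))$, and denseness again closes the loop (Assumption~\ref{asn:pkc}). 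The obstacle — flagged right after the statement of $\Pi_{\tt ZK}$ — is that a malicious $\calA$ can bias $pk$ by deciding whether to reveal $a$ and complete ZK$_1$ as a function of $b$ (equivalently of $pk$); and, unlike the prover-corrupted case, we cannot force or extract this coin outcome, so we must argue the bias is harmless. The plan is: (i) by the opening remark, all distinguishing advantage lives on $E$; (ii) statistical binding fixes $a$ as a function of the step-1 transcript before $b$ is sampled, so in the \emph{unconditioned} experiment each value $pk^{\dagger}$ occurs with probability at most $2^{-n}$, i.e.\ the subdistribution of $pk$ on $E$ is pointwise dominated by uniform; (iii) hence, for the fixed efficient ``Phase-2-then-$\calZ$'' circuit, the expected distinguishing bias between $\enc_{pk}(w)$ and $\enc_{pk}(0^{w(n)})$ over $pk$ from the $E$-subdistribution is at most the same expectation over a uniform $pk$, which is negligible by the IND-CPA-plus-denseness chain above; the correlation between $pk$ and $\calA$'s intermediate state is absorbed by treating that state as quantum advice for the distinguisher (legitimate in the model of Definition~\ref{def:qc}). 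Chaining $M_0\qc M_1\qc M_2$ by the triangle inequality yields the lemma. I expect step (iii) — turning ``a selectively aborting verifier only sees less information about the witness'' into a clean quantitative bound while controlling the $pk$--state correlation — to be the main difficulty.
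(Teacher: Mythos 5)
Your hybrid skeleton is exactly the paper's: $M_0=M_{\Pi_{\tt ZK},\calA(V)}$, then swap the honest ZK$_2$ prover for $\mathbf{S}_2$, then swap the plaintext; the first swap is justified, as in the paper, by the pair $(N_a,N_b)=(\mathbf{S}_2,\text{ZK}_2)$ being indistinguishable QTMs, and your remark that after this swap $\mathbf{S}_2$ is just a fixed efficient subroutine (so no guarantee on false statements is needed) is the right way to order the hybrids. Your observation that all advantage lives on the non-abort event, and that statistical binding forces $\Pr[pk=pk^{\dagger}\wedge E]\le 2^{-n}$, is also correct.

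The gap is in how you close step (iii). Two problems. First, pointwise domination bounds your advantage by $\sum_{pk}2^{-n}\,\lvert\delta_{pk}\rvert=\mathbb{E}_{pk\gets U}\lvert\delta_{pk}\rvert$, but IND-CPA (even with denseness) only controls the \emph{signed} average $\lvert\mathbb{E}_{pk\gets U}\,\delta_{pk}\rvert$ for efficiently realizable distinguishers; removing the per-key absolute values requires a sign correction depending on $pk$ that you have no efficient way to compute. Second, and more fundamentally, the move ``absorb the correlation between $pk$ and $\calA$'s intermediate state into quantum advice'' does not go through: in the IND-CPA game the challenge key is sampled independently of the advice, whereas in the experiment the adversary's post-Phase-1 state is correlated with $pk=a\oplus b$. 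To realize the experiment's state for a \emph{given} challenge key the reduction must send $b=a\oplus pk$, i.e., it must know $a$ before the adversary chooses whether to reveal it --- exactly the extraction problem the whole construction is trying to avoid, and your proposal introduces no mechanism for it.

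The paper resolves this with a device you are missing: it defines (possibly super-polynomial-time) machines $\hat M_1,\hat M_2$ in which, upon abort, the committed value $\hat a$ is recovered from $c$ by brute force (well defined by statistical binding, and equal to $a$ up to the ZK$_1$ soundness error in the non-abort case), so that $pk=\hat a\oplus b$ is always defined, is genuinely uniform, and a ciphertext is \emph{always} produced. Then $\hat M_1$ and $\hat M_2$ are simply related by switching plaintexts under Assumption~\ref{asn:pkc}, and $M_1\qc M_2$ follows because any distinguisher for $M_1,M_2$ also distinguishes $\hat M_1,\hat M_2$ by ignoring the extra ciphertext on abort. You should replace your averaging argument in (iii) with this completion-by-extraction hybrid (or an equivalent extraction mechanism); the rest of your write-up then matches the paper's proof.
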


\begin{proof} The proof again follows a hybrid argument. We define the following hybrids. 

\begin{pffs}{Hybrid Machines:}{relating $M_{\Pi_{\tt ZK}, \calA(V)}$ and $M_{\fzk,\calS}$}{}
\begin{itemize}
\item {${M}_0 := M_{\Pi_{\tt ZK}, \calA(V)}$}.
\item {$M_1$}: $M_1$ runs the simulator $\mathbf{S}_2$ instead of invoking the actual ZK$_2$ protocol. 
\item {$M_2$}:  {$M_2$} encrypts $0^{w(n)}$ instead of a valid witness $w$. Observe that $M_2 \equiv  M_{\fzk, \calS}$.
\end{itemize}
\end{pffs}

Clearly machines $M_0$ and $M_1$ are simply related via a pair of QTMs $N_a$ and $N_b$, where $N_a$ is the simulator $\mathbf{S}_2$, and $N_b$ is the machine describing ZK$_2$. Hence they are quantum computationally indistinguishable. Showing indistinguishability of $M_1$ and $M_2$ slightly deviates from our simple hybrid argument framework. We will modify $M_1$ and $M_2$ to get two machines $\hat M_1$ and $\hat M_2$ which may run in super-polynomial time. We can then show that $\hat M_1\qc \hat M_2$\footnote{Although the machines $\hat M$ are not necessarily poly-time, we can still talk about distinguishing them by poly-time distinguishers according to Definition~\ref{def:qcm}. If the output register of $\hat M$ exceeds the dimension of the input of the distinguisher, we assume that the distinguisher just takes an arbitrary portion that fits.}, and that $\hat M_1 \qc \hat M_2$ implies $M_1 \qc M_2$. 
        
Specifically $\hat M_1$ makes one change from $M_1$: if corrupted verifier aborts during Phase 1, $\bfP$ extracts $\hat a$ from $c$ using possibly super-polynomial-time brute-force search. Because the commitment scheme is statistically binding, there is a well-defined $\hat a$ with overwhelming probability. In addition, soundness of $\zka$ ensures that $\hat a = a$ except for negligible soundness error. In this way, $\bfP$ still gets $pk:=a\oplus b$ and we let $\bfP$ send $\enc_{pk}(w)$ to the verifier even in case of abort. $\hat M_2$ is modified similarly. Namely a (super-polynomial-time) simulator extracts $pk$ and sends $\enc_{pk}(0^{w(n)})$ in case of abort. 

Note that $\hat M_1$ and $\hat M_2$ are simply related by switching the plaintexts, and therefore $\hat M_1 \qc \hat M_2$ follows by our simple hybrid argument framework. Next we claim that if $\hat M_1 \qc \hat M_2$, then $M_1 \qc M_2$. This is because that if there exists a distinguisher $D$ that tells apart $M_1$ and  $M_2$, then one can as well distinguish $\hat M_1$ from  $\hat M_2$ by ignoring the ciphertext in case of aborting and then invoking $D$. 

Therefore we have that $M_{\Pi_{\tt ZK}, \calA(V)}\qc M_{\fzk, \calS}$. \qedhere 

\end{proof}

Finally, we conclude that Theorem~\ref{thm:zkaok} holds.
\end{proof}

\subsection{Putting It Together}
\label{sec:qgmw}

Recall the results that we have obtained so far: 

\begin{enumerate}
    \item Under Assumptions~\ref{asn:prg} and~\ref{asn:pkc}, for any well-formed two-party functionality $\calF$, there is a classical protocol $\pi^{\fzk}$ that quantum-UC emulates $\calF$ in the $\fzk$-hybrid model. (Theorem~\ref{thm:qclos})
    \item Under \asnref{prg} and~\ref{asn:pkc}, There exists classical protocol $\pi_{\tt ZK}$ that \cqsa~emulates $\fzk$. (Theorem~\ref{thm:zkaok})
\end{enumerate}

Applying our modular composition theorem  (Theorem~\ref{thm:sacomposition}) to the above, we obtain the main theorem: 
\begin{theorem} 
For any well-formed classical two-party functionality $\calF$, there exists a classical protocol $\Pi$ that \cqsa~realizes $\calF$ against malicious static quantum adversaries in the plain model, under Assumptions~\ref{asn:prg} and \ref{asn:pkc}. 
\label{thm:qgmw}
\end{theorem}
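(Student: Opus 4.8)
The plan is to obtain $\Pi$ by a single application of the stand-alone modular composition theorem to the two ingredients already in hand. The first step is the bookkeeping observation that computational quantum-UC emulation implies computational quantum stand-alone emulation: by the discussion in Section~\ref{subsec:qucmodel}, the \cqsa~model is exactly the restriction of the (modified) quantum UC model to a \emph{non-interactive} environment, so any adversary/simulator pair witnessing \cquc~emulation a fortiori witnesses \cqsa~emulation, once we restrict attention to environments $\calZ$ that are idle during the protocol execution (their pre-protocol state being absorbed into the input $\sigma_n$). Applying this to Theorem~\ref{thm:qclos}, the classical $\fzk$-hybrid protocol $\pi^{\fzk}$ that \cquc~emulates $\calF$ in particular \cqsa~realizes $\calF$ in the $\fzk$-hybrid model, against polynomial-time malicious quantum adversaries under static corruption.

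Next, Theorem~\ref{thm:zkaok} gives a classical plain-model protocol $\pi_{\tt ZK}$ that quantum stand-alone-emulates $\fzk$, i.e.\ (per Definition~\ref{def:qsae}) \cqsa~realizes the functionality $\fzk$, against the same class of adversaries. I would then invoke Corollary~\ref{cor:sacomposition} (Modular Composition: Realizing Functionalities) with $\calG:=\fzk$, the $\calG$-hybrid protocol $\Pi^{\calG}:=\pi^{\fzk}$, and $\Gamma:=\pi_{\tt ZK}$; all three are poly-time and $\fzk$ is a well-formed poly-time functionality, so the hypotheses apply. The conclusion is that the composed protocol $\Pi:=\cprot{\pi}{\fzk}{\pi_{\tt ZK}}$ \cqsa~realizes $\calF$, i.e.\ for every poly-time malicious static quantum $\calA$ there is a poly-time $\calS$ with $\mac_{\Pi,\calA}\qc\mac_{\calF,\calS}$.

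Finally I would verify the side conditions in the statement. The protocol $\Pi$ is \emph{classical}: both $\pi^{\fzk}$ with its $\fzk$ invocations removed and $\pi_{\tt ZK}$ (built from Naor-style commitments, a dense encryption scheme, and Watrous's ZK protocol) use only classical computation and communication. It is poly-time since both components are. It is in the \emph{plain model}, because the only trusted set-up used by $\pi^{\fzk}$ was the $\fzk$ oracle, which the composition operation replaces by the plain-model protocol $\pi_{\tt ZK}$. And Assumptions~\ref{asn:prg} and~\ref{asn:pkc} are precisely those required by both Theorems~\ref{thm:qclos} and~\ref{thm:zkaok}, so the hypotheses match. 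The only genuinely non-mechanical point — and thus the main obstacle — is the UC-to-stand-alone passage of the first step: one must check carefully that the composed-machine notation $\mac_{\Pi,\calA}$ and the indistinguishability notions ($\qcii$ in the UC statement versus $\qc$ in the stand-alone corollary) specialize compatibly, including the alignment of quantum advice to the environment that was built into the model in Section~\ref{subsec:qucmodel}; once that is in place the theorem is immediate.
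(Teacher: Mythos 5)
Your proposal is correct and follows essentially the same route as the paper: the paper likewise combines Theorem~\ref{thm:qclos} and Theorem~\ref{thm:zkaok} via the stand-alone modular composition theorem (Theorem~\ref{thm:sacomposition}), with the \cquc-to-\cqsa\ specialization treated as immediate from the fact that the stand-alone model is the restriction of the UC model to non-interactive environments. Your explicit attention to that specialization (matching $\qcii$ against $\qc$ and the handling of quantum advice) is a point the paper leaves implicit, but it introduces no divergence in the argument.
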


\section{Equivalence Between $\fzk$ and $\fcf$}
\label{sec:zk=cf}
\ifnum\old=1
In this section, we show that $\calG_{ZK}$ is equivalent to $\calG_{CF}$
in the quantum UC model, in the sense that we can construct a
classical protocol in the $\calG_{CF}$-hybrid model that quantum
UC-emulates $\calG_{ZK}$, and vice versa.
\subsection{From $\calG_{CF}$ to $\calG_{ZK}$}
\label{sssec:cf2zk}
We first show that a ZKAoK
protocol can be constructed in the $\calG_{CF}$-hybrid model.

\begin{prop}
Under Assumptions~\ref{asn:prg} and~\ref{asn:pkc}, there is a
protocol $\Pi_{ZK}^{\calG_{CF}}$ that quantum UC-emulates $\calG_{ZK}$ in
the $\calG_{CF}$-hybrid model. \label{prop:cf2zk}
\end{prop}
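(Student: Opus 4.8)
The plan is to exploit the fact that $\fcf$, on input only $1^n$, outputs a uniformly random string that is independent of the parties' inputs, so a call to $\fcf$ can be used exactly like a common random string (this is the sense in which $\fcf$ and the CRS model coincide). Concretely, $\Pi_{ZK}^{\calG_{CF}}$ proceeds as follows: (i) $\bfP$ and $\bfV$ invoke $\fcf$ to obtain $r$, which they parse as the reference string $\mathrm{crs}$ of a CLOS-style commitment scheme that is \emph{straight-line extractable and straight-line equivocable} given the appropriate trapdoor for $\mathrm{crs}$ (instantiated from the dense quantum-resistant public-key encryption of Assumption~\ref{asn:pkc} together with the commitment of Assumption~\ref{asn:prg}); (ii) $\bfP$ and $\bfV$ then run $n$ parallel copies of the $3$-round GMW-style zero-knowledge proof for the $\mathrm{NP}$ relation $R_L$, but with the prover's commitments replaced by the $\mathrm{crs}$-based commitments of step (i), and $\bfV$ accepts iff all copies accept. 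Because $\fcf$ is asymmetric (the party in the ``Alice'' role receives $r$ first and may abort), I fix the \emph{Verifier} to play the ``Bob'' role in the $\fcf$ call: then a corrupted verifier cannot bias or abort the coin flip, while a corrupted prover who aborts $\fcf$ simply aborts the whole protocol, which the simulator mirrors by not delivering anything (or delivering an invalid input) to $\fzk$. This construction is constant-round and is a natural ``rewinding-free'' analogue of the protocol $\Pi_{ZK}$ of Section~\ref{subsec:zkaok}, with the commit-and-prove coin flip of its Phase~1 collapsed into a single $\fcf$ call.

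By Theorem~\ref{thm:qucdum} it suffices to handle the dummy adversary, i.e.\ the two corruption cases. When the \textbf{verifier is corrupted}, the simulator $\calS$ controls the simulated $\fcf$ and programs $\mathrm{crs}$ together with the equivocation trapdoor; it obtains $x$ from $\fzk$, runs the GMW prover's messages using \emph{equivocable} commitments, and, after receiving $\calA$'s challenges, opens them to a consistent transcript without ever knowing a witness. This is straight-line: $\calS$ never re-runs $\calA$, so the argument remains valid even though $\calA$ may exchange messages with an interactive environment. When the \textbf{prover is corrupted}, $\calS$ programs $\mathrm{crs}$ with the extraction trapdoor, reads the committed values out of $\calA$'s commitments by straight-line extraction, runs the honest verifier's side, and — if the transcript is accepting — reconstructs a witness $w$ from the extracted values and sends $(x,w)$ to $\fzk$ (and an invalid pair otherwise). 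Soundness of the parallel-repeated GMW proof, together with statistical binding of the commitments, guarantees that an accepting transcript yields a genuine witness except with negligible probability.

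For correctness of both simulations I would give a short hybrid argument and observe that every computational step is of the ``simply related'' form of Section~\ref{sssec:sha}: the only computational changes are (a) switching $\mathrm{crs}$ between its honest (uniform) distribution and the trapdoor-embedded distribution, which is indistinguishable by density of the encryption scheme (Assumption~\ref{asn:pkc}), and (b) switching the plaintexts/messages hidden inside commitments (Assumptions~\ref{asn:prg}--\ref{asn:pkc}); the remaining steps (equivocation indistinguishability, soundness, binding) are statistical and carry over to quantum adversaries for free by Unruh's lifting theorem. Hence, by Lemma~\ref{lemma:sha}, $\mac_{\Pi_{ZK}^{\calG_{CF}},\calA}\qcii\mac_{\fzk,\calS}$ for every polynomial-time malicious quantum $\calA$, which is precisely quantum-UC emulation of $\calG_{ZK}$; combined with Theorem~\ref{thm:quccomp} (and with the fact that $\fcf$ is used as a subroutine in the hybrid sense) this establishes the proposition.

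The main obstacle is showing that a fully straight-line, rewinding-free simulator exists at all for the GMW proof once its commitments are upgraded to $\mathrm{crs}$-based extractable/equivocable commitments — that is, that the zero-knowledge direction is simulated purely by equivocation and the argument-of-knowledge direction purely by extraction, with the \emph{entire} correctness analysis (including parallel-repetition soundness and the equivocation-mode indistinguishability bounds) falling inside the simple-hybrid-argument framework. This is exactly where the classical analysis of such protocols would normally invoke rewinding, so the crux is verifying that routing the trapdoor through the coin-flip-generated reference string removes that need; lining up the parallel-soundness error, the binding/equivocation statistical errors, and the asymmetry of $\fcf$ cleanly is the part that requires care.
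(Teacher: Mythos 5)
Your proposal is sound in outline but takes a genuinely different route from the paper. The paper's protocol $\Pi_{\tt ZK}^{\fcf}$ does \emph{not} build a straight-line extractable/equivocable commitment and does not plug it into GMW. Instead, it interprets the coins $s=(s_1,s_2)$ from $\fcf$ as a public key $s_1$ of the dense encryption scheme together with an auxiliary string $s_2$; the prover sends $e=\enc_{s_1}(w,r)$ directly and then runs a parallel-repeated quantum witness-indistinguishable proof (Blum's Hamiltonian-cycle protocol with a statistically binding, quantum-hiding commitment) for the OR-relation ``$e$ encrypts a witness for $x$ under $s_1$'' $\vee$ ``$s_2$ is in the range of the PRG.'' Extraction against a corrupted prover is then literal decryption with the secret key of a trapdoored $s_1$ (no per-copy counting over committed values is needed beyond plain soundness of the WI proof, using that a uniform $s_2\in\{0,1\}^{3n}$ has no PRG preimage except with negligible probability), and the zero-knowledge simulation against a corrupted verifier uses the Feige--Lapidot--Shamir trick of switching to the PRG-seed witness, so no equivocation is ever required. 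What your approach buys is genericity (any $\Sigma$-protocol plus a dual-mode commitment) and it avoids having to prove that Blum's protocol is quantum-WI and that WI survives parallel repetition; what the paper's approach buys is that the only commitment needed is an ordinary statistically binding one, and both trapdoor modes reduce to primitives already assumed (density of the PKE, PRG security) via exactly the ``simply related'' hybrid steps.

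One caution: the load-bearing object in your writeup, a single commitment scheme that is simultaneously straight-line extractable and straight-line equivocable under two indistinguishable trapdoored reference strings, is asserted rather than constructed, and it is precisely the nontrivial ingredient (it is essentially a UC commitment in the CRS model \`a la Canetti--Fischlin/CLOS). It can be built from your assumptions --- e.g., equivocate via a PRG-programmed Naor-style first message and extract by additionally encrypting the decommitment under a CRS-embedded dense public key, with the straight-line argument-of-knowledge following from a counting argument over the honest verifier's challenges given statistically binding extracted values --- but until that scheme and its two indistinguishability properties are written down and checked to be of the simple-hybrid form, your proof is incomplete where the paper's is self-contained.
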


We start with a crucial building block: {\em
witness-indistinguishable proofs}. Let $\langle \bfP, \bfV\rangle$ be an
interactive proof (or argument) system for a $\class{NP}$-relation
$R_L$. Denote by $R_L(x)$ the set of all witnesses of $x$. We
consider a real-world execution of protocol $\langle\bfP, \bfV\rangle$
in the quantum UC model. Namely, there is an interactive environment
participating in the execution and outputting 1 or 0 at the end. Let
$\mathbf{EXEC}_{\bfP,\bfV,\calZ}(x, w)$ represent the output distribution
of an execution of $\Pi$, where $\calZ$ is an interactive environment
and $\bfP$ uses $w$ as a witness.

\begin{definition}(Quantum UC Witness-indistinguishability.) Let $\Pi = \langle \bfP,
\bfV\rangle$ be an interactive proof (or argument) system for a
language $L\in\class{NP}$. We say $\Pi$ is {\em quantum UC
witness-indistinguishable (quantum UC-WI)} for $R_L$, if for any
polynomial-time QIM $\bfV$, any polynomial-time QIM $\calZ$, and any
$\sigma$, $\mathbf{EXEC}_{\bfP,\bfV,\calZ}(x, w_1) \approx
\mathbf{EXEC}_{\bfP,\bfV,\calZ}(x, w_2)$ for  any $w_1, w_2 \in
R_L(x)$.
\end{definition}




We can show that if we use a statistically binding and {\em quantum
computationally hiding} commitment scheme (as the one following
Assumption~\ref{asn:prg}) in Blum's zero-knowledge proof system for
Hamiltonian Cycle~\cite{Blum86}, then the resulting protocol, call
it HC$_Q$, is quantum UC-WI.

For completeness, let's recall Blum's zero-knowledge proof for
Hamiltonian Cycle (HC).
\begin{center}
\fbox{\parbox{0.8\textwidth}{
\small{\noindent{\bf HC (Zero-knowledge proof
for Hamiltonian Cycle)}: a directed graph $x$ with $n$ vertices, $\bfP$ is given a Hamiltonian cycle $w$ (the witness) in
$x$; security parameter $1^n$

\vspace{1mm} \hrule

\begin{enumerate*}

\item $\bfP$ picks a random permutation $\sigma \gets S_n$, let $y =
\sigma(x)$. Commit to $(y, \sigma)$ and send $(\bfcomm(y),
\bfcomm(\sigma))$ to $\bfV$. Here $\bfcomm(y)$ represents a bit-by-bit
commitment to the adjacency matrix of $y$.

\item $\bfV$ picks at random a challenge bit $ch \gets \{0,1\}$ and
sends to $\bfP$.

\item $\bfP$ responds according to $ch$: if $ch = 0$, $\bfP$ opens all commitments in step 1 to $\bfV$; if $ch =1$, $\bfP$ reveals a Hamiltonian cycle in $\sigma(x)$, namely it opens the entries in $\bfcomm(y)$ that correspond to
$\sigma(w)$.

\item $\bfV$ verifies $\bfP$'s response: if $ch = 0$, $\bfV$ checks that the commitment is opened correctly;
if $ch =1$, $\bfV$ checks that the revealed edges form a cycle in $y$.
$\bfV$ accepts if all checks succeed.
\end{enumerate*}
} }}
\end{center}

\begin{prop}
HC$_Q$ is quantum UC-WI. \label{prop:npqwi}
\end{prop}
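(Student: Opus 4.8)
The plan is to derive quantum UC-WI of HC$_Q$ from the quantum computational hiding of the commitment scheme $\bfcomm$ (which, under Assumption~\ref{asn:prg}, we get via~\cite{Nao91}), by a straight-line reduction in the spirit of the simple hybrid arguments of Section~\ref{sssec:sha}. Since only a single instance of $\langle\bfP,\bfV\rangle$ is executed, I would first fold the malicious verifier $\bfV$ and the interactive environment $\calZ$ into one poly-time quantum interactive machine $\bfV^{\ast}$ carrying quantum advice; it then suffices to show, for all $w_1,w_2\in R_L(x)$, that the machine describing $\langle\bfP(x,w_1),\bfV^{\ast}\rangle$ is interactively indistinguishable (Definition~\ref{def:qci}) from the one describing $\langle\bfP(x,w_2),\bfV^{\ast}\rangle$, i.e.\ that $\exec_{\bfP,\bfV,\calZ}(x,w_1)\approx \exec_{\bfP,\bfV,\calZ}(x,w_2)$.

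The argument would hinge on two structural facts about an iteration of Blum's protocol. (a) The first message $(\bfcomm(y),\bfcomm(\sigma))$, with $\sigma\gets S_n$ uniform and $y=\sigma(x)$, has a distribution that does not depend on the witness; hence, conditioned on the verifier choosing the challenge $ch=0$, the whole interaction (first message, challenge, and the full opening to $(\mathrm{adj}(y),\sigma)$) is distributed \emph{identically} whether $\bfP$ uses $w_1$ or $w_2$. (b) For $ch=1$ the prover opens exactly the coordinates of $\bfcomm(y)$ indexed by the edge set $\sigma(w)$; for uniform $\sigma$ this edge set is a uniformly random directed Hamiltonian cycle $S$ on $[n]$, independently of \emph{which} Hamiltonian cycle $w$ is used (the map $\sigma\mapsto\sigma(w)$ is exactly $n$-to-one onto the Hamiltonian cycles, for every $w$). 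So I would couple the $w_1$- and $w_2$-executions to use the same random $S$, drawing $\sigma$ uniformly subject to $\sigma(w_i)=S$ in the $w_i$-execution; then the opened coordinates and their decommitments agree in the two executions, and the executions differ only in the values committed in the \emph{unopened} coordinates — the non-cycle entries of $\bfcomm(y)$ and all of $\bfcomm(\sigma)$.

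Given (a) and (b), I would reduce distinguishing to hiding by a straight-line argument. When HC$_Q$ is run as $n$ parallel copies, take hybrids $H_0,\dots,H_n$ where $H_k$ has the prover use $w_2$ in copies $1,\dots,k$ and $w_1$ in the rest (each $H_k$ is a legal prover, since both are witnesses for $x$); $H_0$ and $H_n$ are the executions to be compared. To bound the distinguishing advantage between $H_{k-1}$ and $H_k$ — which differ only in copy $k$ — I would build a hiding adversary running the coupled copy-$k$ experiment of (b): if copy $k$ receives challenge $0$, fact (a) makes $H_{k-1}$ and $H_k$ identically distributed, so this case contributes nothing (the reduction may output a random bit); if it receives challenge $1$, the opened coordinates agree in both worlds and the hiding challenge is embedded in the unopened non-cycle coordinates of $\bfcomm(y^{(k)})$ and in $\bfcomm(\sigma^{(k)})$, so the reduction simulates $\bfV^{\ast}$ faithfully and forwards its final output. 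A standard bit-by-bit hybrid inside each commitment and a union bound over $k$ then yield $\exec_{\bfP,\bfV,\calZ}(x,w_1)\approx \exec_{\bfP,\bfV,\calZ}(x,w_2)$ for every poly-time $\bfV$ and $\calZ$, i.e.\ quantum UC-WI. (For the single-shot protocol the same reduction gives single-shot WI, from which one may also invoke closure of WI under composition.)

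The main obstacle — and the reason a bare case split on $ch$ is not literally a simple hybrid argument in the sense of Definition~\ref{def:sha} — is that $\bfV^{\ast}$ picks its challenge \emph{after} seeing the first message, so the reduction cannot decide in advance which commitments it will have to open, while $\bfcomm$ is only \emph{computationally} hiding. Facts (a) and (b) are exactly what circumvents this: on the $ch=0$ branch the interaction is information-theoretically witness-independent, so the reduction never needs an opening it does not hold, and on the $ch=1$ branch the cycle-coupling pins down precisely which coordinates get opened and forces them to agree in both worlds, leaving the hiding challenge confined to the unopened coordinates. Because the reduction never rewinds $\bfV^{\ast}$, it transfers verbatim to a quantum $\bfV^{\ast}$ with quantum advice once $\bfcomm$ is hiding against quantum distinguishers, matching the methodology of Section~\ref{sssec:sha}.
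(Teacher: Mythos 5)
Your proof is correct and follows essentially the same route as the paper's: both condition on the verifier's challenge bit, observe that the $ch=0$ branch is identically distributed for either witness, and reduce the $ch=1$ branch to quantum computational hiding of $\bfcomm$ using the fact that the opened coordinates form a uniformly random Hamiltonian cycle regardless of which witness is used. The only cosmetic differences are that the paper compares each execution to a common intermediate experiment in which the prover commits to $\sigma(K_n)$ and opens a cycle there (applying hiding twice via the triangle inequality), whereas you couple the two executions directly on a shared random cycle $S$ and apply hiding once to the unopened coordinates, and that the paper proves the single iteration and defers parallel repetition to a separate composition step while you fold the copy-by-copy hybrid into the argument.
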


\begin{proof}
Let $E_{1} = \mathbf{EXEC}_{\bfP,\bfV,\calZ}(x, w_1, 1^n)$ and $E_{2} =
\mathbf{EXEC}_{\bfP,\bfV,\calZ}(x, w_2, 1^n)$. We want to show that
$$| \Pr(E_{1} = 1) - \Pr(E_{2} = 1)| = \delta(n) \leq negl(n).$$
Note that the distribution of $ch$ in $E_1$ and $E_2$ are identical
since the first messages in $E_1$ and $E_2$ are identically
distributed (commitments to the same objects) and the distribution
of $ch$ is uniquely determined by prover's first message and $\bfV$'s
local configuration. So we know $\Pr(ch = b \text{ in } E_1) =
\Pr(ch = b \text{ in } E_2) \stackrel{def}{=} \Pr(ch = b)$ for $b =
0,1$.
Let
$$\delta_0(n) = |\Pr(E_1 = 1 \wedge ch = 0 \text{ in } E_1) -
\Pr(E_{2} = 1 \wedge ch = 0 \text{ in } E_2)| $$
$$\text{and } \delta_1 (n) = |\Pr(E_{1} = 1 \wedge ch = 1 \text{ in } E_1) -
\Pr(E_{2} = 1 \wedge ch = 1 \text{ in } E_2)|$$  Then $\delta(n)
\leq \delta_0(n) + \delta_1(n)$ by the triangle inequality. We show
both $\delta_0$ and $\delta_1$ are negligible.

\begin{claim}
$ \delta_0(n) = |\Pr(E_{1} = 1 \wedge ch = 0 \text{ in } E_1) -
\Pr(E_{2} = 1 \wedge ch = 0 \text{ in } E_2)| = 0.$
\label{claim:del0}
\end{claim}
\begin{proof}
$\delta_0(n) = |\Pr(E_{1} = 1 | ch = 0 \text{ in } E_1)\cdot \Pr(ch
= 0\text{ in } E_1) - \Pr(E_{2} = 1 | ch = 0 \text{ in } E_2 )\cdot
\Pr(ch = 0\text{ in } E_2)| = \Pr(ch = 0)\cdot |\Pr(E_{1} = 1 | ch =
0 \text{ in } E_1) - \Pr(E_{2} = 1 | ch = 0 \text{ in } E_2 )|$.
However, provided that $ch = 0$, $E_1$ and $E_2$ become identical
since the response is just opening of the commitment. Therefore
$\Pr(E_{1} = 1 | ch = 0 \text{ in } E_1) = \Pr(E_{2} = 1 | ch = 0
\text{ in } E_2)$ and hence $\delta_0 (n) = 0$.
\end{proof}

\begin{claim}
$ \delta_1(n) = |\Pr(E_{1} = 1 \wedge ch = 1 \text{ in } E_1) -
\Pr(E_{2} = 1 \wedge ch = 1 \text{ in } E_2 )| \leq negl(n).$
\label{claim:del1}
\end{claim}
\begin{proof}
We present an imaginary
experiment $E(1^n)$ such that $|\Pr(E_{i} = 1 \wedge ch = 1 \text{
in } E_i) - \Pr(E(1^n) = 1)| \leq negl(n)$ for both $i = 1, 2$. It
then follows easily by triangle-inequality that $\delta_1(n) =
|\Pr(E_{1} = 1 \wedge ch = 1 \text{ in } E_1) - \Pr(E_{2} = 1 \wedge
ch = 1 \text{ in } E_2)| \leq \sum_{i = 1,2}|\Pr(E_{i} = 1 \wedge ch
= 1 \text{ in } E_i ) - \Pr(E(1^n) = 1)| \leq negl(n)$.

\begin{center}
 \fbox{\parbox{0.8\textwidth}{
{\small{\noindent\bf Experiment $E(1^n)$} \vspace{1mm} \hrule

\begin{enumerate*}

\item $\bfP$ picks a random permutation $\sigma$,  commits to a complete
  graph $K_n$, and sends commitments $\bfcomm(\sigma)$ and $\bfcomm(\sigma(K_n))$ to
  $\bfV$. Here $K_n$ is the compete graph on $n$ vertices.

\item $\bfV$ picks the challenge bit $ch = 1$ and
sends it to $\bfP$.

\item $\bfP$ responds by opening a cycle in $\sigma(K_n)$.

\item $\bfV$ verifies $\bfP$'s response.

\end{enumerate*}
}}}
\end{center}

Compared to a real-execution in which $ch = 1$, $E(1^n)$ differs
only in step 1, where the commitment is to a permutation on
$\sigma(K_n)$ as opposed to $\sigma(x)$. However, the hiding
property of the commitment scheme ensures that no polynomial-time
machines can distinguish them. Hence $|\Pr(E_{i} = 1 \wedge ch = 1
\text{ in } E_i) - \Pr(E(1^n) = 1)| \leq negl(n)$ for both $i = 1,
2$.
\end{proof}

Therefore we conclude that HC$_Q$ is quantum UC-WI.
\end{proof}

Using a polynomial number of parallel repetitions of HC$_Q$, we have
a quantum UC-WI protocol for \class{NP} with negligible soundness
error which we call $\Pi_{WI}$ and will use in later constructions.


We now construct $\Pi^{\calG_{CF}}_{ZK}$ that quantum UC-emulates
$\calG_{ZK}$ in the $\calG_{CF}$-hybrid model.

Let $L$ be an \class{NP} language and $R_L$ be the corresponding
NP-relation. Let $\pgen$ be a quantum secure pseudorandom generator
as in Assumption~\ref{asn:prg}, and let $\mathcal{E} = (\gen,\enc,
\dec)$ be an encryption scheme as in Assumption~\ref{asn:pkc}. We
define another relation $$R = \{((x_1,x_2,pk,e),w) | ( \exists r:
\enc_{pk}(w, r) = e \wedge (x_1,w) \in R_L) {\text{ or }} (\pgen(w)
= x_2) \}$$ It is clear that $R$ is an NP-relation, and thus there
is a WI proof for $R$. The key idea of constructing
$\Pi_{ZK}^{\calG_{CF}}$ is to exploit the common reference string (CRS)
in some clever way.
We will interpret a CRS $s$ as two parts $(s_1, s_2)$, where $s_1 =
pk$ will be used as a public key $pk$ for $\mathcal{E}$, and $s_2$
will sometimes be an output string of $\pgen$. Our
$\Pi_{ZK}^{\calG_{CF}}$ has a simple form then: $\bfP$ and $\bfV$ get
$s=(s_1, s_2)$, $\bfP$ sends $x$ and $e = \enc_{s_1}(w)$ to $\bfV$, and
next they run a WI protocol on $(x_1 = x, x_2 = s_2, pk = s_1, e)$
using witness $w$.
Intuitively, if the adversary $\calA$ corrupts the verifier $\bfV$, then $\calS$ can choose a
fake CRS $s' = (s_1', s_2')$ where $s_2'$ is generated by $\pgen$
with random seed $r$, i.e., $s_2' = \pgen(r)$. Then it generates an
arbitrary ciphertext as $e$ and uses $r$ as a witness in the WI
proof, and witness-indistinguishability ensures the $\calA$ can not
distinguish from the case where $\bfP$ uses a real witness $w$ of $x$. If the prover is corrupted, $\calS$ can simply generate
$(pk,sk)\gets \gen$ and assign $pk$ as $s_1'$, while $s_2'$ is still
uniformly chosen. Therefore, whenever $\calA$ convinces $\calS$
in the WI protocol, $\calS$ then decrypts (it knows $sk$) $w =
\dec_{sk}(e)$. However, there is one subtlety. Namely, $R$ has two
witnesses, either a real $w$ (which is what we really ask for) s.t.
$(x,w)\in R_L$ or a random seed $r$ s.t. $\pgen(r) = s_2'$. We do
not want $\calA$ to be capable of achieving the latter case. This is
easy to guarantee though, because we can choose a generator $\pgen$
with sufficient expansion factor, e.g., if $\pgen: \{0,1\}^n\to
\{0,1\}^{3n}$. Then given a uniformly random $3n$-bit string $s_2'$,
the probability that there is a seed $r \in \{0,1\}^n$ getting
mapped to $s_2'$ is negligible. Thus whenever a prover succeeds in
WI, it must have proved the statement with respect to $R_L$ rather
than with respect to $\pgen$. The formal description of protocol
$\Pi^{\calG_{CF}}_{ZK}$ follows.

\begin{center}
\fbox{\parbox{0.8\textwidth}{
{\small {\noindent\bf UC-secure ZKAoK Protocol $\Pi^{\calG_{CF}}_{ZK}$}
\hrule
\begin{enumerate*}

\item $\bfP$ and  $\bfV$ get $s = (s_1, s_2) \in \{0,1\}^n\times \{0,1\}^{3n}$ from ${\calG_{CF}}$.

\item $\bfP$ sends $x$ and $e =\enc_{s_1}(w)$ to $\bfV$.

\item $\bfP$ and $\bfV$ invoke a WI protocol $\Pi_{WI}$ for relation $R$ with input
instance $(x_1 = x, x_2 = s_2, pk =s_1, e)$. $\bfP$ uses $w$ as a
witness for $(x_1, x_2, pk, e)$.

\item $\bfV$ outputs $x$ if it accepts in $\Pi_{WI}$.

\end{enumerate*}
}}}
\end{center} 

\begin{prop}

\label{prop:za} The classical protocol $\Pi^{\calG_{CF}}_{ZK}$ quantum
UC-emulates $\calG_{ZK}$.
\end{prop}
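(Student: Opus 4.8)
The plan is to reduce to the dummy adversary $\dumA$ via Theorem~\ref{thm:qucdum}, and then treat the two nontrivial corruption patterns (corrupt $\bfV$, corrupt $\bfP$) separately. In each case I would build a simulator $\calS$ that impersonates $\calG_{CF}$ toward $\calA$ and $\calZ$, and also plays the honest party toward the environment, so that $\calS$ gets to choose the reference string $s=(s_1,s_2)$ and can plant a trapdoor. Every bridging step will be either a simple hybrid argument (Lemma~\ref{lemma:sha}) whose underlying indistinguishable distributions come from a quantum-secure primitive -- pseudorandomness of $\pgen$ (\asnref{prg}), density of $\mathcal{E}$, or IND-CPA security of $\mathcal{E}$ (\asnref{pkc}) -- or a direct appeal to the quantum UC witness-indistinguishability of $\Pi_{WI}$ (which is parallel repetitions of $\text{HC}_Q$, shown quantum UC-WI in Proposition~\ref{prop:npqwi}). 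None of these steps uses rewinding, which is exactly why the argument survives a quantum interactive environment.

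\textbf{Verifier corrupted.} On receiving $x$ from $\calG_{ZK}$, the simulator $\calS$ runs $(pk,sk)\gets\gen$, samples $r\gets\{0,1\}^n$, hands the fake reference string $s'=(pk,\pgen(r))$ to $\calA$, sends $(x,e)$ with $e=\enc_{pk}(0^{\ell(n)})$, and then plays the honest prover in $\Pi_{WI}$ on instance $(x_1{=}x,\,x_2{=}\pgen(r),\,pk,\,e)$ using $r$ as witness -- legitimate, since $\pgen(r)=x_2$ places $r$ in $R$ for that instance. To prove $\mathbf{EXEC}_{\Pi^{\calG_{CF}}_{ZK},\dumA,\calZ}\approx\mathbf{IDEAL}_{\calG_{ZK},\calS,\calZ}$ I would interpose the hybrids: (i) replace uniform $s_1$ by $pk$ (density of $\mathcal{E}$); (ii) replace uniform $s_2$ by $\pgen(r)$ (pseudorandomness of $\pgen$); (iii) switch the honest prover's $\Pi_{WI}$ witness from the real $w$ to $r$ -- admissible because at this point $w$ and $r$ are both witnesses of the same $R$-instance, and the switch is indistinguishable by quantum UC-WI of $\Pi_{WI}$; (iv) replace $e=\enc_{pk}(w)$ by $\enc_{pk}(0^{\ell(n)})$ -- a clean IND-CPA experiment, since after step (iii) the prover's messages no longer depend on $e$. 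The last hybrid is exactly the ideal execution with $\calS$.

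\textbf{Prover corrupted.} Here $\calS$ runs $(pk,sk)\gets\gen$, samples $s_2'\gets\{0,1\}^{3n}$ uniformly, hands $s'=(pk,s_2')$ to $\calA$, receives $(x,e)$, and runs the honest verifier's side of $\Pi_{WI}$; if it accepts, $\calS$ decrypts $w:=\dec_{sk}(e)$ and sends $(x,w)$ to $\calG_{ZK}$, otherwise it lets $\calG_{ZK}$ reject. Against the real execution there are only two changes: $s_1$ is $pk$ rather than uniform (bridged by density of $\mathcal{E}$), and the honest verifier's output bit becomes additionally gated on $(x,\dec_{sk}(e))\in R_L$. These two output bits differ only when $\calA$ makes the honest verifier accept $\Pi_{WI}$ on an instance $(x,s_2',pk,e)$ whose only ``true'' witness fails the encryption/$R_L$ form. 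I would bound this by a purely statistical argument: $\Pi_{WI}$ is statistically sound (statistically binding commitments plus parallel repetition), so acceptance forces $(x,s_2',pk,e)$ into the language of $R$ except with negligible probability, and this holds against an unbounded -- in particular quantum -- cheating prover; next, for uniform $s_2'\in\{0,1\}^{3n}$ one has $\Pr[\exists r\in\{0,1\}^n:\pgen(r)=s_2']\le 2^{-2n}$, so the $\pgen$-branch of $R$ is essentially never satisfiable; hence the encryption branch holds, $e=\enc_{pk}(w)$ for some $w$ with $(x,w)\in R_L$, and by decryption correctness $\dec_{sk}(e)=w$. The two remaining corruption patterns (no party corrupted; both parties corrupted) are immediate -- the first from completeness of $\Pi_{WI}$, the second because there is no honest party to simulate.

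\textbf{Anticipated main obstacle.} The delicate part should be the prover-corrupted extraction: one must combine statistical soundness of $\Pi_{WI}$ against quantum provers, the expansion of $\pgen$, and decryption correctness to conclude that the planted secret key recovers a genuine $R_L$-witness whenever the honest verifier accepts, and then check that the bit $\calG_{ZK}$ relays to the ideal honest verifier agrees with the real verifier's accept/reject bit up to negligible error. A secondary, more routine subtlety is verifying that each bridging step in the verifier-corrupted case really is of ``simple hybrid'' type (or a clean invocation of quantum UC-WI), so that no rewinding is hidden inside a reduction and the whole argument lifts to quantum environments.
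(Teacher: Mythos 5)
Your proof is correct and follows essentially the same route as the paper: the same trapdoored reference string, the same extraction argument combining the planted secret key with statistical soundness of $\Pi_{WI}$ and the expansion factor of $\pgen$, and the same hybrid chain (plaintext switch via IND-CPA, witness switch via quantum witness-indistinguishability, pseudorandom-to-uniform switch) in an order that keeps the encryption randomness out of the WI witness when the IND-CPA reduction is invoked. The only cosmetic difference is that your verifier-corruption simulator samples $s_1'$ as a genuine public key (adding one density hybrid), whereas the paper's simulator leaves $s_1'$ uniform and uses a real key only in the prover-corruption case.
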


\begin{proof} We first deal with the case
in which the prover is corrupted.
\begin{center}
\fbox{\parbox{0.8\textwidth}{
{\small{\noindent\bf Ideal-world adversary $\calS$}: given
adversary $\calA$ corrupting $\bfP$; security parameter $1^n$; \vspace{1mm}

\hrule

\begin{enumerate*}

\item $\calS$ initializes $\calA$ with whatever input
        state it receives from the environment.
\item $\calS$ internally generates $(pk,sk)\gets \gen(1^n)$ and set $s_1' = pk$. Choose $s_2' \gets \{0,1\}^{3n}$ uniformly at random. Let $s'=(s_1', s_2')$ be the fake CRS and it is given to $\calA$.
\item When $\calS$ receives $(x, e)$ from $\calA$, it decrypts $e$ to get $w = \dec_{sk}(e)$.

\item $\calS$ runs $\Pi_{WI}$ with $\calA$ on
input instance $(x, s_2', s_1', e)$ where $\calS$ plays the role
of a verifier. If $\calS$ accepts in WI, it sends $(x,w)$ to
${\calG_{ZK}}$.

\item $\calS$ outputs whatever $\calA$ outputs.

\end{enumerate*}
}
}}\end{center}

\begin{lemma}\label{lemma:aok}

For any poly-time QIM $\calZ$ and any $\sigma$,
$\mathbf{EXEC}_{\Pi_{ZK}^{\calG_{CF}}, \calA, \calZ} \approx
\mathbf{IDEAL}_{\calG_{ZK}, \calS, \calZ}$.
\end{lemma}
\vspace{-5mm}
\begin{proof}
Consider the two QIMs $M_{\calS}$ and $M_{\calA}$. Similar to
the proof of Lemma~\ref{lemma:sazkp}, we can define an intermediate
machine $M$ which behaves differently from $M_{\calS}$ only in
that it does not check if $w$ is a true witness and outputs $x$ immediately $\Pi_{WI}$ succeeds. Then soundness of $\Pi_{WI}$ ensures $M_{\calS} \approx_I M$. Moreover $M$ and $M_{\calA}$ are
simply related by switching between a valid public key and a truly
random string. The lemma then follows from Assumption~\ref{asn:pkc}.
\end{proof}
Now we consider the case where $\calA $ corrupts the verifier.

\begin{center}
\fbox{\parbox{0.8\textwidth}{
{\small{\noindent\bf Ideal-world adversary $\calS$}: given
adversary $\calA$ corrupting $\bfV$; honest $\bfP$;
security parameter $1^n$;
\hrule

\begin{enumerate*}


\item $\calS$ initializes $\calA$ with whatever input
        state it receives from the environment.

\item Wait till receives $x$ from $\calG_{ZK}$. Then $\calS$ internally generates $s_1'\gets \{0,1\}^n$. It also generates $r \gets \{0,1\}^n$ and sets $s_2' =
\pgen(r)$. Let $s'=(s_1', s_2')$ be the fake CRS and it is given to
$\calA$.

\item $\calS$ sends $x$ and $e = \enc_{s_1'}(0^n)$ to $\calA$ and then invokes $\Pi_{WI}$ with $\calA$ on input instance $(x, s_2', s_1', e)$. $\calS$ uses $r$ as a witness.

\item $\calS$ outputs whatever $\calA$ outputs.

\end{enumerate*}
}
}}
\end{center}

\begin{lemma}
\label{lemma:zk} For any poly-time QIM $\calZ$ and any $\sigma$,
$\mathbf{EXEC}_{\Pi_{ZK}^{\calG_{CF}}, \calA, \calZ} \approx
\mathbf{IDEAL}_{\calG_{ZK}, \calS, \calZ} $.
\end{lemma}

\begin{proof}
We show QIMs $M_{\calS}$ and $M_{\calA}$ are
interactively-indistinguishable via a sequence of indistinguishable
machines defined as follow.

\begin{center}
 \fbox{\parbox{0.8\textwidth}{
{\small
\begin{itemize*}
\item {$M_0 := M_{\calS}$}. The ideal-world machine describing
$\bfP, \calS$ and $\calG_{ZK}$ as a single interactive machine.
\item{$M_1$}: same as $M_0$ except that the ciphertext is changed
from $\enc_{s_1'}(0^n)$ to $e = \enc_{s_1'}(w)$. Here $w$ is a
witness for $x$, i.e., $R_L(x,w) = 1$.
\item{$M_2$}: identical to $M_1$ except that
$M_2$ uses $w$ as a witness in the $\Pi_{WI}$.
\item{$M_3$}: $s_2'$ is also chosen uniformly random, rather than
pseudorandom. Note $M_3$ is exactly the real-world machine
$M_{\calA}$.
\end{itemize*}}

}}
\end{center}

Now we can see that:
\begin{itemize*}
    \item $M_0 \approx_I M_1$ because they simply related by changing the plaintext of encryption $e$.
    \item $M_1 \approx_I M_2$ because $\Pi_{WI}$ is quantum UC-WI.
    \item $M_2 \approx_I M_3$ because they are simply related by
    switching a pseudorandom string to a uniformly random string.
\end{itemize*}
Thus Lemma~\ref{lemma:zk} holds. \end{proof} We finally get
$\Pi_{ZK}^{\calG_{CF}}$ quantum UC-emulates $\calG_{ZK}$.
\end{proof}

\fi

We have seen that $\fzk$ functionality is sufficient to realize any other functionality. It is interesting to find out if this holds as well for other functionalities. More generally, we may ask what the relations of different functionalities are. In this section, we show that $\fzk$ and $\fcf$ are equivalent in the sense that one can be UC-realized from the other.  

\begin{theorem}[Equivalence between $\fzk$ and $\fcf$]
\label{thm:zk=cf}
\begin{enumerate}
	\item Under Assumption~\ref{asn:prg}, there is a constant-round protocol {$\Pi_{\tt CF}^{\fzk}$} that \cquc~ emulates $\fcf$ in the $\fzk$-hybrid model.
	\item Under Assumptions~\ref{asn:prg} and~\ref{asn:pkc}, there is a constant-round protocol $\Pi_{\tt ZK}^{\fcf}$ that   \cquc~emulates $\fzk$ in the $\fcf$-hybrid  model.
\end{enumerate}
\end{theorem}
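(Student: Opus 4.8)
# Proof Proposal for Theorem~\ref{thm:zk=cf}

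\textbf{Overall approach.} The plan is to prove the two directions separately, in each case exhibiting an explicit classical constant-round protocol in the appropriate hybrid model and then verifying quantum-UC emulation via Theorem~\ref{thm:qucdum} (completeness of the dummy adversary) together with a simple hybrid argument (Lemma~\ref{lemma:sha}) where rewinding is never needed. The whole point is that both constructions are adaptations of Lindell's classical constructions~\cite{Lin03}, and the security reductions consist only of (i) invoking the UC composition theorem (Theorem~\ref{thm:quccomp}) to pass from an $\fzk$- or $\fcf$-hybrid to some lower-level hybrid, and (ii) switching distributions that are either pseudorandom-vs-random, plaintext-vs-plaintext, or commit-phase message changes — exactly the kinds of steps captured by Observation~\ref{obs:clos} and the simple hybrid argument framework. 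Since Assumptions~\ref{asn:prg},~\ref{asn:pkc} are quantum-resistant, these steps go through verbatim against quantum adversaries.

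\textbf{Part 1: $\fcf$ from $\fzk$ (constant round).} First I would fix the protocol $\Pi_{\tt CF}^{\fzk}$: Alice commits to a random string $r_A\in\{0,1\}^n$ using the statistically-binding, quantum-computationally-hiding commitment $\pi_{\tt com}$ (available from Assumption~\ref{asn:prg} via~\cite{Nao91}); Bob then sends a random $r_B\in\{0,1\}^n$; Alice opens, and both set $r=r_A\oplus r_B$; consistency of the opening with the commitment is enforced by a call to $\fzk$ on the appropriate NP statement. (We must respect the asymmetry of $\fcf$: Alice learns the coin first, may send $\acc/\rej$, and only on $\acc$ does Bob get $r$; this is naturally handled by letting Alice decide whether to complete the $\fzk$ call / final message.) To prove emulation, I fix the dummy adversary and construct a simulator $\calS$ for each corruption pattern. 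If Bob is corrupted: $\calS$ gets $r$ from $\fcf$, commits to $0^n$ as $r_A$, receives $r_B$ from the adversary, then equivocates — but $\pi_{\tt com}$ is not equivocable, so instead $\calS$ plays honestly except it \emph{chooses} $r_A := r\oplus r_B$ \emph{after} seeing $r_B$. Here the statistical binding of the commitment is the obstacle: $\calS$ cannot both commit first and choose later. The standard fix (following Lindell) is that $\fzk$ is only invoked after $r_A$ is committed; so $\calS$ must commit before knowing $r_B$. This forces a different protocol shape — Bob commits to $r_B$ first (via $\fcf$ being asymmetric, or via another commitment), Alice sends $r_A$ in the clear, Bob opens. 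With that ordering $\calS$ (simulating for corrupted Bob) extracts $r_B$ from Bob's commitment by... again not possible with a plain commitment. The clean resolution is to route the binding through $\fzk$ itself: have Bob send $r_B$ and prove via $\fzk$ knowledge of it (trivial), while Alice's $r_A$ is sent after a commitment whose opening is certified by $\fzk$; the simulator against corrupted Alice extracts $r_A$ from the $\fzk$ call's witness. I would follow Lindell's exact message scheduling to get the extraction on the right side, then argue indistinguishability by a length-2 or length-3 simple hybrid: swapping the committed value and invoking hiding of $\pi_{\tt com}$, which is quantum-secure under Assumption~\ref{asn:prg}.

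\textbf{Part 2: $\fzk$ from $\fcf$ (constant round).} This is essentially the construction already written out in the \texttt{\textbackslash ifnum\textbackslash old=1} block of the excerpt (Proposition~\ref{prop:za}): interpret the coin $s=(s_1,s_2)\in\{0,1\}^n\times\{0,1\}^{3n}$ from $\fcf$ as a public key $s_1$ plus a long random string $s_2$; $\bfP$ sends $x$ and $e=\enc_{s_1}(w)$; then $\bfP$ and $\bfV$ run the constant-round quantum-UC witness-indistinguishable proof $\Pi_{WI}$ (Blum's Hamiltonian-cycle protocol with a quantum-hiding commitment, repeated in parallel — shown quantum-UC-WI in Proposition~\ref{prop:npqwi}) for the relation $R=\{((x,x_2,pk,e),w):(\enc_{pk}(w)=e\wedge(x,w)\in R_L)\vee\pgen(w)=x_2\}$. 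I would reproduce the two simulators: against a corrupted prover, $\calS$ picks $(pk,sk)\gets\gen$, plants $pk$ as $s_1$ via the (corruptible-by-simulator) $\fcf$, takes $e$, runs $\Pi_{WI}$ as verifier, and on acceptance decrypts $w=\dec_{sk}(e)$ — the $3n$-bit expansion of $\pgen$ guarantees that except with negligible probability no $\pgen$-preimage of a random $s_2$ exists, so a convincing prover must have used the $R_L$-branch. Against a corrupted verifier, $\calS$ plants $s_2=\pgen(r)$, sends $e=\enc_{s_1}(0^n)$, and runs $\Pi_{WI}$ using $r$ as witness; indistinguishability is a length-3 simple hybrid: plaintext swap $0^n\to w$ (Assumption~\ref{asn:pkc}), witness swap $r\to w$ (quantum-UC-WI of $\Pi_{WI}$), then $\pgen(r)\to$ uniform (Assumption~\ref{asn:prg}). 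Finally, since $\fcf$ is equivalent to a CRS, this yields constant-round quantum-UC ZKAoK in the CRS model.

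\textbf{Main obstacle.} I expect Part 1 to be the delicate point: getting a \emph{constant-round} coin-flip from $\fzk$ alone requires the simulator to control the outcome, and with only a statistically-binding (non-equivocable) commitment the simulator's power must come entirely from the $\fzk$ call — so the message ordering (who commits first, where the $\fzk$ invocation sits, how the $\acc/\rej$ asymmetry is exploited) has to be arranged exactly so that on each corruption the simulator can either extract the honest-party-to-be's contribution from $\fzk$'s input or fix its own contribution before the coin is determined. Proving the simulated and real executions interactively-indistinguishable is then routine (a short simple hybrid argument using quantum-hiding of $\pi_{\tt com}$), but pinning down the protocol so that \emph{both} simulators work simultaneously while keeping the round complexity constant is where the real care is needed; I would follow Lindell~\cite{Lin03} closely here and then observe that every reduction step is a simple hybrid, hence automatically quantum-sound.
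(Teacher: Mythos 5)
Your Part 2 is essentially the paper's proof: same protocol (coins split into a public key $s_1$ and a long string $s_2$, encrypt the witness under $s_1$, then a parallel-repeated Blum WI proof for the OR-relation), same two simulators, and the same three-step simple hybrid for the corrupted verifier (plaintext swap, witness swap via WI, PRG-output-to-uniform swap), with the expansion factor of $\pgen$ ruling out the second branch against a corrupted prover. That direction is fine.

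Part 1 contains a genuine gap: you correctly identify the obstacle (the simulator must force the outcome, and $\pi_{\tt com}$ is statistically binding, hence not equivocable), but you then draw the wrong conclusion from it — namely that ``this forces a different protocol shape'' — and you never state the actual resolution. Lindell's protocol, run verbatim in the $\fzk$-hybrid model, already works: $A$ commits to $a$, proves knowledge of the opening via $\fzk$, receives $b$, sends the \emph{string} $a$ in the clear (not a decommitment), and then proves via a second $\fzk$ call that $c$ is a commitment to $a$. The commitment is never opened. When $B$ is corrupted, the simulator $\calS$ \emph{is} $\fzk$ from the adversary's point of view, so it commits to $0^n$, receives $b$, sets $a := s \oplus b$ using the coin $s$ obtained from $\fcf$, sends $a$, and then simply delivers the statement $(c,a)$ as $\fzk$'s output — no witness is needed, because the ideal functionality's verification step is under the simulator's control. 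Equivocation is thus supplied for free by the hybrid model, not by the commitment scheme, and the remaining indistinguishability is exactly the single commit-phase message swap ($0^n$ versus $a$) you anticipated, justified by quantum-computational hiding under Assumption~\ref{asn:prg}. When $A$ is corrupted, extraction of $a$ comes from the witness $(a,r)$ that the adversary must hand to $\fzk$ in the knowledge-proof step \emph{before} $b$ is sent, so $\calS$ can set $b := a \oplus s$ and the two executions are in fact identical. Your detour through ``Bob commits first'' and ``Bob proves knowledge of $r_B$'' is unnecessary and, as you yourself observe, does not lead anywhere; deferring to ``Lindell's exact message scheduling'' without articulating why that scheduling lets both simulators succeed leaves the central mechanism of the proof unproved.
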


It is possible to obtain more connections between different functionalities. For example, \cite{CF01} gives a ZK protocol that statistically UC and hence \cquc~emulates $\fzk$ in the $\fcom$-hybrid model. On the other hand, our Theorem~\ref{thm:qclos} implies that $\fcom$ can be \cquc~realized in $\fzk$-hybrid model. Thus $\fzk$ and $\fcom$ are equivalent in the computationally quantum-UC model. See~\cite{FKSZZ13} for a systematic study of the reducibility and characterizing functionalities in the quantum UC model.

\subsection{From $\fzk$ to $\fcf$}
\label{subsec:zk2cf}

Theorem~\ref{thm:qclos} already implies that $\fcf$ can be \cquc~ realized from $\fzk$-hybrid model. However, that relies on the generic construction of CLOS, which is typically not optimal in terms of the number of rounds (i.e., \emph{round complexity}) and the amount of messages exchanged (i.e., \emph{communication complexity}). Here we give a direct reduction which is simple and more efficient. Specifically, we show that  the parallel coin-flipping protocol of Lindell~\cite{Lin03}, once executed in $\fzk$-hybrid model, i.e., the (stand-alone) ZKAoK protocol is replaced by the ideal protocol for $\fzk$, is \cquc~secure. This yields a \emph{constant-round} protocol for $\fcf$, and we need only Assumption~\ref{asn:prg}: existence of a quantum-secure PRG. The protocol is shown below. 


\begin{pffs}{Coin-Flipping Protocol}{$\Pi_{\tt CF}^{\fzk}$}{}
\begin{enumerate}[1.]
    \item $A$ chooses $a \gets \{0,1\}^{n}$ at random, and sends $B$ a commitment of $a$: $c = \bfcomm(a,r)$.
    \item $A$ proves knowledge of $(a,r)$ using $\fzk$.

    \item $B$ sends $b \gets \{0,1\}^{n}$ to $A$.

    \item $A$ sends $B$ string $a$.

    \item $A$ proves to $B$ that $c$ is indeed a commitment of $a$ using $\fzk$.

    \item $A$ and $B$ set $s = a\oplus b$ as the outcome.

\end{enumerate}
\end{pffs}

 We give proofs for corrupted $A$ and corrupted $B$ separately. 


\mypar{Player $A$ is Corrupted} We construct an ideal world
$\calS$ for any adversary $\calA$ corrupting $A$.
\begin{pffs}{Simulator $\calS$:}{$A$ is corrupted}{}
Input: $\calA$ as a black box; security parameter $1^n$
        \begin{enumerate}[1.]
        \item $\calS$ initializes $\calA$ with whatever input
        state it receives from the environment.
            \item $\calS$ obtains $s$ from $\fcf$ which is
                chosen uniformly at random $s\gets\{0,1\}^n$.
                \item $\calS$ receives a commitment $c$ 
                from $\calA$.
                \item $\calA$ shows knowledge of $(a,r)$ to
                $\fzk$, which is emulated by $\calS$ here.
                $\calS$ verifies if $c = \bfcomm(a,r)$ and aborts if
                not. This allows $\calS$ to learn $a$.
                \item $\calS$ sends $b = a\oplus s $ to $\calA$.
                \item $\calA$ sends $a$ to $\calS$. $\calS$ aborts if $\calA$ sends some other string not equal to $a$.
                \item $\calA$ needs to prove that $c$ is a valid commitment of $a$.  It sends $((c,a),r)$ to $\fzk$. $\calS$ verifies them. Abort if verification fails.
                \item If $\calA$ aborts at any point, $\calS$
                aborts $\fcf$. Otherwise, instruct $\fcf$ to send $s$ to the other (dummy) party $\tilde B$.
                \item $\calS$ outputs whatever $\calA$ outputs.
        \end{enumerate}
\end{pffs}

   \begin{claim} For any $\calA$ corrupting $A$, $\mac_{\Pi_{\tt CF}^{\fzk}, \calA} \qcii \mac_{\fcf, \calS}  $.
   \label{claim:pcfa}
    \end{claim}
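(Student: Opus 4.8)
The plan is to show that for every interactive environment $\calZ$ with any advice state $\sigma$, the executions $\langle\calZ(\sigma),\mac_{\Pi_{\tt CF}^{\fzk},\calA}\rangle$ and $\langle\calZ(\sigma),\mac_{\fcf,\calS}\rangle$ have transcripts that are statistically $\negl(n)$-close, which yields the stronger statement $\mac_{\Pi_{\tt CF}^{\fzk},\calA}\qsii\mac_{\fcf,\calS}$ and in particular the claimed $\qcii$. The point of working in the $\fzk$-hybrid model is that extraction is free: when $\calA$ (the corrupted $A$) invokes $\fzk$ in step~2 to prove knowledge of $(a,r)$, the simulator $\calS$, who emulates $\fzk$, simply reads off $(a,r)$ and checks $c=\bfcomm(a,r)$, learning $a$ before it must choose $b$. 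This straight-line read-off replaces a rewinding extractor. Note $\calS$ runs $\calA$ once as a black box, faithfully relaying $\calA$'s communication with $\calZ$, plus polynomial bookkeeping, so $\calS$ is poly-time; and since $\calS$ produces no commitment, only the statistical \emph{binding} of $\pi_{\tt com}$ is used, not hiding.

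First I would argue $\calA$'s view is \emph{identically} distributed in the two worlds. The only nonvacuous message $\calA$ receives inside the protocol is the string $b$ in step~3 ($\fzk$ returns nothing to the prover). In the real execution $b$ is uniform on $\{0,1\}^n$, independent of $c$ and $(a,r)$. In the ideal execution $b=a\oplus s$ where $s\gets\{0,1\}^n$ was drawn by $\fcf$ independently of $\calA$'s whole view up to step~3, so $a\oplus s$ is again uniform and independent of $c,(a,r)$. Hence $\calA$'s view, its interaction with $\calZ$, and its final output are identically distributed.

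Next I would match the honest party's output. Let $G$ be the event that $\calA$'s commitment $c$ admits at most one valid opening; statistical binding gives $\Pr[\neg G]\le\negl(n)$, uniformly over $\calA$ and $\calZ$. Condition on $G$. If $\calA$ gives an invalid first witness, the real $\fzk$ rejects and $\calS$ aborts, so the honest party outputs $\perp$ in both worlds. Otherwise $\calS$ holds the unique committed value $a$. If $\calA$ then reveals $a'\neq a$, $\calS$ aborts (its step~5); under $G$ no $r'$ satisfies $\bfcomm(a',r')=c$, so the real step-5 $\fzk$ also rejects and $B$ outputs $\perp$ --- the same. If $\calA$ reveals $a'=a$, both worlds run in lockstep, aborting together exactly when $\calA$ supplies an invalid $r'$, and otherwise the real $B$ outputs $a'\oplus b=a\oplus b=s$ while $\calS$ has $\fcf$ deliver $s$ to $\tilde B$ --- the same. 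The asymmetry of $\fcf$ (Alice learns the coin first, then may abort) is respected too: the real $A$ knows $s=a\oplus b$ after step~3 and may decline to continue, exactly as $\calA$ inside $\calS$ dictates whether $\calS$ answers $\fcf$ with $\acc$ or $\rej$, on the basis of identical information.

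Combining, conditioned on $G$ the entire transcript (including $\calZ$'s view, $\calA$'s output, and the honest party's output) is identically distributed in the two executions; hence the transcripts differ by at most $\Pr[\neg G]=\negl(n)$ in statistical distance, proving Claim~\ref{claim:pcfa}. The only genuinely delicate step, and the only place a nontrivial property enters, is matching the abort behavior when $\calA$ tries to open $c$ inconsistently; this is exactly where statistical binding is invoked. Everything else is a syntactic correspondence made possible by working in the $\fzk$-hybrid model.
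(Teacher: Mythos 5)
Your proposal is correct and takes essentially the same route as the paper's proof: observe that $b$ is uniform and independent of $\calA$'s view in both worlds (since $s$ is fresh and uniform), so the adversary's and environment's views coincide, and then match the honest party's output. The one substantive difference is that the paper simply asserts the two machines are identical, whereas you correctly note that exact identity can fail on the negligible-probability event that $c$ admits two valid openings (so that the real $B$ could accept $a'\neq a$ in step 5 while $\calS$ aborts), and you patch this with the statistical binding of $\bfcomm$ --- a point the paper's one-line argument glosses over but which is needed for the claim as stated.
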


  \begin{proof} Because $s$ is chosen
   uniformly, $b = a\oplus s$ is also uniformly random. The adversary must behave identical in the real world and the ideal world, and the two machines will look identical from the perspective
    of the environment.

\end{proof}

\mypar{Player $B$ is Corrupted} For any real-world adversary $\calA$ that corrupts $B$, we construct an ideal-world adversary $\calS$.

\begin{pffs}{Simulator $\calS$:}{$B$ is corrupted}{}

Input: $\calA$ as a black box; security parameter $1^n$
        \begin{enumerate}[1.]
            \item $\calS$ initializes $\calA$ with whatever input
        state it receives from the environment.
                \item $\calS$ obtains $s$ from $\fcf$ which is
                chosen uniformly at random $s\gets\{0,1\}^n$.
                \item $\calS$ computes $c = \bfcomm(0^n)$ and sends it to
                $\calA$.
                \item $\calS$ plays the role of $\fzk$ and sends $c$ to $\calA$.               
                \item Obtain $b \in \{0,1\}^n$ from $\calA$.
                \item $\calS$ sends $a = s\oplus b$ to $\calA$.
                \item $\calS$ mimics $\fzk$ and sends 
                $(c,a)$ to $\calA$.
                \item If $\calA$ aborts at any point, $\calS$
                aborts.
                \item $\calS$ outputs whatever $\calA$ outputs.
            \end{enumerate}
\end{pffs}

   \begin{claim} For any $\calA$ corrupting $B$, $M_{\Pi_{\tt CF}^{\fzk}, \calA} \qcii M_{\fcf, \calS}  $.
   \label{claim:cfb}
    \end{claim}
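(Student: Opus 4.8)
The plan is to exhibit $M_{\Pi_{\tt CF}^{\fzk},\calA}$ and $M_{\fcf,\calS}$ as a pair of \emph{simply related} interactive machines in the sense of Section~\ref{sssec:sha}, whose two underlying classical distributions are $\bfcomm(0^n)$ and $\bfcomm(a)$ for a freshly sampled $a$. Interactive indistinguishability will then follow from the lemma of Section~\ref{sssec:sha} that simply related machines are interactively indistinguishable, using that the commitment scheme $\bfcomm$ (Naor's scheme, available under Assumption~\ref{asn:prg}) is quantum computationally hiding. No rewinding of $\calA$ is involved, so this is a length-one simple hybrid argument, matching the template used throughout Section~\ref{sec:q2pc}.

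First I would observe that the only data a corrupted receiver $\calA$ ever sees during $\Pi_{\tt CF}^{\fzk}$ are the strings $c$ (sent by $A$ in Step~1 and relayed again by $\fzk$ in Step~2) and $(c,a)$ (sent in Step~4 and relayed by $\fzk$ in Step~5); in particular $\calA$ never sees the decommitment randomness $r$, since $\fzk$ delivers only the statement to the verifier. Hence both $M_{\Pi_{\tt CF}^{\fzk},\calA}$ and $M_{\fcf,\calS}$ drive $\calA$ through the identical message schedule ``$c,\ c,\ (\text{receive }b),\ a,\ (c,a)$'', and the two machines can differ only in how $c$ is formed and how $a$ is chosen.

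Next I would argue that $M_{\fcf,\calS}$ is literally equivalent to the machine $\tilde M$ that samples $a\gets\{0,1\}^n$ uniformly \emph{first}, sets $c=\bfcomm(0^n)$, runs $\calA$ to obtain $b$, sends $a$ and $(c,a)$, and outputs $\calA$'s output together with the honest-party value $a\oplus b$. Indeed, in $M_{\fcf,\calS}$ the coin $s$ that $\fcf$ hands to the honest dummy party is uniform and independent of everything in $\calA$'s view up to the moment $b$ is produced (that view depends only on $c=\bfcomm(0^n)$ and $\calA$'s own state), so the string $a:=s\oplus b$ revealed by $\calS$ is uniform conditioned on $(c,b)$ -- which is exactly the joint law obtained by sampling $a$ first and then declaring the outcome to be $s:=a\oplus b$. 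After this reordering, $\tilde M$ and $M_{\Pi_{\tt CF}^{\fzk},\calA}$ agree on everything except that $\tilde M$ commits to $0^n$ whereas the real protocol commits to $a$ -- and in $\tilde M$ the value $a$ is already fixed when $c$ is formed. Therefore both machines arise from one common poly-time QIM $M$ -- which takes a classical pair $(a,c)$, runs the schedule above with that $c$, reveals $a$, and emits $(\text{$\calA$'s output},\, a\oplus b)$ -- on inputs drawn from $D_a=\{(a,\bfcomm(0^n)): a\gets\{0,1\}^n\}$ versus $D_b=\{(a,\bfcomm(a)): a\gets\{0,1\}^n\}$, and $D_a\qc D_b$ by the hiding property. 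Applying the simply-related lemma yields $M_{\Pi_{\tt CF}^{\fzk},\calA}\qcii M_{\fcf,\calS}$.

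The only loose end is $\calA$'s sole non-message option, namely refusing to send a well-formed $b$ in Step~3; this is handled exactly as in Claim~\ref{claim:pcfa}, with $\calS$ aborting whenever $\calA$ does, so that the honest party produces no output in either world and the executions stay identically distributed. I expect the one point genuinely requiring care is the equivalence $M_{\fcf,\calS}\equiv\tilde M$ -- verifying that swapping ``sample $s$, derive $a$'' for ``sample $a$, derive $s$'' leaves the joint distribution of $\calA$'s (possibly quantum) output and the honest output unchanged -- since it is precisely what collapses the argument to a single invocation of hiding rather than a two-step hybrid. Everything after that is immediate.
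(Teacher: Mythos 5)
Your proof is correct and follows essentially the same route as the paper's: you introduce the same intermediate machine (commit to $0^n$, reveal a uniformly random $a$), identify it with $M_{\fcf,\calS}$ via the exact ``sample $s$ then set $a=s\oplus b$'' versus ``sample $a$ then set $s=a\oplus b$'' reordering, and then invoke the hiding of $\bfcomm$ through the simply-related-machines lemma. The only cosmetic difference is that you package $(a,c)$ into the two underlying distributions explicitly, which is a slightly more careful way of stating what the paper asserts in one line.
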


    \begin{proof} We define an intermediate machine $M$ which behaves differently from $\mac_{\fcf, \calS}$ merely in that a uniformly random string $a\gets\{0,1\}^n$ is chosen and sent to $\calA$ in $M$, instead of sending $a = s\oplus b$. Then observe that the only difference between $\mac$ and $\mac_{\Pi_{\tt CF}^{\fzk}, \calA}$ appears in the first commitment message: $M$ commits to $0^n$ while $\mac_{\Pi_{\tt CF}^{\fzk}, \calA}$ commits to $a$. Hence we can claim that: 

\begin{itemize}
    \item $\mac_{\fcf, \calS}  \equiv M$ since $s$ is chosen
    uniformly at random by $\fcf$ and hence $s\oplus b$ is still
    uniformly random just as $a$ in $M$. Thus the two
    machines are identical.
    \item $M \qcii \mac_{\Pi_{\tt CF}^{\fzk},\calA}$ because they are simply related by changing the underlying message of a commitment.
\end{itemize}
\end{proof}

\subsection{From $\fcf$ to $\fzk$}
\label{subsec:cf2zk}

We construct a classical \emph{constant-round} protocol for $\fzk$ in the $\fcf$-hybrid model. Our $\Pi_{\tt ZK}^{\fcf}$ protocol uses a standard transformation from a witness-indistinguishable (WI) proof system in
the $\fcf$-hybrid model.  We give a definition for WI against quantum adversaries and show a repetition theorem analogous to the classical setting to amplify the soundness of WI protocols. We also show that Blum's 3-round ZK protocol for Hamiltonian Cycle is in fact quantum-secure WI under suitable assumptions. 

\begin{definition}[Quantum computationally witness-indistinguishable $\wiq$] 
Let $\Pi = \langle P, V\rangle$ be an interactive proof (or argument) system for a
language $L\in \class{NP}$. We say $\Pi$ is {\em
quantum computational witness-indistinguishable} for $R_L$, if for any
polynomial-time QIM $\bfV^*$, any two collections $\{w_x^1\}_{x\in L}$ and $\{w_x^2\}_{x\in L}$ with $w_x^i \in R_L(x), i=1,2$, the two machines $M_1: = \{M_{w_x^1,\bfV^*}\}_{x\in L}$ and $M_2:= \{ M_{w_x^2,\bfV^*}\}_{x\in L}$ are quantum computationally indistinguishable (i.e., $M_1 \qc M_2$). Here $M_{w_x^i,\bfV^*}$ denotes the composed machine of $P$ and $\bfV^*$ on instance $x$, and $P$ uses witness $w_x^i$. 
\label{def:qwi}
\end{definition}

\fnote{machine: collection of machines that work on specific input string or input length}

We know that classically WI is preserved under parallel repetition when the prover is efficient~\cite{FS90}. By a similar argument, one can also show that $\wiq$ protocols remain $\wiq$ under parallel reception. This is useful for reducing the soundness error of a $\wiq$ protocol. Here we only state this property and skip the proof. 

\begin{lemma}[Parallel composition of $\wiq$ protocols] 
Let $L\in \class{NP}$ and suppose that $\langle P,V \rangle$ is a $\wiq$ for $R_L$ and $P$ is polynomial time given a witness. Let $q(\cdot)$ be a polynomial and let $\langle P^q,V^q\rangle$ be machines so that they invoke $\langle P,V \rangle$ $q$-times in parallel. Specifically, on common input $\{x_i: i=1,\ldots,q\}$ and (private) input $\{w_i:i=1,\ldots,q\}$ to $P^q$, the $i^\mathrm{th}$ invocation is $\langle P(w_i),V\rangle(x_i)$.  Then $\langle P^q,V^q\rangle$ is $\wiq$ for the relation $$R_L^q: = \{(\{x_i:i=1,\ldots,q\},\{w_i:i=1,\ldots,q\}): \forall i, (x_i,w_i) \in R_L\}\, .$$
\label{lemma:prwiq}
\end{lemma}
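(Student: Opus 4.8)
The plan is to adapt the classical proof that witness-indistinguishability is closed under parallel composition~\cite{FS90}, replacing the classical distinguisher by a polynomial-time quantum one (possibly with quantum advice) and invoking Definition~\ref{def:qwi} in place of its classical analogue. Fix a polynomial-time quantum verifier $\bfV^*$ attacking the parallel protocol $\langle P^q,V^q\rangle$ (so $\bfV^*$ plays the role of $V^q$), and fix two collections of witness-tuples $\{\vec w_{\vec x}^{\,1}\}_{\vec x\in L^q}$ and $\{\vec w_{\vec x}^{\,2}\}_{\vec x\in L^q}$ for $R_L^q$, writing $\vec w_{\vec x}^{\,b}=(w_{\vec x,1}^{b},\dots,w_{\vec x,q}^{b})$ with $(x_k,w_{\vec x,k}^{b})\in R_L$ for every $k$. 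For $j\in\{0,1,\dots,q\}$ let $H_j=\{H_j(\vec x)\}_{\vec x\in L^q}$ be the composed machine of $P^q$ and $\bfV^*$ on common input $\vec x$ in which $P^q$ uses witness $w_{\vec x,k}^{2}$ in the $k$-th parallel invocation for $k\le j$ and $w_{\vec x,k}^{1}$ for $k>j$. Then $H_0=M_1$ and $H_q=M_2$ in the notation of the lemma, so by the triangle inequality it suffices to show $H_{j-1}\qc H_j$ for every $j\in[q]$; since $q=\mathrm{poly}(n)$, a sum of $q$ negligible gaps is still negligible.

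For a fixed $j$, the machines $H_{j-1}$ and $H_j$ differ only in the witness used inside the $j$-th invocation of $\langle P,V\rangle$, so I would reduce to the single-instance $\wiq$ guarantee. Construct a polynomial-time quantum single-instance verifier $\widetilde{\bfV}$ that, on an instance $x$ playing the role of $x_j$ (and with the remaining coordinates of $\vec x$ and the witnesses $w_{\vec x,k}^{2}$ for $k<j$ and $w_{\vec x,k}^{1}$ for $k>j$ available to it), internally runs $\bfV^*$, answering every parallel invocation $k\ne j$ by itself running the honest prover $P$ on $x_k$ with the appropriate witness, and merely relaying the messages of the $j$-th invocation to and from the external prover. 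Since $P$ runs in polynomial time \emph{given} a witness and $q$ is polynomial, $\widetilde{\bfV}$ is polynomial-time; all the data it needs is classical. By construction, running $\widetilde{\bfV}$ against $P$ using $w_{\vec x,j}^{1}$ reproduces $H_{j-1}(\vec x)$ and against $P$ using $w_{\vec x,j}^{2}$ reproduces $H_j(\vec x)$; hence any $t$-time quantum distinguisher with quantum advice $\sigma$ that separates $H_{j-1}$ from $H_j$ separates the corresponding single-instance machines $\{M_{w_{x,j}^1,\widetilde{\bfV}}\}$ and $\{M_{w_{x,j}^2,\widetilde{\bfV}}\}$ with the same advantage, and Definition~\ref{def:qwi} bounds that advantage by $\negl(n)$.

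The step that needs the most care — and the only genuine obstacle — is the non-uniformity/indexing bookkeeping: the hybrid ranges over instance-tuples $\vec x\in L^q$, whereas single-instance $\wiq$ is phrased for collections indexed by a single $x\in L$, so the ``other'' coordinates of $\vec x$ together with their witnesses must be folded into the (non-uniform) description of $\widetilde{\bfV}$, exactly as in the classical argument; this is precisely where ``$P$ polynomial-time given a witness'' is used, since $\widetilde{\bfV}$ must \emph{use}, not \emph{find}, those witnesses. Everything else is routine: the reduction preserves polynomial running time, the distinguisher's quantum advice is passed through untouched so the argument is compatible with the quantum-advice notion $\qc$, and summing the $q$ negligible hybrid gaps gives that $\langle P^q,V^q\rangle$ is $\wiq$ for $R_L^q$. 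One could try to cast the hybrid steps inside the simple-hybrid-argument framework of Section~\ref{sssec:sha}, but since the change between consecutive hybrids is an interactive sub-execution rather than a pair of efficiently samplable classical distributions, it is cleaner to argue the hybrid step directly.
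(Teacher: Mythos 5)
Your proposal is correct, and it is exactly the argument the paper has in mind: the paper explicitly skips the proof of Lemma~\ref{lemma:prwiq}, remarking only that it follows ``by a similar argument'' to the classical parallel-composition proof of Feige and Shamir, which is precisely the witness-switching hybrid plus the reduction that embeds one session into a single-instance execution while the reduction verifier (using the efficiency of $P$ given a witness) simulates the remaining sessions internally. Your handling of the quantum-specific points --- passing the distinguisher's advice and reference system through untouched, and folding the other instances and witnesses non-uniformly into $\widetilde{\bfV}$ --- is the right bookkeeping and is consistent with how the paper already treats the composed machines $M_{w_x,\bfV^*}$.
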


It is easy to see that quantum ZK implies  $\wiq$. Meanwhile if we use a statistically binding and {\em quantum
computationally hiding} commitment (as the one following
Assumption~\ref{asn:prg}) in Blum's basic (3-round) ZK protocol~\cite{Blum86}, we can show that the resulting protocol, call it HC$_Q$, is quantum ZK using the techniques from~\cite{Wat09}. Therefore, we claim that HC$_Q$ is $\wiq$. Using a polynomial number of parallel repetitions of HC$_Q$, we have
a $\wiq$ protocol for \class{NP} with negligible soundness
error which we call $\Pi_{WI}$ and will use in later constructions.

\fnote{I removed the proof that Blum's protocol is qWI.}

We now construct $\pi^{\fcf}_{\tt ZK}$ that quantum-UC emulates
$\fzk$ in the $\fcf$-hybrid model.

Let $L$ be an \class{NP} language and $R_L$ be the corresponding NP-relation. Let $\pgen$ be a quantum secure pseudorandom generator as in Assumption~\ref{asn:prg}, and let $\mathcal{E} = (\gen,\enc,\dec)$ be an encryption scheme as in Assumption~\ref{asn:pkc}. We define another relation $$R = \{((x_1,x_2,pk,e),\hat w) | ( \hat w = (w,r) \wedge \enc_{pk}(w, r) = e \wedge (x_1,w) \in R_L) {\text{ or }} (\pgen(\hat w)
= x_2) \} \, .$$ 
It is clear that $R$ is an NP-relation, and thus there is a WI proof for $R$. The key idea of constructing $\Pi_{\fzk}^{\fcf}$ is to exploit the outcome of the coin-flipping in some clever way. We will interpret the coins $s$ as two parts $(s_1, s_2)$, where $s_1 = pk$ will be used as a public key $pk$ for $\mathcal{E}$, and $s_2$ will sometimes be an output string of $\pgen$. Our $\Pi_{\tt ZK}^{\fcf}$ has a simple form then: $\bfP$ and $\bfV$ get $s=(s_1, s_2)$, $\bfP$ sends $x$ and $e = \enc_{s_1}(w)$ to $\bfV$, and next they run a WI protocol on $(x_1 = x, x_2 = s_2, pk = s_1, e)$ using witness $w$. Intuitively, if the adversary $\calA$ corrupts the verifier $\bfV$, then $\calS$ can choose a fake $s' = (s_1', s_2')$ where $s_2'$ is generated by $\pgen$ with random seed $r$, i.e., $s_2' = \pgen(r)$. Then it generates an arbitrary ciphertext as $e$ and uses $r$ as a witness in the WI proof, and witness-indistinguishability ensures the $\calA$ can not distinguish from the case where $\bfP$ uses a real witness $w$ of $x$. If the prover is corrupted, $\calS$ can simply generate $(pk,sk)\gets \gen (1^n)$ and assign $pk$ as $s_1'$, while $s_2'$ is still uniformly chosen. Therefore, whenever $\calA$ convinces $\calS$ in the WI protocol, $\calS$ then decrypts (it knows $sk$) $w = \dec_{sk}(e)$. However, there is one subtlety. Namely, $R$ has two witnesses, either a real $w$ (which is what we really ask for) s.t. $(x,w)\in R_L$ or a random seed $r$ s.t. $\pgen(r) = s_2'$. We do not want $\calA$ to be capable of achieving the latter case. This is easy to guarantee though, because we can choose a generator $\pgen$ with sufficient expansion factor, e.g., if $\pgen: \{0,1\}^n\to\{0,1\}^{3n}$. Then given a uniformly random $3n$-bit string $s_2'$, the probability that there is a seed $r \in \{0,1\}^n$ getting mapped to $s_2'$ is negligible. Thus whenever a prover succeeds in WI, it must have proved the statement with respect to $R_L$ rather than with respect to $\pgen$. The formal description of protocol $\Pi_{\tt ZK}^{\fcf}$ follows.

\begin{pffs}
{\bf UC-secure ZKAoK Protocol}{ $\Pi^{\fcf}_{\tt ZK}$}{}

\begin{enumerate}

\item $\bfP$ and  $\bfV$ get $s = (s_1, s_2) \in \{0,1\}^n\times \{0,1\}^{3n}$ from $\fcf$.

\item $\bfP$ sends $x$ and $e =\enc_{s_1}(w,r)$ to $\bfV$.

\item $\bfP$ and $\bfV$ invoke a WI protocol $\Pi_{WI}$ for relation $R$ with input
instance $(x_1 = x, x_2 = s_2, pk =s_1, e)$. $\bfP$ uses $(w,r)$ as a
witness for $(x_1, x_2, pk, e)$.

\item $\bfV$ accepts if it accepts in $\Pi_{WI}$.

\end{enumerate}
\end{pffs}

\begin{lemma}
\label{lemma:za} 
The classical protocol $\Pi^{\fcf}_{\tt ZK}$ \cquc~emulates $\fzk$.
\end{lemma}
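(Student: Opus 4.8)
The plan is to follow the real-world/ideal-world simulation paradigm and treat the two static corruptions separately; since in the two-party setting corrupting a party simply replaces that party's machine by $\calA$, the cases ``$\bfP$ corrupted'' and ``$\bfV$ corrupted'' (together with the trivial no-corruption case) exhaust all static quantum adversaries. (One could instead invoke completeness of the dummy adversary, Theorem~\ref{thm:qucdum}, but a direct construction is no harder here.) In each case the simulator $\calS$ runs $\calA$ as a black box and \emph{never rewinds} it: the whole analysis is a simple hybrid argument in the sense of Definition~\ref{def:sha}, so by Lemma~\ref{lemma:sha} it goes through verbatim against quantum environments once the PRG of \asnref{prg} and the dense encryption scheme of \asnref{pkc} are taken to be quantum-resistant. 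Throughout we use that $\Pi_{WI}$ is $\wiq$ (Definition~\ref{def:qwi}) with negligible soundness error, obtained from HC$_Q$ by parallel repetition (Lemma~\ref{lemma:prwiq}); note $\Pi^{\fcf}_{\tt ZK}$ is constant-round because $\Pi_{WI}$ is.

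\textbf{Prover corrupted.} The simulator generates $(pk,sk)\gets\gen(1^n)$, sets $s_1':=pk$, samples $s_2'\gets\{0,1\}^{3n}$, and hands the ``fake coins'' $s'=(s_1',s_2')$ to $\calA$ as the output of $\fcf$. After receiving $(x,e)$ it plays the honest verifier of $\Pi_{WI}$ on instance $(x,s_2',s_1',e)$; if it accepts, it computes $w:=\dec_{sk}(e)$ and sends $(x,w)$ to $\fzk$; it outputs whatever $\calA$ outputs. To show $\mac_{\Pi^{\fcf}_{\tt ZK},\calA}\qcii\mac_{\fzk,\calS}$, introduce one intermediate machine $M$ identical to $\mac_{\fzk,\calS}$ except that it makes the honest verifier accept $x$ as soon as $\Pi_{WI}$ accepts, without checking that the decrypted $w$ is a genuine witness. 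Then $M\qcii\mac_{\fzk,\calS}$: the two differ only when $\Pi_{WI}$ accepts but $\dec_{sk}(e)$ is not a witness of $x$, and by the statistical soundness of $\Pi_{WI}$ the accepted instance lies in the language of $R$ except with negligible probability; since $s_2'$ is uniform and fewer than $2^n$ of the $2^{3n}$ candidate strings have a length-$n$ $\pgen$-preimage, the $\pgen$-branch of $R$ is unsatisfiable except with probability $2^{-2n}$, so $e$ must encrypt some valid $w$, and then $\dec_{sk}(e)=w$ by correctness of $\calE$. Finally $M$ and $\mac_{\Pi^{\fcf}_{\tt ZK},\calA}$ are \emph{simply related}: they run the same code $M$, the former on the pair (a freshly generated public key $pk$, a uniform $s_2$) --- which is exactly what lets $\calS$ know $sk$ --- the latter on a genuinely uniform $s=(s_1,s_2)$; density of $\calE$ (\asnref{pkc}) makes these two input distributions quantum-computationally indistinguishable, so $M\qcii\mac_{\Pi^{\fcf}_{\tt ZK},\calA}$.

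\textbf{Verifier corrupted.} The simulator waits for $x$ from $\fzk$, samples $s_1'\gets\{0,1\}^n$ and a seed $r\gets\{0,1\}^n$, sets $s_2':=\pgen(r)$, hands $s'=(s_1',s_2')$ to $\calA$, sends $x$ and $e:=\enc_{s_1'}(0^{|w|})$ (a dummy ciphertext of the right length), and runs the $\Pi_{WI}$ prover on $(x,s_2',s_1',e)$ using $r$ as the $\pgen$-witness; it outputs whatever $\calA$ outputs. Relate $\mac_{\fzk,\calS}$ to $\mac_{\Pi^{\fcf}_{\tt ZK},\calA}$ by a simple hybrid argument (Definition~\ref{def:sha}) of length three: $M_0:=\mac_{\fzk,\calS}$; $M_1$ replaces the dummy ciphertext by $e=\enc_{s_1'}(w)$ for a genuine witness $w$ of $x$, encrypted with randomness $r_e$; $M_2$ additionally switches the witness used in $\Pi_{WI}$ from the $\pgen$-seed $r$ to $(w,r_e)$; $M_3$ additionally replaces $s_2'=\pgen(r)$ by a uniform $s_2'$, and $M_3\equiv\mac_{\Pi^{\fcf}_{\tt ZK},\calA}$. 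Here $M_0\qcii M_1$ by simple-relatedness through the plaintext of $e$ (\asnref{pkc}); $M_1\qcii M_2$ by witness-indistinguishability of $\Pi_{WI}$ (Definition~\ref{def:qwi}), using that in $M_1$ both $r$ and $(w,r_e)$ are valid witnesses for the instance; and $M_2\qcii M_3$ by simple-relatedness through switching a pseudorandom string for a uniform one (\asnref{prg}). Composing the two cases shows $\Pi^{\fcf}_{\tt ZK}$ \cquc~emulates $\fzk$.

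\textbf{Main obstacle.} The only delicate point is the soundness step in the corrupted-prover case: one must argue that a (possibly cheating, possibly quantum) prover who convinces the $\Pi_{WI}$ verifier has actually proven the $R_L$-branch of $R$, so that the extracted plaintext is a valid witness. This rests on (i) the statistical soundness of $\Pi_{WI}$ --- which survives against quantum provers precisely because it comes from statistically binding commitments rather than from rewinding --- and (ii) the expansion-factor argument that makes the $\pgen$-branch vacuous for a uniform $s_2'$. This is the single place where the exact shape of the relation $R$ and the $3n$-bit output length of $\pgen$ are used; the rest is a routine chain of simply related machines.
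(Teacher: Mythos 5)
Your proposal is correct and follows essentially the same route as the paper: identical simulators for both corruption cases, the same intermediate machine $M$ (accept-on-$\Pi_{WI}$-success) for the corrupted prover, and the same three-step hybrid $M_0,\dots,M_3$ (switch plaintext, switch WI witness, switch pseudorandom to uniform) for the corrupted verifier. The only difference is that you spell out, inside the soundness step, the expansion-factor argument ruling out the $\pgen$-branch of $R$ and the appeal to correctness of decryption, which the paper relegates to the informal discussion preceding the protocol rather than repeating in the proof of the claim.
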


\begin{proof} We first deal with the case in which the prover is corrupted.

\begin{pffs}{\bf Simulator $\calS$:}{prover is corrupted}{}

Input: adversary $\calA$ ; security parameter $1^n$. 

\begin{enumerate}

\item $\calS$ initializes $\calA$ with whatever input
        state it receives from the environment.
\item $\calS$ internally generates $(pk,sk)\gets \gen(1^n)$ and set $s_1' = pk$. Choose $s_2' \gets \{0,1\}^{3n}$ uniformly at random. Let $s'=(s_1', s_2')$ be the fake coins and it is given to $\calA$.
\item When $\calS$ receives $(x, e)$ from $\calA$, it decrypts $e$ to get $w = \dec_{sk}(e)$.

\item $\calS$ runs $\Pi_{WI}$ with $\calA$ on
input instance $(x, s_2', s_1', e)$ where $\calS$ plays the role
of a verifier. If $\calS$ accepts in $\Pi_{WI}$, it sends $(x,w)$ to
$\fzk$.

\item $\calS$ outputs whatever $\calA$ outputs.

\end{enumerate}

\end{pffs}

\begin{claim}
\label{claim:aok}
For any $\calA$ corrupting the prover, $\mac_{\Pi_{\tt ZK}^{\fcf}, \calA(P)} \qcii \mac_{\fzk, \calS}  $.
\end{claim}

\begin{proof}
Note that in the ideal world, if $(x,w)\notin R_L$, the dummy verifier will reject. Define an intermediate
machine $M$ in which $\fzk$ always sends $x$ to the dummy verifier (i.e. it accepts), and $M$ is identical to $M_{\fzk, \calS}$ otherwise. $M$ and $M_{\fzk,\calS}$ behave differently only when $\Pi_{WI}$ succeeds but somehow $(x,w)\notin R_L$. This however violates the soundness property of $\Pi_{WI}$. Hence $M_{\fzk, \calS} \qcii M$. Then $M$ and $M_{\Pi_{\tt ZK}^{\fcf}, \calA(P)}$ are simply related by switching between a valid public key and a truly
random string. The lemma then follows from Assumption~\ref{asn:pkc}.
\end{proof}
Now we consider the case where $\calA$ corrupts the verifier.

\begin{pffs}{Simulator $\calS$:}{verifier is corrupted}{}

Input: given adversary $\calA$; security parameter $1^n$;

\begin{enumerate}

\item $\calS$ initializes $\calA$ with whatever input
        state it receives from the environment.

\item Wait till it receives $x$ from $\fzk$. Then $\calS$ internally generates $s_1'\gets \{0,1\}^n$. It also generates $r \gets \{0,1\}^n$ and sets $s_2' =
\pgen(r)$. Let $s'=(s_1', s_2')$ be the fake coins and it is given to
$\calA$.

\item $\calS$ sends $x$ and $e = \enc_{s_1'}(0^n)$ to $\calA$ and then invokes $\Pi_{WI}$ with $\calA$ on input instance $(x, s_2', s_1', e)$. $\calS$ uses $r$ as a witness.

\item $\calS$ outputs whatever $\calA$ outputs.

\end{enumerate}

\end{pffs}

\begin{claim}
\label{claim:zk} 
For any $\calA$ corrupting the verifier, $\mac_{\Pi_{\tt ZK}^{\fcf}, \calA(V)} \qcii \mac_{\fzk, \calS}  $.
\end{claim}

\begin{proof}
We define a sequence of indistinguishable
machines as follows.

\begin{pffs}{}{}{}
\begin{itemize}
\item {$M_0 := M_{\fzk, \calS}$}. The ideal-world machine describing
$\bfP, \calS$ and $\fzk$ as a single interactive machine.
\item{$M_1$}: same as $M_0$ except that the ciphertext is changed
from $\enc_{s_1'}(0^n)$ to $e = \enc_{s_1'}(w)$. Here $w$ is a
witness for $x$, i.e., $R_L(x,w) = 1$.
\item{$M_2$}: identical to $M_1$ except that
$M_2$ uses $w$ as a witness in the $\Pi_{WI}$.
\item{$M_3$}: $s_2'$ is also chosen uniformly random, rather than
pseudorandom. Note $M_3$ is exactly the real-world machine
$\mac_{\Pi_{\tt ZK}^{\fcf}, \calA(V)}$.
\end{itemize}

\end{pffs}

Now we can see that:
\begin{itemize}
    \item $M_0 \qcii M_1$ because they simply related by changing the plaintext of encryption $e$.
    \item $M_1 \qcii M_2$ because $\Pi_{WI}$ is $\wiq$. Otherwise we can construct a malicious $\bfV^*$ such that $M_{w_x^1,\bfV^*}$ and $M_{w_x^2,\bfV^*}$ become distinguishable.
    \item $M_2 \qcii M_3$ because they are simply related by
    switching a pseudorandom string to a uniformly random string.
\end{itemize}

Thus Claim~\ref{claim:zk} holds. 

\end{proof} 

We finally get $\Pi_{\tt ZK}^{\fcf}$ quantum-UC emulates $\fzk$.

\end{proof}




\section*{Acknowledgments}

We would like to thank anonymous reviewers for valuable comments. This work was informed by insightful discussions with many
colleagues, notably Michael Ben-Or, Daniel Gottesman, Claude Cr{\'e}peau, Ivan
Damg{\aa}rd and Scott Aaronson. Several of the results were
obtained while A.S. was at the Institute for Pure and Applied
Mathematics (IPAM) at UCLA in the fall of 2006. He gratefully
acknowledges Rafi Ostrovksy and the IPAM staff for making his stay
there pleasant and productive.

\bibliographystyle{plainnat}
\bibliography{zkpoklong}

\appendix

\section{Proof of Modular Composition Theorem}
\label{sec:pmct}

\begin{proof}[Proof of Theorem~\ref{thm:sacomposition}]
Let $\Pi' := \cprot{\Pi}{\Gamma}{\Gamma'}$ be the composed protocol. We show the theorem in the computational setting, and proofs for the statistical and perfect settings are analogous.  Specifically, we need to show that
$$ \forall\calA~\exists \calS: \mac_{\Pi',\calS} \approx_{qc} \mac_{\Pi,\calA} \, . $$

Without loss of generality, we assume that $\Pi$ only calls $\Gamma$ once. The proof will proceed in three steps: 
\begin{enumerate}
	\item From any adversary $\calA$ attacking $\Pi'$, we construct another adversary $\calA_{\Gamma'}$ attacking $\Gamma'$. Notice that $\Gamma'$ is a subroutine in $\Pi'$. Basically $\calA_{\Gamma'}$ consists of the segment of the circuits of $\calA$ during the subroutine call of $\Gamma'$. 
	\item By the assumption that $\Gamma'$ \cqsa~emulates $\Gamma$, we know that $\forall \calA_{\Gamma'} \exists \calA_{\Gamma}: \mac_{\Gamma',\calA_{\Gamma'}} \qc \mac_{\Gamma,\calA_{\Gamma}}$. This gives us an adversary $\calA_{\Gamma}$. 
	\item Finally the adversary $\calS$ will be constructed by ``composing'' the machines $\calA$ and $\calA_{\Gamma}$: when $\Pi$ makes the subroutine call to $\Gamma$, $\calS$ runs  $\calA_{\Gamma}$, otherwise it follows the operations of $\calA$. Then $\mac_{\Pi',\calA} \qc \mac_{\Pi,\calS}$ basically follows from $\mac_{\Gamma',\calA_{\Gamma'}} \qc \mac_{\Gamma,\calA_{\Gamma}}$. 
\end{enumerate} 
Next we give the details. 

\mypar{Step 1 (Constructing $\calA_{\Gamma'}$ from $\calA$)} Adversary $\calA_{\Gamma'}$ represents the segment of $\calA$ during the subroutine $\Gamma'$.
It starts with some state that supposedly represents
the joint state in an execution of $\Pi'$ with $\calA$ right
before the invocation of $\Gamma'$. It then runs $\calA$ till completion of $\Gamma'$. 

\begin{pffs}{Adversary $\calA_{\Gamma'}$}{}{} 

{\sf Input}: adversary $\calA$; security parameter $1^n$; 

\begin{enumerate}
\item $\calA_{\Gamma'}$ initiates $\calA$ with whatever input it receives from the environment. It then runs $\calA$ in the execution of $\Gamma'$.
\item When $\Gamma'$ terminates, $\calA_{\Gamma'}$ outputs the state on all of $\calA$'s registers.

\end{enumerate}
\end{pffs}

\mypar{Step 2 (Simulating $\calA_{\Gamma'}$ by $\calA_{\Gamma}$)} 
 This step is straightforward from the hypothesis that $\Gamma'$ \cqsa~emulates $\Gamma$, which means that $\forall \calA_{\Gamma'} \exists \calA_{\Gamma}: \mac_{\Gamma',\calA_{\Gamma'}} \wqc \mac_{\Gamma,\calA_{\Gamma}}$.

\mypar{Step 3 (Constructing $\calS$ from $\calA_{\Gamma}$ and $\calA$)} The construction is as described above. Here we show that $\mac_{\Pi',\calA} \qc \mac_{\Pi,\calS}$. Suppose for contradiction that there exists a distinguisher $\calZ$ and state\footnote{More precisely there exists a family of states $\{\sigma_n\}_{n\in\bbN}$.} $\sigma_n$ such that: 
$$\left| \Pr[\calZ((\I_{\lin{\reg{R}}}\otimes\mac_{\Pi,\calA})(\sigma_n)) = 1] - \Pr[\calZ((\I_{\lin{\reg{R}}}\otimes\mac_{\Pi',\calS})(\sigma_n)) = 1] \right| \geq \veps(n)\, ,$$
 with $\veps(n) \geq 1/{poly(n)}$.  We show a distinguisher $\tilde\calZ$ and state $\tilde \sigma_n$ such that on input $\tilde \sigma_n$, $\mac_{\Gamma',\calA_{\Gamma'}}$ and $\mac_{\Gamma,\calA_{\Gamma}}$ becomes distinguishable under $\tilde \calZ$.  
\begin{itemize}
	\item Let $\tilde \sigma_n$ be the joint state of executing
          $\Pi'$ in the presence of $\calA$ on input $\sigma_n$ right
          before the invocation of $\Gamma'$. Clearly it is identical
          to the joint state of executing $\Pi$ in the presence of $\calS$ on input $\sigma_n$ right before the invocation of $\Gamma$.
	\item Distinguisher $\tilde\calZ$ runs the circuits of $\calA$ after execution of the subroutine $\Gamma'$ (equivalently the circuits of $\calS$ after execution of the subroutine $\Gamma$) and then runs  $\calZ$. 
\end{itemize}
It is easy to see that 
\begin{align*}
& \tilde\calZ((\I_{\lin{\reg{R}}}\otimes\mac_{\Gamma',\calA_{\Gamma'}})(\tilde \sigma_n)) \equiv \calZ((\I_{\lin{\reg{R}}}\otimes\mac_{\Pi',\calA})(\sigma_n)) \\
\text{and }& \tilde\calZ((\I_{\lin{\reg{R}}}\otimes\mac_{\Gamma,\calA_{\Gamma}})(\tilde\sigma_n)) \equiv \calZ((\I_{\lin{\reg{R}}}\otimes\mac_{\Pi,\calS})(\sigma_n)) \, ,
\end{align*}
where ``$\equiv$'' means identical distributions.  
This implies that $$ \left| \Pr[\tilde\calZ((\I_{\lin{\reg{R}}}\otimes\mac_{\Gamma',\calA_{\Gamma'}})(\tilde\sigma_n)) = 1] - \Pr[\tilde\calZ((\I_{\lin{\reg{R}}}\otimes\mac_{\Gamma,\calA_{\Gamma}})(\tilde\sigma_n)) = 1] \right| \geq \veps(n) \, .$$ This contradicts the assumption that $\mac_{\Gamma',\calA_{\Gamma'}} \qc \mac_{\Gamma,\calA_{\Gamma}}$.

This concludes our proof for the modular composition theorem. 
\end{proof}

\ifnum\final=0

\section{Variants of Quantum Stand-Alone Models. (More Details)}
\label{sec:variants-app}

\subsection{Proof of Proposition~\ref{prop:cqas}}
\label{ssec:pcqas}

\begin{prop*}[Proposition~\ref{prop:cqas} Restated]
Let $({A,B})$ be a secure QAS. Given two superoperators $\calE: \text{L}(X) \to \text{L}(X')$ and $\calO: \text{L}(W\otimes X) \to \text{L}(W \otimes X')$, define $\calO': \text{L}(X) \to \text{L}(X')$ by $$\rho \mapsto Tr_{W'}\left[\calO \left(\I_{\lin{X}} \otimes {\bar A} (\rho \otimes |0\rangle \langle 0|_{W'})\right)\right]\, .$$
If $\calE\otimes\I_{\lin{W}} \approx_{wqc} \calO$, then $\calE \approx_{qc} \calO'$.

\begin{figure}[h!]

\centering 
\ifnum\ijqi=1
{\includegraphics[width=3in]{oprime}} 
\else
{\includegraphics[width=3in]{Figures/oprime}} 
\fi
\caption{Illustration of $\calO$ and $\calO'$} \label{fig:oprime}
\end{figure}
\label{prop:acqas}
\end{prop*}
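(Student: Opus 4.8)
The plan is to bootstrap the weak, reference-free indistinguishability $\calE\otimes\I_{\lin{W}}\wqc\calO$ up to the strong, reference-bearing indistinguishability $\calE\qc\calO'$. Two features of the construction make this possible. First, the register $W$ on which $\calO$ acts may be taken large enough to absorb any reference system that a distinguisher of $\calE$ versus $\calO'$ could keep, so that a reference-free distinguisher of the ``large'' channels $\calE\otimes\I_{\lin{W}}$ and $\calO$ is already at least as powerful as a reference-bearing distinguisher of the ``small'' channels $\calE$ and $\calO'$. Second, the security of the authentication scheme forces $\calO$ --- which by hypothesis is indistinguishable from $\calE\otimes\I_{\lin{W}}$, a map that literally ignores $W$ --- to behave, up to a negligible error, as though it genuinely leaves $W$ alone; this lets us ``peel'' $W$ off $\calO$ and isolate the effective channel $\mathcal{N}:\lin{X}\to\lin{X'}$ that $\calO$ performs on $X$.

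The first thing I would do is recall the (strengthened, ``network'') notion of a secure QAS used in this appendix; the only consequences needed are the following. Write $\auth$ for the operation ``sample $k\gets\mathcal{K}$ and apply $A_k$'', mapping a message register $W'$ to a transmitted register $W$, and $\dauth$ for the matching decoding, mapping $W$ back to a message register $W'$ together with a single-qubit verdict register $V$. Then: \textbf{(completeness)} decoding an authenticated message that has been acted on only \emph{outside} $W$ returns $\kb{1}$ on $V$ and the identity on $W'$; and \textbf{(soundness)} for \emph{every} channel $\Lambda$ acting on $W$ alongside an arbitrary auxiliary register $Y$, the decoded channel $\dauth_W\circ\Lambda\circ\auth_W$ (acting trivially on $Y$) is $\negl(n)$-close in diamond norm to a channel which, conditioned on $V=1$, is the identity on $W'$ and acts on $Y$ by some fixed channel independent of the $W'$-input, and which, conditioned on $V=0$, produces a $W'$-output independent of the input there. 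Recall also that $A,B$ run in polynomial time (Definition~\ref{def:qas}).

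The argument then proceeds in three steps. Put $\calG:=(\dauth_W\otimes\I_{\lin{X'}})\circ\calO\circ(\auth_W\otimes\I_{\lin{X}})$, a channel from $\lin{W'\otimes X}$ to $\lin{W'\otimes V\otimes X'}$. \emph{Step 1} is to show $\kb{1}_V\otimes\I_{\lin{W'}}\otimes\calE\;\wqc\;\calG$: this follows from the hypothesis by pre- and post-composing both sides with $\auth_W$ and $\dauth_W$ --- for each fixed classical key $k$ the maps $A_k,B_k$ are polynomial time, so any time-$t$ distinguisher of the $k$-th instances yields a time-$(t+\mathrm{poly}(n))$ distinguisher of $\calE\otimes\I_{\lin{W}}$ versus $\calO$ (feed it $A_k(\cdot)$, absorb $B_k$ into its own circuit, and carry the classical $k$ as advice), hence the advantage is negligible for every $k$, and averaging over $k$ together with completeness gives the claim. \emph{Step 2} is to show $\calG\;\dmm^{\negl(n)}\;\kb{1}_V\otimes\I_{\lin{W'}}\otimes\mathcal{N}$ for some channel $\mathcal{N}:\lin{X}\to\lin{X'}$: applying soundness to $\calO$ viewed as a channel $\Lambda$ on $W$ with auxiliary register $Y=X$ gives the claimed form up to a rejection branch, while Step~1 composed with the trivial poly-time distinguisher ``measure $V$ and output it'' shows that $\calG$ accepts with probability $1-\negl(n)$ on \emph{every} input, so the rejection branch carries negligible weight and may be dropped. \emph{Step 3} assembles these: $\calO'$ is precisely the channel obtained by feeding a fresh $\kb{0}$ into $\calG$'s message register and discarding that register and $V$, so --- diamond-closeness being preserved under tensoring with identities and under tracing out --- Step~2 yields $\calO'\;\dmm^{\negl(n)}\;\mathcal{N}$, whence $\calO'\qc\mathcal{N}$; and Steps~1 and~2 together yield $\kb{1}_V\otimes\I_{\lin{W'}}\otimes\calE\;\wqc\;\kb{1}_V\otimes\I_{\lin{W'}}\otimes\mathcal{N}$, where the idle $\I_{\lin{W'}}$ factor can carry an arbitrary reference system (polynomially large without loss of generality, since a poly-time distinguisher touches only polynomially many of its qubits), so this upgrades to $\calE\qc\mathcal{N}$. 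The triangle inequality for distinguishing advantages then gives $\calE\qc\calO'$.

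The step I expect to be the main obstacle is Step~2: turning a merely \emph{computational}, reference-free $\approx_{wqc}$ hypothesis into an honest diamond-norm statement that $\calO$ factors through the identity on $W$. This is exactly the point at which the \emph{strengthened} QAS soundness is required --- ordinary quantum-authentication security is stated against information-theoretic attackers acting on the transmitted system only, whereas here the ``attacker'' $\calO$ is only computationally undetectable and drags along an auxiliary register $X$ that may be entangled with an outside reference --- and the delicate part is formulating a definition strong enough to support this while still being satisfied by the constructions of~\cite{BCGST02}. The rest is bookkeeping: the resampling-and-averaging of the classical key in Step~1, and choosing the QAS message length large enough in Step~3. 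The statistical analogue (Proposition~\ref{prop:sqas}) should follow from the same outline with $\approx_{wqc}$ and $\approx_{qc}$ replaced throughout by trace distance and diamond distance.
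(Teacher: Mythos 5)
Your proposal is correct and follows essentially the same route as the paper's proof: both sandwich $\calO$ between the authentication $\auth$ and the decoding $\dauth$, use completeness to recover $\calE$ on one side and the strengthened entangled-input soundness to force $\calO$ to act as the identity on the authenticated register on the other, transfer between the two sides via the $\wqc$ hypothesis with the classical key hardwired and averaged, and establish that the acceptance probability is $1-\negl(n)$ on every input by the same efficient test. The differences are purely organizational --- you argue at the channel level via an intermediate effective channel $\mathcal{N}$ and route the final reference system through the idle message register, whereas the paper argues pointwise on states $|\psi\rangle \in X\otimes W$ (its Claims~\ref{claim:cqas1}--\ref{claim:cqas3}) and uses $W$ itself as the reference --- and you correctly flag the dimension caveat (the message register must be large enough to carry the distinguisher's reference) that the paper leaves implicit.
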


\mypar{More on Quantum Authentication Schemes} We now define formally
what a {\em secure} QAS is and prove
Prop.~\ref{prop:cqas}. In~\citet{BCGST02}, security is only defined with respect to a closed message system. Here we extend their definition to capturing the entangled input case as well, i.e., the message being authenticated is entangled with another reference system. In addition, we define an operation $cE$, which we call {\em conditional-erasure}. Given a state $\rho \in \density{{X}\otimes{Y}}$ where ${X}$ is a qubit system, $cE$ first measures ${X}$ in the computational basis. If the outcome is $1$, $cE$ leaves ${Y}$ untouched; otherwise, ${Y}$ is replaced with $|0\rangle$.

\begin{figure}[h!]
\centering 
\ifnum\ijqi=1
{\includegraphics[width=4in]{qas-sound}} 
\else
{\includegraphics[width=4in]{Figures/qas-sound}} 
\fi
\caption{Soundness of QAS. Here $\rho_1$ denotes the final state in a
  real execution of the QAS,  while $\rho_2$ denotes the state
  obtained when a dummy message $|0\rangle$ is authenticated.} \label{fig:qas-s}
\end{figure}

\begin{definition}(Secure QAS with entangled input) A QAS $(A,B)$ is secure with error $\varepsilon$ for a state
$|\phi\rangle \in {R\otimes M}$ if it satisfies:
\begin{itemize}
    \item {\em Completeness}. For all keys $ k\in\mathcal{K}$: $(B_k\otimes\I_{\lin{R}}) (A_k\otimes\I_{\lin{R}})(|\phi\rangle\langle\phi|) = |\phi\rangle\langle\phi| \otimes |ACC \rangle\langle ACC|$.
    \item {\em Soundness}. For all superoperators $\calO$ acting on $M\otimes R$, define (see Figure~\ref{fig:qas-s}):
        $$\rho_1 := cE( \bar{B}\otimes \I_{\lin{R}}) {\calO} (\bar{A} \otimes \I_{\lin{R}}) (|\phi\rangle \langle \phi|_{MR})\, ,$$
        $$ \rho_2 := Tr_{M'}[(\I_{\lin{M'}} \otimes cE)  (\bar{B}\otimes \I_{\lin{M\otimes R}}) ({\calO} \otimes \I_{\lin{M}})( \bar{A}\otimes \I_{\lin{M\otimes R}})(|0\rangle \langle 0|_{M'}\otimes|\phi\rangle \langle \phi|_{MR})] \, ,$$ where $M'$ is an auxiliary system with the same dimension as $M$ and is initialized with $|0\rangle$.
        
        Soundness requires that the trace distance between $\rho_1$ and $\rho_2$ is at most $\varepsilon$: $\td(\rho_1, \rho_2) \leq \varepsilon$.
\end{itemize}
A QAS is $\varepsilon$-secure if it is secure with error
$\varepsilon$ for all state $|\phi\rangle$\label{def:sqas}. When $\varepsilon$ is a negligible function in the security parameter $n$, we simply call the scheme a secure QAS.
\end{definition}

The authentication scheme of~\citet{BCGST02} satisfies this stronger notion. This follows from the work of \citet{LHM11}, which showed that this scheme is quantum-UC secure. Our soundness definition is essentially equivalent to the formulation of quantum-UC secure authentication. But we restrict the simulator to have a particular form (implicit in the operation that defines $\rho_2$). It is easy to verify that the proof in~\cite{LHM11} goes through with the simulator of the desired form.  

\begin{proof}[Proof of Prop.~\ref{prop:cqas}]
Give any $|\psi\rangle \in X\otimes W$, let
$$\rho = (\calE\otimes \I_{\lin{W}} ) (|\psi\rangle \langle \psi|), \quad \sigma = ( \calO'\otimes \I_{\lin{W}} )(|\psi\rangle \langle \psi|) \ .$$
We need to show that $ \rho \approx_{wqc} \sigma $. Towards this goal, define
\begin{eqnarray*}
\tau_1' :=  (\calE \otimes \I_{\lin{W}}) ( \I_{\lin{X}} \otimes \bar{A}) (|\psi\rangle \langle \psi|), &&\quad \tau_1 : =( \I_{\lin{X'}} \otimes \bar{B})(\tau_1')\, ;\\
\tau_2' := \calO (\I_{\lin{X}} \otimes \bar{A}) (|\psi\rangle \langle \psi|), &&\quad \tau_2 : = (\I_{\lin{X'}} \otimes \bar{B}) (\tau_2') \, .
\end{eqnarray*}
By assumption that  $\calE\otimes\I_{\lin{W}} \approx_{wqc} \calO$, we know that $ \tau_1' \wqc \tau_2' \quad (*)$. 
We then show $\rho \approx_{wqc} \sigma$ via a sequence of claims below. 

\begin{claim}
$\rho = Tr_V(\tau_1)$ 
\label{claim:cqas1}
\end{claim}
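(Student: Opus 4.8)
\textbf{Proof plan for Proposition~\ref{prop:cqas} (and its claims).}

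The plan is to establish the desired conclusion $\calE \approx_{qc} \calO'$ through the chain of auxiliary states $\tau_1, \tau_2$ (and their ``pre-decoding'' versions $\tau_1', \tau_2'$) already introduced in the excerpt, combining the \emph{completeness} and \emph{soundness} of the QAS with the hypothesis $\calE\otimes\I_{\lin{W}} \wqc \calO$. The overall structure is: (i) identify $\rho$ as the ``clean message'' one recovers from $\tau_1$ after decoding and tracing out the verification qubit; (ii) use soundness to relate $\tau_2$ (what $\calO$ does after encoding, then decoding) to a state that touches the message system only trivially, which exactly reconstructs $\sigma = (\calO'\otimes\I_{\lin W})(\kb\psi)$ up to the conditional-erasure bookkeeping; (iii) transport the indistinguishability $(*)$: $\tau_1' \wqc \tau_2'$ through the (efficient, fixed) decoding map $\bar B$ and the partial trace to get $\tau_1 \wqc \tau_2$; and (iv) conclude that $\rho$ and $\sigma$ are close to a poly-time distinguisher \emph{with quantum advice} — here one must be slightly careful, because $\bar A$ samples a key, so the reference system $W$ is now correlated with the key and we want to upgrade from $\wqc$ to $\qc$. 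This is where Claim~\ref{claim:cqas1} ($\rho = \tr_V(\tau_1)$), and two further claims (presumably $\sigma \approx \tr_V(\tau_2)$ up to negligible error via soundness, and the reduction from a $\qc$-distinguisher against $(\calE,\calO')$ to a $\wqc$-distinguisher against $(\calE\otimes\I_W,\calO)$) will be proved in sequence.

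First I would prove Claim~\ref{claim:cqas1}: by the \emph{completeness} of the QAS, $(\bar B \otimes \I)(\bar A \otimes \I) = \I \otimes \kb{ACC}$, so $\tau_1 = (\I_{\lin{X'}}\otimes \bar B)(\calE\otimes\I_{\lin W})(\I_{\lin X}\otimes \bar A)(\kb\psi)$ equals $(\calE\otimes\I_{\lin W})(\kb\psi)\otimes\kb{ACC}_V = \rho\otimes\kb{ACC}_V$, since $\calE$ acts on $X$ and $\bar A,\bar B$ act (invertibly, on average over the key, in the no-tampering case) on the systems that become $X$'s authenticated copy — here one uses that completeness holds key-by-key, hence also after averaging over $k\gets\calK$. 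Tracing out $V$ gives $\tr_V(\tau_1) = \rho$. Next, for the $\tau_2$ side I would invoke the \emph{soundness} definition (Definition~\ref{def:sqas}) with the superoperator $\calO$ playing the role of the adversary's tampering: soundness says that $cE(\bar B\otimes\I)\calO(\bar A\otimes\I)(\kb\phi)$ is within trace distance $\negl(n)$ of the state obtained by letting $\calO$ act, then replacing the message register with the freshly-authenticated-and-decoded dummy — which is exactly the state whose message part is $\calO'(\cdot)$ tensored with the untouched reference, up to the $cE$ bookkeeping on $V$. Matching up notation (our $W$ here plays the role of $R$ in Definition~\ref{def:sqas}, and $\calO'$ was \emph{defined} precisely as the trace-out of $\calO$ applied to an authenticated input), this yields $\tr_V(\tau_2) \approx_{\td,\negl} \sigma$, or more precisely the conditional-erasure version of it; I'd state this as the next claim.

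Then I would transport $(*)$: since $\tau_1 = (\I\otimes\bar B)\tau_1'$ and $\tau_2 = (\I\otimes\bar B)\tau_2'$ with $\bar B$ a fixed \emph{polynomial-time} operation, and since $\wqc$-indistinguishability is closed under post-processing by poly-time maps (a distinguisher for $\tau_1,\tau_2$ composed with $\bar B$ is a distinguisher for $\tau_1',\tau_2'$ of comparable size), we get $\tau_1 \wqc \tau_2$; likewise the partial trace $\tr_V$ is free, so $\rho = \tr_V(\tau_1) \wqc \tr_V(\tau_2) \approx_{\td,\negl}\sigma$, hence $\rho\wqc\sigma$. The last — and I expect main — obstacle is the upgrade from $\wqc$ to $\qc$ in the \emph{conclusion}: we want to show $\calE\approx_{qc}\calO'$, i.e. no poly-time distinguisher $\calZ$ succeeds even when given arbitrary quantum advice $\xi_n$ in an extra register. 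The trick is that such a $\calZ$ with advice $\xi_n$ on register $\reg R$ attacking $(\calE,\calO')$ can be simulated by a \emph{weak} distinguisher attacking $(\calE\otimes\I_{\lin W},\calO)$: hard-wire $\xi_n$ into the state prepared on $W$ alongside the $\kb 0_{W'}$ that $\bar A$ authenticates — wait, more carefully, one folds $\reg R$ into the ``$W$'' role and lets the weak distinguisher take as its (classical-free, i.e. $\sigma_n$-)input a purification; since the hypothesis $\calE\otimes\I_{\lin W}\wqc\calO$ quantifies over \emph{all} input states on the combined system, it already absorbs arbitrary entanglement between the authenticated message and the side register. So the quantum advice in the conclusion is ``for free'' because the hypothesis is stated with a reference system $W$ of arbitrary dimension. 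I would spell this reduction out as the final claim and then assemble: $\rho \wqc \sigma$ for every input $\kb\psi$ on $X\otimes W$ with $W$ carrying the advice, which is literally the statement $\calE\approx_{qc}\calO'$. The delicate points to get right are (a) the conditional-erasure operation $cE$ only affects the verification qubit / a negligible-probability branch, so it can be inserted or removed at the cost of $\negl(n)$ in trace distance, and (b) bookkeeping that the dimensions of $W$, $W'$, $M$, $M'$ line up so that Definition~\ref{def:sqas} applies verbatim.
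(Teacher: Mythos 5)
Your argument for Claim~\ref{claim:cqas1} is correct and is essentially the paper's own proof: commute $\calE$ (acting on $\reg{X}$) past $\bar{A},\bar{B}$ (acting on the disjoint register $\reg{W}$), then apply completeness of the QAS key-by-key (hence on average over $k\gets\mathcal{K}$) to get $\tau_1=\rho\otimes\kb{1}_V$ and trace out $V$. One cosmetic slip: $\bar{A},\bar{B}$ authenticate the reference system $\reg{W}$, not ``$X$'s authenticated copy,'' but this does not affect the argument.
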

\begin{proof}[Proof of Claim~\ref{claim:cqas1}]
This is actually the completeness condition of QAS in disguise.
\begin{eqnarray*}
\tau_1 & = &  (\I_{\lin{X}} \otimes \bar{B})( \calE \otimes \I_{\lin{W}} )( \I_{\lin{X}} \otimes \bar{A}) (|\psi\rangle \langle \psi|)\\
& = & (\calE \otimes \I_{\lin{W}}) (\I_{\lin{X}} \otimes \bar{B})(\I_{\lin{X}} \otimes \bar{A}) (|\psi\rangle \langle \psi|) \quad (\text{Commutativity})\\
& = & (\I_{\lin{W}} \otimes \calE)(|\psi\rangle \langle \psi|) \otimes |1\rangle \langle 1|_V \quad (\text{Completeness of QAS})\\
& = & \rho \otimes |1\rangle \langle 1|_V \, .
\end{eqnarray*}
\end{proof} 

\begin{claim}
Let $P^{acc}:= |1\rangle \langle 1|_V \otimes \I_{\lin{W\otimes X'}}$ and $p:= Tr(P^{acc}\tau_2) $. Then $$ \td(\sigma, Tr_{V}(\tau_2)) \leq 2\varepsilon(n) + 2 (1 - p) \ , $$ where the negligible function $\varepsilon(n)$ is the soundness error of the QAS.
\label{claim:cqas2}
\end{claim}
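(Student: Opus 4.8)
The plan is to recognise $\sigma$ and $cE(\tau_2)$ as precisely the two states $\rho_2$ and $\rho_1$ appearing in the soundness requirement of the QAS (Definition~\ref{def:sqas}, cf.\ Figure~\ref{fig:qas-s}), and then to account for the loss incurred by tracing out $V$ rather than applying $cE$.

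\emph{Step 1 (invoking soundness).} Instantiate Definition~\ref{def:sqas} with message register $M := W$, reference register $R := X$, input state $|\phi\rangle := |\psi\rangle$, and adversary superoperator equal to the given $\calO$ (read as acting on the transmitted register obtained from $W$ together with $X$, and mapping $X$ to $X'$). Unwinding the definitions, $\rho_1 = cE(\tau_2)$: up to relabelling of tensor factors, the state produced just before the final $cE$ in the definition of $\rho_1$ is exactly $\tau_2 = (\I_{\lin{X'}} \otimes \bar{B})\,\calO\,(\I_{\lin{X}} \otimes \bar{A})(\kb{\psi})$, and $cE$ is the map that measures $V$ and, on rejection, resets the decoded message register. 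Similarly $\rho_2 = \sigma$: in $\rho_2$ one authenticates a fresh dummy $\kb{0}_{M'}$, runs $\calO$ on it together with $X$, carries the $W$-part of $|\psi\rangle$ along untouched, decodes, conditionally erases the decoded dummy $M'$ and then traces $M'$ out; conditional erasure of a register that is immediately discarded is vacuous, so $\rho_2 = \tr_{M'}[(\bar{B}\otimes\I)(\calO\otimes\I)(\bar{A}\otimes\I)(\kb{0}_{M'} \otimes \kb{\psi})] = (\calO'\otimes\I_{\lin{W}})(\kb{\psi}) = \sigma$. Hence soundness gives $\td(\sigma, cE(\tau_2)) = \td(\rho_2,\rho_1) \le \varepsilon(n)$.

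\emph{Step 2 (from $cE$ to $\tr_V$, then combine).} Split $\tau_2$ by the computational-basis value of $V$: set $\tau_2^{\acc} := \tr_V[(\kb{1}_V\otimes\I)\,\tau_2\,(\kb{1}_V\otimes\I)]$ and $\tau_2^{\rej} := \tr_V[(\kb{0}_V\otimes\I)\,\tau_2\,(\kb{0}_V\otimes\I)]$, positive operators of trace $p$ and $1-p$ respectively (note $\tr(\tau_2^{\acc}) = \tr(P^{acc}\tau_2) = p$). Since the $V$-off-diagonal blocks of $\tau_2$ vanish under $\tr_V$, we have $\tr_V(\tau_2) = \tau_2^{\acc} + \tau_2^{\rej}$, while by the definition of conditional erasure $cE(\tau_2) = \tau_2^{\acc} + \tr_W[\tau_2^{\rej}]\otimes\kb{0}_W$. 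As trace distance is unchanged by adding the common summand $\tau_2^{\acc}$, $\td(cE(\tau_2),\tr_V(\tau_2)) = \td(\tr_W[\tau_2^{\rej}]\otimes\kb{0}_W,\ \tau_2^{\rej}) \le \tfrac12(\|\tr_W[\tau_2^{\rej}]\otimes\kb{0}_W\|_1 + \|\tau_2^{\rej}\|_1) = 1-p$. The triangle inequality then gives $\td(\sigma,\tr_V(\tau_2)) \le \td(\sigma, cE(\tau_2)) + \td(cE(\tau_2),\tr_V(\tau_2)) \le \varepsilon(n) + (1-p) \le 2\varepsilon(n) + 2(1-p)$, as claimed; the factor $2$ is slack.

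The only delicate part is Step~1: in $\tau_2$ the register $W$ of $|\psi\rangle$ is the message that is actually authenticated and fed to $\calO$, whereas in $\sigma$ (that is, inside $\calO'$) it is a fresh $\kb{0}$ that gets authenticated, with the $W$-part of $|\psi\rangle$ merely carried along. Reconciling these two pictures is exactly what the soundness definition is designed to do: it equates, up to error $\varepsilon$ and up to conditional erasure keyed on $V$, ``$\calO$ applied to the authenticated real message'' with ``$\calO$ applied to an authenticated dummy, real message set aside''. So the real content of the argument is checking that $\calO'$, $\tau_2$ and $cE$ line up literally with $\rho_2$, $\rho_1$ and the $cE$ of Definition~\ref{def:sqas}; after that the estimate above is routine. (One should also note, from the definition of conditional erasure, that $V$ is measured destructively and the message register reset, so that $cE(\tau_2)$ and $\tr_V(\tau_2)$ indeed live on the same space.)
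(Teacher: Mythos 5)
Your overall strategy is the same as the paper's (invoke QAS soundness to compare $\tau_2$ with the dummy-authenticated evolution, then pay $O(1-p)$ for the accept/reject bookkeeping), and your $\rho_1 = cE(\tau_2)$ identification and the $cE$-versus-$\tr_V$ estimate in Step~2 are fine. The gap is in Step~1: the identification $\rho_2 = \sigma$ is wrong, because you have misread which register the conditional erasure acts on. In Definition~\ref{def:sqas} the map applied before tracing out $M'$ is $\I_{\lin{M'}} \otimes cE$, i.e.\ identity on the decoded dummy $M'$ and conditional erasure (keyed on the verdict $V$) of the \emph{real} message-and-reference system $M\otimes R$ (here $W\otimes X'$), with the measured $V$ retained. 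So $\rho_2$ is not $\sigma$; it equals $q\,\kb{1}_V\otimes\sigma^{acc} + (1-q)\,\kb{0}_V\otimes\kb{0}_{W\otimes X'}$ (the paper's $\eta'$), which agrees with $\sigma = q\,\sigma^{acc}+(1-q)\,\sigma^{rej}$ only on the accept branch. (Your reading is also internally inconsistent with your own claim that $\rho_1=cE(\tau_2)$ carries nontrivial erasure: if $cE$ only touched the discarded dummy, soundness would be comparing states on different spaces.)

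Concretely, your triangle inequality is missing the leg $\td\bigl(\tr_V(\rho_2),\,\sigma\bigr) \le 1-q \le 1-p+\varepsilon(n)$ (using $|p-q|\le\varepsilon$, which itself follows from soundness). Adding this leg to your chain recovers exactly the paper's proof and yields $(1-p)+\varepsilon+(1-p+\varepsilon)=2\varepsilon(n)+2(1-p)$. Your closing remark that ``the factor $2$ is slack'' is the symptom of the error: the second $(1-p)$ and second $\varepsilon$ are not slack, they absorb precisely the reject-branch discrepancy between $\tr_V(\rho_2)$ and $\sigma$ that your reading of the definition erased for free.
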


\begin{proof}[Proof of Claim~\ref{claim:cqas2}]
Write
\begin{eqnarray*}
\tau_2 &=& p |1\rangle \langle 1|_V \otimes \tau_2^{acc} + (1 - p) |0 \rangle \langle 0|_V \otimes \tau_2^{rej} \, ,\\
\text{and } \sigma & = & q \sigma^{acc} + (1 - q) \sigma^{rej} \, .
\end{eqnarray*}

As one may have already expected, this claim relies heavily on the soundness property of secure-QAS, we make the connection precise by defining
\begin{eqnarray*}
\eta &:=& p |1\rangle \langle 1|_V \otimes \tau_2^{acc} + (1 - p) |0 \rangle \langle 0|_V \otimes |0 \rangle \langle 0|_{X'\otimes W} \, ,\\
\text{and } \eta' & := & q |1\rangle \langle 1|_V \otimes \sigma^{acc} + (1 - q) |0 \rangle \langle 0|_V \otimes |0 \rangle \langle 0|_{X'\otimes W} \, .
\end{eqnarray*}
Then by soundness of QAS, we have that $\td(\eta, \eta') \leq \varepsilon(n) \quad (1)$. This in particular implies that $|p - q| \leq \varepsilon(n)$ and hence $q \geq p - \veps(n)$.  
Moreover 
$$\td(\tau_2, \eta) = (1-p)\cdot \|\tau_2^{rej} - |0 \rangle \langle 0|_{X'\otimes W} \|_1 \leq 1 - p \, , \quad (2) 
$$
$$ \text{and } \td(\sigma, Tr_{V}(\eta')) = (1-q) \cdot \|\sigma^{rej} - |0 \rangle \langle 0|_{X'\otimes W} \|_1 \leq 1 - q \leq 1 - p + \varepsilon(n) \, . \quad (3) $$
Combine (1) - (3), we conclude, by the triangle inequality, that 
\begin{align*}
& \td(\sigma, Tr_{V} (\tau_2)) \\
\leq & \td(\tau_2, \eta) + \td(\eta, \eta') + \td(Tr_{V} (\eta'), \sigma) \\
\leq & (1- p) + \veps(n) + (1 - p + \veps(n))  \\
= & 2\veps(n) + 2(1 - p) \, .
\end{align*}
\end{proof} 

\begin{claim}
$\tau_1\approx_{wqc} \tau_2$. 
\label{claim:cqas3}
\end{claim}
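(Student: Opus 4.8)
The claim $\tau_1 \wqc \tau_2$ is the easy link in the chain; the real work in Proposition~\ref{prop:cqas} lives in Claims~\ref{claim:cqas1} and~\ref{claim:cqas2}, in the trace-distance bookkeeping that combines them, and in the final upgrade from weak to full computational indistinguishability. The plan is simply to push the already-established relation $(*)$, namely $\tau_1' \wqc \tau_2'$, through the decoding operation $\bar{B}$, using that weak computational indistinguishability is closed under polynomial-time post-processing.

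Concretely, recall why $(*)$ holds: writing $\xi := (\I_{\lin{X}} \otimes \bar{A})(\kb{\psi})$, we have $\tau_1' = (\calE \otimes \I_{\lin{W}})(\xi)$ and $\tau_2' = \calO(\xi)$, and $\xi$ is a legitimate input state — the key register produced by $\bar{A}$ is just carried along as a passive part of $\xi$, acted on as the identity by both $\calE \otimes \I_{\lin{W}}$ and $\calO$ — so the hypothesis $\calE \otimes \I_{\lin{W}} \wqc \calO$ of the proposition yields $\tau_1' \wqc \tau_2'$. Now observe that $\tau_i = (\I_{\lin{X'}} \otimes \bar{B})(\tau_i')$ for $i = 1,2$, where $\bar{B}$ reads the key register that $\bar{A}$ wrote, applies the corresponding $B_k$ to the transmitted register, and outputs the recovered message register together with the accept/reject flag $\reg{V}$ (discarding the key). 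Since the QAS algorithms are polynomial time, $\I_{\lin{X'}} \otimes \bar{B}$ is a poly-time quantum operation.

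The final step is the standard reduction: if some $t(n)$-time QTM $\calZ$ distinguished $\tau_1$ from $\tau_2$ with advantage $\delta(n)$, then $\calZ' := \calZ \circ (\I_{\lin{X'}} \otimes \bar{B})$ would be a poly-time QTM distinguishing $\tau_1'$ from $\tau_2'$ with advantage exactly $\delta(n)$, which by $(*)$ is negligible. Hence $\tau_1 \wqc \tau_2$, which is the claim. (This is the statistically analogous fact that one would also invoke, with $\td$ in place of $\wqc$, in the proof of Proposition~\ref{prop:sqas}.)

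The only point requiring care — and it is minor — is the bookkeeping for the key shared between $\bar{A}$ and $\bar{B}$: one must fix the convention that the intermediate states $\tau_1', \tau_2'$ retain the key as an internal register, so that $\bar{B}$ is genuinely one fixed operation applied identically to both, and one should note that exposing this register to the distinguisher in the hypothetical reduction cannot contradict $(*)$, since $\calE \otimes \I_{\lin{W}}$ and $\calO$ act as the identity on it and it therefore behaves as just another piece of reference system already quantified over in the definition of $\wqc$.
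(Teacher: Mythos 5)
Your proof is correct and takes essentially the same route as the paper's: both argue by reduction, post-composing a hypothetical distinguisher for $\tau_1,\tau_2$ with the poly-time decoding map $\I_{\lin{X'}}\otimes\bar{B}$ to obtain a distinguisher for $\tau_1',\tau_2'$, contradicting $(*)$. The only difference is the bookkeeping for the authentication key --- the paper hardwires the key $k$ into the constructed distinguisher $\calZ'$, whereas you carry it along in a register of the intermediate states; this is a cosmetic variation of the same argument.
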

\begin{proof}[Proof of Claim~\ref{claim:cqas3}]
We show that for any poly-time $\calZ$, $ \delta(n):= \left|\Pr(\calZ(\tau_1) = 1) - \Pr(\calZ(\tau_2) = 1)\right| $ must be negligible. Suppose for contradiction that $\delta(n)  > 1/{\kappa(n)}$ for some polynomial $\kappa(n)$. We design $\calZ'$  that distinguishes $\tau_1'$ and $\tau_2'$, and this contradicts (*). Distinguisher $\calZ'$ will have the random classical key $k$ that $\mathcal{A}$ used in authentication hardwired into its circuit. Then upon receiving either $\tau_1'$ or $\tau_2'$, it first apply $B_k$--the verification operation of QAS on system $\calW$, and pass the resulting state to $\calZ$. Output whatever $\calZ$ outputs. It is easy to see that
\begin{eqnarray*}
&& |\Pr(\calZ'(\tau_1')=1) - \Pr(\calZ'(\tau_2') =1)|\\
& = & |\Pr(\calZ ((\I_{X'}\otimes B_k)(\tau_1'))=1) - \Pr(\calZ((\I_{X'}\otimes B_k)(\tau_2')) =1) |\\
& = & |\Pr(\calZ(\tau_1)=1) - \Pr(\calZ(\tau_2) =1)| > 1/{\kappa(n)} \, .
\end{eqnarray*}
\end{proof} 

Immediately we obtain as a corollary of Claim~\ref{claim:cqas3} that $p \geq 1 - \delta(n)$. This is because $Tr(P^{acc}\tau_1) = 1,$ and $Tr(P^{acc}\tau_2) = p$. If we consider a QTM $\calZ$ that measures $V$ in computational basis to distinguish $\tau_1$ and $\tau_2$. Then Claim~\ref{claim:cqas3} tells us that
$$1 - p_{acc} = | Tr(P^{acc}\tau_1) - Tr(P^{acc}\tau_2)| = |\Pr(\calZ(\tau_1) = 1) - \Pr(\calZ( \tau_2) = 1)|  \leq \delta(n) \ .$$

Combining the above claims, we conclude that $\forall |\psi\rangle \in X\otimes W$ and for all poly-time QTM $\calZ$, 
\begin{align*}
& |\Pr(\calZ(\rho) = 1) - \Pr(\calZ( \sigma) = 1)| \\
= & \left| \Pr(\calZ(\rho) = 1) - \Pr(\calZ(Tr_V(\tau_1)) = 1) \right| \\
+ & \left|\Pr(\calZ(Tr_V(\tau_1) ) = 1) - \Pr(\calZ(Tr_V(\tau_2) ) =1)\right| \\
+ & \left|\Pr(\calZ(Tr_V(\tau_2) ) =1) - \Pr(\calZ( \sigma) = 1) \right |\\
\leq & 0 + \delta(n)+ 2 \veps(n)+ 2(1 - p_{acc}) \\
\leq & 2(\delta(n) + \veps(n)) = \delta'(n) \, , 
\end{align*}
and $\delta'(n)$ is still negligible. This means that $\rho \approx_{wqc} \sigma$ and as a result $\calE \approx_{qc} \calO'$. 
\end{proof} 

\subsection{$\calZ_1$ Does Not Take Quantum Advice}
\label{subsec:woadv}

When $\calZ_1$ takes no quantum advice, namely the inputs to the players are generated by a poly-time QTM, the situation becomes less clear.

\mypar{State Passing still Makes No Difference} Since we can think of $\calZ_1$'s that do not take quantum advice as a subclass of $\calZ_1$'s that do take quantum advice (i.e., empty advice), an analogous argument as in Lemma~\ref{lemma:eqsp2} shows that

\begin{theorem} Without
  quantum advice to $\calZ_2$, state passing still does not change the
  model. That is,
$$\model{\abar}{\bbar}{\cbar} \equiv \model{\abar}{\bbar}{\c}  \text{ and } \model{\abar}{\b}{\c} \equiv \model{\abar}{\b}{\cbar} \ . $$
\label{thm:nz1ps}
\end{theorem}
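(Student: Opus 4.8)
The plan is to mirror the proof of Lemma~\ref{lemma:eqsp2} (equivalently Proposition~\ref{prop:aux1_sp}), which established the analogous collapse when $\calZ_1$ receives quantum advice, and to observe that that argument never actually exploits the quantum nature of $\calZ_1$'s advice. For each of the two asserted equivalences one direction is free: environments that do not pass state form a subclass of those that do (take the passed register $\reg{W}$ to be empty), so emulation in the state-passing model implies emulation in the no-passing model, exactly as in Lemma~\ref{lemma:eqsp1}. The real content is the converse, which I would prove by routing the passed register through the adversary with the help of a secure quantum authentication scheme (QAS), which exists unconditionally.

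Take the first equivalence and set $\calM:=\model{\abar}{\bbar}{\c}$, $\calM':=\model{\abar}{\bbar}{\cbar}$; the goal is that emulation in $\calM'$ implies emulation in $\calM$. Given an adversary $\calA$ acting in $\calM$, with $\reg{W}$ the register $\calZ_1$ passes to $\calZ_2$, define $\calA':=\I_{\lin{\reg{W}'}}\otimes\calA$ acting in $\calM'$, so $\mac_{\Pi,\calA'}\equiv\mac_{\Pi,\calA}\otimes\I_{\lin{\reg{W}'}}$. The hypothesis gives a poly-time $\calS'$ with $\mac_{\Pi,\calA'}\wqc\mac_{\Gamma,\calS'}$, where $\wqc$ is now understood to hold on all inputs a $\calZ_1$ without quantum advice can produce, i.e. the efficiently preparable ones. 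Then I would construct $\calS$ exactly as in Lemma~\ref{lemma:eqsp2}: sample a random key for the QAS, apply $\bar A$ to a fresh ancilla $\kb{0}_{\reg{R}}$, run $\calS'$ on $(\reg{S}_\calA,\reg{R}')$, output $\calA$'s output registers and discard $\reg{R}'$. Invoking the (efficient-input) analogue of Proposition~\ref{prop:cqas} with $\calE:=\mac_{\Pi,\calA}$ and $\calO:=\mac_{\Gamma,\calS'}$ then yields $\mac_{\Pi,\calA}\qc\mac_{\Gamma,\calS}$, which in particular is indistinguishability against the (efficient, classically advised) $(\calZ_1,\calZ_2)$ of $\calM$ — i.e. emulation in $\calM$.

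The one thing to check, and where I would actually spend the care, is that the restriction to efficiently preparable inputs does not damage either the reduction or Proposition~\ref{prop:cqas}. Every operation in the reduction — the QAS encode/decode $\bar A,\bar B$, the composed machines $\mac_{\Pi,\calA},\mac_{\Gamma,\calS'},\mac_{\Gamma,\calS}$, and the auxiliary distinguishers — is polynomial time, so the intermediate states $\hat\rho,\eta,\gamma$ in the proof of Lemma~\ref{lemma:eqsp2} (and $\tau_1,\tau_2,\tau_1',\tau_2'$, etc. in the proof of Proposition~\ref{prop:cqas}) are obtained from an efficiently preparable input by poly-time maps and hence remain efficiently preparable; the weak-indistinguishability hypotheses therefore apply to them. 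The soundness estimate $\td(\eta,\gamma)\le\negl(n)$ supplied by the QAS is information-theoretic and so untouched, and the auxiliary distinguisher built in Claim~\ref{claim:cqas3} stays poly-time and carries only a hardwired classical key, so it lies in the permitted class. Consequently the efficient-input versions of Proposition~\ref{prop:cqas} and Lemma~\ref{lemma:eqsp2} go through essentially verbatim, closing the first equivalence.

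For the second equivalence $\model{\abar}{\b}{\c}\equiv\model{\abar}{\b}{\cbar}$ the only extra feature is the quantum advice $aux_2$ held by $\calZ_2$; since it is prepared independently of the execution and of $\reg{W}$, it just sits on the distinguisher's side in both models and commutes with every step of the routing argument. I would repeat the construction above, replacing $\wqc$ and $\qc$ throughout by their variants in which the distinguisher additionally holds an arbitrary advice state — the proofs of the corresponding variants of Proposition~\ref{prop:cqas} being word-for-word identical — and conclude as before. I expect the main (indeed the only) obstacle to be the bookkeeping needed to certify that efficient preparability is preserved at every stage of the reduction; there is no new conceptual ingredient beyond what is already in the proof of Lemma~\ref{lemma:eqsp2}.
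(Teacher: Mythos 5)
Your proposal is correct and follows exactly the route the paper intends: the paper itself dispatches this theorem with the one-line remark that "an analogous argument as in Lemma~\ref{lemma:eqsp2} shows" the claim, and what you have written is a careful unpacking of that remark — the free inclusion direction, the QAS-based routing of $\reg{W}$ through the adversary, and the verification that Proposition~\ref{prop:cqas} and its claims relativize to efficiently preparable inputs (with the authentication key as classical advice). Your explicit bookkeeping of why the restriction to efficiently preparable inputs is harmless is in fact more detailed than what appears in the paper.
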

\vspace{-18pt}
\begin{figure}[h!] \centering
\ifnum\ijqi=1
\mbox{ \subfigure[$ \model{\abar}{\bbar}{\cbar} \equiv \model{\abar}{\bbar}{\c} $ ]
{\includegraphics[width=0.45\columnwidth]{equiv2a}}\label{fig:naux2}
\qquad \subfigure[$\model{\abar}{\b}{\c} \equiv \model{\abar}{\b}{\cbar}$]{\includegraphics[width=0.45\columnwidth]{equiv2b}
}\label{fig:aux2}}
\else
\mbox{ \subfigure[$ \model{\abar}{\bbar}{\cbar} \equiv \model{\abar}{\bbar}{\c} $ ]
{\includegraphics[width=0.45\columnwidth]{Figures/equiv2a}}\label{fig:naux2}
\qquad \subfigure[$\model{\abar}{\b}{\c} \equiv \model{\abar}{\b}{\cbar}$]{\includegraphics[width=0.45\columnwidth]{Figures/equiv2b}
}\label{fig:aux2}}
\fi
\caption{Equivalent models when $\calZ_1$ takes no quantum advice}
\label{fig:equiv2}
\end{figure}

\mypar{Quantum advice to $\calZ_2$} Recall that when $\calZ_1$ takes quantum advice, quantum advice to $\calZ_2$ turns out to be irrelevant. However, here it is unclear if such a result still holds. A closely related question would be asking whether $\text{BQP/qpoly} \stackrel{?}{=} \text{BQP/{poly}}$, which is one of the most important open problems in quantum complexity theory. 
For example, assume $\text{BQP/qpoly} {=} \text{BQP/poly}$, does that imply that $\model{\abar}{\bbar}{\c} \equiv \model{\abar}{\b}{\c}\, ?$ This would collapse all four models under the choice that $\calZ_1$ takes no quantum advice.  
Otherwise what is the minimal complexity assumption that suffices to derive $\model{\abar}{\bbar}{\c} \equiv \model{\abar}{\b}{\c}$?
Conversely, if it turns out that $\model{\abar}{\b}{\c} \not\equiv \model{\abar}{\bbar}{\c}$, are there any interesting complexity implications?

\subsection{A Special Constraint: Markov Condition}
\label{subsec:mc}
\fi

\section{A Special Constraint in Quantum Stand-Alone Model: Markov Condition}
\label{sec:mc}
Another choice exists in the literature~\cite{DFLSS09,FS09}, where a stand-alone model was proposed to
capture secure emulation of {\em classical} functionalities. Only a
special form of inputs is allowed there, which satisfy what we call the {\em Markov condition}. As opposed to a general bipartite state with one part being classical (a.k.a {\em cq-states}): $\rho_{AB} = \sum_{a} \lambda_a |a \rangle\langle a | \otimes \rho_B^a$, the Markov condition requires that the input to dishonest Bob contains a classical subsystem $Y$ such that conditioned on $Y$ Bob's quantum input is independent of Alice's classical input. Such states are denoted as $$\rho_{A \leftrightarrow Y \leftrightarrow B} = \sum_{a,b} \lambda_{a,b} |a\rangle \langle a|_A \otimes |b\rangle \langle b|_Y \otimes \rho_{B}^b\, .$$

Now let us analyze how Markov condition affects our abstract model discussed above. It turns out that the effect of Markov condition, again, depends on whether $\calZ_1$ takes quantum advice.

\mypar{$\calZ_1$ takes quantum advice: Markov condition becomes redundant}
We denote models with Markov condition $\calM^*$.
\begin{lemma}
$\model{\a}{\cdot}{\cdot}^* \equiv \model{\a}{\cdot}{\cdot}$ regardless of the choices for $aux_2$ and state passing. Namely, the
model where inputs must satisfy Markov condition is equivalent to the model where inputs could be any bipartite cq-states. \label{lemma:mar1}
\end{lemma}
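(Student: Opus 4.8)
\textbf{Proof plan for Lemma~\ref{lemma:mar1}.}

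The plan is to show the two inclusions between the model classes in both directions, exactly as we did for state passing (Proposition~\ref{prop:aux1_sp}). One direction is trivial: every input satisfying the Markov condition is in particular a bipartite cq-state, so a distinguisher class in $\model{\a}{\cdot}{\cdot}^*$ is a subclass of that in $\model{\a}{\cdot}{\cdot}$. Hence if $\Pi$ emulates $\Gamma$ in $\model{\a}{\cdot}{\cdot}$, it also does so in $\model{\a}{\cdot}{\cdot}^*$. The nontrivial direction is the converse: assuming $\Pi$ emulates $\Gamma$ against Markov-restricted environments, show it emulates $\Gamma$ against arbitrary cq-state inputs.

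For the hard direction, the key idea is that a general cq-input $\rho_{AB} = \sum_a \lambda_a \kb{a}_A \otimes \rho_B^a$ can be ``forced'' into Markov form by having $\calZ_1$ additionally authenticate Bob's quantum register using a quantum authentication scheme (\citet{BCGST02}), with the classical key $y$ (together with $a$'s distribution data) serving as the classical subsystem $Y$ across which $A$ and $B$ become conditionally independent. Concretely, given an adversary $\calA$ acting against general inputs, I would build $\calA'$ that, on receiving the authenticated register, first decodes with $\bar B$ (using a key that $\calZ_1$ has also passed classically — which is allowed since the Markov condition permits $\calA$'s input to share classical randomness with $\calA$, namely the key) and then runs $\calA$. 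Since $\Pi$ emulates $\Gamma$ against Markov inputs, there is a simulator $\calS'$ for $\calA'$; I then construct $\calS$ for $\calA$ by having $\calS$ authenticate the incoming register with a freshly sampled key, run $\calS'$, and decode. The soundness of the authentication scheme (in the entangled-input sense of Definition~\ref{def:sqas}) together with Proposition~\ref{prop:cqas} forces $\calS'$ to act essentially as the identity on the authenticated register, so $\mac_{\Pi,\calA} \qc \mac_{\Gamma,\calS}$; this mirrors the argument in the proof of Lemma~\ref{lemma:eqsp2}. The main obstacle is the bookkeeping: one must be careful that the Markov structure is genuinely produced (the authentication key must be the classical ``pivot'' $Y$, and $\calZ_1$ with quantum advice can prepare the joint state $\kb{\psi} \otimes \bar A(\cdot)$ with the key hardwired classically) and that the resulting $\calS'$ is independent of the quantum part of $B$'s input, which is where Proposition~\ref{prop:cqas} is invoked.

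One subtlety worth flagging, which I would address explicitly: in the Markov condition, the classical pivot $Y$ is given to \emph{Bob} (the adversary), whereas state passing gave $\reg{W}$ from $\calZ_1$ to $\calZ_2$; so the reduction here routes the authenticated state into the adversary's input register rather than into $\calZ_2$, but the authentication-based rigidity argument is structurally identical. I expect this direction to go through ``almost verbatim'' following Proposition~\ref{prop:aux1_sp}, so the write-up would consist of stating the two lemmas (the easy inclusion and its converse), noting the easy one is immediate, and then adapting the construction of $\calA'$, $\calS'$, $\calS$ and the chain of negligible-distance estimates from the proof of Lemma~\ref{lemma:eqsp2}, with ``$\calZ_1$ passes $\reg{W}$ to $\calZ_2$'' replaced by ``$\calZ_1$ feeds an authenticated $\reg{W}$ into the adversary's input, retaining the key as the Markov pivot $Y$.''
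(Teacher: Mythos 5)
Your easy direction is fine, but the hard direction has a genuine gap, and it is located exactly where you flagged the "main obstacle": the authentication trick does not actually produce a Markov-form input. The Markov condition demands a classical register $Y$ such that, \emph{conditioned on} $Y$, Bob's quantum input is independent of Alice's classical value $a$. If you authenticate $\rho_B^a$ under a key $y$ and hand both the ciphertext and $y$ to Bob (which you must, so that $\calA'$ can decode and run $\calA$), then conditioned on $Y=y$ Bob holds $A_y(\rho_B^a)$, which still depends on $a$ whenever $\rho_B^a$ does. Authentication/encryption hides the message from someone \emph{without} the key; it does nothing to decorrelate the message from $a$ given the key. (Keeping the key away from Bob would make his input maximally mixed and hence trivially Markov, but then $\calA'$ cannot decode, so the reduction dies the other way.) The rigidity argument of Proposition~\ref{prop:cqas} is the right tool when a register must bypass the simulator untouched, as in the state-passing equivalence of Lemma~\ref{lemma:eqsp2}; here the obstruction is the correlation structure of the \emph{input}, not what the simulator does to a bypassed register, so that machinery does not transfer.

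The paper's proof is far more elementary and uses none of this. Write the general cq-input as $\rho_{AB}=\sum_a \lambda_a \kb{a}_A\otimes\rho_B^a$. If some $(\calZ_1,\calZ_2)$ achieves distinguishing advantage $\geq 1/poly(n)$ on $\rho_{AB}$ against every candidate simulator $\calS$, then by convexity some single summand $\tilde\sigma_n=\kb{\tilde a}_A\otimes\rho_B^{\tilde a}$ already achieves that advantage; and a product state of this form trivially satisfies the Markov condition (take $Y$ trivial). Since $\calZ_1$ takes quantum advice in $\model{\a}{\cdot}{\cdot}$, it can simply be hardwired with $\tilde\sigma_n$, yielding a Markov-compliant environment that breaks the same $\calA$ in $\moc{\a}{\cdot}{\cdot}$. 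This is precisely why the lemma is stated only for the case where $\calZ_1$ has quantum advice: without it one cannot necessarily prepare the worst-case summand, and the paper leaves that case open. I would recommend replacing your reduction with this averaging argument.
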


\begin{proof} To be concrete, we consider two models $\calM:= \model{\a}{\bbar}{\cbar}$ and $\calM' := \model{\a}{\bbar}{\cbar}^*$. The same argument applies to other cases.

One direction is obvious, namely, if a protocol $\Pi$ emulates $\Gamma$ in $\calM$
then it automatically holds that $\Pi$ emulates $\Gamma$ in $\calM'$.  This is because we can think of the Markov condition as specifying a subclass of possible $\calZ_1$ allowed in $\calM$. Now we show the converse by contradiction. Specifically, we prove that if there is an adversary $\calA$ in $\calM$, and $\forall \calS$, there exist $(\calZ_1, \calZ_2)$ such that $\calZ_2$ can
distinguish $\mac_{\Pi, \calA}$ and $\mac_{\Gamma,\calS}$, then in model $\calM'$ we construct $\calA', \calZ_1'$ and $\calZ_2'$ such that no $\calS'$ that is
able to simulate $\calA'$. By our hypothesis, there is
an input state $\sigma_n$, which can always be written as $\sum_{a} \lambda_a |a \rangle \langle
a|_A\otimes \sigma_{B}^a$ with $\sum_i {\lambda_i} = 1$ such that
$$\left| \Pr[\calZ_2(M_{\Pi, \calA}(\sigma_n)) =1] - \Pr[\calZ_2(M_{\Gamma,\calS}(\sigma_n))=1] \right| \geq 1/{poly(n)}$$ holds for any poly-time $\calS$. Observe that each summand $|a \rangle \langle a|_A\otimes \sigma_{B}^a$ of $\sigma_n$ trivially satisfies Markov condition. Since $\sigma_n$ is a convex combination of $|a \rangle \langle a|_A\otimes \sigma_{B}^a$, there
must be a $\tilde{\sigma}_n = |\tilde a \rangle \langle \tilde a|_A\otimes \sigma_{B}^{\tilde a}$ such that $$\left|
\Pr[\calZ_2(M_{\Pi,\calA}(\tilde \sigma_n))=1] -
\Pr[\calZ_2(M_{\Gamma,\calS}(\tilde \sigma_n))=1] \right| \geq 1/{poly(n)} \, ,$$ for
any poly-time $\calS$. 
This observation tells us that we can simply
let $\calA' := \calA, \calZ_2' := \calZ_2$, and let $\calZ_1'$ be the machine that takes quantum advice $\{\tilde \sigma_n\}$ and hands  $\tilde \sigma_n$ to players as input. Then for any poly-time $\calS'$,
\begin{align*}
& \left|\Pr[\calZ_2'(M_{\Pi,\calA'}(\tilde \sigma_n))=1] -
\Pr[\calZ_2'(M_{\Gamma,\calS'}(\tilde \sigma_n))=1] \right| \\
 = &\left|\Pr[\calZ_2(M_{\Pi,\calA}(\tilde \sigma_n))=1] - \Pr[\calZ_2(M_{\Gamma,\calS'}(\tilde \sigma_n))=1] \right|  \geq 1/{poly(n)} \, .
\end{align*}
This shows that emulation in $\calM'$ implies emulation in $\calM$.
\end{proof}

\mypar{$\calZ_1$ Takes No Quantum Advice: Markov Condition may Matter}
The argument in Lemma~\ref{lemma:mar1} does not necessarily apply here
because previously we could simply give $\sigma_n^*$ to $\calZ_1$
directly as an advice. However, $\sigma_n^*$ might be impossible to
generate on a poly-time QTM. It is interesting to either construct a
concrete example to show a separation or otherwise showing a proof of
equivalence. We do not have clear insight into the Markov condition in
this case, and leave the possibility of a separation as an open question.


\end{document}